\newtheorem{theorem}{Theorem}[section]
\newtheorem{proposition}[theorem]{Proposition}
\newtheorem{corollary}{Corollary}[theorem]
\newtheorem{lemma}{Lemma}
\newtheorem*{observation}{Observation}
\theoremstyle{remark}
\newcommand{\killpic}{%
  \hangindent=0pt
  \let\par=\old@par
}
\setlist[itemize]{itemsep=0pt, topsep=4pt}
\declaretheoremstyle[
  postheadhook = {\hspace*{\parindent}},
  %  postheadspace=\newline,
  mdframed={
    backgroundcolor=gray!10!white, 
    hidealllines=true, 
    innertopmargin=2pt, 
    innerbottommargin=4pt, 
    skipabove=8pt,
    skipbelow=10pt,
    nobreak=true}
]{grayboxed} 
\declaretheoremstyle[
  mdframed={
    backgroundcolor=gray!10!white, 
    hidealllines=true, 
    innertopmargin=2pt, 
    innerbottommargin=4pt, 
    skipabove=8pt,
    skipbelow=10pt,
    nobreak=true},headpunct=\empty,
]{grayboxed*} 
\declaretheorem[style=plain,qed=$\triangleleft$]{auxtheorem}
\declaretheorem[style=grayboxed,sibling=auxtheorem,name=Algorithm]{poalgorithm}
\declaretheorem[{numbered=no},style=grayboxed*,heading=\empty,name=\empty]{postep}
\newcommand{\appxref}[1]{\hyperref[#1]{Appendix~\ref{#1}}}
\tikzstyle{mybraces}=[mirrorbrace/.style={
\def\kword#1{\textbf{#1}}
\def\braces#1{{\lbrace #1 \rbrace}}
\def\ind#1{\text{\rm\tiny #1}}
\def\ie{i.e.\ }
\def\N{\mathbb{N}}
\def\R{\mathbb{R}}
\def\argdot{\,\bullet\,}
\def\indicator{\mathds{1}} % indicator
\def\iden{\mathbb{I}} % identity matrix
\def\bfX{\mathbf{X}}
\def\calK{\mathcal{K}}
\def\vertexset{\mathcal{V}}
\def\edgeset{\mathcal{E}}
\def\deg{\text{\rm deg}}
\def\vol{\text{\rm vol}}
\def\L{\mathbf{L}} % normalized Laplacian
\def\LeigVec{\boldsymbol{\psi}} % eigenvectors of \L
\def\LeigVal{\sigma} % eigenvalues of \L
\def\P{\mathbb{P}}
\def\E{\mathbb{E}}
\def\eqdist{\overset{d}{=}}
\def\simiid{\overset{\mbox{\tiny iid}}{\sim}}
\def\almostSurely{\overset{\text{\small a.s.}}{\longrightarrow}}
\def\on{\, \vert \,}
\def\Bernoulli{\text{\rm Bernoulli}}
\def\Poisson{\text{\rm Poisson}_{+}}
\def\Uniform{\text{\rm Uniform}}
\def\GammaDist{\text{\rm Gamma}}
\def\BetaDist{\text{\rm Beta}}
\def\NB{\text{\rm NB}_{+}}
\def\NBo{\text{\rm NB}}
\def\Geom{\text{\rm Geometric}}
\def\MittagLeffler{\text{\rm Mittag-Leffler}}
\def\MN{\text{\rm Multinomial}}
\def\SRW{\text{SRW}}
\def\RW{\mathbf{RW}}
\def\SB{\mathbb{S}}
\def\RWU{\RW_{\!\ind{U}}}
\def\RWSB{\RW_{\!\ind{SB}}}
\def\ACL{\mathbf{ACL}}
\def\seqParams{\theta}
\def\seqObs{x}
\def\seqState{z}
\def\seqObs{X}
\def\seqState{Z}
\def\stub[#1][#2][#3]{\delta^{(#1)}_{#2}(#3)} % indicator of edge stub
\def\jvec{\vec{j}}
\def\kvec{\vec{k}}
\def\Gall{G_{1:T}}
\def\Glatent{G_{2:(T-1)}}
\def\feasibleSet[#1][#2]{\mathcal{F}_{#1}(G_{#2})} % feasible set of graphs
\def\kAll{\mathbf{K}}
\def\kNot{\mathbf{K}_{-t}}
\def\Iall{\mathbf{B}}
\def\INot{\mathbf{B}_{-t}}
\def\P{\mathbbm{P}}
\def\sigalg[#1]{\mathcal{A}_{#1}}
\def\dmin[#1]{\underline{d}^{(#1)}}
\def\DegMat{D}
\def\vDist{\mu}
\def\eDist{\nu}
\def\nball{\widetilde{\mathcal{N}}}
\def\RWprob{Q}
\def\postSpacet[#1]{\widehat{\mathcal{G}}_{#1}(G_T)}
\def\map[#1][#2]{\varphi_{#1,#2}}
\def\mapSet[#1][#2]{\Phi_{#1,#2}}
\def\argdot{\text{\raisebox{0.25ex}{\,\tiny$\bullet$\,}}}
\DeclarePairedDelimiterX\bigCond[2]{[}{]}{#1 \;\delimsize\vert\; #2}
\newcommand{\conditional}[3][]{\E_{#1}\bigCond*{#2}{#3}}
\begin{document}
\begin{frontmatter}
  \title{Random Walk Models of Network Formation and Sequential Monte Carlo Methods for Graphs}
  \runtitle{Random Walk Network Models}
  \begin{aug}
    \author{\fnms{Benjamin\ }\snm{Bloem-Reddy}\ead[label=e1]{benjamin.bloem-reddy@stats.ox.ac.uk}}
    \and
    \author{\fnms{Peter\ }\snm{Orbanz}\corref{}\ead[label=e2]{porbanz@stat.columbia.edu}}
    \runauthor{Bloem-Reddy and Orbanz}
    \affiliation{University of Oxford and Columbia University}

    \address{Department of Statistics\\
      24--29 St. Giles'\\
      Oxford OX1 3LB, UK\\
      \printead{e1}
    }

    \address{Department of Statistics\\
      1255 Amsterdam Avenue\\
      New York, NY 10027, USA\\
      \printead{e2}
    }
  \end{aug}
  \maketitle
  \begin{abstract}
    We introduce a class of generative network models that insert edges by
    connecting the starting and terminal vertices of a random
    walk on the network graph. 
    Within the taxonomy of statistical network models, this class is distinguished
    by permitting the location of a new edge to explicitly depend on the
    structure of the graph, but being nonetheless statistically and computationally
    tractable. In the limit of infinite walk length, the model converges
    to an extension of the preferential
    attachment model---in this sense, it can be motivated 
    alternatively by asking what preferential attachment is an approximation to.
    Theoretical properties, including the limiting degree sequence,
    are studied analytically.
    If the entire history of the graph is observed, parameters can be
    estimated by maximum likelihood. If only the final graph is available, its history can be imputed
    using MCMC. We develop a class of sequential Monte Carlo algorithms that are
    more generally applicable to sequential network
    models, and 
    may be of interest in their own right. The model parameters can be recovered from a single
    graph generated by the model. Applications to 
    data clarify the role of the random walk length as a 
    length scale of interactions within the graph.
  \end{abstract}
\end{frontmatter}

\section{Introduction} 
\label{sec:introduction}

We consider problems in network analysis where the observed data constitutes a single graph.
Tools developed for such problems include, for example, graphon models 
\citep{Borgs:Chayes:Lovasz:Sos:Vesztergombi:2008,Hoff:Raftery:Handcock:2002,Bickel:Chen:Levina:2011,Ambroise:Matias:2012:1,Wolfe:Olhede:2013:1,Gao:Lu:Zhou:2015:1}, 
the subfamily of stochastic blockmodels \citep[e.g.][]{Goldenberg:etal:2010},
preferential attachment graphs \citep{Barabasi:Albert:1999}, 
interacting particle models \citep{Liggett:2005:1}, and others.
In this context, a statistical model is a family ${\mathcal{P}=\lbrace P_{\theta},\theta\in\mathbf{T}\rbrace}$
of probability distributions on graphs,
and sample data is explained either as a single graph drawn from the model, or as a subgraph of such a graph.

Applied network analysis distinguishes between network formation mechanisms that are
\emph{endogenous} (are an outcome of the current structure of the network), and those that are
\emph{exogenous} (are driven by external effects), see e.g.\ 
\citep{Tomasello:Perra:Tessone:Karsai:Schweitzer:2014:1,Minhas:Hoff:Ward:2016:1}. 
We develop methodology to account for endogenous mechanisms. That necessarily introduces
stochastic dependence between existing and newly emerging structure in the network.
The technical challenge is to identify models that nonetheless
are statistically and computationally tractable.

The approach taken in the following is generative, in the sense that it formulates a model of how a graph grows
as a stochastic process. 
Specifically, the class of models we study generate a graph by inserting one edge at a time:
Initialize a graph as a single edge, with its two terminal vertices. For each new edge, 
select a vertex $V$ in the current graph at random.
\begin{itemize}
  \item With a fixed probability ${\alpha\in(0,1]}$, add a new vertex and connect it
  to $V$.
  \item Otherwise, start a random walk at $V$, and connect its terminal vertex
  to $V$.
\end{itemize}
The result is a graph or multigraph, with a random number of vertices labeled in order of occurrence.
Generative network models are studied in applied probability \citep[e.g.][]{Durrett:2006},
where they are often motivated by how generative mechanisms determine properties of the graph.
Perhaps the best-known examples are preferential attachment graphs, in which a simple generative
mechanism (reinforcement based on degree-biased selection of vertices) explains an observable characteristic of networks
(power law degrees). 
For the purposes of statistics, a family of generative distributions
is applicable as a model if (i) it can be parametrized in a meaningful way and (ii) if parameter inference
from observed data is possible. As the ensuing discussion shows, inference using MCMC techniques
is possible for a range of generative models, including the one sketched above.

The endogenous mechanism characterizing the model above is the random walk, which lets vertices
connect ``through the network''. At each step, an endogenous mechanism is chosen with probability
$(1-\alpha)$, or an exogenous one with probability $\alpha$.
In contrast to the models listed at the outset,
the locations of newly inserted edges thus depend on those of previously
inserted ones, 
and the effect of this dependence is clearly visible in graphs generated by
the model (see \cref{fig:examples:uniform} in \cref{sec:model} for examples).
The random walk step itself can be regarded as a model of interactions:
\begin{itemize}
\item 
  Long walks can connect
  vertices far apart in the graph. The (expected) length of the walk can hence
  be regarded as a range scale.
\item 
  Where convenient, the walk may be interpreted explicitly,
  as a process on the network (e.g.\ users in a social network forming connections through
  other users). More generally, it biases connections
  towards vertices reachable along multiple paths.
\end{itemize}
We show in the following that random walk models can explain a range of graph structures as the outcome of
interactions mediated by the network,
and are statistically and analytically tractable.\\[-2pt]

{\noindent\textit{Background.}}
At a conceptual level, the work presented here is motivated by the question of 
how statistical network models handle stochastic dependence within a network;
the discussion in this paragraph is purely heuristic.
Consider the link graph of a network, regarded as a system of 
random variables (representing edges and vertices). Informally, for structure to occur in a network
with non-negligible probability, the variables must be dependent. The graphs below
are (i) a protein-protein interaction network (see \cref{sec:experiments}), and
(ii) its reconstruction sampled from a graphon model fitted to (i), and (iii) a
reconstruction from the random walk model:
\begin{center}
  \begin{tikzpicture}
    \path[use as bounding box]
      (-.6,-2) rectangle (9,2);
      \node at (-.5,0) {\includegraphics[width=4.5cm,angle=-20]{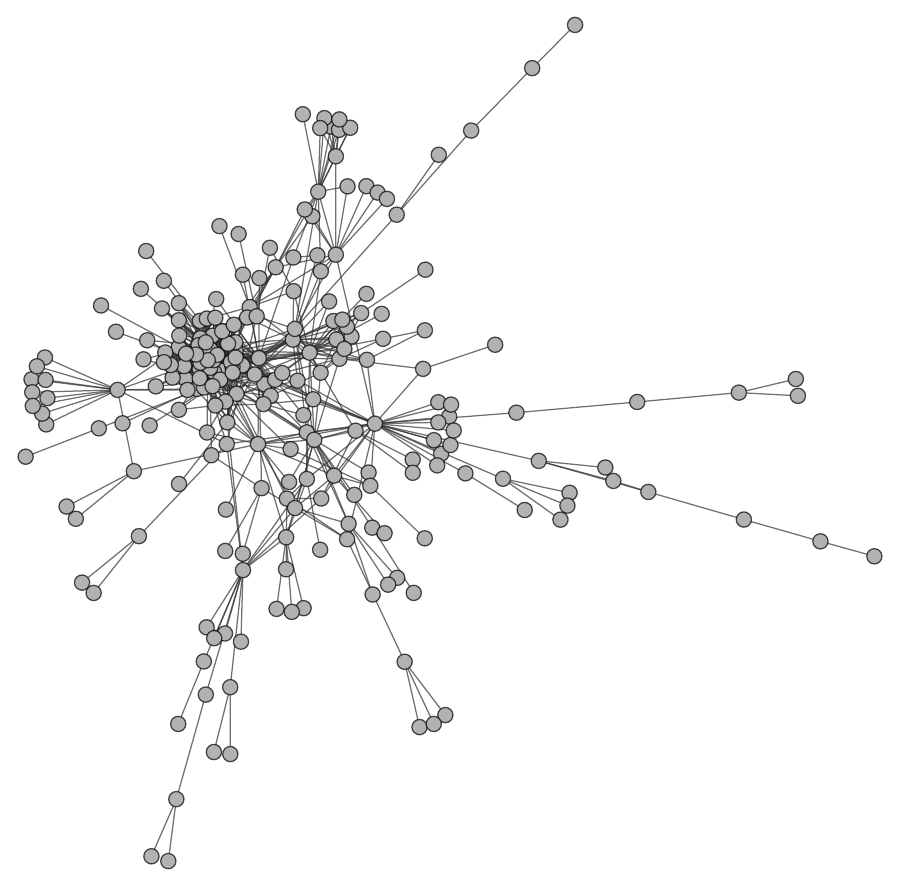}};
      \node[rotate=90] at (4.5,0) {\includegraphics[width=4.5cm]{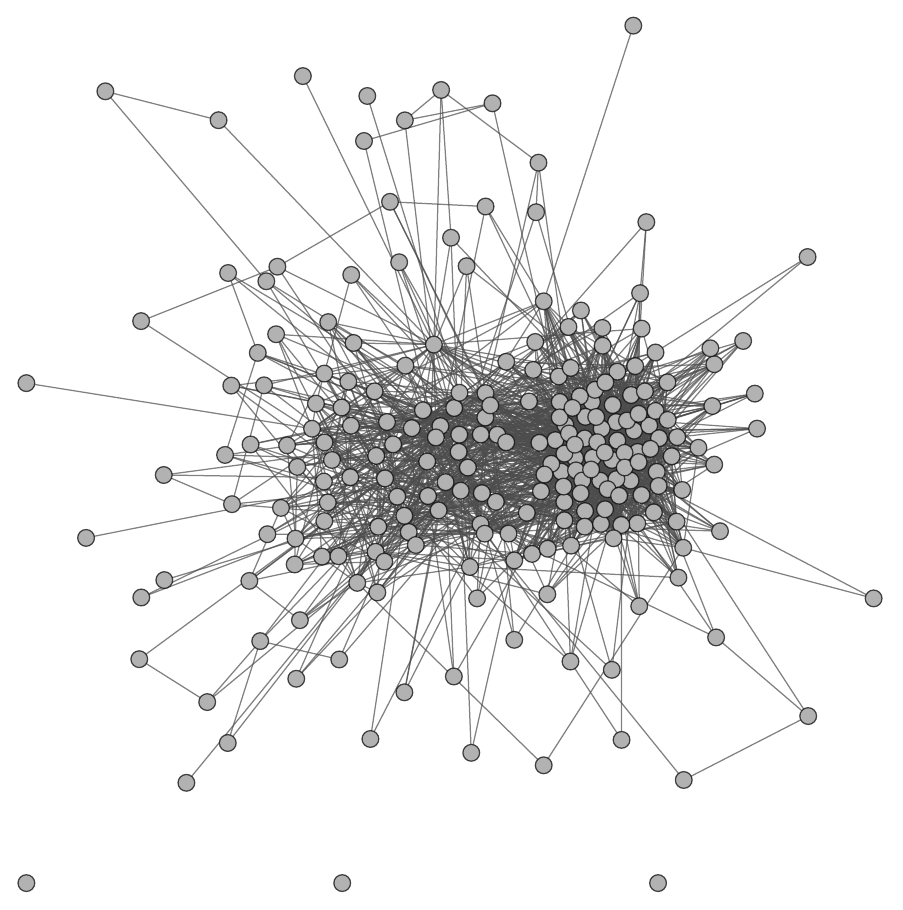}};
      \node[rotate=50] at (9.9,-.4) {\includegraphics[width=5.5cm]{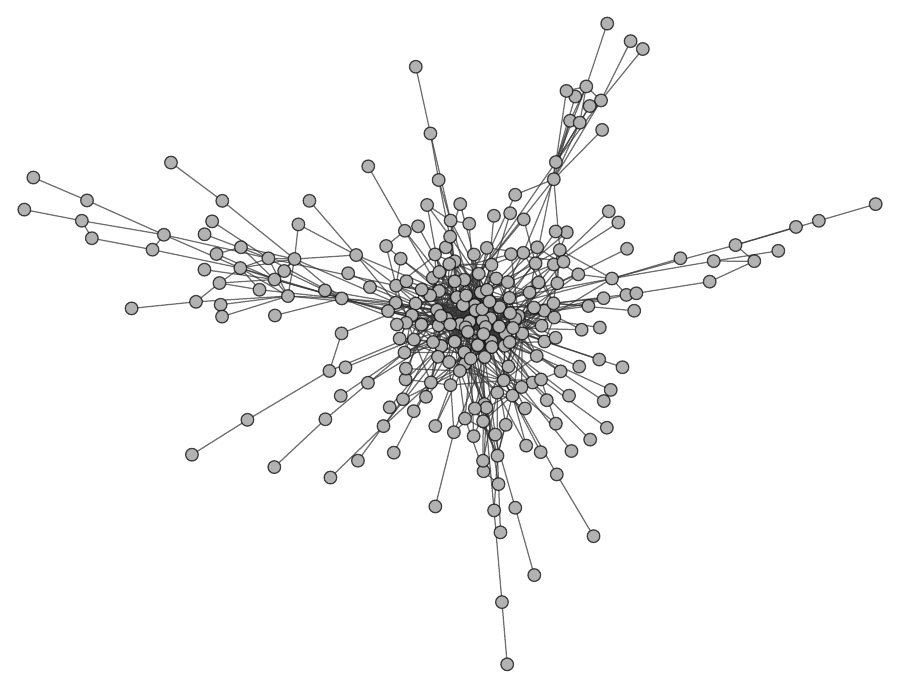}};
      \node at (-3,0) {\scriptsize (i)};
      \node at (2.2,0) {\scriptsize (ii)};
      \node at (8.1,0) {\scriptsize (iii)};
  \end{tikzpicture}
\end{center}
The data set in (i) contains pendants 
(several degree-1 vertices linked to a single vertex), isolated chains of edges, etc.
For these to arise at random requires local dependence between edges on different length scales, and 
they are conspicuously absent in (ii).
That is not a shortcoming
of the method used to fit the model, but inherent to graphons per se, since these models
constrain edges to be conditionally independent given certain vertex-wise information.
If the missing structure is relevant to statistical analysis, its absence constitutes a
form of model misspecification. To put the problem into context,
it is useful to compare some commonly used models:
\begin{itemize}
\item Graphon models (as mentioned above).
  These have recently received considerable attention in statistics, 
  and include stochastic blockmodels \citep{Goldenberg:etal:2010}, and
  many models popular in machine learning \citep[see][]{Orbanz:Roy:2015}.
\item Models that randomize a graph to match a given degree sequence,
  such as the configuration model \citep{Durrett:2006}. 
\item Models describing a formation process, such as preferential attachment (PA) models \citep{Barabasi:Albert:1999}.
\end{itemize}
All of these models constrain dependence by
rendering edges conditionally independent given some latent variable: 
In a graphon model, this latent variable is a local edge density; in the related model 
of \citet{Caron:Fox:2017},
a scalar variable associated with each vertex; in recent so-called edge exchangeable models \citep{Crane:Dempsey:2016,Williamson:2016,Cai:etal:2016},  
a random measure associated with the edges; in a configuration model, the degree sequence;
in PA models, the vertex arrival times and a vector of vertex-specific variables \citep{Bloem-Reddy:Orbanz:2017aa}. 
Some of these models 
have been noted to be misspecified for network analysis problems. For example:
\begin{itemize}
\item In the case of graphon models, various symptoms of this problem have been noted 
  \citep[e.g.][]{Borgs:Chayes:Cohn:Zhao:2014:1,Orbanz:Roy:2015}:
  The generated graph is dense; unless it is $k$-partite or disconnected, the distance between any two
  vertices in an infinite graph is almost surely $1$ or $2$; etc.
  A graphon can encode any fixed pattern on some 
  number $n$ of vertices, but
  this pattern then occurs on every possible subgraph of size $n$ with fixed probability.
\item Configuration models are popular in probability (due to their simplicity), 
  but have limited use in statistics unless the quantity of interest is the degree sequence itself:
  They are ``maximally random'' given the degrees, in a manner similar to exponential families being 
  maximally random given a sufficient statistic, and are thus insensitive to any structure not captured by the degrees. Other models based on degrees (PA models, the model of \citet{Caron:Fox:2017}, so-called rank one edge exchangeable models \citep{Janson:2017aa}) are similarly constrained.
\end{itemize}
In each instance of misspecification listed, the constraints on dependence are culpable. 
Nonetheless, such constraints are imposed for
good reasons. One is tractability: As in any system of multiple
random variables, forms of stochastic dependence amenable to mathematical analysis and statistical inference
are rare. 
Another is the basic design principle that a statistical model should be
as simple as possible: The presence, not absence, of dependence
requires justification.

The approach taken here is to make a modeling assumption on the network formation
process: Edges are formed via interactions mediated by the network, in the form of random walks. Consequently, no known form of exchangeability is present in random walk models. 
Whether that modeling assumption is considered adequate must depend on the problem at hand. 
A non-exchangeable model is best suited to data for which the order of the edges might be informative, even if it is unknown. This type of problem arises, for example, in genetics and evolutionary biology \citep[e.g.][]{Navlakah:Kingsford:2011}. 
Conceptually, the 
random walk models studied in the following allow slightly more intricate forms of stochastic dependence
than the models discussed above, without sacrificing applicability. 
\\

\noindent\textit{Article overview}.
Random walk models are defined in \cref{sec:model}.
They can be chosen to generate either multigraphs 
or simple graphs, which are sparse in either case.
A quantity that plays a key role in both theory and inference is the history of the graph,
\ie the order $\Pi$ in which edges are inserted. In dynamic networks 
\citep[e.g.][]{Kolaczyk:2009,Durrett:2006}, 
where a graph is
observed over time, $\Pi$ is an observed variable. If only the final graph is observed, $\Pi$
is latent.
\cref{sec:properties} establishes some theoretical properties:
\begin{itemize}
\item If the entire history $\Pi$ of a multigraph is observed,
  model parameters can be estimated by maximum likelihood (\cref{sec:mle}).
\item Under suitable conditions, the limiting degree distribution can be characterized (\cref{thm:degree:distribution}).
  The generated graphs can exhibit power law properties.
\item Conditionally on the times at which vertices are inserted, 
  the normalized degree sequence converges to an almost sure limit. 
  Results of this type are known for preferential attachment models. 
  Each limiting relative degree can be generated marginally by a sampling scheme reminiscent of
  ``stick-breaking''. See \cref{thm:degree:sequence}.
\item If the length of the random walk tends
  to infinity, the model converges to a generalization of the preferential attachment model
  (\cref{prop:limiting_ACL}).
\end{itemize}
If only the final graph is observed, inference is still possible, by treating the
history $\Pi$ as a latent variable and imputing it using a sampling algorithm.
This problem is of broader interest, since it permits
the application of sequential or dynamic network models, including the PA model and many
others, to a single observed graph.
In \cref{sec:inference}, we show how the graph history can be sampled, under any sequential network model
satisfying a Markov and a monotonicity property, using a Sequential Monte Carlo (SMC) algorithm. Using this algorithm, we derive inference algorithms for sequential network models
that can be designed as SMC or particle
MCMC methods \citep{Andrieu:Doucet:Holenstein:2010}. 
In \cref{sec:experiments}, we report empirical results and applications to data:
\begin{itemize}
\item Applicability of the model is evaluated by performing parameter inference
  on synthetic data sampled from the model. Parameters indeed can be recovered from
  a single observed graph.
\item The role of the random walk parameter can be understood in more detail by
  relating it to the mixing time of a simple random walk on the data set; see
  \cref{sec:experiments:scale}.
\item Comparisons between the random walk model and other network models are given
  in \cref{sec:model:fitness}; we also discuss issues raised by such how comparisons are performed.
\item The latent order $\Pi$ can be related to measures of vertex centrality (\cref{sec:vertex:centrality}).
\end{itemize}
Data and code are available at \url{https://github.com/ben-br/random_walk_smc}.

\section{Random walk graph models: Definition}
\label{sec:model}

Throughout, $g$ generically denotes a fixed graph, and $G$ a random graph.
A \kword{random walk} on a connected graph $g$, started at vertex $v$, 
is a random sequence ${(v,V_1,\ldots,V_k)}$ of vertices, such that neighbors in the sequence
are connected in $g$. 
Figuratively, a walker repeatedly moves along an 
edge to a randomly selected neighbor, until
$k$ edges have been crossed. Edges and vertices may be visited multiple times.
The random walk is \kword{simple} if the next vertex is selected with uniform probability
from the neighbors of the current one. The distribution of the terminal vertex $V_k$ of
a simple random walk on $g$, of length $k$, and starting at $v$, is denoted ${\SRW(g,v,k)}$. 
The model considered in the following is, like many sequential network models, most easily
specified as an algorithm:
\begin{poalgorithm}[Random walk model on multigraphs]
\label{scheme:multigraph}
\begin{itemize}
\item Fix ${\alpha\in(0,1]}$ and a distribution $P$ on $\mathbb{N}$. The initial
  graph $G_1$ consists of a single edge connecting two vertices.
\item For ${t=2,\ldots,T}$, generate $G_{t}$ from ${G_{t-1}}$ as follows:
\begin{enumerate}[label=(\arabic*)]
    \item Select a vertex $V$ of $G_{t-1}$ at random (see below). 
    \item With probability $\alpha$, attach a new vertex to $V$.
    \item Else, draw ${K\sim P}$ and connect $V$ to 
    ${V'\sim\SRW(G_{t-1},V,K)}$.
\end{enumerate}
\end{itemize}
\end{poalgorithm}
\noindent
The vertex $V$ in (1) is either sampled uniformly from the current vertex set,
or from the \kword{size-biased} (or \kword{degree-biased}) distribution
${\mathbb{S}(v)=\deg(v)/\sum_{v'\in \vertexset(g)}\deg(v')}$, 
where $\deg(v)$ is the degree of vertex $v$ in $g$.
In principle, the initial graph $G_1$ can be chosen as an arbitrary, fixed
``seed graph'', but we work with a single edge throughout.
The random walk models endogenous formation
of edges along existing paths in the graph---for illustration, 
consider a meeting process in a social network, where users
meet through one or several acquaintances.

\cref{scheme:multigraph} generates multigraphs, since the two vertices 
selected by the random walk may already be connected (resulting
in a multi-edge), or may coincide (resulting in a self-loop).
To generate simple graphs instead, step (3) above is replaced by:
\begin{postep}
  \begin{enumerate}[label=(\arabic*)]
    \addtocounter{enumi}{2}
    \renewcommand{\labelenumi}{(\arabic{enumi}')}
  \item
    Else, draw ${K\sim P}$ and 
      ${V'\sim\SRW(G_{t-1},V,K)}$.
      Connect $V$ to $V'$ if they are distinct and not connected; 
      else, attach
      a new vertex to $V$.    
  \end{enumerate}
\end{postep}
Either sampling scheme defines the law of a sequence ${(G_1,G_2,\ldots)}$ of graphs.
We denote this law ${\RWU(\alpha,P)}$ if the vertex $V$ in (1) is chosen uniformly,
or ${\RWSB(\alpha,P)}$ in the size-biased case. Each such law is defined both on
multigraphs and on simple graphs, depending on which sampling scheme is used.

We usually choose the length of the random walk as a Poisson variable: 
Denote by $\Poisson(\lambda)$ the Poisson distribution with parameter $\lambda$, shifted by 
1, \ie the law of ${K+1}$, where $K$ is Poisson. We write ${\RW(\alpha,\lambda)}$ for
${\RW(\alpha,\Poisson(\lambda))}$ where convenient. Examples of graphs generated by this distribution 
are shown in Figure~\ref{fig:examples:uniform}.

\begin{figure}[tb]
  \makebox[\textwidth][c]{
    \resizebox{\textwidth}{!}{
    \begin{tikzpicture}
  \begin{scope}
    \begin{scope}[xshift=-5cm]
      \node  (gt) {
        \includegraphics[width=4cm,angle=90]{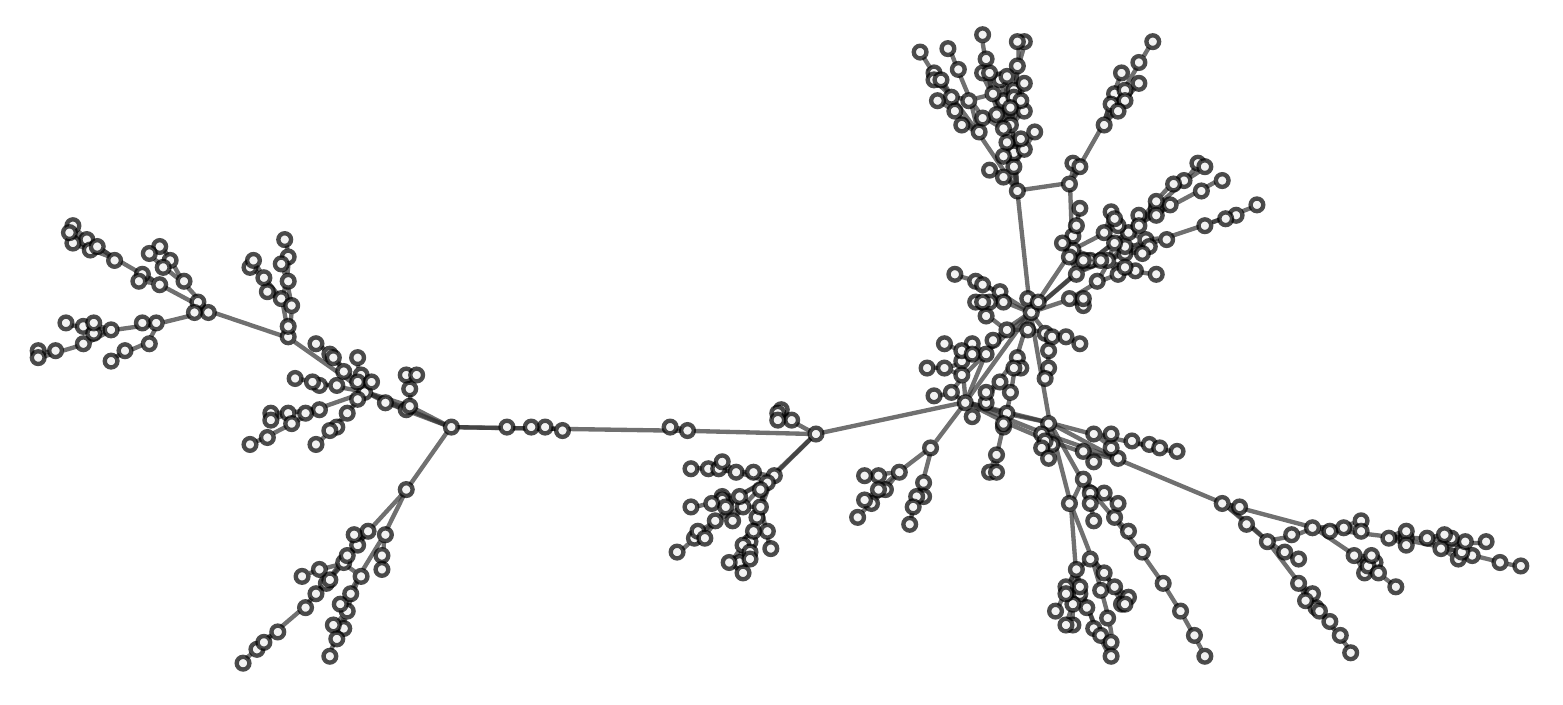}
      };
       \node [below=1.9cm, align=flush center] at (gt) {
	   \footnotesize{{$\alpha=0.5, \lambda=2$}}
      };
    \end{scope}
    \begin{scope}[xshift=-2cm]
      \node  (g5) {
        \includegraphics[width=4cm,angle=90]{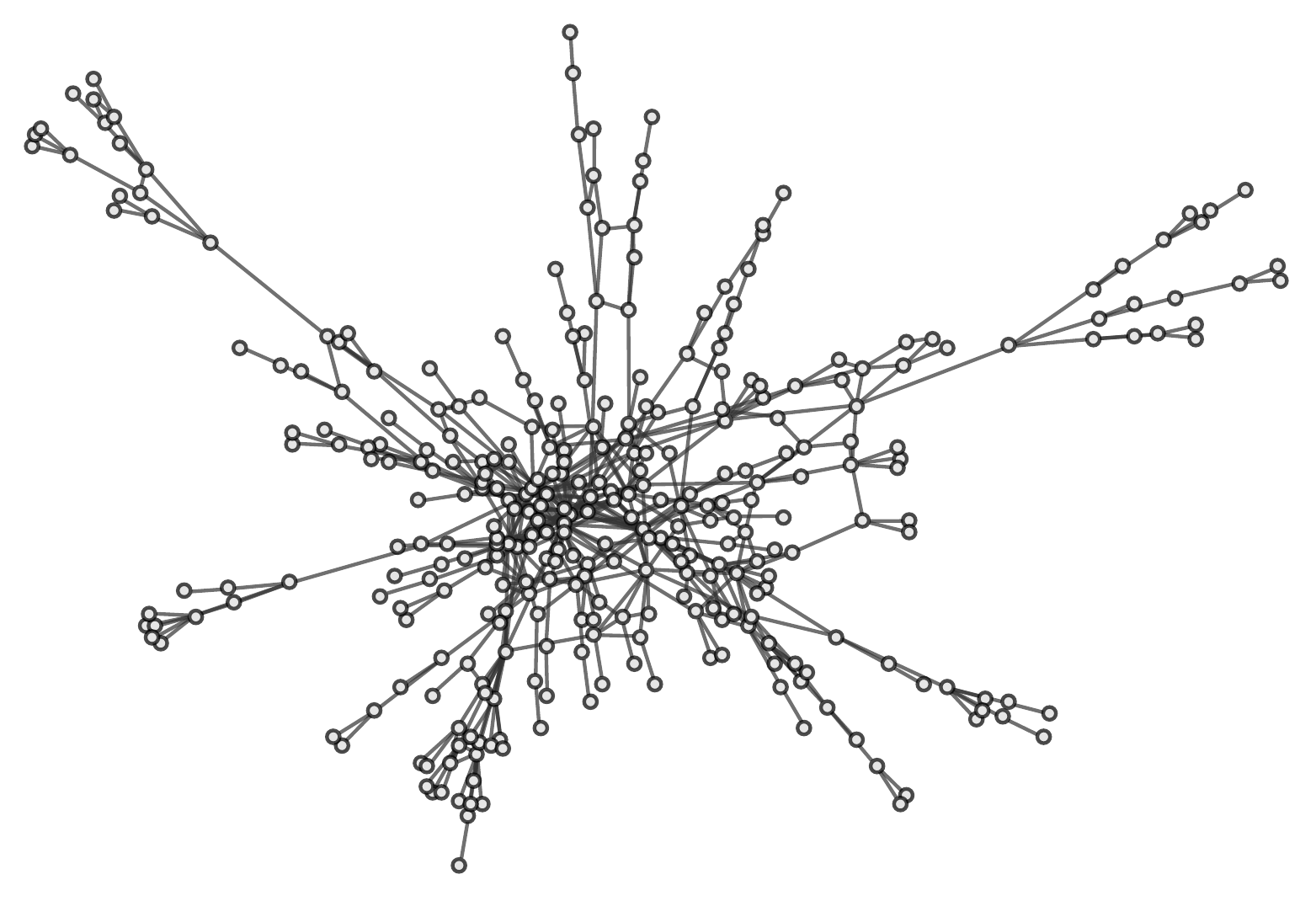}
      };
      \node [below=1.9cm, align=flush center] at (g5) {
     \footnotesize{{$\alpha=0.5, \lambda=4$}}
      };
    \end{scope} 
    \begin{scope}[xshift=1cm]
      \node  (g5) {
        \includegraphics[width=4cm,angle=90]{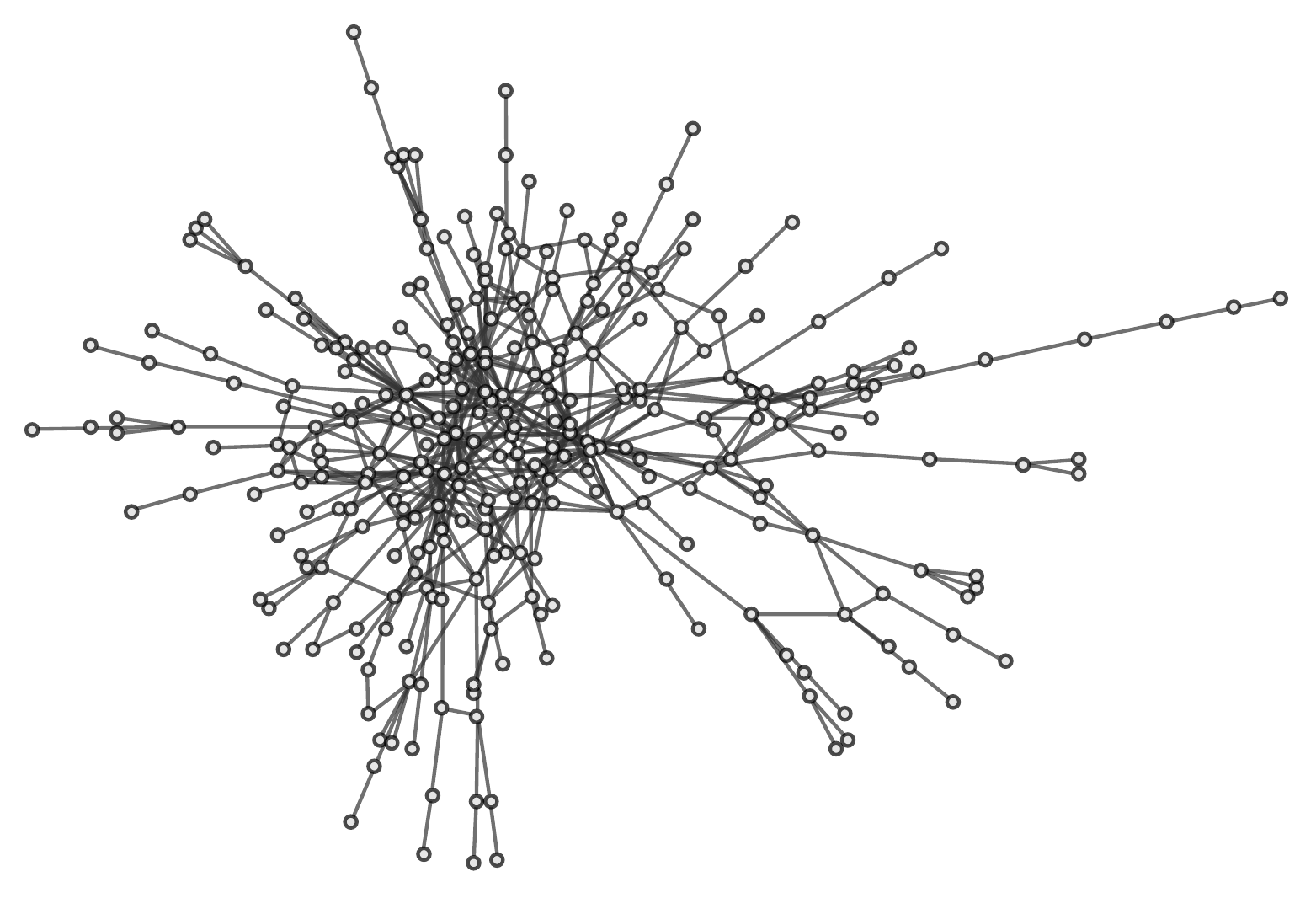}
      };
      \node [below=1.9cm, align=flush center] at (g5) {
     \footnotesize{{$\alpha=0.5, \lambda=8$}}
      };
    \end{scope} 
    \begin{scope}[xshift=-8cm]
      \node (g1){
        \includegraphics[width=4cm,angle=90]{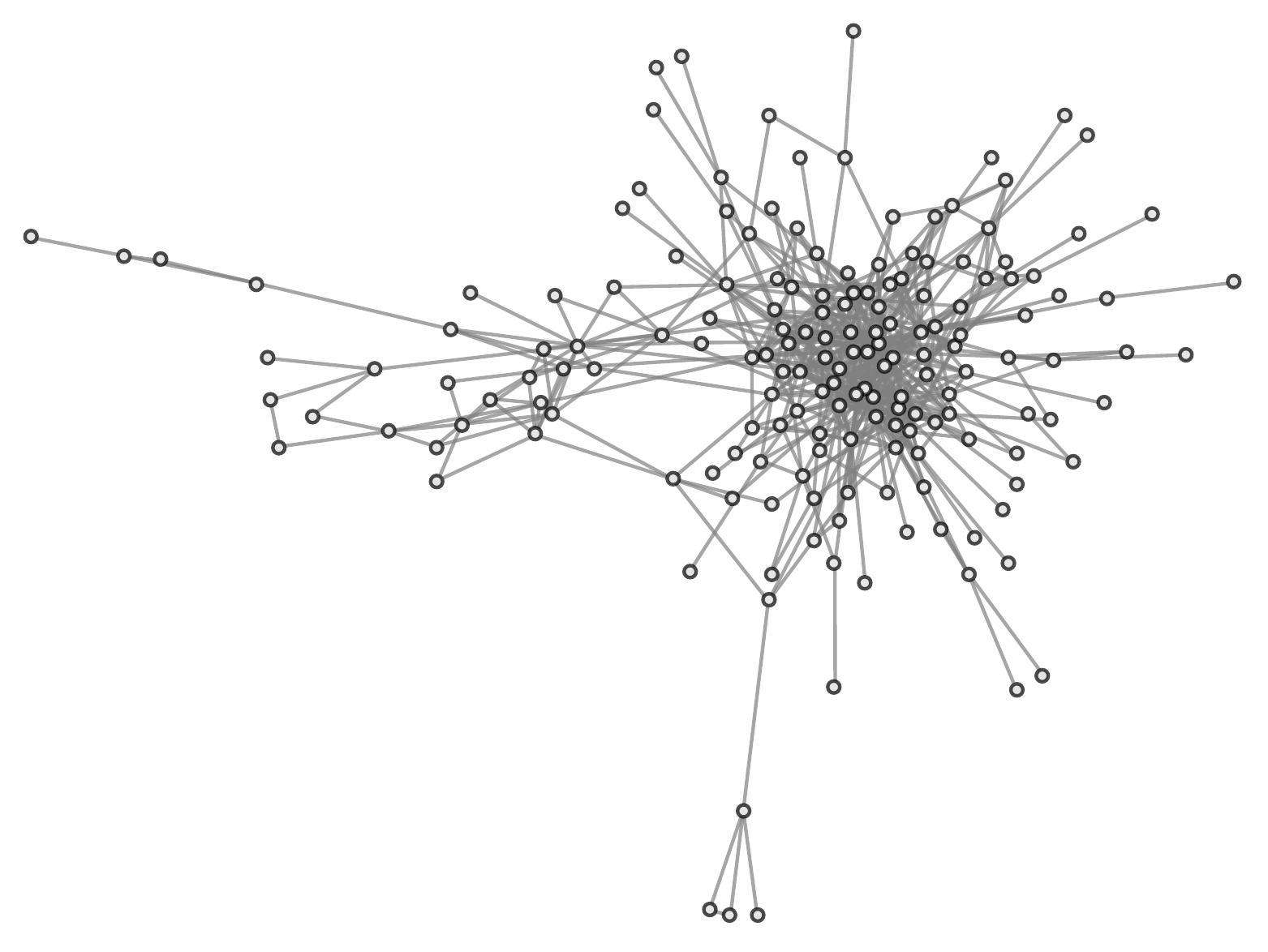}
       
      };
      \node [below=1.9cm, align=flush center] at (g1) {
      	 \footnotesize{{$\alpha=0.1, \lambda=4$}}
      };
    \end{scope}
    \begin{scope}[xshift=4cm]
      \node  (g3) {
        \includegraphics[width=4cm,angle=90]{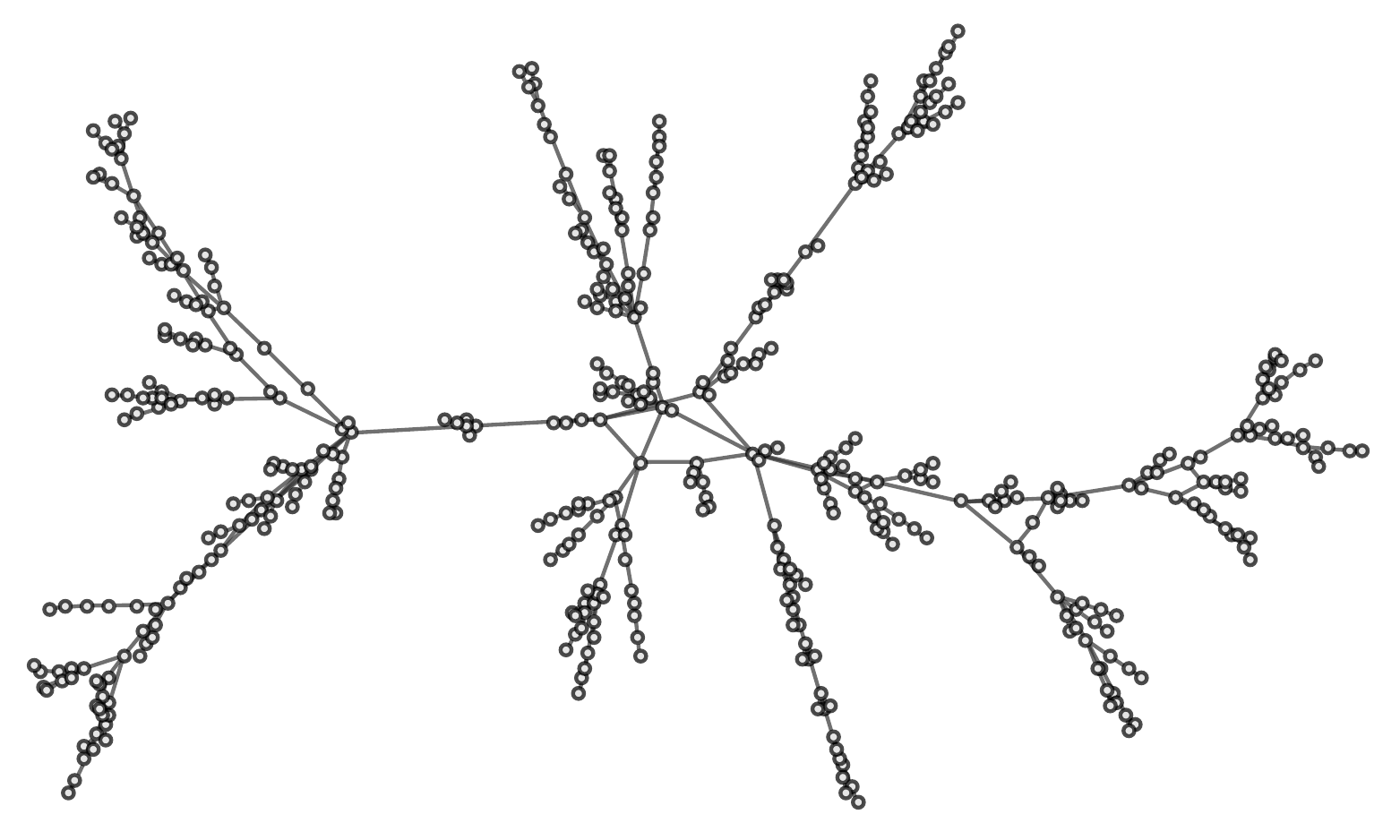}
      };
      \node [below=1.9cm, align=flush center] at (g3) {
	   \footnotesize{{$\alpha=0.9, \lambda=4$}}
      };
    \end{scope}
  \end{scope}
  \begin{scope}[yshift=-4.8cm]
    \begin{scope}[xshift=-5cm]
      \node  (gt) at (.3,0) {
        \includegraphics[width=4cm,angle=90]{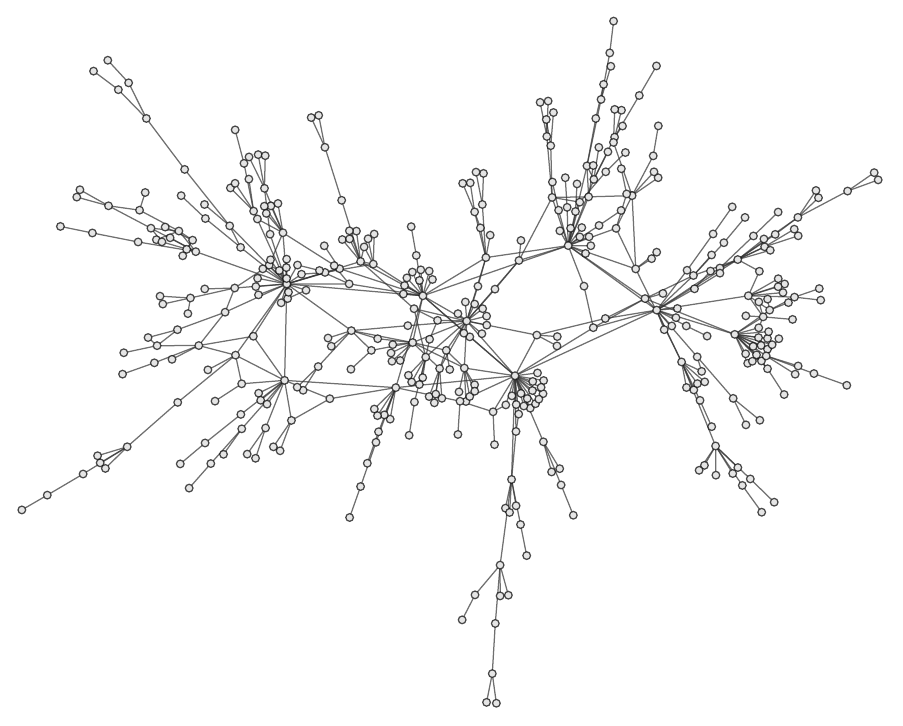}
      };
       \node [below=1.9cm, align=flush center] at (gt) {
	   \footnotesize{{$\alpha=0.5, \lambda=2$}}
      };
    \end{scope}
    \begin{scope}[xshift=-2cm]
      \node  (g5) {
        \includegraphics[width=4cm,angle=90]{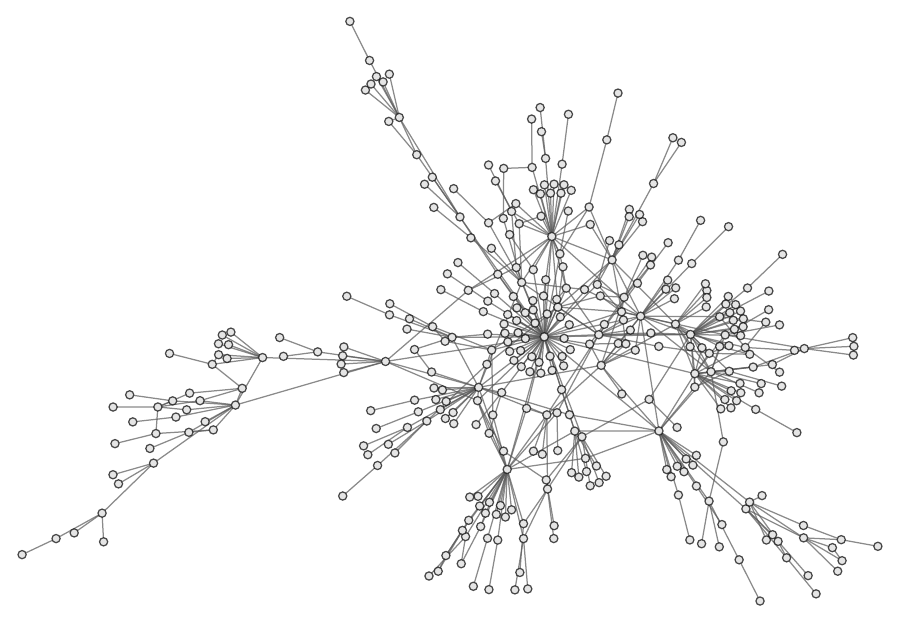}
      };
      \node [below=1.9cm, align=flush center] at (g5) {
     \footnotesize{{$\alpha=0.5, \lambda=4$}}
      };
    \end{scope} 
    \begin{scope}[xshift=1cm]
      \node  (g5) {
        \includegraphics[width=4cm,angle=90]{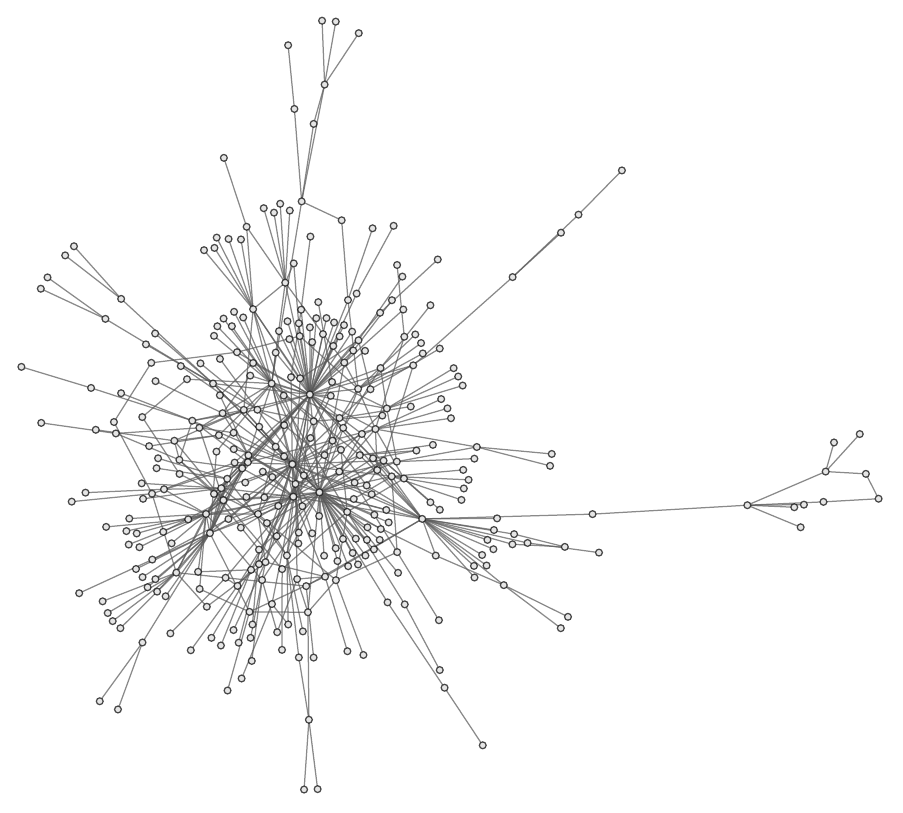}
      };
      \node [below=1.9cm, align=flush center] at (g5) {
     \footnotesize{{$\alpha=0.5, \lambda=8$}}
      };
    \end{scope} 
    \begin{scope}[xshift=-8cm]
      \node (g1){
        \includegraphics[width=4cm,angle=90]{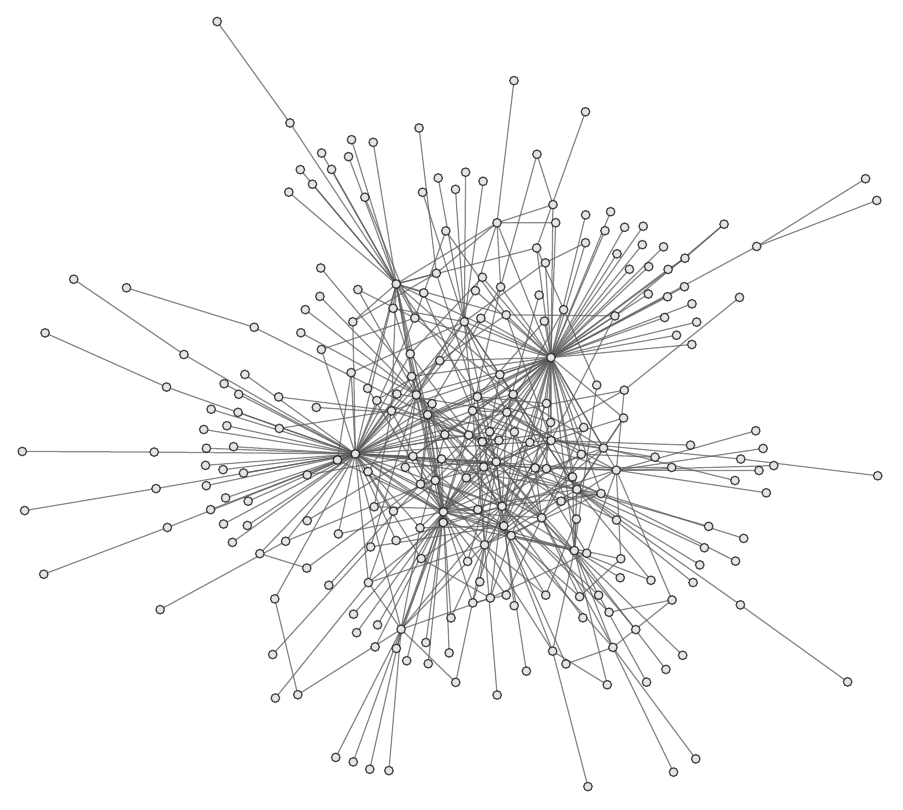}
       
      };
      \node [below=1.9cm, align=flush center] at (g1) {
      	 \footnotesize{{$\alpha=0.1, \lambda=4$}}
      };
    \end{scope}
    \begin{scope}[xshift=4cm]
      \node  (g3) {
        \includegraphics[width=4cm,angle=90]{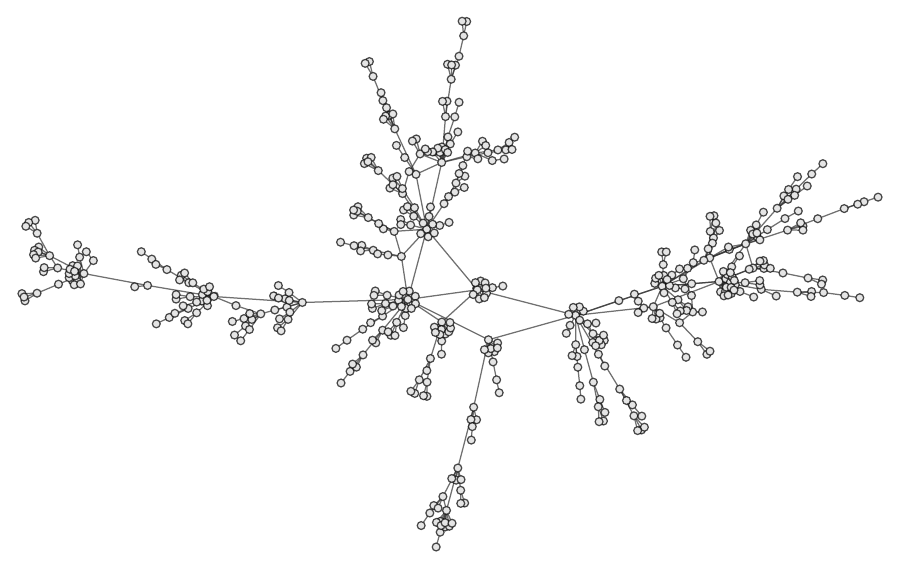}
      };
      \node [below=1.9cm, align=flush center] at (g3) {
	   \footnotesize{{$\alpha=0.9, \lambda=4$}}
      };
    \end{scope}
  \end{scope}
\end{tikzpicture}

  }}
  \caption{Examples of simple graphs generated by a random walk model: 
    $\RWU(\alpha,\text{Poisson}_+(\lambda))$ distribution (top row),
    and by a $\RWSB(\alpha,\text{Poisson}_+(\lambda))$ distribution (bottom row).}
  \label{fig:examples:uniform}\label{fig:examples:sb}
\vspace{-.5cm}
\end{figure}

\section{Model properties}
\label{sec:properties}

We first observe that graphs generated by the model are sparse by construction.
Let ${(G_1,\ldots,G_t,\ldots)}$ be a graph sequence generated by any $\RW$ model on simple or multigraphs,
with parameter $\alpha$, where $G_1$ has a single edge connecting two vertices. Denote by $\vertexset(G_t)$ and $\edgeset(G_t)$ the set of vertices and of edges, respectively, in $G_t$. For any $\RW$ model, each vertex is added with one edge and there is a constant positive probability at each step of adding a new vertex, which leads to the following:
\begin{observation}
  Graphs generated by the random walk model are sparse:
  In a sequence ${(G_1,G_2,\ldots)\sim\RW(\alpha,P)}$, the number of edges grows as
  ${|\edgeset(G_t)|=\Theta(|\vertexset(G_t)|)}$.
\end{observation}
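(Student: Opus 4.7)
The plan is to decouple the growth of the edge and vertex counts, each of which turns out to grow linearly in $t$. The edge side is deterministic: inspection of \cref{scheme:multigraph} shows that at every step $t \geq 2$ exactly one new edge is inserted, regardless of whether step (2) (attach a new vertex to $V$) or step (3)/(3$'$) (connect $V$ to an existing $V'$) is taken. Starting from $|\edgeset(G_1)| = 1$, this gives deterministically $|\edgeset(G_t)| = t$, so the edge count grows exactly linearly in $t$.

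For the vertex count, I would let $B_s$ denote the indicator that a new vertex is added at step $s$, so that $|\vertexset(G_t)| = 2 + \sum_{s=2}^{t} B_s$. The trivial upper bound $|\vertexset(G_t)| \leq t+1$ follows since each step adds at most one vertex. For the lower bound, in the multigraph case $B_s \sim \Bernoulli(\alpha)$ independently of $G_{s-1}$, and the strong law of large numbers gives $|\vertexset(G_t)|/t \to \alpha$ almost surely. In the simple-graph case, step (3$'$) also creates a new vertex whenever the walk returns to $V$ or lands on a neighbor already connected to $V$, so conditionally on $G_{s-1}$ one has $\P(B_s = 1 \mid G_{s-1}) \geq \alpha$; hence $\sum_{s=2}^{t} B_s$ stochastically dominates a $\text{Binomial}(t-1, \alpha)$ variable, and a Chernoff plus Borel--Cantelli argument yields $|\vertexset(G_t)| \geq (\alpha/2)\, t$ almost surely for all sufficiently large $t$.

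Putting the two sides together, $|\edgeset(G_t)| = t$ while $(\alpha/2)\, t \leq |\vertexset(G_t)| \leq t + 1$ eventually, whence $|\edgeset(G_t)| = \Theta(|\vertexset(G_t)|)$. The only mild obstacle is the graph-dependence of the increments $B_s$ in the simple-graph regime, which one sidesteps by the stochastic domination argument above rather than by a direct SLLN; everything else is routine bookkeeping, and the argument actually uses nothing about the random walk beyond the fact that the probability of adding a new vertex at each step is bounded below by $\alpha > 0$.
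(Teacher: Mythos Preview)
Your proposal is correct and follows exactly the reasoning the paper sketches in the sentence preceding the Observation: each step adds one edge, and the probability of adding a new vertex is bounded below by $\alpha>0$, so both $|\edgeset(G_t)|$ and $|\vertexset(G_t)|$ grow linearly in $t$. The paper treats this as a one-line remark and does not spell out the SLLN or stochastic-domination details for the simple-graph variant, so your write-up is in fact more careful than the original.
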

If the model is run for an infinite number of steps (with ${\alpha>0}$), it generates an infinite
number of vertices. Sparsity implies this infinite vertex set is not exchangeable. Since
vertices inserted earlier have larger expected degree, finite subsequences of vertices are not
exchangeable, either.

\subsection{Mixed random walks and the graph Laplacian}

Most results in the sequel involve the law of a simple random walk. For a walk of fixed
length, this law is determined by the graph's Laplacian matrix. 
If the length is random, one has to mix against its distribution; 
as we show below, the mixed law of the random walk still has closed form. 

Consider an undirected graph $G$ with $n$ vertices, adjacency matrix
$A$ and degree matrix ${D=\text{\rm diag}(\deg(v_1),\ldots,\deg(v_n))}$.
The probability that a simple random walk started at a vertex $u$ terminates at $v$ after exactly $k$
steps is the entry $(u,v)$ of the matrix ${(D^{-1}A)^k}$.
If the length of the walk is a random variable $K$ with law $P$, the marginal probability of reaching $v$ from $u$ is hence
\begin{align}
  \label{eq:mixed:rw:probability}
  \P\lbrace V_{end}=v|V_0=u,\, G\rbrace 
  \quad=\quad
  \bigl[ \textstyle\sum_{k}(D^{-1}A)^{k} P\lbrace K=k\rbrace \bigr]_{uv}\;
  = \bigl[ \E_P\bigl[\left(D^{-1}A\right)^k\bigr] \bigr]_{uv} \;.
\end{align}
A useful way to represent the information in ${D^{-1}A}$ is as the matrix
\begin{equation}
  \mathbf{L}:=D^{1/2}(\iden_n-D^{-1}A)D^{-1/2}\qquad\text{ with }\iden_n:=\text{identity matrix}\;,
\end{equation}
known as the \kword{normalized graph Laplacian}
\citep[e.g.,][]{Chung:Yau:2000:1}. Letting the probability generating function (p.g.f.) of $K$ be $H_P(z) = \E_P[z^K]$, the law of a random walk of random length is obtained as follows:
\begin{proposition} \label{prop:mixed_random_walks}
  Let $g$ be a connected, undirected graph on $n$ vertices, and let $K$ have law $P$ with p.g.f. $H_P(z)$. Let $(\LeigVal_i)_{i=1}^n$ be the eigenvalues, and $(\LeigVec_i)_{i=1}^n$ the eigenvectors, of $\L$. Then for a simple random walk ${(V_0,\ldots,V_{K})}$
  of random length $K$ on $g$,
  \begin{equation} \label{eq:Q:matrix}
      \P\lbrace V_{end}=v|V_0=u,\, g\rbrace 
      =
      \bigl[
        D^{-1/2}\mathcal{K}_P D^{1/2}
        \bigr]_{uv} \quad \text{with} \quad \mathcal{K}_P := \sum_{i=1}^{n} H_P(1 - \LeigVal_i)\LeigVec_i \LeigVec_i' \;.
  \end{equation}
\end{proposition}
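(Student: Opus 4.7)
The plan is to reduce the mixed-length random walk probability to a spectral computation on $\mathbf{L}$. Starting from the identity in \eqref{eq:mixed:rw:probability}, I would first rewrite the transition matrix $D^{-1}A$ in terms of $\mathbf{L}$. From the definition $\mathbf{L} = D^{1/2}(\iden_n - D^{-1}A)D^{-1/2}$, one solves to get $D^{-1}A = D^{-1/2}(\iden_n - \mathbf{L})D^{1/2}$. Because the conjugation by $D^{1/2}$ telescopes, this yields
\begin{equation*}
  (D^{-1}A)^k \;=\; D^{-1/2}(\iden_n - \mathbf{L})^k D^{1/2}
\end{equation*}
for every $k\in\mathbb{N}$.

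Next, I would invoke the spectral theorem for $\mathbf{L}$. Although $\mathbf{L}$ is written as a similarity transform of the non-symmetric matrix $\iden_n - D^{-1}A$, a short calculation gives $\mathbf{L} = \iden_n - D^{-1/2} A D^{-1/2}$, which is symmetric. Hence $\mathbf{L}$ admits an orthonormal eigendecomposition $\mathbf{L} = \sum_{i=1}^n \LeigVal_i \LeigVec_i \LeigVec_i'$, and $(\iden_n - \mathbf{L})^k = \sum_i (1-\LeigVal_i)^k \LeigVec_i \LeigVec_i'$.

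Combining the two displays and taking the expectation over $K\sim P$, linearity (plus the fact that the sum over $i$ is finite and each $|1-\LeigVal_i|\le 1$ since the spectrum of $\mathbf{L}$ lies in $[0,2]$, so the interchange of expectation and sum is immediate) gives
\begin{equation*}
  \E_P\bigl[(D^{-1}A)^k\bigr] \;=\; D^{-1/2}\Bigl(\sum_{i=1}^n \E_P[(1-\LeigVal_i)^K]\,\LeigVec_i \LeigVec_i'\Bigr) D^{1/2}.
\end{equation*}
By the definition of the probability generating function, $\E_P[(1-\LeigVal_i)^K] = H_P(1-\LeigVal_i)$, so the bracketed sum is exactly $\calK_P$. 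Reading off the $(u,v)$ entry and recalling \eqref{eq:mixed:rw:probability} closes the argument.

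There is no real obstacle here; the only point requiring a moment of care is the justification that $\mathbf{L}$ is symmetric (so that a real orthonormal eigenbasis exists and $\calK_P$ is well-defined), and that the spectral bound $\LeigVal_i\in[0,2]$ makes the p.g.f.\ evaluations $H_P(1-\LeigVal_i)$ finite, which validates passing the expectation through the spectral sum.
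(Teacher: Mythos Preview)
Your argument is correct and is in fact more complete than the paper's own proof. The paper does not prove the general formula with $H_P$; instead it only works out the two special cases mentioned immediately after the proposition, namely $P=\Poisson(\lambda)$ and $P=\NB(r,p)$, by summing the explicit power series $\sum_k P\{K=k\}(D^{-1}A)^{k+1}$ and recognizing the matrix exponential and matrix resolvent, respectively. Your route---rewrite $D^{-1}A=D^{-1/2}(\iden_n-\mathbf{L})D^{1/2}$, diagonalize the symmetric matrix $\mathbf{L}$, and identify $\E_P[(1-\LeigVal_i)^K]=H_P(1-\LeigVal_i)$---handles all $P$ at once and makes the role of the p.g.f.\ transparent. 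The underlying identity and the conjugation trick are the same in both; what you add is the explicit appeal to the spectral theorem and the observation that $\LeigVal_i\in[0,2]$ keeps $H_P(1-\LeigVal_i)$ finite, which the paper leaves implicit.
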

For a $\Poisson(\lambda)$ length, the result specializes to ${\mathcal{K}_{\lambda +}:=(\iden_n - \L)e^{-\lambda\L}}$. If
over-dispersion is a concern, one can replace the Poisson by a negative binomial distribution with parameters
$r$ and $p$, again shifted by one to the positive integers, and denoted $\NB(r,p)$. In that case, one obtains
${\mathcal{K}_{r,p +}:=(\iden_n - \L)\bigr(\iden_n+{\textstyle\frac{p}{1-p}}\L\big)^{-r}}$. 
These matrices 
arise in other contexts: 
The \kword{heat kernel} $\mathcal{K}_{\lambda}=e^{-\lambda\L}$ and the \kword{regularized Laplacian kernel} $\mathcal{K}_{r,p} = \bigr(\iden_n+{\textstyle\frac{p}{1-p}}\L\big)^{-r}$ have applications in collaborative filtering,
semi-supervised learning and manifold learning \citep{Smola:Kondor:2003,Lafferty:Lebanon:2005}.
Both act as smoothing operators on functions defined
on the vertex set, by damping the eigenvalue spectrum \citep{Smola:Kondor:2003,Vishwanathan:etal:2010}. 
The expected length of the random walk can be interpreted as a length
scale (see \cref{sec:experiments}), and a similar idea was previously used by \citet{Pons:Latapy:2006} to define a distance between vertices of a graph: if the distribution over the terminal vertex of a random walk of (fixed) length $k$ started from vertex $u$ is close to that started from vertex $v$, then $u$ and $v$ are ``close.'' 
It can be regularized by mixing the walk length against a Poisson distribution, as suggested in \citet[][Sec. 3.4]{Pons:Latapy:2006}. We note that, using \cref{prop:mixed_random_walks} above, the Pons--Latapy metric can be regularized more generally by an arbitrary walk length distribution.

\subsection{Maximum likelihood estimation for fully observed sequences}
\label{sec:mle}

In the special case of ``dynamic'' multigraphs, where the entire history is
observed, model parameters can be estimated by maximum likelihood.
Let ${(v_s,v_s')}$ be the vertices connected in step ${s}$, and 
${B_s:=\indicator\braces{\min\braces{\deg_{s-1}(v_s),\deg_{s-1}(v_s')} = 0}}$.
That is, the edge was generated by step (2) of the sampling scheme if and only if ${B_s=1}$. Define
\begin{equation}
  \label{eq:MLE:RW:probability}
  \RWprob^{\lambda}_s:=
  D_s^{-1/2}\mathcal{K}_{\lambda +,s} D_s^{1/2} \;,
\end{equation}
and denote by $\bar{d}_s(v)$ the relative degree of vertex $v$ in $G_s$.
\begin{proposition}
  Let ${(G_1,\ldots,G_t)\sim\RWSB(\alpha,\Poisson(\lambda))}$ be a random multigraph sequence,
  where $G_1$ has a single edge connecting two vertices. Then
  \begin{equation*} 
    \hat{\alpha}_t:=\frac{N_t-2}{t-1}
    \quad\text{ and }\quad
    \hat{\lambda}_t:=\arg\max_{\lambda\geq 0} \prod_{s=2}^t\bigl(
        [Q^{\lambda}_{s-1}]_{v_s v_s'} \bar{d}_{s-1}(v_{s})
        +
        [Q^{\lambda}_{s-1}]_{v_s' v_s} \bar{d}_{s-1}(v_{s}')
        \bigr)^{1 - B_s}
  \end{equation*}
  are maximum likelihood estimators of $\alpha$ and $\lambda$, respectively.
\end{proposition}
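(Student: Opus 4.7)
The plan is to write down the full likelihood of $(G_1,\ldots,G_t)$ under the generative scheme, exploit its Markov structure to factorize over steps, and observe that once we take logarithms the likelihood decomposes additively into a term depending only on $\alpha$ and a term depending only on $\lambda$. Each term can then be maximized separately.

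First I would use the Markov property of \cref{scheme:multigraph} to write
$P(G_1,\ldots,G_t\mid\alpha,\lambda) \;=\; P(G_1)\prod_{s=2}^{t} P(G_s\mid G_{s-1},\alpha,\lambda).$
Because we are in the multigraph setting, each transition $G_{s-1}\to G_s$ is described entirely by the pair $(v_s,v_s')$ connected at step $s$. I would then split into two exhaustive and exclusive cases according to whether the new edge involves a previously-unseen vertex. Since a simple random walk on $G_{s-1}$ cannot visit a vertex outside $\vertexset(G_{s-1})$, the event $B_s=1$ occurs if and only if the edge was produced by branch (2) of the scheme. In that case the transition probability is $\alpha\cdot\bar d_{s-1}(v_s^{\text{old}})$, where $v_s^{\text{old}}$ is the non-new endpoint chosen size-biased. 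In the complementary case $B_s=0$, the edge was produced by branch (3), so the starting vertex was chosen size-biased and the walk terminated at the other endpoint; since the direction of the walk is not observed, I would sum both possibilities. By \cref{prop:mixed_random_walks} with $P=\Poisson(\lambda)$, the relevant transition probabilities are exactly the entries of $Q^{\lambda}_{s-1}$ defined in \eqref{eq:MLE:RW:probability}, giving
$(1-\alpha)\bigl([Q^{\lambda}_{s-1}]_{v_s v_s'}\bar d_{s-1}(v_s) + [Q^{\lambda}_{s-1}]_{v_s' v_s}\bar d_{s-1}(v_s')\bigr).$

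Taking logs and summing over $s=2,\ldots,t$, the log-likelihood separates as $\ell(\alpha,\lambda) = \ell_\alpha(\alpha) + \ell_\lambda(\lambda) + C$, where $\ell_\alpha(\alpha) = \sum_s \bigl[B_s\log\alpha + (1-B_s)\log(1-\alpha)\bigr]$ is a pure Bernoulli log-likelihood in $\alpha$, and $\ell_\lambda(\lambda)$ collects precisely the $B_s=0$ terms containing $Q^{\lambda}_{s-1}$. The Bernoulli piece is maximized by $\hat\alpha_t = M_t/(t-1)$ where $M_t:=\sum_{s=2}^{t}B_s$; since each $B_s=1$ contributes exactly one new vertex and $G_1$ has two vertices, $M_t = N_t - 2$, which yields the stated $\hat\alpha_t$. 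For $\lambda$, dropping the $\lambda$-free terms and exponentiating reduces $\arg\max\ell_\lambda$ to $\arg\max$ of the product formula in the proposition, and one notes that the factor $(1-\alpha)^{1-B_s}$ does not depend on $\lambda$ and hence drops from the argmax.

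The main subtlety, and really the only place where care is required, is the $B_s=0$ case: one must argue that the event ``a random walk contributed the edge $\{v_s,v_s'\}$ at step $s$'' decomposes into the two directed events for which branch (1)–(3) of the scheme supply probabilities $\bar d_{s-1}(u)\cdot\P\{V_{\text{end}}=v\mid V_0=u,G_{s-1}\}$, and that by \cref{prop:mixed_random_walks} the latter equals $[Q^{\lambda}_{s-1}]_{uv}$ with $Q^{\lambda}_{s-1}=D_{s-1}^{-1/2}\mathcal{K}_{\lambda+,s-1}D_{s-1}^{1/2}$ and $\mathcal{K}_{\lambda+}=(\iden_n-\L)e^{-\lambda\L}$. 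Everything else—Markov factorization, additive separation of $\alpha$ and $\lambda$ in the log-likelihood, and the one-line Bernoulli maximization—is routine, so I would relegate those calculations to a brief aside and devote most of the proof text to laying out the two-case transition probability cleanly.
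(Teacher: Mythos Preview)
Your proposal is correct and follows exactly the argument the paper has in mind. In fact, the paper does not give an explicit proof of this proposition at all: it states the result in \cref{sec:mle} and, in the paragraph immediately after, notes that for multigraphs ``the effects of $\alpha$ and $\lambda$'' are independent (contrasting with the simple-graph case where step (3') couples them). That remark is precisely your separability observation, and your derivation---Markov factorization, the two-case split on $B_s$, the Bernoulli MLE for $\alpha$ via $M_t=N_t-2$, and the identification of the $\lambda$-dependent factor via \cref{prop:mixed_random_walks}---fills in the routine details the paper omits.
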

The same holds for $\RWU(\alpha,\Poisson(\lambda))$, with ${1/N_{s-1}}$ substituted for 
$\bar{d}_{s-1}(v_s)$ in $\hat{\lambda}_t$. To change the walk length distribution,
one must replace $\mathcal{K}_{\lambda+}$
in \eqref{eq:MLE:RW:probability} with the appropriate matrix from \eqref{eq:Q:matrix}.
For simple graphs, maximum likelihood estimation is still possible in principle, but 
more complicated since the effects of $\alpha$ and $\lambda$ are no longer 
independent. This manifests in step (3') in \cref{sec:model}, 
where the probability of generating an additional vertex 
depends on the outcome of the random walk. Conceptually,
it is due to the fact that a simple graph censors evidence of
repeatedly connecting two vertices.

\subsection{Asymptotic degree properties} \label{sec:asymptotic:degrees}

A property of network models thoroughly studied in the theoretical literature 
is how vertex degrees behave as the graph grows large. For
PA graphs, limiting degree distributions can
be determined analytically \citep{Durrett:2006}.
Our next results describe analogous properties for random walk models. 
In the proofs, effects of the random walk are mitigated by using invariance of the degree-biased distribution
under $\mathcal{K}_P$ in \eqref{eq:Q:matrix} %$\mathcal{K}^{\lambda}$ and $\mathcal{K}^{r,p}$ 
to reduce to techniques developed for preferential attachment.

Consider a $\RWSB$ random multigraph, and 
let $p_d$ be the probability that a vertex sampled uniformly at random has degree $d$. 
The proof of the next result shows that, if the average probability (over vertices) of 
inserting a self-loop vanishes for large $t$, then
\begin{align} \label{eq:deg:distn}
    p_d 
    & = \rho
    \frac{\Gamma(d) \Gamma(1+\rho)}{\Gamma(d + 1 + \rho)} 
    \qquad\text{ where }\qquad
    \rho:=1+\frac{\alpha}{2-\alpha}\;.
\end{align}
as ${t\rightarrow\infty}$. This is the \kword{Yule--Simon distribution}
with parameter ${\rho}$ \citep[e.g.][]{Durrett:2006}. 
To state the result, let $\vertexset_{d,t}$ be the set of vertices in $G_t$ with degree $d$, and $m_{d,t} = |\vertexset_{d,t}|$.
\begin{theorem}[Degree distribution] \label{thm:degree:distribution}
  Let a sequence of multigraphs ${(G_1,G_2,\dots)}$ have law $\RWSB(\alpha,P)$, 
  for some distribution $P$ on $\N$, such that for each $d\in\mathbb{N}$,
  \begin{equation}
    \label{condition:self:loops}
    \frac{1}{m_{d,t}}\sum_{v\in\vertexset_{d,t}} P(V_{\text{end}}=v|V_0=v,G_t)
    = o(1) \qquad\text{ as } t\to\infty \;.
  \end{equation}
 Then the scaled number of vertices in $G_t$ with degree $d$ follows a power law in $d$ with exponent ${1+\rho}$. 
 In particular, ${\frac{m_{d,t}}{\alpha t}\rightarrow p_d}$ in probability, for all ${d\in\mathbb{N}_+}$
 as ${t\rightarrow\infty}$.
\end{theorem}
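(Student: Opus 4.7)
The plan is to reduce the degree dynamics to a preferential-attachment-type recurrence, exploiting the invariance of the size-biased distribution $\pi_t(v)=\deg_t(v)/(2t)$ under the simple random walk transition $D^{-1}A$. Because $\pi_t$ is the stationary distribution of a simple random walk on the (always connected) multigraph $G_t$, and because in $\RWSB$ the starting vertex $V$ is already drawn from $\pi_t$, the terminal vertex $V'$ is \emph{marginally} $\pi_t$-distributed for \emph{any} walk-length distribution $P$. Consequently, up to lower-order corrections, both endpoints of each new non-attachment edge are size-biased picks from $G_t$, which is exactly the mechanism that drives power-law degrees in the classical preferential attachment argument.

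I would then set up the expected one-step change in $m_{d,t}$. With probability $\alpha$, step (2) attaches a new degree-$1$ vertex to a size-biased $V$, contributing $-\alpha\,d\,m_{d,t}/(2t)$, $+\alpha\,(d-1)m_{d-1,t}/(2t)$, and $+\alpha\indicator_{d=1}$. With probability $1-\alpha$, step (3) increments the degrees of both $V$ and $V'$, each marginally size-biased, giving the same contributions with the factor $\alpha$ replaced by $2(1-\alpha)$, \emph{provided} $V\neq V'$. Combining, and writing $c:=\alpha+2(1-\alpha)=2-\alpha$,
\begin{equation*}
\mathbb{E}[m_{d,t+1}-m_{d,t}\mid G_t] \;=\; -c\,\frac{d\,m_{d,t}}{2t} \;+\; c\,\frac{(d-1)\,m_{d-1,t}}{2t} \;+\; \alpha\,\indicator_{d=1} \;+\; R_{d,t},
\end{equation*}
where the remainder $R_{d,t}$ absorbs the anomalous $d\mapsto d+2$ jumps caused by self-loops. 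The remainder is controlled by $\mathbb{E}[R_{d,t}]\leq \mathrm{const}\cdot m_{d,t}\cdot\frac{1}{m_{d,t}}\sum_{v\in\vertexset_{d,t}} \RWprob^P_t(v,v)$, which is $o(m_{d,t}/t)$ by hypothesis \eqref{condition:self:loops}. Substituting the ansatz $m_{d,t}\sim\alpha t\,p_d$ and using $\rho = 2/(2-\alpha)$, the recurrence yields $p_d(\rho+d)=(d-1)p_{d-1}$ for $d\geq 2$ with $p_1=\rho/(\rho+1)$; telescoping and rewriting products as gamma functions gives $p_d=\rho\,\Gamma(d)\Gamma(1+\rho)/\Gamma(d+1+\rho)$, matching \eqref{eq:deg:distn}, and $\Gamma(d)/\Gamma(d+1+\rho)\sim d^{-(1+\rho)}$ delivers the advertised power-law tail. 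Convergence in probability is then a standard Doob-martingale / Azuma-Hoeffding argument: $m_{d,t+1}-m_{d,t}$ is bounded by a constant at each step (at most two vertices change degree), so fluctuations around the conditional mean are $O(\sqrt{t\log t})$, negligible after dividing by $\alpha t$; an induction on $d$ propagates $\mathbb{E}[m_{d,t}]/(\alpha t)\to p_d$ from the base case $d=1$.

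The main obstacle is a clean handling of the remainder $R_{d,t}$. The invariance argument only gives marginals of $V$ and $V'$; their joint law depends on the whole walk kernel $\RWprob^P_t$, and the only off-diagonal contribution that \emph{escapes} the preferential-attachment bookkeeping is the self-loop term, which is precisely the quantity controlled by \eqref{condition:self:loops}. I would verify that the induction hypothesis $m_{d,t}=\Theta(t)$ is consistent with the hypothesis (otherwise the $1/m_{d,t}$ average in \eqref{condition:self:loops} could be degenerate) and check that the martingale differences are uniformly bounded uniformly in $d$, so that concentration transfers to all $d$ simultaneously through the induction. A secondary care point is that the recursion argument implicitly uses $N_t/t\to\alpha$, which itself is a routine LLN for a sum of Bernoulli$(\alpha)$ indicators and can be cited as a separate lemma.
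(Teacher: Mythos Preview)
Your proposal is correct and follows essentially the same route as the paper: invariance of the size-biased distribution under the random walk kernel reduces both endpoints to size-biased picks (the paper's \cref{lemma:invariance}), yielding the same recurrence $\E[m_{d,t+1}]=\alpha\delta_1(d)+\E[m_{d,t}](1-(2-\alpha)d/(2t))+\E[m_{d-1,t}](2-\alpha)(d-1)/(2t)+o(1)$ solved by induction on $d$ (the paper cites Durrett's standard lemma for the limit), and concentration is handled by an Azuma-type bound (the paper defers to Chung--Lu, which is precisely a Doob-martingale/Azuma argument). Your identification of the self-loop term as the sole remainder and its control via \eqref{condition:self:loops} matches the paper's $o(1)$ correction exactly.
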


\noindent Condition \eqref{condition:self:loops} controls the number of self-loops, by requiring that
the probability of a random walk ending where it starts vanishes for ${t\rightarrow\infty}$, separately
for each degree $d$. Simulations indicate that it holds for many walk length distributions, including Poisson.

\def\DEG{\textbf{deg}}

Denote by ${\DEG_t:=(\deg_t(v_1),\deg_t(v_2), \dots)}$ the \kword{degree sequence}, where $v_{j}$ is the $j$-th vertex to appear in the graph sequence. 
For PA models, the limiting sequence $\DEG_{\infty}$ can be studied analytically
\citep[see e.g.][]{Durrett:2006}.
It is closely related to a P\'{o}lya urn, and like the limit of an urn, 
$\DEG_{\infty}$ is itself a random variable. Informally, edges created early have 
sufficiently strong influence on later edges
that randomness does not average out asymptotically.
The joint law of $\DEG_{\infty}$ can be obtained explicitly
\citep{Mori:2005,Pekoz:Rollin:Ross:2017}.
In some cases, it also admits constructive representations:
Using only sequences of independent elementary random variables, one can generate a random sequence whose joint 
distributions
are identical to those of the limiting degree sequence \citep{Mori:2005,James:2015aa,Bloem-Reddy:Orbanz:2017aa}. Such representations are
closely related to ``stick-breaking constructions'' \citep[e.g.][]{Pitman:2006}.

The next result similarly describes the limiting degree sequence $\DEG_{\infty}$ of the $\RWSB$ model. 
The relevant technical tool is to condition
on the order in which edges occur: For a graph sequence $(G_1,G_2,\ldots)$,
denote by $S_j$ the time index at which vertex $v_j$ is 
inserted in the model (that is, $G_{S_j}$ is the first graph containing $v_j$), and let ${S_{1:r}:=(S_1,\ldots,S_r)}$.
Let $\Sigma_{\beta}$ be a positive stable random variable with index $\beta\in(0,1)$, with Laplace transform ${\E[ e^{-t\Sigma_{\beta}} ] = e^{-t^{\beta}}}$, and $f_{\beta}$ its density. Define $\Sigma_{\beta,\theta}$, for $\theta > -\beta$, as a random variable with the polynomially tilted density  $f_{\beta,\theta}(s) \propto s^{-\theta} f_{\beta}(s)$. The variable ${M_{\beta,\theta}:=\Sigma_{\beta,\theta}^{-\beta}}$ is said to have \kword{generalized Mittag--Leffler distribution} 
\citep{Pitman:2006,James:2015aa}.

\begin{theorem}[Degree sequence] \label{thm:degree:sequence}
  Let a sequence of multigraphs ${(G_1,G_2,\dots)}$ have law $\RWSB(\alpha,P)$, 
  for some distribution $P$ on $\N$.
  Conditionally on $(S_1,S_2,\ldots)$, the 
  scaled degree sequence converges jointly to a random limit: For each
  $r\in\N_+$, 
    \begin{equation} \label{eq:deg:seq}
      t^{-1/\rho}(\deg_t(v_1), \deg_t(v_2), \dots, \deg_t(v_r))\,\big\vert\,S_{1:r}
      \quad\xrightarrow{t\rightarrow\infty}\quad (\xi_1, \xi_2, \dots, \xi_r)\,\big\vert\,S_{1:r}
    \end{equation}
    almost surely.
    Each limiting conditional law ${\mathcal{L}(\xi_j \mid S_j)}$, for ${j\geq 1}$, 
    can be represented constructively: 
    Let ${M_j\sim\MittagLeffler\left(\rho^{-1}, S_j-1 \right)}$, 
    ${B_j\sim\BetaDist\left(1,\rho(S_j-1)\right)}$, and \\${\psi_j \sim \BetaDist\left( S_j, S_{j+1} - S_j \right)}$ be conditionally independent given $S_j$. Then for $1 \leq i < j$
    \begin{equation} \label{eq:marginal:construction}
      \xi_j \big\vert S_j \;\eqdist \; M_j \cdot B_j\big\vert S_j \;\eqdist\; \xi_i \prod_{k=i}^{j-1} \psi_k^{1/\rho} \;\big\vert S_{i:j} \;.
    \end{equation}
    Each marginal law ${\mathcal{L}(\xi_j)}$ is uniquely determined by the sequence of moments
    \begin{align} \label{eq:marginal:moments}
      \E\bigr[ \xi_j^k \bigl] 
      & = \frac{ \Gamma(k+1) \Gamma(j-1) }{ \Gamma(j-1 + \frac{k}{\rho}) }
      \alpha^{\frac{k}{\rho}}
      \,_2F_1\bigl(1 + {\textstyle\frac{k}{\rho}},{\textstyle\frac{k}{\rho}}; 
      j-1+{\textstyle\frac{k}{\rho}};1-\alpha\bigr)
    \end{align}
    where $\,_2F_1(a,b;c;z)$ is the ordinary hypergeometric function.
\end{theorem}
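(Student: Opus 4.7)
The plan rests on a \emph{reduction to linear preferential attachment}: under $\RWSB$, the size-biased distribution ${\pi_t(v)=\deg_t(v)/(2t)}$ is stationary for the mixed random walk transition of \cref{prop:mixed_random_walks}, so if $V$ is drawn from $\pi_t$ and $V'$ is its walk endpoint, then $V'$ is also $\pi_t$-distributed. Using this, I would first show that for every $v\in\vertexset(G_t)$,
\begin{equation*}
  \E[\deg_{t+1}(v) - \deg_t(v) \mid G_t] \;=\; \frac{2-\alpha}{2t}\,\deg_t(v) \,,
\end{equation*}
by splitting step $t{+}1$ into the four disjoint events $\{V=v,\text{ step }(2)\}$, $\{V=v,\text{ step }(3),\,V'=v\}$, $\{V=v,\text{ step }(3),\,V'\ne v\}$, $\{V\ne v,\text{ step }(3),\,V'=v\}$; the graph-dependent self-loop probabilities (the diagonal entries of $D^{-1/2}\mathcal{K}_P D^{1/2}$ at $v$) cancel thanks to stationarity of $\pi_t$. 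This identifies $\rho=2/(2-\alpha)$ and, crucially, shows that conditional on $S_{1:r}$ the walk has been entirely absorbed into an effective degree-proportional fitness, so that $(\deg_t(v_1),\ldots,\deg_t(v_r))$ evolves as a classical linear preferential-attachment urn.

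Given this reduction, the almost-sure joint limit follows from a standard martingale argument. Introducing the normalizing constants ${c_t^{(j)} := \prod_{s=S_j}^{t-1}(1+\tfrac{1}{\rho s})^{-1} = \Theta(t^{-1/\rho})}$, the process $M_t^{(j)} := c_t^{(j)}\deg_t(v_j)$ is a nonnegative martingale conditional on $S_{1:r}$. A direct second-moment computation---bounding the self-loop contribution to the conditional variance by a constant times $\pi_t(v_j)$---yields $\sup_t \E[(M_t^{(j)})^2 \mid S_{1:r}]<\infty$, so $M_t^{(j)}\to \xi_j$ almost surely and in $L^2$ given $S_{1:r}$. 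Joint a.s.\ convergence in \eqref{eq:deg:seq} then follows from the joint urn structure, exactly as in the PA case.

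To identify the limit law, I would appeal to the known classification of linear PA urn limits. The urn above has index $\beta=1/\rho$ and initial offset $S_j-1$ (the number of edges already present when $v_j$ is born with degree $1$), so existing results \citep{Mori:2005,James:2015aa,Bloem-Reddy:Orbanz:2017aa,Pekoz:Rollin:Ross:2017} yield the decomposition $\xi_j \eqdist M_j B_j$ with $M_j\sim \MittagLeffler(1/\rho,S_j-1)$ and $B_j\sim \BetaDist(1,\rho(S_j-1))$, together with the residual-allocation identity $\xi_j \eqdist \xi_i\prod_{k=i}^{j-1}\psi_k^{1/\rho}$ for the urn embedded between consecutive arrivals, where $\psi_k\sim\BetaDist(S_k,S_{k+1}-S_k)$. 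The moment formula \eqref{eq:marginal:moments} then follows from the known moments of the generalized Mittag--Leffler distribution, standard Beta moments, and a $\,_2F_1$ identity that accounts for the $\Bernoulli(\alpha)$ marginalization over the arrival times $S_j$.

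The main obstacle is the first step: verifying that the walk's apparent graph dependence collapses \emph{exactly} to a linear degree-proportional recursion, not merely up to a vanishing term. Stationarity of $\pi_t$ under $\mathcal{K}_P$ is the mechanism, but it must be combined with precise bookkeeping of the self-loop (degree-$2$) increments---these are what cancel the otherwise-surviving graph-dependent terms---and with a uniform (in $t$) $L^2$ bound on $M_t^{(j)}$ so that the reduction is preserved in the martingale limit. Once the exact reduction and the $L^2$ bound are in hand, identification of the limit law and derivation of the moment formula reduce to classical, if intricate, urn-theoretic and hypergeometric manipulations.
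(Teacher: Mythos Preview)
Your first-moment computation is correct and coincides with the paper's \cref{lemma:invariance}: stationarity of the size-biased distribution under the mixed random walk gives the exact linear recursion $\E[\deg_{t+1}(v)-\deg_t(v)\mid G_t]=(2-\alpha)\deg_t(v)/(2t)$. This is enough for the first-moment martingale $M_t^{(j)}$ and for almost-sure convergence of each $t^{-1/\rho}\deg_t(v_j)$.

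The gap is in identifying the limit law. Your claim that the process ``evolves as a classical linear preferential-attachment urn'' holds only in conditional mean, not in distribution. The self-loop probabilities do \emph{not} cancel at the level of second (or higher) moments: a direct calculation gives
\[
\E\bigl[(\deg_{t+1}(v_j)-\deg_t(v_j))^2\,\big|\,G_t\bigr]
=\frac{2-\alpha}{2t}\deg_t(v_j)\;+\;\frac{1-\alpha}{t}\deg_t(v_j)\,[\RWprob_t]_{v_jv_j}\,,
\]
and the second term is genuinely graph-dependent. The same obstruction appears in the conditional expectation of every product $\prod_i R_{j_i,k_i}(t)$ needed for the joint moments. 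The degree process is therefore \emph{not} a linear PA urn, and the cited urn-limit classifications cannot be invoked directly to obtain \eqref{eq:marginal:construction}.

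The paper's resolution is to build M\'ori-type martingales $Z_{\jvec,\kvec}(t)=c_{\jvec,\kvec}(t)\prod_i R_{j_i,k_i}(t)$ in which the normalizing constants $c_{\jvec,\kvec}(t)$ are themselves \emph{random} and $\sigalg[t]$-measurable, constructed to absorb an explicit correction $\Sigma_{\jvec,\kvec}(t)$ that collects the entries $[\RWprob_t]_{j_ij_l}$. One then shows $\Sigma_{\jvec,\kvec}(t)\to 0$ almost surely (because degrees diverge and $[\RWprob_t]_{j_ij_l}<1$), so that $c_{\jvec,\kvec}(t)=t^{-\mu_{\kvec}/\rho}(1+o(1))$; the martingale property together with the initial value at $t=S_j$ then delivers the conditional moments $\E[\xi_j^k\mid S_j]=\Gamma(k+1)\Gamma(S_j)/\Gamma(S_j+k/\rho)$. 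From these moments the Mittag--Leffler/Beta decomposition and the recursive identity in \eqref{eq:marginal:construction} are verified by moment-matching, and \eqref{eq:marginal:moments} follows by integrating over the negative-binomial law of $S_j-1-(j-2)$. Your outline becomes correct if you replace the appeal to PA urn results by this random-normalizer martingale: the graph-dependent corrections go into the constants, not into the limit.
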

We note the moment ${\E\bigr[ \xi_j^k \bigl]}$ scales as
${\Gamma(k+1)\alpha^{\frac{k}{\rho}} j^{-\frac{k}{\rho}} \cdot(1 + O(j^{-1}))}$ for 
${j\to\infty}$.  
The simple constructive representation \eqref{eq:marginal:construction}
seems only to exist marginally, in contrast to the 
PA model, where a recursive analogue of \eqref{eq:marginal:construction} holds even jointly 
\citep{James:2015aa}.

\subsection{Relation to preferential attachment and other models}
\label{sec:pa}

\citet{Aiello:Chung:Lu:2002} introduced a generalization of preferential attachment graphs:
Fix ${\alpha\in(0,1]}$. At each step, select two vertices ${V,V'}$ independently from
the degree-biased distribution. With probability $\alpha$, insert a new vertex and connect
it to $V$; otherwise, connect $V$ and $V'$. The model is denoted ${\ACL(\alpha)}$,
and can be regarded as a natural extension of Yule--Simon processes from sequences to graphs.
\begin{proposition} \label{prop:limiting_ACL}
  The limit in distribution ${\RWSB(\alpha,\infty):=\lim_{\lambda\to\infty} \RWSB(\alpha,\lambda)}$ exists for every $\alpha$, and ${\RWSB(\alpha,\infty)=\ACL(\alpha)}$ if both models start with the same seed graph.
\end{proposition}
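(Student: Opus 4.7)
The plan is to reduce the claim to a single-step comparison of transition kernels, and show that the only thing in $\RWSB(\alpha,\lambda)$ that differs from $\ACL(\alpha)$ is the conditional law of the second endpoint $V'$ given the first endpoint $V$; then show that this law converges to the size-biased (stationary) distribution on the current vertex set, independently of $V$, as $\lambda\to\infty$.

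First I would line up the two generative schemes conditionally on the same $G_{t-1}$. In both models the ``new vertex'' branch is identical: with probability $\alpha$, draw $V$ size-biased from $G_{t-1}$ and attach a fresh vertex to $V$. In the ``internal edge'' branch, $\RWSB$ draws $V$ size-biased and then sets $V'$ to be the terminal vertex of a simple random walk of random length $K\sim\Poisson(\lambda)$ started at $V$; $\ACL(\alpha)$ draws $V$ and $V'$ independently from the size-biased distribution. Since the seed graph $G_1$ is connected and every step of either model preserves connectedness (attaching a pendant or adding an edge between existing vertices), the graph $G_{t-1}$ is connected throughout, so the spectral machinery of \cref{prop:mixed_random_walks} applies at every step.

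Next I would apply \cref{prop:mixed_random_walks} to the length distribution $P=\Poisson(\lambda)$, for which $H_P(z)=z\,e^{\lambda(z-1)}$ and hence $H_P(1-\sigma_i)=(1-\sigma_i)e^{-\lambda\sigma_i}$, giving $\mathcal{K}_{\lambda+}=(\iden_n-\mathbf{L})e^{-\lambda\mathbf{L}}=\sum_{i}(1-\sigma_i)e^{-\lambda\sigma_i}\LeigVec_i\LeigVec_i'$. Since $G_{t-1}$ is connected, the zero eigenvalue of $\mathbf{L}$ is simple, with corresponding eigenvector $\LeigVec_0=D^{1/2}\mathbf{1}/\sqrt{2m_{t-1}}$ where $m_{t-1}=|\edgeset(G_{t-1})|$, and all other $\sigma_i>0$ (note even the bipartite case $\sigma_i=2$ is killed by the factor $e^{-2\lambda}$). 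Therefore $\mathcal{K}_{\lambda+}\to \LeigVec_0\LeigVec_0'$ as $\lambda\to\infty$, and a direct computation gives
\begin{equation*}
  [D^{-1/2}\LeigVec_0\LeigVec_0' D^{1/2}]_{uv}
  \;=\; (D^{-1/2}\LeigVec_0)_u\,(D^{1/2}\LeigVec_0)_v
  \;=\; \frac{\deg(v)}{2m_{t-1}}\;,
\end{equation*}
so $\P\{V_{end}=v\mid V_0=u,G_{t-1}\}\to \SB(v)$, independently of $u$. Hence the conditional law of $V'$ given $V$ and $G_{t-1}$ in $\RWSB(\alpha,\lambda)$ converges pointwise to the size-biased law on $\vertexset(G_{t-1})$, matching the independent size-biased draw of $V'$ in $\ACL(\alpha)$.

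Finally I would conclude by induction on $t$: given that the law of $G_{t-1}$ under $\RWSB(\alpha,\lambda)$ converges to the law of $G_{t-1}$ under $\ACL(\alpha)$ (trivial for $t=1$ since the seeds agree), the transition kernel convergence above, combined with the fact that each $G_{t-1}$ has finitely many possible states, yields convergence in law of $G_t$. The main subtlety is really just the spectral step: one needs the simplicity of $\sigma_0=0$ (connectedness) and the observation that the shifted-Poisson generating function damps every other eigenvalue regardless of parity, so that the random walk's conditional law mixes to stationarity even on bipartite graphs. Everything else is bookkeeping between the two schemes.
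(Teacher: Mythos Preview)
Your proposal is correct and follows essentially the same route as the paper: reduce to a one-step comparison of transition kernels, use the spectral decomposition $\mathcal{K}_{\lambda+}=\sum_i(1-\sigma_i)e^{-\lambda\sigma_i}\LeigVec_i\LeigVec_i'$, and let $\lambda\to\infty$ so that only the $\sigma=0$ eigenvector $\LeigVec_0\propto D^{1/2}\mathbf{1}$ survives, yielding $[\RWprob^\lambda]_{uv}\to\deg(v)/\vol(G)$. Your explicit remarks on connectedness being preserved step by step, on the bipartite eigenvalue $\sigma=2$ being damped by $e^{-2\lambda}$, and on the finite-state induction in $t$ are tidy additions the paper leaves implicit, but they do not change the argument.
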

This is of course due to the fact that, as ${\lambda\rightarrow\infty}$, the 
terminal vertex of simple random walk has law approaching the degree-biased distribution. The same is true for any random walk model in the limit where the walk length distribution becomes degenerate at $\infty$ (analogously to $\lambda\to\infty$). 
For ${\alpha=1}$ (and any $\lambda$), both the random walk model and the $\ACL(\alpha)$ model coincide with the Barab\'{a}si--Albert preferential attachment tree.

Random walk models generate networks one edge at a time; this specification of a graph may be viewed as a sequence of edges. Recent work on so-called edge exchangeable graphs \citep{Crane:Dempsey:2016,Williamson:2016,Cai:etal:2016,Janson:2017aa} define network models in terms of an exchangeable sequence of edges. Edges in random walk models are not exchangeable. The limiting $\ACL$ model, however, and a subclass of edge exchangeable graphs both belong to a larger class of models whose distributions are characterized by the arrival times of new vertices \citep{Bloem-Reddy:Orbanz:2017aa}.

\vspace{-0.5em}
\subsection{Covariate information}
\label{sec:covariates}

We briefly mention two possible extensions not further considered in our discussion. 
Nodal covariates $\bfX$ represent exogenous effects on edge formation; when available, they can be incorporated by biasing the probability of a random
walk started at $u$ to end at $v$ by a function of the covariate values at $u$ and $v$. 
A simple way to do so is via a kernel matrix $\calK_{\bfX}$, where $[\calK_{\bfX}]_{uv}$ expresses a suitable notion of similarity between $u$ and $v$, based on $\bfX$. To this end, let ${\calK_{P + \bfX}:=\calK_P + \calK_{\bfX}}$. Define ${\deg_{P+\bfX}(u)=\sum_{v}[\calK_{P + \bfX}]_{uv}}$, and let $D_{P+\bfX}$ be the diagonal matrix with entries ${[D_{P+\bfX}]_{uu}=\deg_{P+\bfX}(u)}$. The probability of a new edge between $u$ and $v$ is then
\begin{align*}
  \P\{ (u,v) | V_0=u, g  \} = [D_{P+\bfX}^{-1} \calK_{P + \bfX}]_{uv} \;.
\end{align*}
For example, the kernel can be chosen to express \emph{homophily}, the tendency of similar vertices to form edges \citep[e.g.][]{Hoff:2008}. Similarly, one can bias the ``popularity'' of a vertex by adding a scalar offset to each degree that depends on covariates: Choose functions $f_i$ with range $(-1,\infty)$ (to ensure the biased degrees
remain positive), and replace matrices in \eqref{eq:mixed:rw:probability} by 
${D'=\text{diag}(\deg(v_1)+f_1(\bfX),\ldots,\deg(v_n)+f_n(\bfX))}$, and $A_{i,j}'=A_{i,j}(1 + f_i/\deg(v_i))$. The inference methods developed below
remain applicable, provided the functions $f_i$ depend only on exogenous covariates, not on the graph.
Temporal covariates, such as age, can also be incorporated into the order in which vertices appear. 

\section{Particle methods for sequential network models} \label{sec:inference}

\newcommand{\prior}[1]{P_{\!\text{\scalebox{.7}{$[#1]$}}}}

Our modeling approach is to use a suitably parametrized sequential random graph
as a statistical model. One could similarly use a preferential attachment graph, or some
other sequential model considered more appropriate for a given task. The approach poses
an inference problem:
A sequential network model is assumed, but only the final graph $G_T$ generated by the model is observed,
as opposed to the entire history $\Gall:=(G_1,\ldots,G_T)$ of the generative process. 
This section develops Markov chain Monte Carlo methods for inference in sequential random graph models,
and in random graph models in particular.

Given the choice of $G_1$---typically a graph consisting of a single edge---$T$ is always
uniquely determined by the number of edges in the observed graph, since each step of the generative
process inserts exactly one edge. 
The methods developed here assume a Bayesian setup, with a prior distribution $\mathcal{L}(\theta)$
on the model parameter $\theta$,
where ${\mathcal{L}(\argdot)}$ generically denotes the law of a random variable.
They sample the posterior distribution 
${\mathcal{L}(\theta|G_T)}$, and are applicable to sequential network models satisfying two properties:
\begin{itemize}
\item[(P1)] The sequence $(G_1,\ldots,G_T)$ of graphs generated by the models forms a Markov chain on 
  the set of finite graphs: ${G_{t+1}\perp\!\!\perp(G_1,\ldots,G_{t-1})\,|\,G_t}$, for each ${t<T}$.
\item[(P2)] The sequence is strictly increasing, in the sense that ${G_t\subsetneq G_{t+1}}$ almost surely.
\end{itemize}
These hold for the random walk models, but also for many other network formation models,
such as preferential attachment graphs, fitness models, and vertex copying models
\citep[e.g.][]{Newman:2009,Goldenberg:etal:2010}.
Despite the attention these models have received in the literature, little work exists on inference
(see \cref{sec:sub:inference:related} for references).

If only a single graph $G_T$ is observed, inference requires 
that the unobserved history is integrated out. The result is a likelihood of the form
\begin{align} \label{eq:likelihood}
  p_{\seqParams}(G_T \mid G_1) = \int p_{\seqParams}(G_T, \Glatent \mid G_1)\; d\Glatent \;.
\end{align}
Since the variables $G_t$ take values in large combinatorial
sets, the integral amounts to a combinatorial sum that is typically intractable. The strategy is to approximate the
integral with a sampler that imputes the unobserved graph sequence, noting that only valid sequences which lead to $G_T$ will have non-zero likelihood \eqref{eq:likelihood}. The sequential nature of the models makes
SMC and particle methods the tools of choice.

\subsection{Basic SMC} \label{sec:sub:basic:SMC}

We briefly recall SMC algorithms: The canonical application is a state space model. Observed is
a sequence ${\seqObs_{1:T} = (\seqObs_1,\dots,\seqObs_T)}$. The model explains the sequence
using an unobserved sequence of latent states $\seqState_{1:T}$, and defines three quantities:
(1) A Markov kernel $q^t_{\seqParams}(\argdot \mid \seqState_{t-1})$ that models transitions between latent states.
(2) An emission distribution $p^t_{\seqParams}$ that explains each observation as 
  ${\seqObs_t \sim p^t_{\seqParams}(\argdot \mid \seqState_{t})}$.
(3) A vector $\seqParams$ collecting the model parameters.
In the simplest case, $\seqParams$ is fixed. The inference target is then the posterior distribution
$\mathcal{L}_{\seqParams}(\seqState_{1:T}\mid \seqObs_{1:T})$ of the latent state sequence.

A SMC algorithm generates some number ${N\in\mathbb{N}}$ of state sequences 
${\seqState_{1:T}^1,\ldots,\seqState_{1:T}^{N}}$,
and then approximates the posterior as a sample average over these sequences, weighted by their respective
likelihoods. The imputed states $\seqState_{1:t}^i$ are called \kword{particles}.
Since the sequence of latent states is Markov, particles can be generated sequentially as
${\seqState_{t}\sim q^t_{\seqParams}(\seqState_{t}|\seqState_{t-1})}$. In cases where sampling from $q^t_{\seqParams}$ is not tractable or desirable,
$q^t_{\seqParams}$ it is additionally approximated by a simpler proposal kernel $r^t_{\seqParams}$. The likelihood, and the accuracy 
of approximation
of $q^t_{\seqParams}$ by $r^t_{\seqParams}$, are taken into account by computing normalized weights
\begin{align} \label{eq:generic:smc:weights}
  w^i_t := \frac{\tilde{w}^i_{t}}{ \sum_{j=1}^{N}\tilde{w}^j_{t} }
  \qquad\text{ where }\qquad
  \tilde{w}^i_{t} 
    = p^t_{\seqParams}(\seqObs_{t} | \seqState^i_{t}) \cdot \frac{  q^t_{\seqParams}(\seqState^i_{t} \mid \seqState^i_{t-1}) }{r^t_{\seqParams}(\seqState^i_{t} \mid \seqState^i_{t-1})} \;.
\end{align}
In step ${t}$, SMC generates particles ${\seqState_{t}^i}$ by first resampling from the previous particles
${(\seqState_{1:t-1}^i)_i}$ with probability proportional to their weights:
\begin{equation*}
  \seqState^i_{1:t} = (\seqState_{1:t-1}^{A^i_{t}}, \seqState_t^i)
  \qquad\text{ where }\quad
  \seqState_{t}^i \sim r^{t}_{\seqParams}(\argdot \mid \seqState_{t-1}^{A^i_{t}})
  \text{ and }
  A^i_{t} \sim \MN(N,(w^i_{t-1})_{i=1}^{N})\;.
\end{equation*}
Resampling a final time after the $T$-th step yields a complete array ${(\seqState_{1:T}^i)_{i}}$, with which the posterior is approximated as the average ${\mathcal{L}_{\seqParams}( dz_{1:T} \mid \seqObs_{1:T}) \approx 
    \frac{1}{N} \sum_{i=1}^{N} \delta_{\seqState^i_{1:T}}(dz_{1:T})}$.
Alternatively, the final resampling step can be omitted, in which case the posterior is approximated with a weighted average. Either approximation is asymptotically unbiased as $N\to\infty$. 
See \citet{Doucet:Johansen:2011} for a thorough review.

\subsection{SMC algorithms for graph bridges} \label{sec:sub:SMC}

\begin{figure}
  \makebox[\textwidth][c]{
    \resizebox{\textwidth}{!}{
  \begin{tikzpicture}
    \node at (0,-.3) {\includegraphics[width=3cm,angle=90]{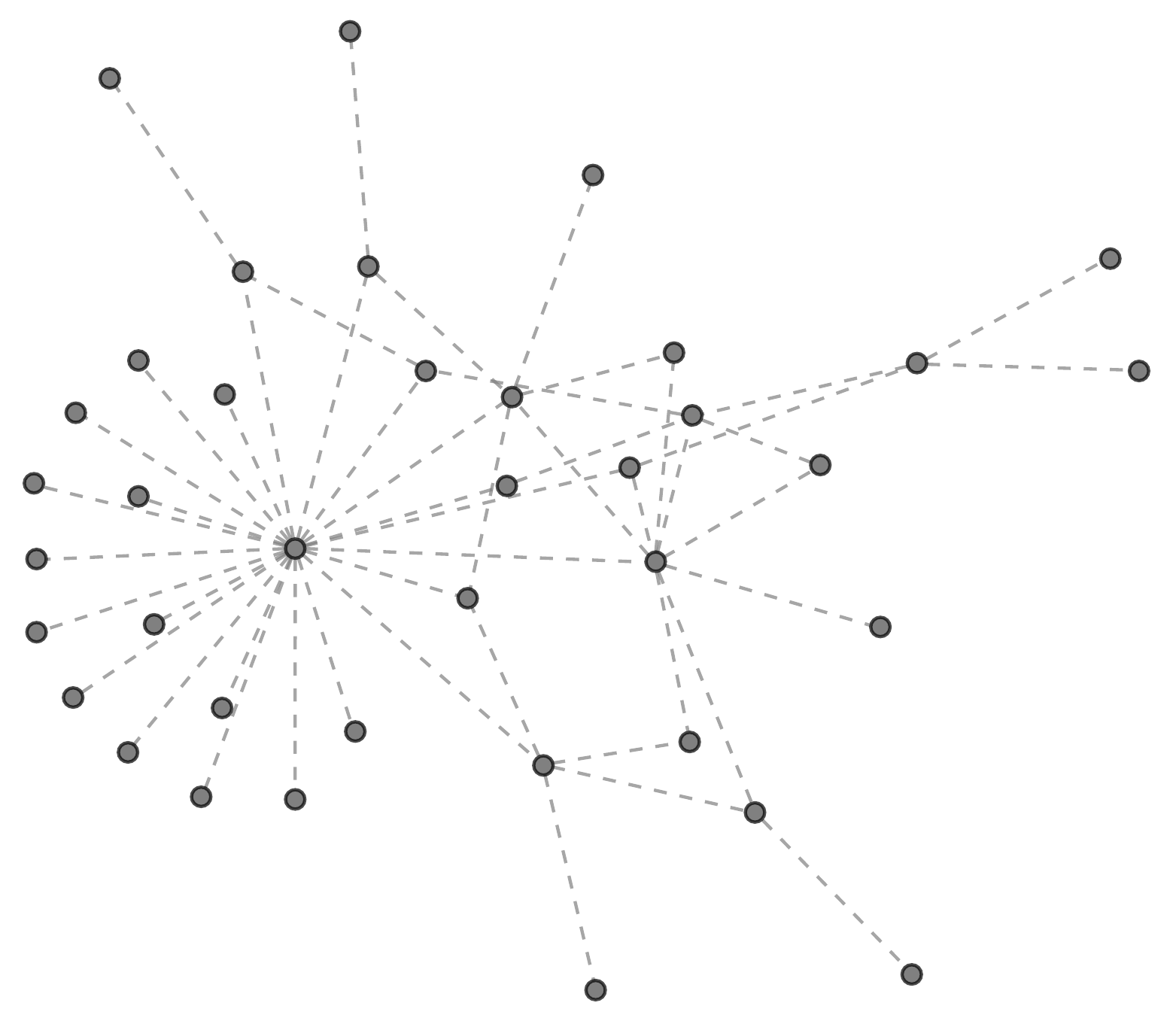}};
    \node at (12,-.3) {\includegraphics[width=3cm,angle=90]{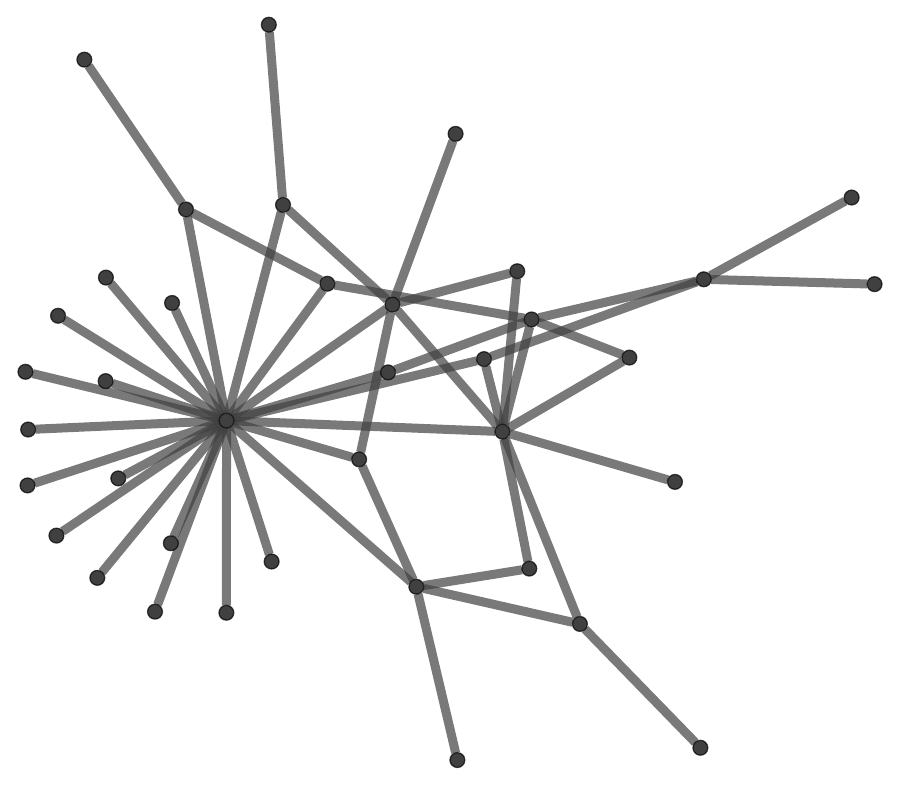}};

    \node at (3,1.4) {\includegraphics[width=3cm,angle=90]{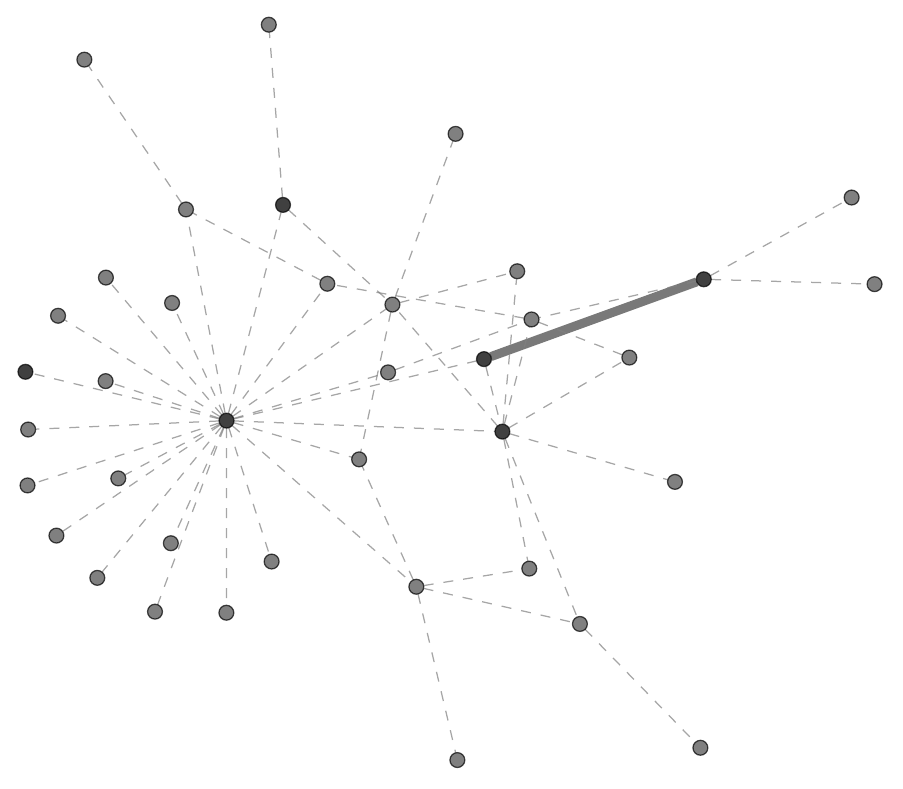}};
    \node at (6,1.4) {\includegraphics[width=3cm,angle=90]{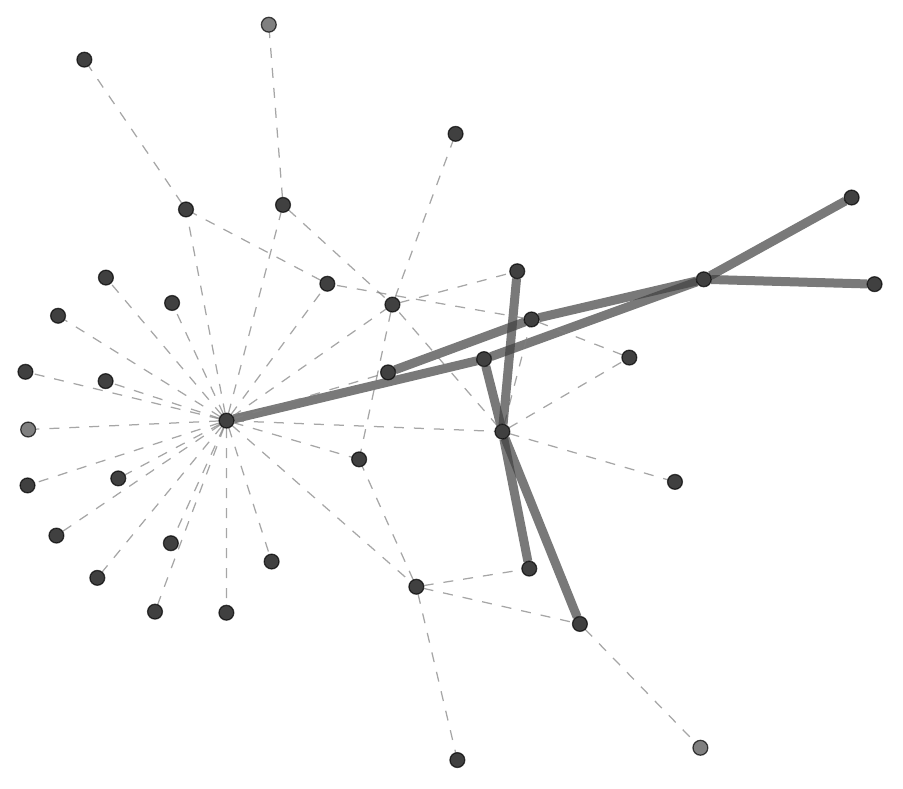}};
    \node at (9,1.4) {\includegraphics[width=3cm,angle=90]{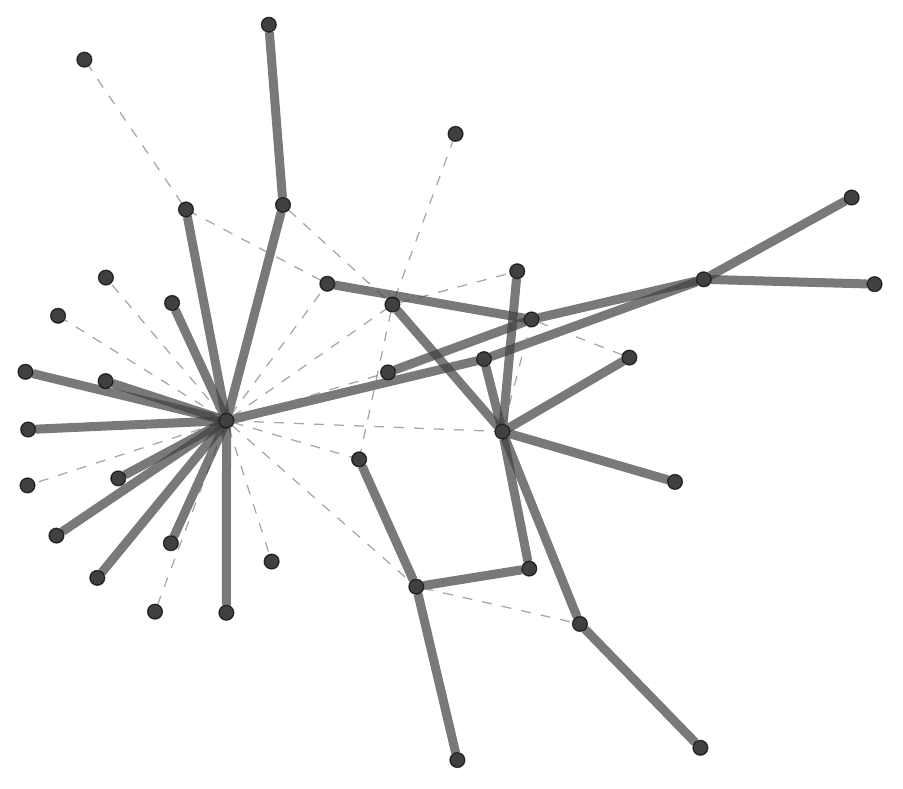}};

    \node at (3,-2) {\includegraphics[width=3cm,angle=90]{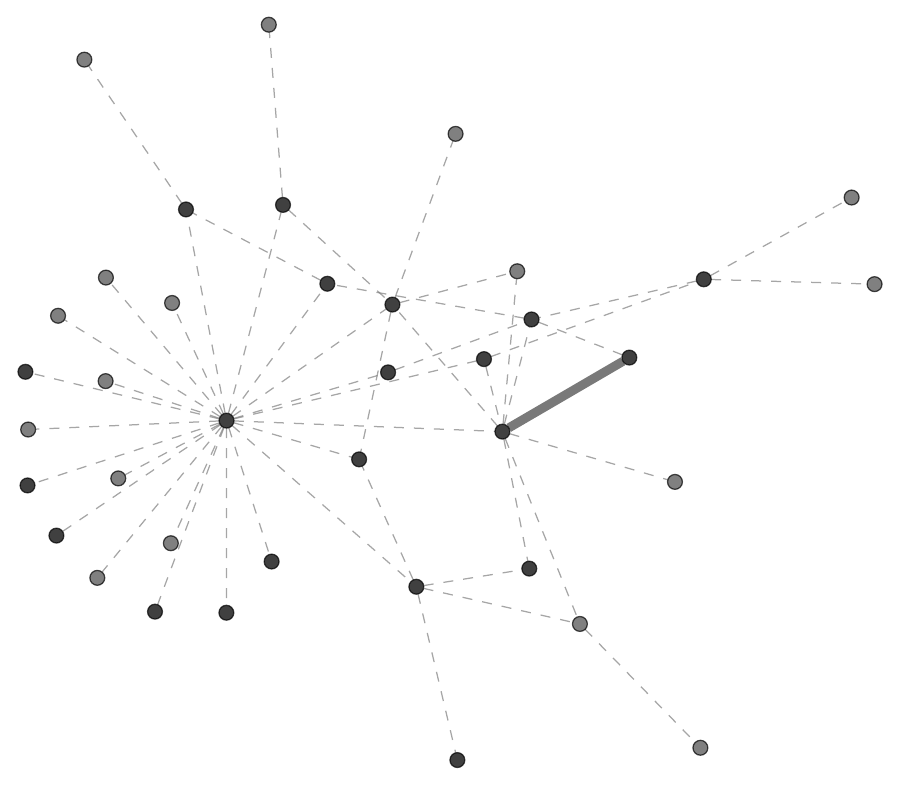}};
    \node at (6,-2) {\includegraphics[width=3cm,angle=90]{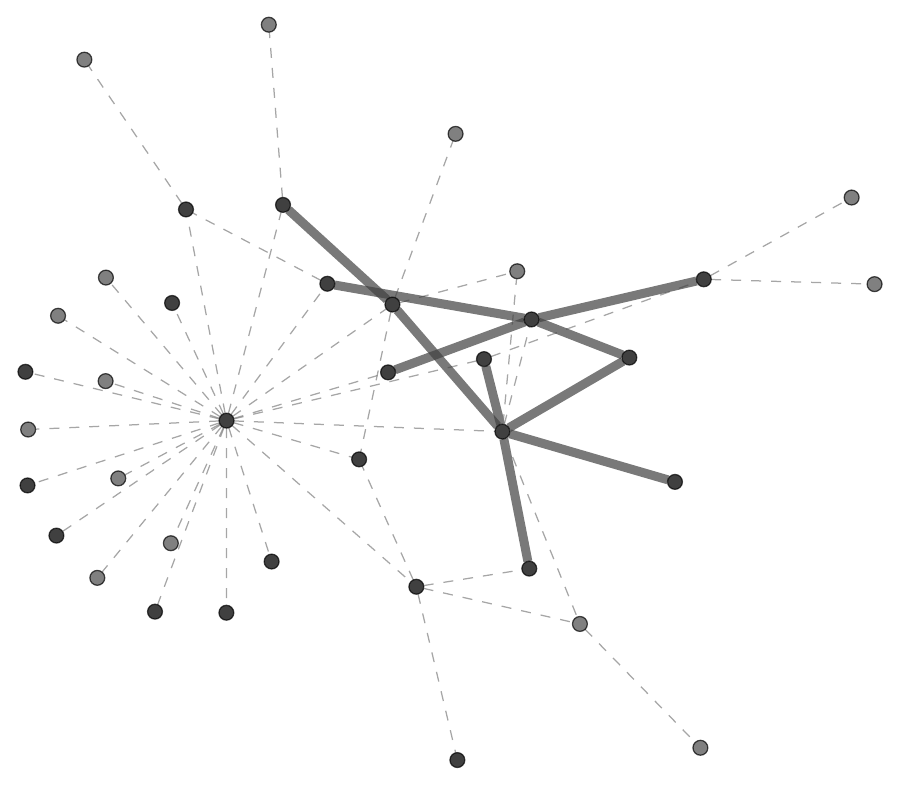}};
    \node at (9,-2) {\includegraphics[width=3cm,angle=90]{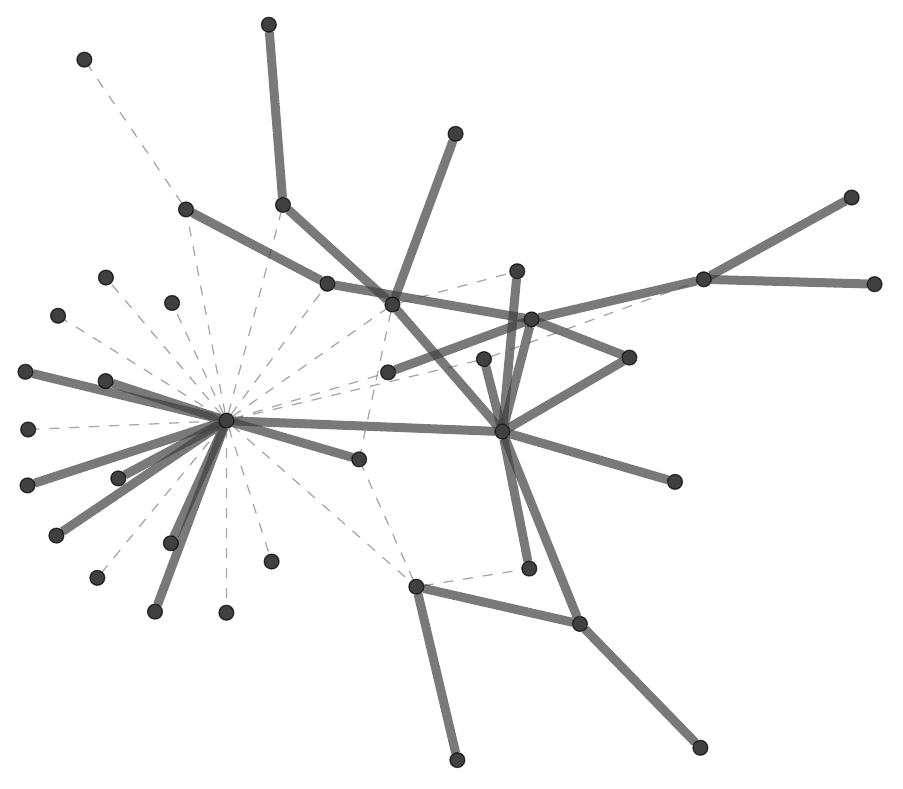}};

    \node at (0,-4) {\footnotesize Vertex set of $G_{50}$};
    \node at (3,-4) {\footnotesize $G_{1}$};
    \node at (6,-4) {\footnotesize $G_{10}$};    
    \node at (9,-4) {\footnotesize $G_{30}$};
    \node at (12,-4) {\footnotesize $G_{50}$};
  \end{tikzpicture}
  }}
  \caption{Two graph bridges generated by \cref{alg:SMC}: A graph $G_{50}$ is drawn from a $\RWU(\alpha,P)$ model, 
    and two graph bridges $G_{1:50}$ are sampled conditionally on the fixed input ${G_{50}}$. Shown
    are the graphs $G_{1}$, $G_{10}$ and $G_{30}$ of each bridge.}
  \label{fig:SMC:bridge}
\end{figure}

Consider a sequential network model satisfying properties (P1) and (P2) above, with model parameters $\seqParams$. For now, $\seqParams$ is fixed, and the objective is to reconstruct the
graph sequence $G_{1:T}$ from its observed final graph $G_T$ and a fixed initial graph $G_1$, \ie the relevant posterior distribution is ${\mathcal{L}_{\seqParams}(G_{1:T} \mid G_T,G_1)}$.
By the Markov property (P1), each step in the process is governed by a Markov kernel
\begin{equation*}
  q_{\seqParams}^t(g \mid g'):=P(G_t=g \mid G_{t-1}=g')\;.
\end{equation*}
The task of a sampler is hence to impute
the conditional sequence ${G_{2:(T-1)} \mid G_1,G_T}$, which is a stochastic process conditioned
on its initial and terminal point, also known as a \kword{bridge}. 
Compared to the SMC sampling algorithm for state space models, the unobserved sequence ${G_2,\ldots,G_{T-1}}$
takes the role of the hidden states ${z_{1:T}}$, whereas $G_1$ and the single observed graph $G_T$ replace the observed
sequence $x_{1:T}$. The emission likelihood $p^t_{\seqParams}(x_t|z_t)$ is replaced by a bridge likelihood,
of $G_T$ given a graph $G_t$.

The relevant likelihood functions, however, are themselves intractable:
If $g_s$ is a fixed graph at index $s$ in the sequence, the probability of observing $g_t$
at a later point $t>s$ is 
\begin{equation*}
  p_{\seqParams}^{t,s}(g_t|g_s):=\int \Bigl(\prod_{i=s}^{t-1} q^{i+1}_{\seqParams}(g_{i+1}|g_{i})\Bigr)dg_{s+1}\cdots dg_{t-1}
  \qquad\text{ whenever }t>s\;.
\end{equation*}
For a candidate graph $G_t$ generated by the sampler, the likelihood under observation $G_T$
is the \kword{bridge likelihood} ${L_{\seqParams}^t(G_t):=p_{\seqParams}^{T,t}(G_T|G_t)}$.
This likelihood is intractable unless ${t=T-1}$; indeed, for ${t=1}$, it
is precisely the integral \eqref{eq:likelihood} we set out to approximate.

A sequence ${G_{1:T}}$ satisfying (P2) is equivalent to an enumeration of the $T$ edges of $G_T$
in order of occurrence. Let $\pi$ be a permutation of ${\braces{1,\ldots,T}}$, \ie an ordered 
list containing each element of the set exactly once, and
\begin{equation*}
  \begin{split}
  \pi_{t}(G_T):= &\text{ graph obtained by deleting all edges of $G_T$ not in $\pi_{t}$,
  and isolated vertices.}
  \end{split}
\end{equation*}
A random sequence $G_{1:T}$ can then be specified as a pair $(\Pi,G_T)$, for a \emph{random} permutation
$\Pi$ of the edges in $G_T$, with ${G_t=\Pi_{t}(G_T)}$. Given $G_T$, not every permutation $\pi$ is a
valid candidate value for the unobserved order:
$\pi$ must describe a sequence of steps of non-zero probability from 
the initial graph to $G_T$, and we hence have to require
\begin{equation}
  L_{\seqParams}^t(\pi_{t}(G_T))>0\qquad\text{ for all }t\leq T\;.
\end{equation}
If so, we call $\pi$ \kword{feasible} for $G_T$.
For $G_T$ given and ${G_t=\Pi_{t}(G_T)}$, we use $G_t$ and $\Pi_t$ interchangeably. 
The target distribution of the SMC bridge sampler at step $t$ is then
\begin{equation}
  \label{eq:bridge:SMC:target}
  \gamma_{\seqParams}(t):=L_{\seqParams}^t(\Pi_t)P(\Pi_{t})=L_{\seqParams}^t(\Pi_t)\prod_{s=1}^{t-1}q_{\seqParams}^{s+1}(\Pi_{s+1} \mid \Pi_s)\;,
\end{equation}
which satisfies the recursion
\begin{equation*}
  \gamma_{\seqParams}(t)=
  \begin{cases}
    \frac{L_{\seqParams}^t(\Pi_t)}{L_{\seqParams}^{t-1}(\Pi_{t-1})} q_{\seqParams}^{t}(\Pi_{t}|\Pi_{t-1})\gamma_{\seqParams}(t-1)
    & \text{if }\Pi_t \text{ is feasible}\\
    0 & \text{otherwise}
  \end{cases} \;.
\end{equation*}
The intractable part is the bridge likelihood ratio ${L_{\seqParams}^t/L_{\seqParams}^{t-1}}$.
Define
\begin{equation*}
  h_t(\Pi_t):=
    \begin{cases}
      \mathds{1}\braces{L_{\seqParams}^t(\Pi_t)>0} & \text{ for all }t < T-1 \\
      L_{\seqParams}^{t}(\Pi_{t}) & \text{ for } t = T-1
    \end{cases} \;.
\end{equation*}
As \cref{prop:smc} below shows, 
the apparently crude approximation ${L_{\seqParams}^t\approx{h_t}}$
still produces asymptotically unbiased samples from the posterior $\mathcal{L}_{\seqParams}(\Gall \mid G_T, G_1)$;
this is based on methodology developed in
\citep{delMoral:Murray:2015} for bridges in continuous state spaces.
If $\Pi_t$ is feasible, substituting $h_t$ into \eqref{eq:bridge:SMC:target} yields the surrogate recursion
\begin{equation*}
  \gamma_{\seqParams}(t)=
    \frac{ q_{\seqParams}^{t}(\Pi_{t} \mid \Pi_{t-1}) }{ r_{\seqParams}^t(\Pi_{t} \mid \Pi_{t-1}) } 
    \gamma_{\seqParams}(t-1) \qquad \text{ for } \quad t < T-1 \;.
\end{equation*}
The proposal kernel $r_{\seqParams}^t$ is hence chosen as the truncation of $q_{\seqParams}^t$ to feasible permutations,
\begin{equation} \label{eq:smc:bridge:proposal}
  r_{\seqParams}^t(\Pi_t \mid \Pi_{t-1}):=\frac{\mathds{1}\braces{L_{\seqParams}^t(\Pi_t)>0}q_{\seqParams}^{t}(\Pi_{t} \mid \Pi_{t-1})}{\tau_{\seqParams}^t (\Pi_{t-1})}
  \quad\text{ with }\quad\tau_{\seqParams}^t:= \sum_{ \pi_t : \\ L_{\seqParams}^t(\pi_t)>0 } q_{\seqParams}^{t}(\pi_{t} \mid \Pi_{t-1})\;.
\end{equation}
For particles ${G_t^i=\Pi^i_t(G_T)}$, the unnormalized 
SMC weights \eqref{eq:generic:smc:weights} are 
\begin{align} \label{smc:bridge:weights}
  \tilde{w}^i_t = 
  \begin{cases}
    \tau^t_{\seqParams}(\Pi_{t-1}^i) & \text{ if } t < T-1 \\
    q_{\seqParams}^t(\Pi \mid \Pi^i_{T-1})\; \tau^{T-1}_{\seqParams}(\Pi_{T-2}^i) & \text{ if } t=T-1
  \end{cases} \;.
\end{align}
The sampling algorithm then generates a graph bridge as follows:
\begin{poalgorithm}[Bridge sampling]
\label{alg:SMC}
\begin{itemize}
\item Initialize ${G_1^i:=G_1}$, $\Pi_1^i \sim \Uniform\{1,\dots,T\}$, and ${w_1^i:=1/N}$ for each ${i\leq N}$.
\item For ${t=2,\ldots,T-1}$, iterate:
  \begin{itemize}
  \item Resample indices ${A^i_{t} \sim\MN(N,(w_{t-1}^i)_i)}$.
  \item Draw ${\Pi_t^i\sim r_{\seqParams}^t(\argdot \mid \Pi_{t-1}^{A^i_{t}})}$ as in \eqref{eq:smc:bridge:proposal} for each $i$.
  \item Compute weights as in \eqref{smc:bridge:weights} and normalize to obtain ${w_t^i}$.
  \end{itemize}
\item Resample $N$ complete sequences ${G_{1:T}^i = \Pi^i\sim\MN(N,(w_{T-1}^i)_i)}$.
\end{itemize}
\end{poalgorithm}
See \cref{fig:SMC:bridge} for an illustration.
Computing $h_t(\Pi_t)$ and $\tau^t_{\seqParams}(\Pi_{t-1})$ is simplified by the constraints on $\Pi$:
Given $\Pi_{t-1}$, the requirement that $\Pi_t$ must again be a restriction of a permutation $\Pi$
implies ${\Pi_t = (\Pi_{t-1}, s_t)}$, for some ${s_t \in \{1,\dots,T\}\setminus \Pi_{t-1}}$. 
Within this set, the $s_t$ for which $\Pi_t$ is feasible are simply those
edges in $G_T$ connected to $\Pi_{t-1}(G_T)$.

\begin{proposition} \label{prop:smc}
  Let ${q_{\seqParams}^t}$, for ${t=1,\ldots,T}$, be the Markov kernels defining a sequential network model that satisfies
  conditions (P1) and (P2). Given an observation $G_T$, \cref{alg:SMC} produces samples that are asymptotically unbiased as $N\to\infty$: For any bounded function $f$ on graph sequences,
  ${\frac{1}{N} \sum_{i=1}^{N} f(\Gall^i) \rightarrow \E[f(\Gall) \mid G_T, G_1]}$ in probability
  as ${N \to \infty}$, where the expectation is evaluated with respect to the model
  posterior. 
\end{proposition}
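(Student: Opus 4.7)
The plan is to recognise \cref{alg:SMC} as an instance of a Feynman--Kac particle approximation of the bridge posterior, and then invoke standard SMC consistency results. By (P1) the sequence $(\Pi_1,\ldots,\Pi_{T-1})$ is a Markov chain under $\mathcal{L}_\seqParams(\argdot \mid G_1)$ with kernels $q^t_\seqParams$, and by (P2) it is strictly increasing, so conditioning on $G_T$ restricts the state space to permutations of the edge set of $G_T$. The true posterior admits density
\begin{equation*}
  \mathcal{L}_\seqParams(\Pi \mid G_T, G_1) \;\propto\; \mathds{1}\{\Pi \text{ feasible}\}\prod_{t=1}^{T-1} q^{t+1}_\seqParams(\Pi_{t+1} \mid \Pi_t)
\end{equation*}
on this finite set.

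First I would verify that the cumulative unnormalised weight of a complete particle is proportional to this posterior density. Using \eqref{eq:smc:bridge:proposal} one has, for $t < T-1$ and any feasible extension, $\tilde{w}^i_t \cdot r^t_\seqParams(\Pi^i_t \mid \Pi^i_{t-1}) = q^t_\seqParams(\Pi^i_t \mid \Pi^i_{t-1})$, together with the extra factor $q^T_\seqParams(G_T \mid \Pi^i_{T-1})$ contributed at $t = T-1$ by \eqref{smc:bridge:weights}. Multiplying across $t$ telescopes to $\prod_{t=1}^{T-1} q^{t+1}_\seqParams(\Pi^i_{t+1} \mid \Pi^i_t)$ on the feasible set, and the weights vanish on infeasible particles. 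This realises the strategy of \citet{delMoral:Murray:2015}: replacing the intractable bridge likelihood with an indicator of feasibility at every intermediate step, and absorbing the one remaining likelihood factor at step $T-1$, preserves the terminal Feynman--Kac measure.

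The second step is to invoke the classical consistency result for SMC samplers reviewed in \citet{Doucet:Johansen:2011}: for any bounded $f$, $\frac{1}{N}\sum_i f(\Gall^i) \to \E[f(\Gall) \mid G_T, G_1]$ in probability as $N \to \infty$, provided (i) the proposals $r^t_\seqParams$ are absolutely continuous with respect to $q^t_\seqParams$ on the feasible set, which is immediate from \eqref{eq:smc:bridge:proposal} and non-vacuous because the sampled sequence itself is always feasible, and (ii) the incremental weights \eqref{smc:bridge:weights} are uniformly bounded, which follows since $\tau^t_\seqParams(\Pi_{t-1}) \leq 1$ and $q^T_\seqParams \leq 1$ at every step.

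The main obstacle is justifying the ``crude'' surrogate $h_t$ in place of the true bridge likelihood $L^t_\seqParams$: the indicator $h_t$ discards all quantitative information about how likely $G_T$ is to be reached from an intermediate particle, yet the algorithm still produces asymptotically unbiased posterior samples. The resolution, following \citet{delMoral:Murray:2015}, is that truncating $q^t_\seqParams$ to feasible extensions ensures every surviving particle can be completed to a valid sequence ending at $G_T$, while the step-$(T-1)$ weight $q^T_\seqParams(G_T \mid \Pi_{T-1})$ reintroduces exactly the missing likelihood factor so that the terminal target coincides with the genuine posterior. Once this identification is in place, the remainder of the argument is routine bookkeeping within the standard SMC framework.
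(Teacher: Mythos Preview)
Your proposal is correct and takes essentially the same approach as the paper: identify \cref{alg:SMC} as a bridge SMC sampler in the sense of \citet{delMoral:Murray:2015}, verify that the indicator surrogate $h_t$ together with the final-step likelihood factor recovers the correct terminal target, and then invoke standard SMC consistency once the proposal is shown to dominate the target. The only slip is in your condition (i): what importance-sampling validity requires is that the intermediate target $\mathcal{L}_\seqParams(G_{1:t}\mid G_T,G_1)$ be absolutely continuous with respect to the proposal product $\prod_{s=2}^t r^s_\seqParams$, not that $r^t_\seqParams\ll q^t_\seqParams$; the paper checks exactly this direction and notes it holds by construction of \eqref{eq:smc:bridge:proposal} under (P1)--(P2).
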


The particle MCMC methods of the next section will require an unbiased estimate of the bridge likelihood, ${L_{\seqParams}^1(G_1)}$, which we state in the following: 
\begin{proposition} \label{prop:unbiased:bridge:likelihood}
  Let ${q_{\seqParams}^t}$, for ${t=1,\ldots,T}$, be the Markov kernels defining a sequential network model that satisfies conditions (P1) and (P2), and let ${r_{\seqParams}^t}$ be the corresponding proposal kernels. Given an observation $G_T$ and a fixed $G_1$, the bridge likelihood estimator
  \begin{align}
    \label{eq:SMC:marginal}
      \hat{L}_{\seqParams}^1 
        & := \prod_{t=2}^{T-1} \left[ \left( \sum_{i=1}^N \frac{\tilde{w}^i_t}{N} \right)
          \left(\frac{ \sum_{i=1}^N  h_{t-1}(G_T \mid G^i_{t-1}) \tilde{w}^i_{t-1} }{ \sum_{i=1}^N  \tilde{w}^i_{t-1} }  \right)   \right] \;.
  \end{align}
  is positive and unbiased: $\hat{L}^1_{\seqParams}>0$ and ${\E[\hat{L}^1_{\seqParams}] = L_{\seqParams}^1(G_1) = p_{\seqParams}(G_T \mid G_1)}$, for any $N\geq 1$.
\end{proposition}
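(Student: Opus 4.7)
The plan is to apply the classical Feynman--Kac / Del Moral unbiasedness identity for particle estimators of normalizing constants \citep{delMoral:Murray:2015} to the surrogate sequence of targets ${\tilde\gamma_{\seqParams}(t)(\Pi_t) := h_t(\Pi_t) P(\Pi_t)}$ introduced in \cref{sec:sub:SMC}, with normalizers ${\tilde Z_t := \sum_{\Pi_t} \tilde\gamma_{\seqParams}(t)(\Pi_t)}$, and then to identify $\tilde Z_{T-1}$ with $L_{\seqParams}^1(G_1)$ via a direct marginalization.

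The identification at the terminal time is immediate: by definition ${h_{T-1}(\Pi_{T-1}) = L_{\seqParams}^{T-1}(\Pi_{T-1}) = p_{\seqParams}(G_T \mid \Pi_{T-1})}$, so
\begin{equation*}
  \tilde Z_{T-1} \;=\; \sum_{\Pi_{T-1}} p_{\seqParams}(G_T \mid \Pi_{T-1})\, P(\Pi_{T-1} \mid G_1) \;=\; p_{\seqParams}(G_T \mid G_1) \;=\; L_{\seqParams}^1(G_1),
\end{equation*}
while $\tilde Z_1 = 1$ under the initialization. For the SMC unbiasedness, the canonical incremental importance weight for the surrogate sequence under proposal $r_{\seqParams}^t$ is
\begin{equation*}
  W_t^i \;:=\; \frac{h_t(\Pi_t^i)\, q_{\seqParams}^t(\Pi_t^i \mid \Pi_{t-1}^{A_t^i})}{h_{t-1}(\Pi_{t-1}^{A_t^i})\, r_{\seqParams}^t(\Pi_t^i \mid \Pi_{t-1}^{A_t^i})},
\end{equation*}
and the standard identity ${\E\bigl[\prod_{t=2}^{T-1} N^{-1} \sum_i W_t^i\bigr] = \tilde Z_{T-1}/\tilde Z_1 = L_{\seqParams}^1(G_1)}$ holds for every $N \geq 1$. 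The proof is the usual martingale argument: conditioning on the particle system through step $t-1$ and using that multinomial resampling is unbiased for the source weighted empirical measure, one shows by induction that the telescoping product has expectation equal to the ratio of normalizers.

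The main remaining step---and the main obstacle---is to verify that \eqref{eq:SMC:marginal} coincides with $\prod_{t=2}^{T-1} N^{-1}\sum_i W_t^i$ once the proposal ${r_{\seqParams}^t = \mathds{1}\{L_{\seqParams}^t > 0\}\, q_{\seqParams}^t / \tau_{\seqParams}^t}$ and the algorithm's weight definition \eqref{smc:bridge:weights} are substituted in. Under this proposal, $W_t^i$ simplifies to $h_t(\Pi_t^i)\, \tau_{\seqParams}^t(\Pi_{t-1}^{A_t^i}) / h_{t-1}(\Pi_{t-1}^{A_t^i})$, whereas the paper's $\tilde w_t^i$ writes this without the $h_t/h_{t-1}$ factors. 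Because the proposal enforces feasibility at every step, every surviving particle has $h_{t-1}(\Pi_{t-1}^i) = 1$ for $t-1 < T-1$, so the correction factor ${(\sum_i h_{t-1}\tilde w_{t-1}^i)/(\sum_i \tilde w_{t-1}^i)}$ in \eqref{eq:SMC:marginal} equals $1$ in this algorithm, and the formula reduces to the classical product of average weights; the correction is stated in its general form as the particle approximation of ${\tilde\gamma_{\seqParams}(t-1)[h_{t-1}]/\tilde Z_{t-1}}$ that would be needed if infeasible ``dead'' particles could arise. Positivity is then immediate: since $r_{\seqParams}^t$ is supported on feasible extensions and at least one such extension exists at every step (by the connectedness of $G_T$), ${\tilde w_t^i = \tau_{\seqParams}^t(\Pi_{t-1}^{A_t^i}) > 0}$ for every $i,t$, whence $\hat L_{\seqParams}^1 > 0$ almost surely.
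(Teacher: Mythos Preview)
Your argument is correct and rests on the same tower-of-expectations computation the paper uses, but the packaging differs. The paper actually states and proves a more general version (Proposition after \cref{prop:unbiased:bridge:likelihood}) in which $h_t$ is an arbitrary strictly positive function and the weights carry a $1/h_{t-1}$ factor; it then carries out the backward iterated-expectation calculation explicitly, in the style of \citet{Pitt:etal:2010}, showing that conditioning on the particle system at step $t-1$ collapses the average weight to $\sum_i p_{\seqParams}^{T,t-1}(G_T\mid G_{t-1}^i)\,w_{t-1}^i$ and telescoping down to $p_{\seqParams}(G_T\mid G_1)$. Your route instead specializes first: you observe that for the particular $h_t$ of \cref{sec:sub:SMC} the feasibility-enforcing proposal guarantees $h_{t-1}\equiv 1$ on every surviving particle for $t-1<T-1$, so the correction factor in \eqref{eq:SMC:marginal} is identically $1$ and the estimator reduces to the canonical product of average incremental weights, at which point you invoke the standard Del Moral identity rather than re-deriving it. Both arguments are the same martingale computation underneath; the paper's version is more general and self-contained, while yours is shorter but leaves the general-$h_t$ formula \eqref{eq:SMC:marginal} unexplained (you gesture at it as the particle approximation of $\tilde\gamma_{\seqParams}(t-1)[h_{t-1}]/\tilde Z_{t-1}$, which is the right intuition but not a proof of unbiasedness in that generality).
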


{\noindent\textit{Variance reduction.}} 
If estimates exhibit high variance, it is straightforward to modify \cref{alg:SMC} to 
use adaptive resampling \citep[see][]{delMoral:Murray:2015}, and to replace multinomial resampling
above by residual or stratified resampling \citep{Doucet:Johansen:2011}. A variance reduction technique due to \citet{Fearnhead:Clifford:2003} takes advantage of the discrete state space by enumerating possible states $\Pi_{t+1}^i$ given the current state $\Pi_{t}^i$, creating $N'>N$ ``virtual'' particles before down-sampling to propagate $N$ particles. 
Finally, if a given model admits a more bespoke approximation $h_t$ to the bridge likelihood,
this approximation can be substituted for $h_t$, following \citet{delMoral:Murray:2015}. In
this case, some of the equations above require minor modifications.

\subsection{Parameter inference} \label{sec:particle:MCMC}

\def\SeqParams{\Theta}

\cref{alg:SMC} generates a history of a graph under a model with fixed parameter vector $\seqParams$. 
For parameter inference, the parameters are treated as a random variable $\Theta$, with 
prior distribution $\mathcal{L}(\SeqParams)$,
and the task is to generate samples from the joint posterior ${\mathcal{L}(\SeqParams,G_{1:T}|G_1,G_T)}$.
The sample space of the SMC sampler above is thus extended by the domain of $\SeqParams$. The bridge likelihood ${L_{\SeqParams}^1(G_1)}$ is a marginal likelihood that naturally
leads to particle MCMC methods \citep{Andrieu:Doucet:Holenstein:2010}, which use SMC to compute
an unbiased estimate ${\hat{L}^1_{\SeqParams}}$. 
Substituting into the Metropolis--Hastings acceptance ratio of a proposal $\tilde{\SeqParams}$ from a (yet to be specified)
proposal distribution $\tilde{q}$ yields
\begin{align} \label{eq:acceptance:ratio}
  \mathbb{P}\lbrace\text{ accept $\tilde{\SeqParams}$ }\rbrace
  =
  \frac{ 
    \hat{L}_{\tilde{\SeqParams}}^1 \cdot \pi(\tilde{\SeqParams}) 
  }{ 
    \hat{L}_{\SeqParams}^1 \cdot \pi(\SeqParams) 
  } 
  \cdot
  \frac{ 
    \tilde{q}(\SeqParams \mid \tilde{\SeqParams}) 
  }{ 
    \tilde{q}(\tilde{\SeqParams} \mid \SeqParams) 
  } \;.
\end{align}
By \cref{prop:unbiased:bridge:likelihood}, a positive unbiased estimate of the bridge likelihood
is given by \eqref{eq:SMC:marginal}; with \eqref{eq:acceptance:ratio}, we obtain a particle marginal Metropolis--Hastings (PMMH) sampler:
\begin{poalgorithm}[ ]
  \label{alg:PMMH}
  \begin{itemize}
  \item Initialize ${\SeqParams^0\sim\pi}$.
  \item For ${j=1,\ldots,J}$ iterate:
    \begin{enumerate}
    \item Draw candidate a value ${\tilde{\SeqParams}\sim\tilde{q}(\argdot \mid \SeqParams^{j-1})}$.
    \item Run \cref{alg:SMC} with parameter $\tilde{\SeqParams}$ to compute
      ${\hat{L}_{\SeqParams}^1}$ as in \eqref{eq:SMC:marginal}.
    \item Accept $\tilde{\SeqParams}$ with probability \eqref{eq:acceptance:ratio} and set ${\SeqParams^j:=\tilde{\SeqParams}}$; else set ${\SeqParams^j:=\SeqParams^{j-1}}$.
    \item If $\tilde{\SeqParams}$ is accepted, select a single graph sequence $G_{1:T}^j$ by resampling from the particles output by \cref{alg:SMC}; else set $G_{1:T}^j = G_{1:T}^{j-1}$.
    \end{enumerate}
  \item Output the sequence ${(\SeqParams^1,G_{1:T}^1),\ldots,(\SeqParams^J,G_{1:T}^J)}$.
  \end{itemize}
\end{poalgorithm}
The algorithm asymptotically samples the joint posterior
${\mathcal{L}(\SeqParams,G_{1:T} \mid G_1,G_T)}$, or the
marginal posterior ${\mathcal{L}(\SeqParams \mid G_1,G_T)}$ if the output of step (d) is omitted:
\begin{proposition} \label{prop:pmmh}
  If the proposal density $\tilde{q}(\argdot|\argdot)$ is chosen such that the Metropolis--Hastings sampler
  defined by \eqref{eq:acceptance:ratio} is irreducible and aperiodic, \cref{alg:PMMH} is a PMMH sampler.
  The marginal distributions ${\mathcal{L}(\SeqParams^j,\Gall^j)}$ of its output sequence satisfy
  \begin{align}
    \| \mathcal{L}(\SeqParams^j,\Gall^j)- \pi(\argdot \mid G_T, G_1)  \|_{\ind{TV}} 
    \xrightarrow{j\to\infty} 0\;.
  \end{align}
  This is true regardless of the sample size $N$ generated by \cref{alg:SMC} in step (b).
\end{proposition}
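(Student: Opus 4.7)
The strategy is to cast \cref{alg:PMMH} as a specific instance of the particle marginal Metropolis–Hastings (PMMH) sampler of \citet{Andrieu:Doucet:Holenstein:2010}, so that the validity of that general framework yields the present result. The essential input that makes PMMH applicable is that the marginal likelihood estimator $\hat{L}^1_{\SeqParams}$ produced by \cref{alg:SMC} is positive and unbiased, which is precisely \cref{prop:unbiased:bridge:likelihood}.

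First I would build an extended target on the joint space of $\SeqParams$ together with all random variables generated by a single run of \cref{alg:SMC}, i.e.\ the particles $(\Pi^i_{1:T-1})_{i=1}^N$ and the ancestor indices $(A^i_t)_{i,t}$, following the standard construction in \citet[Sec.~4]{Andrieu:Doucet:Holenstein:2010}. Write $\tilde{\pi}$ for the density on this extended space defined by multiplying the joint posterior $\mathcal{L}(\SeqParams,\Gall \mid G_T, G_1)$ of a designated surviving trajectory by the conditional law of the remaining particles and ancestors under \cref{alg:SMC}, given that designated trajectory. By construction, $\tilde{\pi}$ admits $\mathcal{L}(\SeqParams,\Gall \mid G_T, G_1)$ as its marginal in $(\SeqParams,\Gall)$. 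Adapting this construction from continuous state spaces to the discrete, bridge-sampled graph setting of \cref{sec:sub:SMC} is bookkeeping, but one must verify that the unnormalized weights $\tilde{w}^i_t$ in \eqref{smc:bridge:weights} together with the factors $h_t$ yield exactly the telescoping product in \eqref{eq:SMC:marginal}.

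Second I would check that the proposal/accept mechanism of \cref{alg:PMMH}—propose $\tilde{\SeqParams}\sim\tilde{q}(\argdot \mid \SeqParams)$, generate a fresh particle system from \cref{alg:SMC} run at $\tilde{\SeqParams}$, and accept with probability \eqref{eq:acceptance:ratio}—coincides with a Metropolis–Hastings move on the extended space with invariant density $\tilde{\pi}$. The critical algebraic step is the cancellation, in the Radon–Nikodym derivative between $\tilde{\pi}(\SeqParams,\ldots)$ and $\tilde{\pi}(\tilde{\SeqParams},\ldots)$, of the particle system proposal density against the factors of $\hat{L}^1$. Unbiasedness (\cref{prop:unbiased:bridge:likelihood}) is what guarantees this cancellation produces exactly the ratio $\hat{L}^1_{\tilde{\SeqParams}}\pi(\tilde{\SeqParams})/(\hat{L}^1_{\SeqParams}\pi(\SeqParams))$ in expectation, and positivity of $\hat{L}^1$ ensures the acceptance probability in \eqref{eq:acceptance:ratio} is well-defined. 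Selecting the output graph sequence in step~(d) by resampling from the final weighted population realises the designated surviving trajectory and therefore samples from the full $\tilde{\pi}$; omitting step~(d) samples the $\SeqParams$-marginal. This argument is independent of $N$, yielding the last sentence of the claim.

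Finally, I would verify that the chain is irreducible and aperiodic on the extended state space. By hypothesis the idealized MH chain with acceptance ratio \eqref{eq:acceptance:ratio} on $\SeqParams$ is irreducible and aperiodic, so it suffices to observe that \cref{alg:SMC} places positive mass on every feasible $\Gall$: by property (P2) and the definition of $r_{\SeqParams}^t$ in \eqref{eq:smc:bridge:proposal}, every permutation $\pi$ with $L_{\SeqParams}^T(G_T)>0$ has all of its prefixes feasible, so every target-positive $(\SeqParams,\Gall)$ is reachable. Standard convergence theory for MH chains then gives the stated total variation convergence.

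\emph{Main obstacle.} The main (and really the only) non-routine point is verifying the extended-target cancellation in the second step, that is, that the acceptance probability \eqref{eq:acceptance:ratio} coincides exactly with the MH ratio on the extended space. Once \cref{prop:unbiased:bridge:likelihood} is in hand, this reduces to matching the graph-bridge notation of \cref{sec:sub:SMC} with the generic particle-system notation of \citet{Andrieu:Doucet:Holenstein:2010}; the ergodicity and marginalisation steps are then standard.
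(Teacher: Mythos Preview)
Your approach is essentially the same as the paper's: both reduce \cref{prop:pmmh} to the PMMH framework of \citet{Andrieu:Doucet:Holenstein:2010}. The paper is simply more economical---rather than reconstructing the extended-target argument, it directly invokes Theorem~4 of that reference and verifies its two hypotheses (unbiased resampling, and absolute continuity of the target $\mathcal{L}_{\seqParams}(G_{1:t}\mid G_T,G_1)$ with respect to $\prod_{s=2}^t r_{\seqParams}^s$, the latter already established in the proof of \cref{prop:smc} from (P1)--(P2)).
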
 

\subsection{Particle Gibbs for random walk models} \label{sec:particle:gibbs}

The computational cost of \cref{alg:PMMH} stems mostly from rejections
in (c), each of which requires an additional execution of  (b). This can
be addressed by turning the algorithm
into a Gibbs sampler, which eliminates both rejections and the need for a 
proposal kernel $\tilde{q}$.
The result is a particle Gibbs (PG) sampler 
\citep{Andrieu:Doucet:Holenstein:2010}. 
Such an algorithm is described below, now specifically for the random walk model.

For a ${\RW(\alpha,\text{Poisson}_{+}(\lambda))}$ model, 
the parameter takes the form ${\SeqParams=(\alpha,\lambda)}$, and we fix a beta
distribution $\prior{\alpha}$ as prior for $\alpha$, and a gamma prior $\prior{\lambda}$ for $\lambda$. 
To sample a sequence ${G_1,\ldots,G_T}$ from the model, one can generate two
i.i.d.\ sequences, ${\mathbf{B}=(B_1,\ldots,B_T)}$ of $\Bernoulli(\alpha)$ variables,
and ${\mathbf{K}=(K_1,\ldots,K_T)}$ of $\Poisson(\lambda)$ variables. 
In step $t$, a new edge is inserted if ${B_t=1}$; otherwise,
two vertices are connected by a random walk of length $K_t$. 
As shown in \cref{sec:inference_appx}, the entries $B_t$, $K_t$ and the model parameters
can be integrated out of the kernel.
$\mathbf{B}$ and $\mathbf{K}$ can hence be sampled separately from the SMC steps, rather than inside
\cref{alg:SMC}, which improves exploration.

In its $j$th iteration, the algorithm updates $\mathbf{B}$ and $\mathbf{K}$ by looping over the indices ${t\leq T}$ of $\Gall$.
Since Gibbs samplers condition on every update immediately, vectors maintained
by the sampler are of the form
\begin{equation*}
  \mathbf{{B}}^{j}_{-t}:=(B^{j+1}_1,\ldots,B^{j+1}_{t-1},B^j_t,\ldots,B^j_T)
  \quad\text{ and }\quad
  \mathbf{{K}}^{j}_{-t}:=(K^{j+1}_1,\ldots,K^{j+1}_{t-1},K^j_t,\ldots,K^j_T)
\end{equation*}
The posterior predictive distributions of ${{B}_t^{j+1}}$ and
$K_t^{j+1}$ given these vectors can be obtained in closed form (see \cref{sec:inference_appx}).
We abuse notation
and let the index ${j=0}$ refer to the prior predictive distribution, \ie
we write 
${\mathcal{L}(\mathbf{B}^1|\mathbf{B}^{0},G_{1:T}^{0}):=\int\mathcal{L}(\mathbf{B}^1 \mid \alpha)\prior{\alpha}(d\alpha)}$,
and similarly for ${\mathcal{L}(\mathbf{K}^1 \mid \mathbf{K}^{0},G_{1:T}^{0})}$.

\begin{poalgorithm}[Particle Gibbs sampling for $\RW(\alpha,\Poisson(\lambda))$ models.]
  \label{alg:PG}
  \begin{itemize}
  \item For ${j=0,\ldots,J}$:
  \begin{enumerate}
  \item Draw ${(\mathbf{B}^{j+1},\mathbf{K}^{j+1}) \sim\mathcal{L}(\mathbf{B}^{j+1},\mathbf{K}^{j+1} \mid \mathbf{B}^{j},\mathbf{K}^{j},G_{1:T}^{j})}$.
  \item Using \cref{alg:SMC}, with
    ${q_{\alpha,\lambda}^t(\argdot \mid G^{j}_{t-1},\mathbf{B}_{-t}^{j+1},\mathbf{K}_{-t}^{j+1})}$ and $N\geq 2$,
    update ${G^{j+1}_{1:T}}$ and $A^{j+1}_{2:T}$. At each step $t$, substitute
    the previous iteration's bridge $G^j_{1:t}$ for the particle with index $A^j_t$ (see below).
  \item Draw ${\alpha^{j+1} \mid \mathbf{B}^{j+1}}$ and ${\lambda^{j+1} \mid \mathbf{K}^{j+1}}$ from their conjugate posteriors.
  \end{enumerate}
  \item Output the sequence ${(G_{1:T}^1,\alpha^1,\lambda^1),\ldots,(G_{1:T}^J,\alpha^J,\lambda^J)}$.
  \end{itemize}
\end{poalgorithm}
The algorithm constitutes a blocked Gibbs sampler.
The parameter values $\alpha^j$ and $\lambda^j$ generated in (c) are not used in the execution of the sampler; they
only serve as output. 
Step (b) is a \kword{conditional SMC} step \citep[see][]{Andrieu:Doucet:Holenstein:2010} that conditions on $G_{1:T}^j$, and makes it a valid Gibbs sampler. 

\begin{corollary}
  \Cref{alg:PG} is a valid particle Gibbs sampler. For any ${N \geq 2}$, it generates a sequence 
  ${(\Gall^j, \alpha^j, \lambda^j)_{j}}$ whose law converges to the model posterior,
  \begin{align}
    \|  \mathcal{L}(\Gall^j, \alpha^j, \lambda^j) - \mathcal{L}(G_{1:T},\alpha,\lambda \mid G_T, G_1)\|_{\ind{TV}} \xrightarrow{j\to\infty} 0\;.
  \end{align}
\end{corollary}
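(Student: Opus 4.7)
The plan is to identify \cref{alg:PG} as a blocked Gibbs sampler whose three component kernels each leave the augmented posterior ${\mathcal{L}(\Gall, \alpha, \lambda, \mathbf{B}, \mathbf{K}\mid G_T, G_1)}$ invariant, and then invoke standard ergodicity results to promote invariance to total-variation convergence.

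First, I would verify the three individual kernels. Step (a) draws $(\mathbf{B}^{j+1},\mathbf{K}^{j+1})$ from ${\mathcal{L}(\mathbf{B},\mathbf{K}\mid \Gall)}$; the closed forms referenced in \cref{sec:particle:gibbs} give this exactly, with $\alpha$ and $\lambda$ integrated out against their conjugate Beta and Gamma priors. Given $\Gall$, the pair $(\mathbf{B},\mathbf{K})$ is a sufficient statistic for $(\alpha,\lambda)$, so this Gibbs step is exact. Step (c) samples ${\alpha^{j+1}\mid\mathbf{B}^{j+1}}$ and ${\lambda^{j+1}\mid\mathbf{K}^{j+1}}$ from their Beta and Gamma posteriors, which are exact conjugate updates and are independent of the graph sequence once $(\mathbf{B},\mathbf{K})$ is fixed. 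Crucially these parameter draws do not affect the remainder of the chain and only constitute output, so treating them as an extra Gibbs block preserves invariance.

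The technical core is step (b), the conditional SMC update of $\Gall$. Here I would invoke the particle Gibbs framework of \citet{Andrieu:Doucet:Holenstein:2010}: they show that, given a bridge target admitting a positive unbiased likelihood estimator from an SMC sampler satisfying their absolute-continuity conditions, the conditional SMC kernel that pins one reference trajectory while resampling the remaining ${N-1}$ particles leaves the target invariant for \emph{any} ${N \geq 2}$. The unbiasedness and positivity of $\hat{L}_{\seqParams}^1$ are supplied by \cref{prop:unbiased:bridge:likelihood}, and the required absolute continuity holds because the proposal $r_{\seqParams}^t$ in \eqref{eq:smc:bridge:proposal} is, by construction, the truncation of the Markov kernel $q_{\seqParams}^t$ to the set of feasible extensions—so $r_{\seqParams}^t$ and $q_{\seqParams}^t$ share support on exactly the set of trajectories that contribute to the bridge likelihood. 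Substituting in ${(\mathbf{B}_{-t}^{j+1},\mathbf{K}_{-t}^{j+1})}$ simply pins the kernel at the integrated-out model parametrization already updated in step (a), which is covered by the AHD framework applied conditionally on those auxiliary variables.

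With invariance of each block under the joint posterior established, the remaining task is TV convergence. Aperiodicity is automatic because the cSMC step retains the reference trajectory with positive probability, so the chain has a positive probability of staying at its current state within a sweep. For irreducibility I would argue that every feasible permutation $\pi$ of the edges of $G_T$ is reachable from every other in finitely many sweeps: by \eqref{eq:smc:bridge:proposal}, the cSMC kernel places positive mass on every edge extension consistent with a feasible ordering, so any feasible bridge has positive probability of being proposed and retained in a single sweep (and, combined with step (a)'s full-support update of $(\mathbf{B},\mathbf{K})$ and step (c)'s full-support parameter draws, the joint chain reaches any point of positive posterior mass). Standard Markov chain convergence theorems then yield ${\|\mathcal{L}(\Gall^j,\alpha^j,\lambda^j)-\mathcal{L}(\Gall,\alpha,\lambda\mid G_T,G_1)\|_{\ind{TV}}\to 0}$.

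The main obstacle is carefully checking that the AHD invariance theorem applies in the present discrete, combinatorially constrained setting with the integrated-out $(\mathbf{B},\mathbf{K})$ augmentation: one must verify that the extended target used in their proof coincides with the marginal of our joint posterior, and that the truncation defining $r_{\seqParams}^t$ does not break the support conditions underlying their unbiasedness-plus-invariance argument. Everything else is a routine assembly of conjugate Gibbs updates and a standard irreducibility/aperiodicity verification.
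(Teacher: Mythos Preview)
Your proposal is correct and follows essentially the same route as the paper. The paper does not give a separate proof of the Corollary; it treats it as an immediate consequence of the particle Gibbs framework of \citet{Andrieu:Doucet:Holenstein:2010} (their Theorem~5), having already verified the two required hypotheses---unbiased resampling and absolute continuity of the target with respect to the proposal---in the proof of \cref{prop:pmmh}. Your write-up simply makes explicit the blocked-Gibbs decomposition, the role of the conditional SMC step, and the irreducibility/aperiodicity check that the paper leaves implicit.
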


\subsection{Related work} \label{sec:sub:inference:related}

Existing work on the use of sampling algorithms for inference in sequential network models
mainly follows one of two general approaches: Importance sampling~\citep{Wiuf:etal:2006}, or 
approximate Bayesian computation (ABC) schemes~\citep{Thorne:Stumpf:2012}. ABC relies on heuristic approximations of the likelihood, and is not guaranteed to sample from the correct target distribution.
The basic approach used above---conduct inference by imputing graph sequences generated by a suitable sampler---seems
to be due to \citet{Wiuf:etal:2006}, based on earlier work by \citet*{Griffiths:Tavare:1994aa} on ancestral inference for coalescent models. More recently, similar approaches have appeared in the ``network archeology'' literature \citep{Navlakah:Kingsford:2011}. See, e.g. \citet{Young:2018aa} and references therein. 
The work of \citet{Wang:Jasra:deIorio:2014,Jasra:etal:2015} is related to ours in that they employ particle MCMC, for particular sequential models. \citet{Bloem-Reddy:etal:2018} exploit the simple probabilistic structure in preferential attachment models to construct a Gibbs sampler that infers the vertex arrival order. 
All of these methods are, in short, applicable 
if the sequential model in question is very simple. Otherwise, they suffer (1) from the
high variance of estimates that is a hallmark of importance sampling~\citep{Doucet:Johansen:2011},
or (2) infeasible computational cost.
The algorithm of \citet{Wang:Jasra:deIorio:2014}, for example, samples backwards through the sequence
generated by the model, and for each reverse step ${G_t \to G_{t-1}}$ requires computing the probability
of every possible forward step ${G'_{t-1} \to G_{t}}$ under the given model; even for the (still rather
simple) random walk model, that is no longer practical.

Inference techniques developed for models of longitudinal network data are similar in spirit to those developed here, though they typically require observations at each time step. A MCMC algorithm for continuous-time dynamic network models observed at discrete intervals was developed by \citet{Koskinen:Snijders:2007}; that requires a network of fixed size and are not suitable for growing networks.

\section{Experimental evaluation} \label{sec:experiments}

\begin{figure}[t] 
\makebox[\textwidth][c]{
    \resizebox{\textwidth}{!}{
    \begin{tikzpicture}
      \path[use as bounding box]
      (-3,-2.2) rectangle (11.6,1.9);
      \node at (0,0) {\includegraphics[width=6cm]{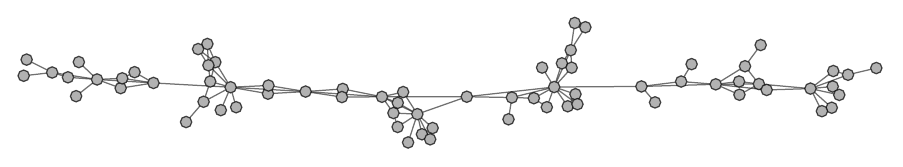}};
      \node at (5.2,0) {\includegraphics[width=4cm]{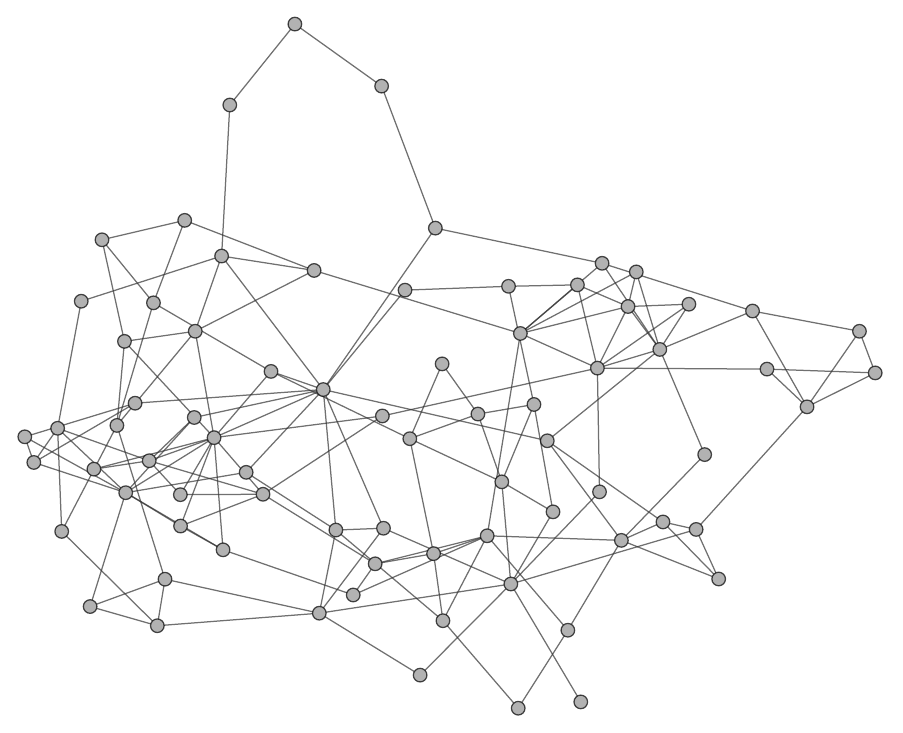}};
      \node at (9.7,0) {\includegraphics[width=4.5cm,angle=-20]{interactomeAdj.pdf}};
      \node at (0,-2) {\footnotesize (i)};
      \node at (5.2,-2) {\footnotesize (ii)};
      \node at (9.7,-2) {\footnotesize (iii)};
    \end{tikzpicture}
}}
	\caption{Network data examples: (i) the largest connected component of the NIPS co-authorship network, 2002-03 \citep{Globerson:etal:2007}; (ii) San Juan Sur family ties \citep{Loomis:etal:1953}; (iii) protein-protein interactome \citep{Butland:etal:2005}.}
	\label{fig:network_data}
        \vspace{-.5cm}
\end{figure}

This section evaluates properties of the model on real-world data and compares its performance
to other network models. 
We also discuss the interpretation of the random walk parameter as a length scale, 
and of the latent order $\Pi$ as a measure of centrality.

We consider three network datasets, shown in \cref{fig:network_data}, that exhibit a range of characteristics.  The first is the largest connected component (LCC) of the NIPS co-authorship network in 2002-03, extracted from the data used in~\cite{Globerson:etal:2007}. As shown in~\cref{fig:network_data}, it has a global chain structure connecting highly localized communities. The second dataset represents ties between families in San Juan Sur (SJS), a community in rural Costa 
Rica~\citep{Loomis:etal:1953,Pajek:data}, and is chosen here as an example of a network with small diameter. 
The third is the protein-protein interactome (PPI) of~\cite{Butland:etal:2005}. The PPI exhibits features such as chains, pendants, and heterogeneously distributed hubs. 
Basic summary statistics are as follows:
\begin{center}
  \makebox[\textwidth][c]{
    \resizebox{\textwidth}{!}{
  \begin{tabular}{cccccc}
    \emph{Dataset} & \emph{Vertices} & \emph{Edges} & \emph{Clustering coeff.} & \emph{Diameter} & \emph{Mean $L_2$-mixing time of r.w.} \\
    \midrule
    NIPS & 70 & 114 & 0.70 & 14 & $57.3$ \\
    SJS & 75 & 155 & 0.32 & 7 & $8.4$ \\
    PPI & 230 & 695 & 0.32 & 11 & $9.1$ \\
    \\
  \end{tabular}
}}
\end{center}
None of these data sets is particularly large: Sampler diagnostics show graphs of this size suffice to reliably recover model parameters under the random walk model.

Using \cref{alg:PG}, the posterior distributions of $\alpha$ and $\lambda$ for both the 
uniform and the size-biased ${\RW(\alpha,\text{Poisson}_{+}(\lambda))}$ model can be sampled.
Kernel-smoothed posterior distributions under all three datasets are shown in~\cref{fig:data_posterior}. Although posteriors in the uniform and size-biased case are very similar, these models generate graphs with different characteristics for identical parameter values (see \cref{sec:model:fitness}).

\subsection{Sampler diagnostics (synthetic data)}
\label{sec:experiments:diagnostics}

\begin{figure}[bt] 
  \makebox[\textwidth][c]{
    \resizebox{\textwidth}{!}{
    \begin{tikzpicture}
      \begin{scope}
      \node at (0,0) {\includegraphics[width=6.5cm]{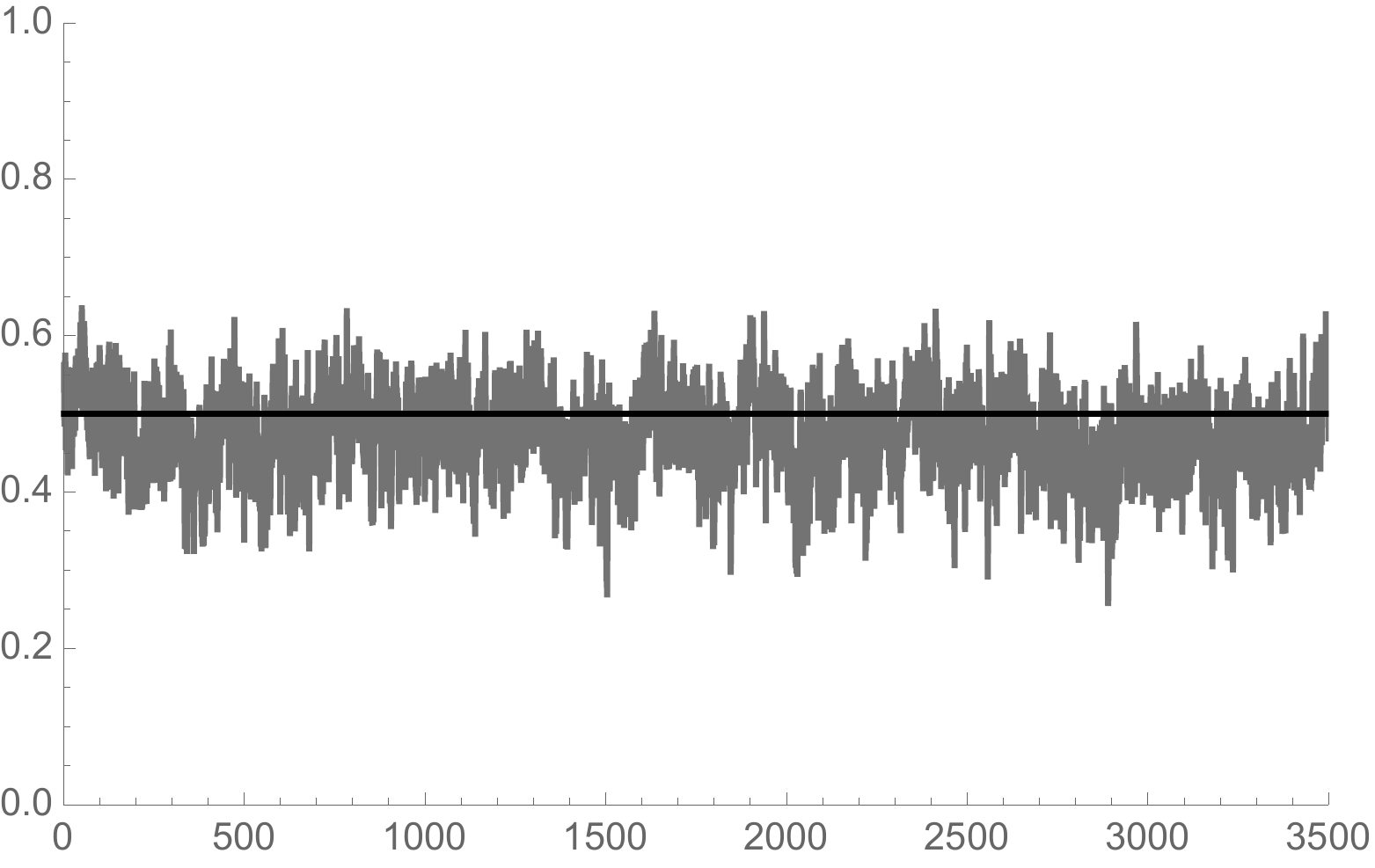}};
      \node at (-3.6,0) {\footnotesize $\alpha$};
      \node at (0,-2.4) {\footnotesize Sampler iterations};

      \node at (-.07,-4.6) {\includegraphics[width=6.55cm]{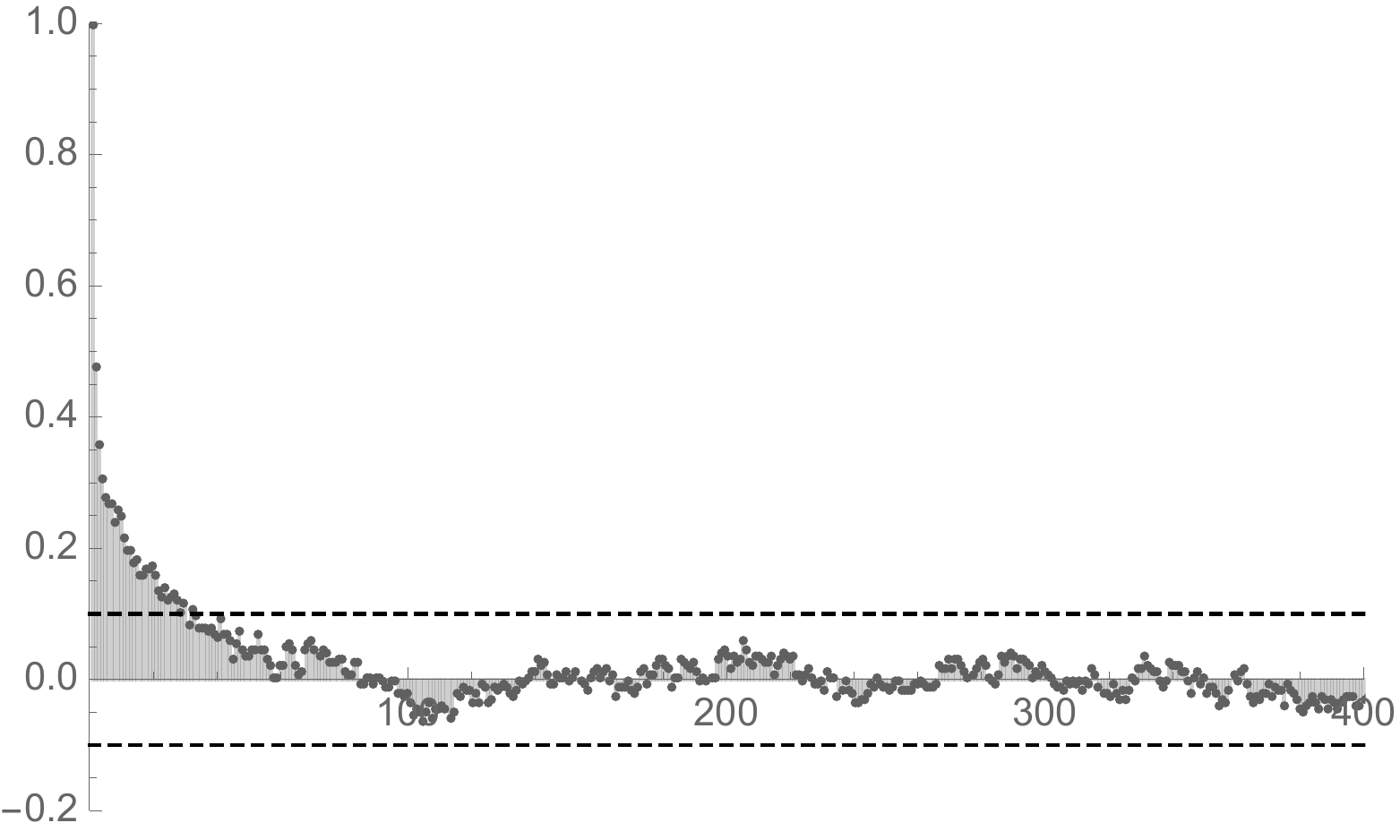}};
      \node[rotate=90] at (-3.6,-4.6) {\footnotesize ACF $\alpha$};
      \node at (0,-6.8) {\footnotesize Sampler iterations};
      \end{scope}
      \begin{scope}[xshift=8cm]
      \node at (0,0) {\includegraphics[width=6.5cm]{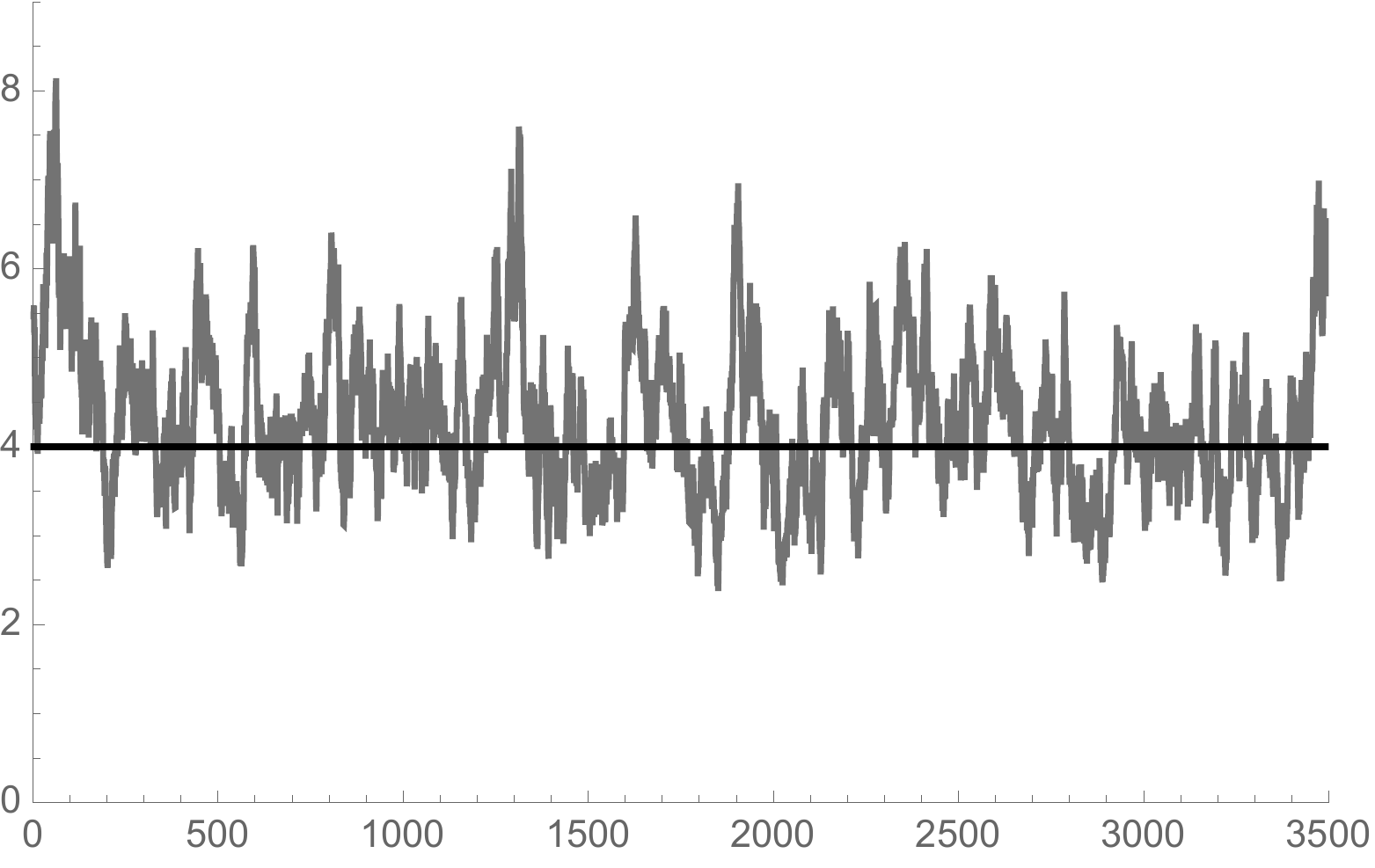}};
      \node at (-3.6,0) {\footnotesize $\lambda$};
      \node at (0,-2.4) {\footnotesize Sampler iterations};

      \node at (-.15,-4.6) {\includegraphics[width=6.7cm]{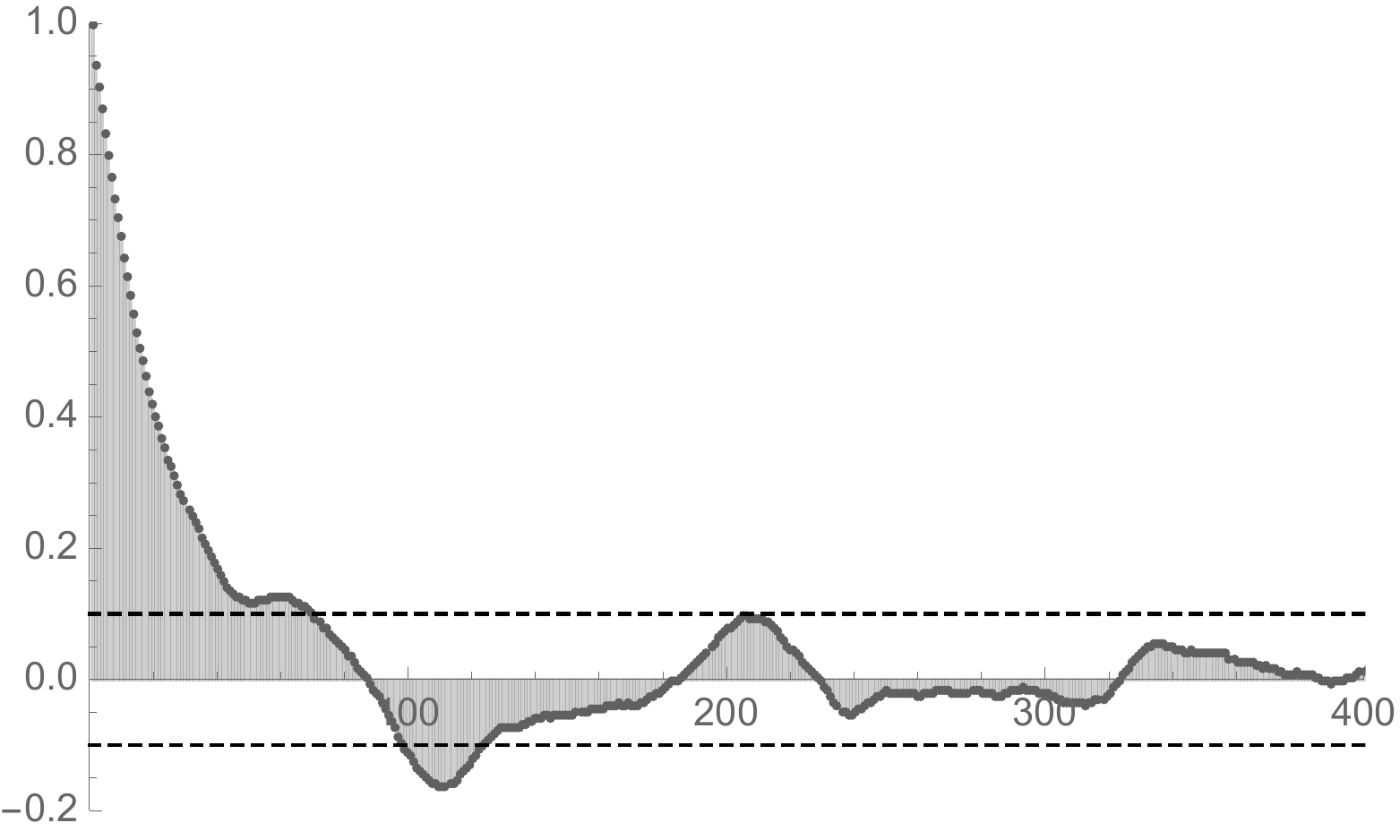}};
      \node[rotate=90] at (-3.7,-4.6) {\footnotesize ACF $\lambda$};
      \node at (0,-6.8) {\footnotesize Sampler iterations};
      \end{scope}
    \end{tikzpicture}
  }}
  \caption{Top: Posterior sample traces for a PG sampler fit to a synthetic graph $G_{T}$ generated with ${\alpha=0.5}$ and ${\lambda=4}$ (solid black lines). Bottom: The corresponding autocorrelation functions.}
  \label{fig:posterior:sampling}
\end{figure}

To assess sampler performance, we test the sampler's ability to recover parameter values for graphs generated by
the model: A single graph $G_T$ with ${T=250}$ edges is generated for fixed values of $\alpha$ and $\lambda$. The joint posterior distribution of $\alpha$ and $\lambda$ given $G_T$ is then estimated using \cref{alg:PG} (with $N=100$ 
particles). 
Regardless of the input parameter value, a uniform prior is chosen for $\alpha$, and a 
$\text{Gamma}(1,1/4)$ prior for $\lambda$. For the sake of brevity, we report results only for 
the $\RWU(\alpha,\Poisson(\lambda))$ model, on simple graphs (which pose a harder challenge for inference
than multigraphs, cf. \cref{sec:mle}).
\cref{fig:posterior:sampling} shows sampler traces and 
autocorrelation functions. The samplers mix and converge quickly, and
as one would expect, only $\lambda$ displays any noticeable autocorrelation. 
Posteriors are shown in \cref{fig:param:sweep}. Clearly, recovery of model
parameters from a single graph is possible.
The effect of constraining to simple graphs---the use of 
step (3') rather than (3) in \cref{scheme:multigraph}---is apparent for ${\lambda=2}$: 
For small values of $\lambda$, a significant proportion of random walks ends at their starting
point, resulting in the insertion of a new vertex. Since new vertices are otherwise inserted
as an effect of $\alpha$, parameters correlate in the posterior, as visible
in \cref{fig:param:sweep} for intermediate values of $\alpha$.

\begin{figure}[tb]
  \centering
  \makebox[\textwidth][c]{
  \resizebox{1.1\textwidth}{!}{
    \begin{tikzpicture}
      \node (a) {
        \includegraphics[width=\textwidth]{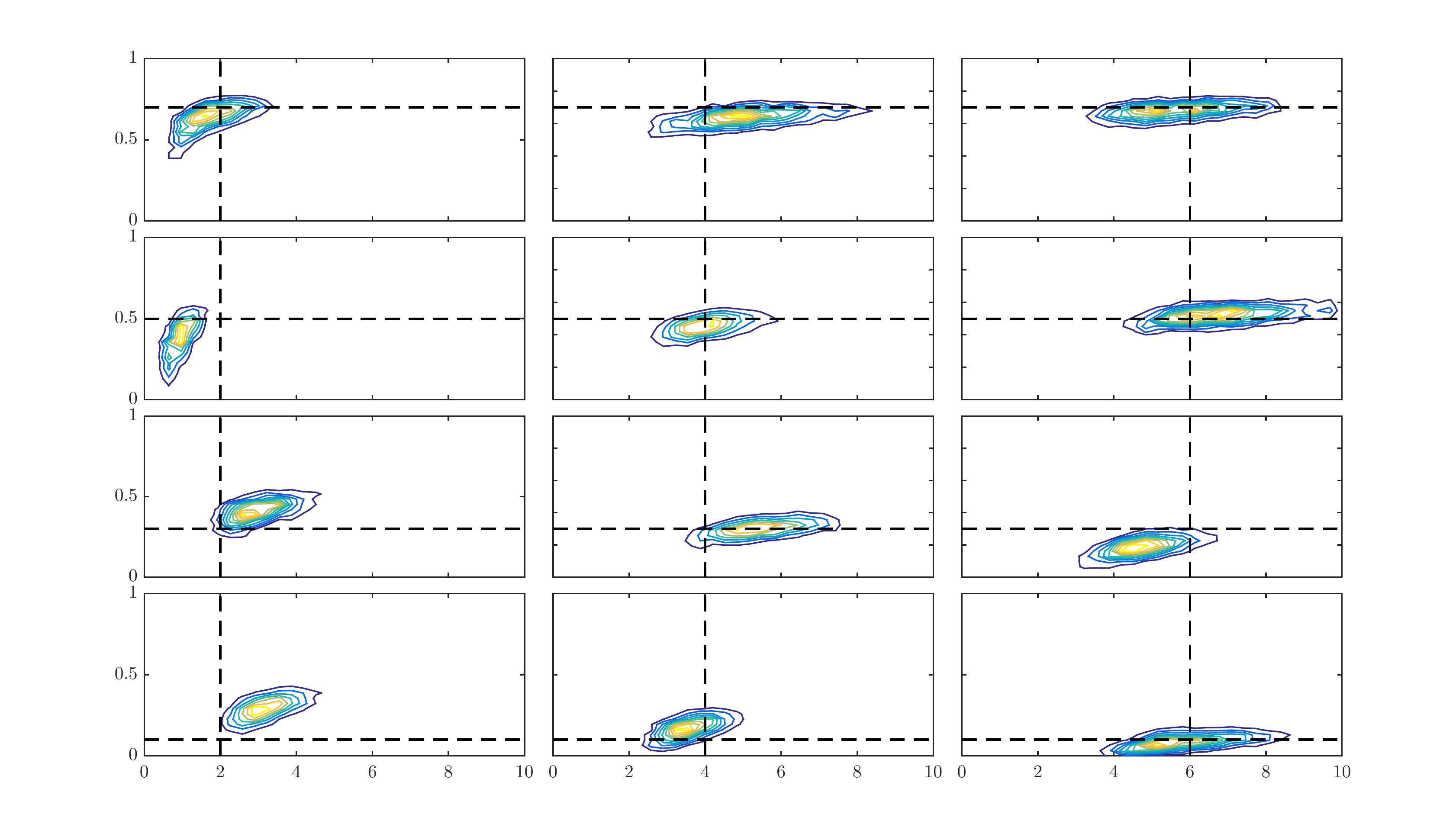}
      };
      % lambda labels
      \node [below=0.19\textheight, left=0.225\linewidth] at (a) {
        \footnotesize{$\lambda=2$}
      };
      \node [below=0.19\textheight, right=-0.45cm] at (a) {
        \footnotesize{$\lambda=4$}
      };
      \node [below=0.19\textheight, right=0.25\linewidth] at (a) {
        \footnotesize{$\lambda=6$}
      };
      % alpha labels
      \node [below=0.09\textheight, left=0.45\linewidth] at (a) [rotate=90] {
        \footnotesize{$\alpha=0.1$}
      };
      \node [below=0.0\textheight, left=0.45\linewidth] at (a) [rotate=90] {
        \footnotesize{$\alpha=0.3$}
      };
      \node [above=0.09\textheight, left=0.45\linewidth] at (a) [rotate=90] {
        \footnotesize{$\alpha=0.5$}
      };
      \node [above=0.18\textheight, left=0.45\linewidth] at (a) [rotate=90] {
        \footnotesize{$\alpha=0.7$}
      };
    \end{tikzpicture}
  }
  }
  \caption{Joint posterior distributions, given graphs generated from an $\RWU$ model.
    $\alpha$ is on the vertical, $\lambda$ on the horizontal axis. 
    Generating parameter values are shown by dashed lines.}
  \label{fig:param:sweep}
  \vspace{-.5cm}
\end{figure}

\subsection{Length scale}
\label{sec:experiments:scale}

The distance of vertices that can form connections is governed by the parameter
$\lambda$ of the random walk, which can hence be interpreted as a form of length-scale. 
This scale can be compared to the mean (over starting vertices) mixing time of a random walk on the observed graph,
as listed in the table above for each data set: If $\lambda$
is significantly smaller, the placement of edges inserted by the random walk strongly depends on the
graph structure; if $\lambda$ is large relative to mixing time, dependence between edges is weak.
\begin{itemize}
\item In \cref{fig:data_posterior}, 
  concentration of the $\lambda$-posterior on small values in $[1,2]$ for the NIPS data
  thus indicates predominantly short-range dependence in the data and, since the mixing time 
  is much larger, strong dependence between edges.
  Concentration of the $\alpha$-posterior near the origin means connections are mainly formed 
  through existing connections.
\item In the SJS network, the posterior peaks near ${\lambda=6}$, and hence near the mean 
  mixing time, with $\alpha$ again small. Thus, 
  the principal mechanism for inserting edges under the model is again the random walk, but a 
  random walk of typical length has almost mixed. Hence, the local connections it creates do not strongly 
  influence each other. 
\item The PPI network exhibits an intermediate scale of dependence, with most posterior mass in the 
  range $\lambda \in [3,5]$. Although structures like pendants and chains indicate strong local 
  dependence, there are also denser, more homogeneously connected regions that indicate weaker
  dependence with longer range.
\end{itemize}

\begin{figure}[tb] 
    \makebox[\textwidth][c]{
    \resizebox{\textwidth}{!}{
  \begin{tikzpicture}
    \node at (0,0) {
      \includegraphics[width=4cm,height=2.5cm]{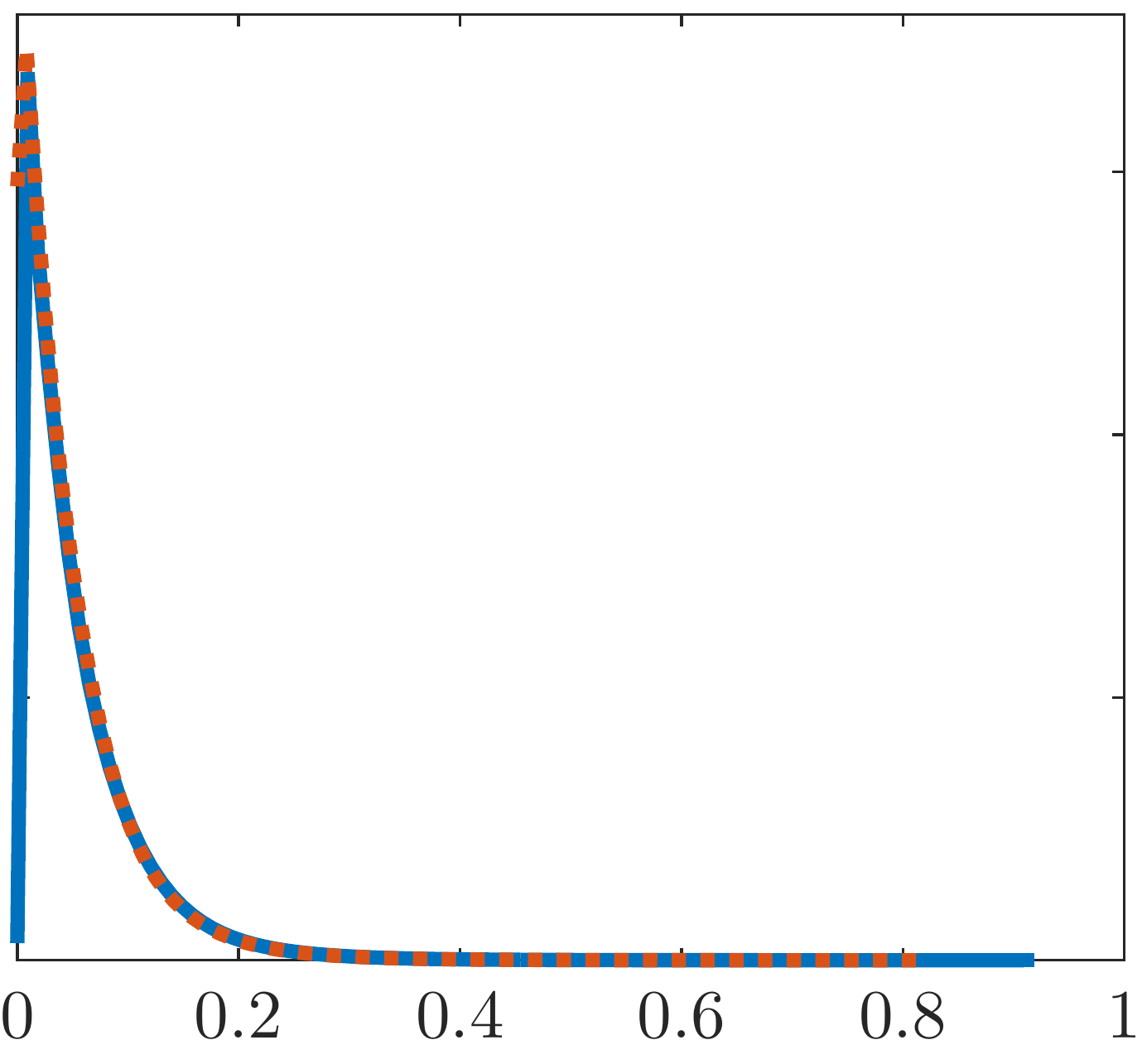}
    };
    \node at (4.6,0) {
      \includegraphics[width=4cm,height=2.5cm]{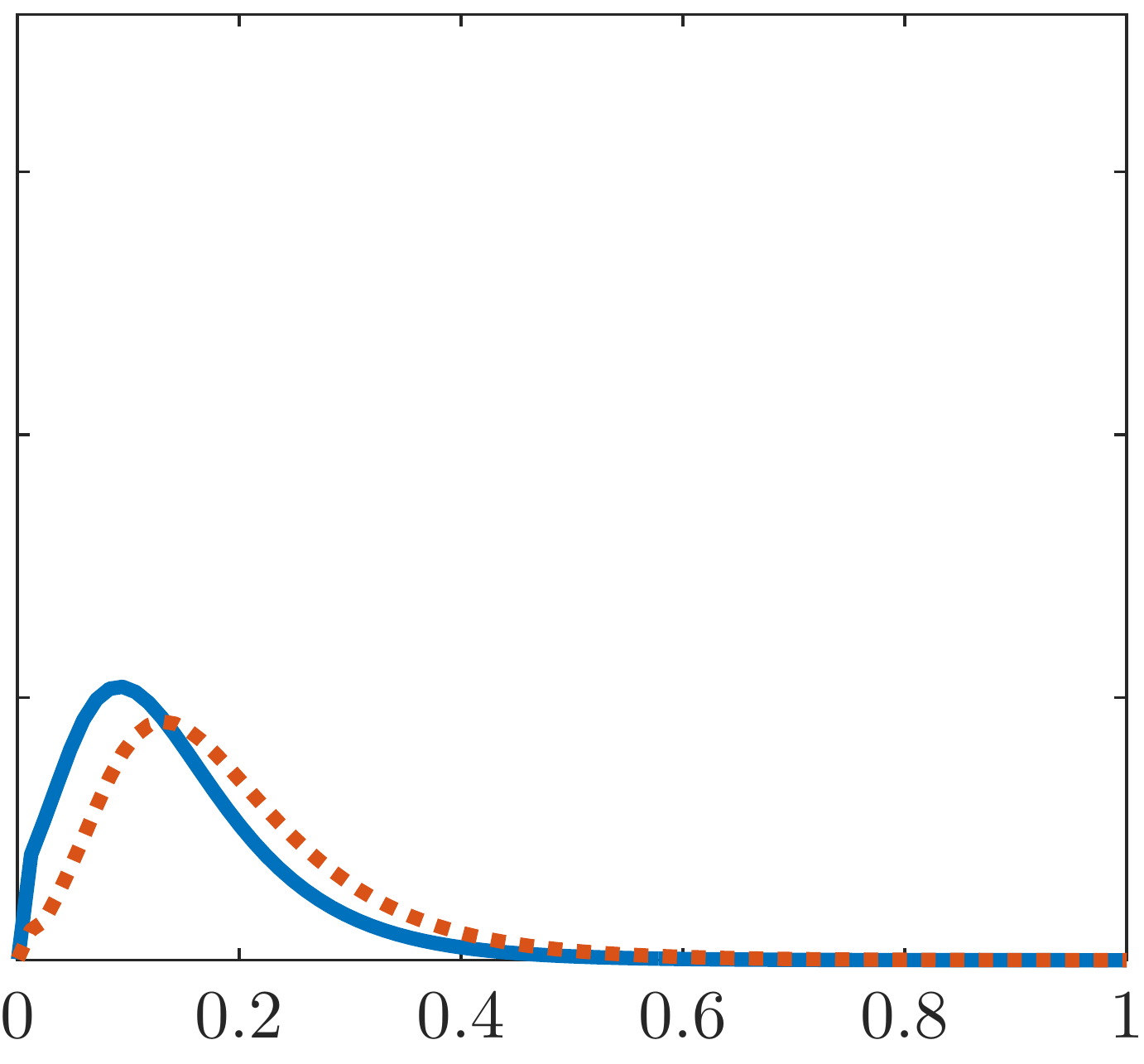}
    };
    \node at (9.2,0) {
      \includegraphics[width=4cm,height=2.5cm]{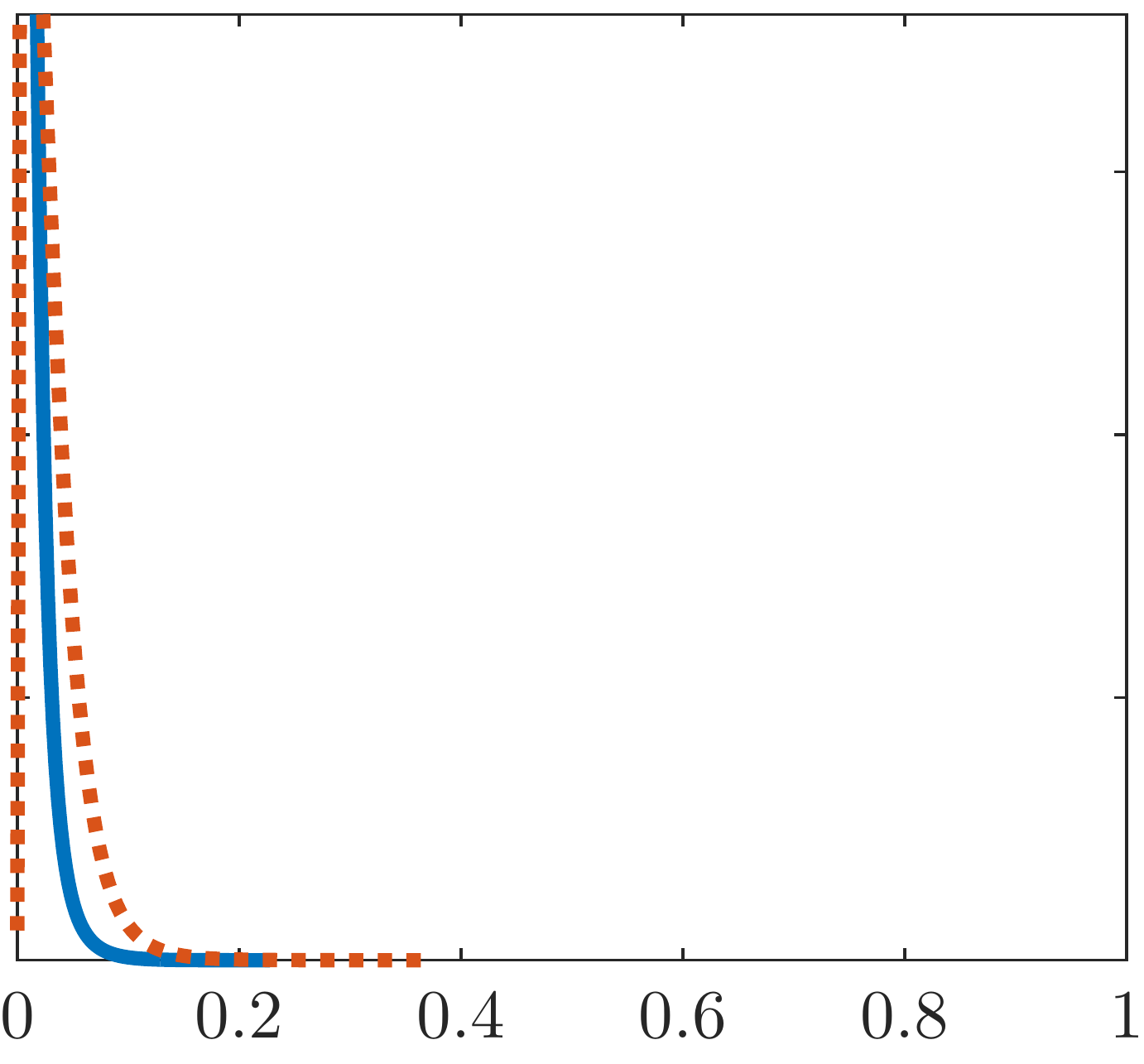}
    };
    \node at (0,-3.2) {
      \includegraphics[width=4cm,height=2.5cm]{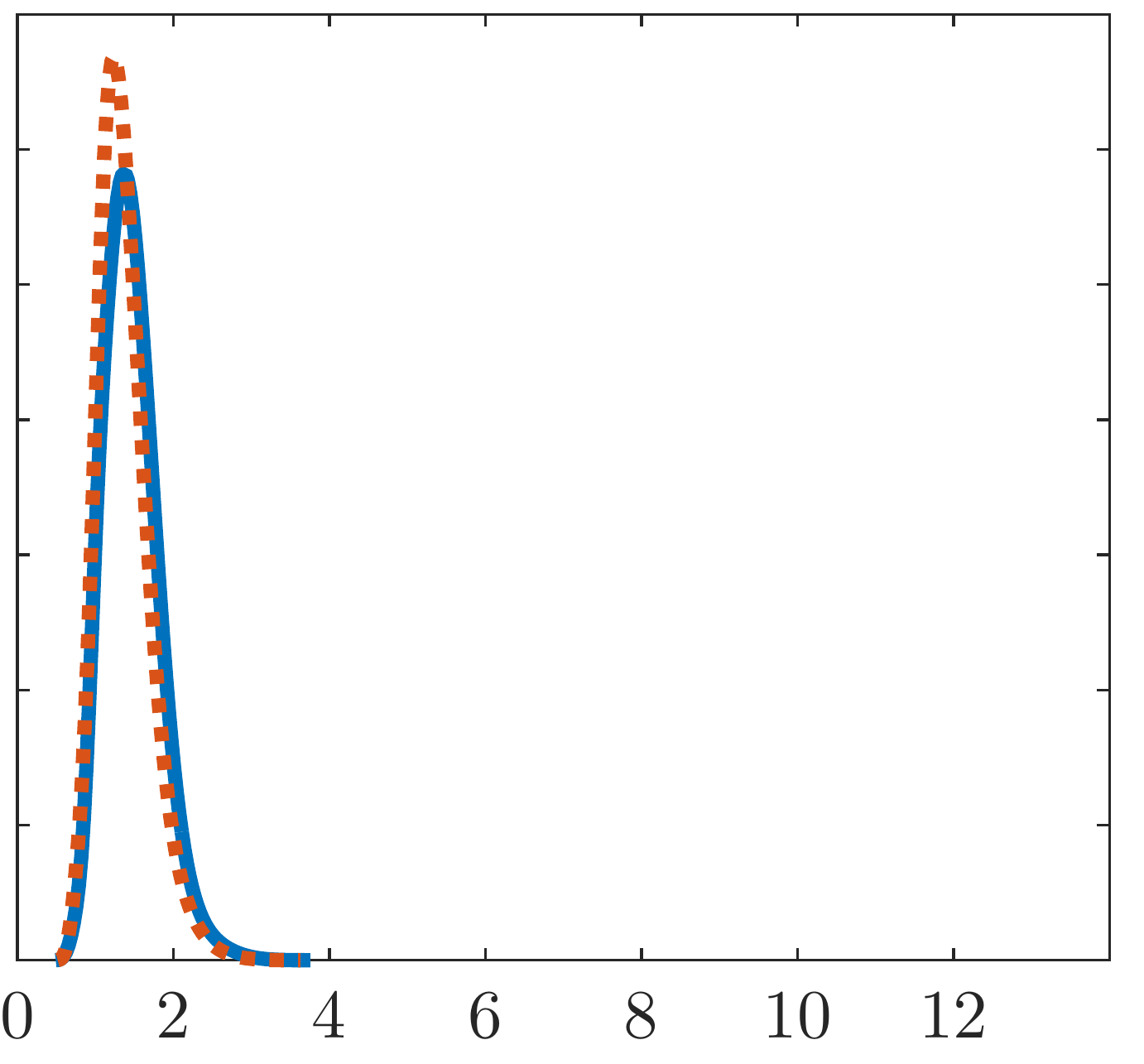}
    };
    \node at (4.6,-3.2) {
      \includegraphics[width=4cm,height=2.5cm]{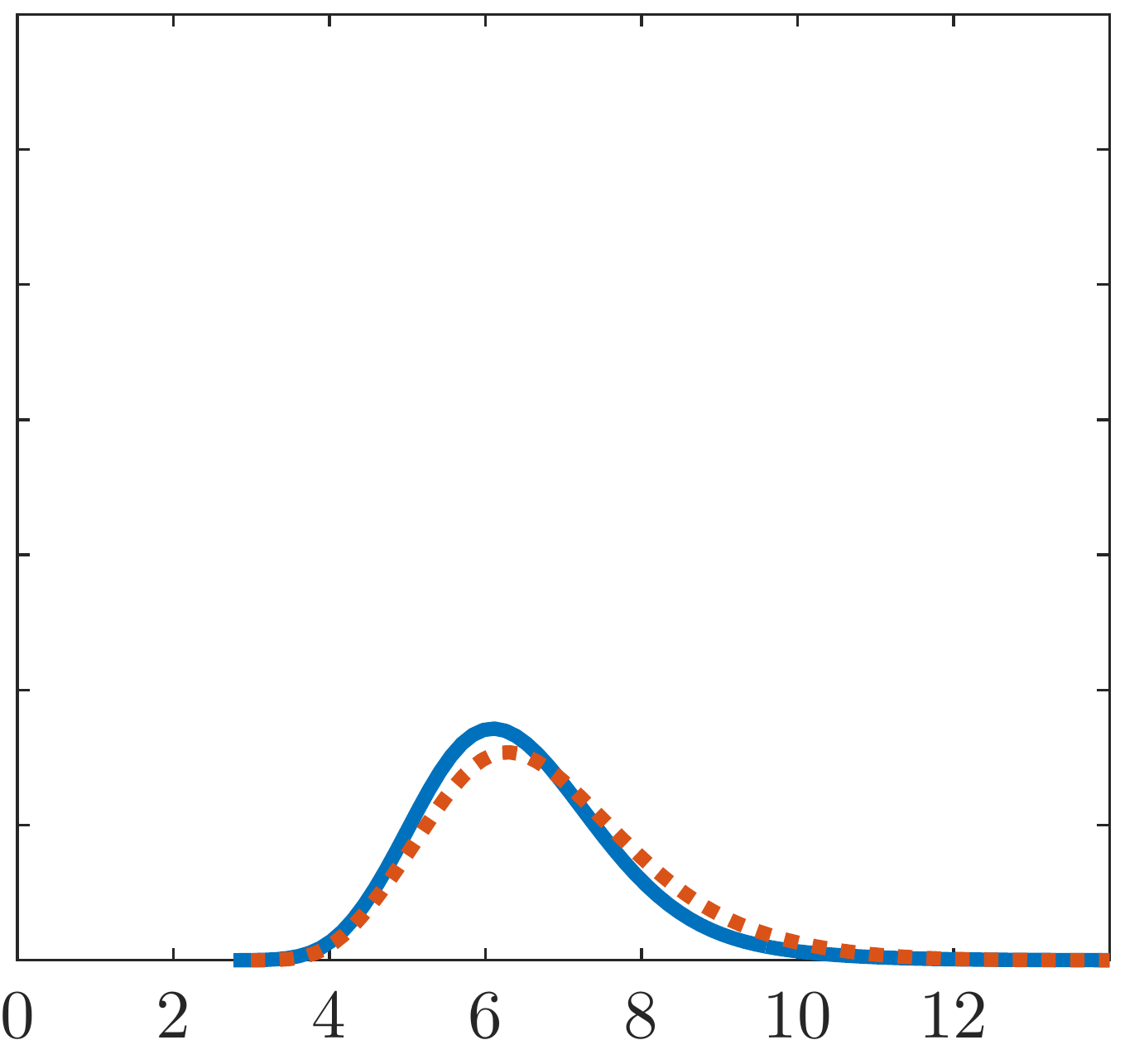}
    };
    \node at (9.2,-3.2) {
      \includegraphics[width=4cm,height=2.5cm]{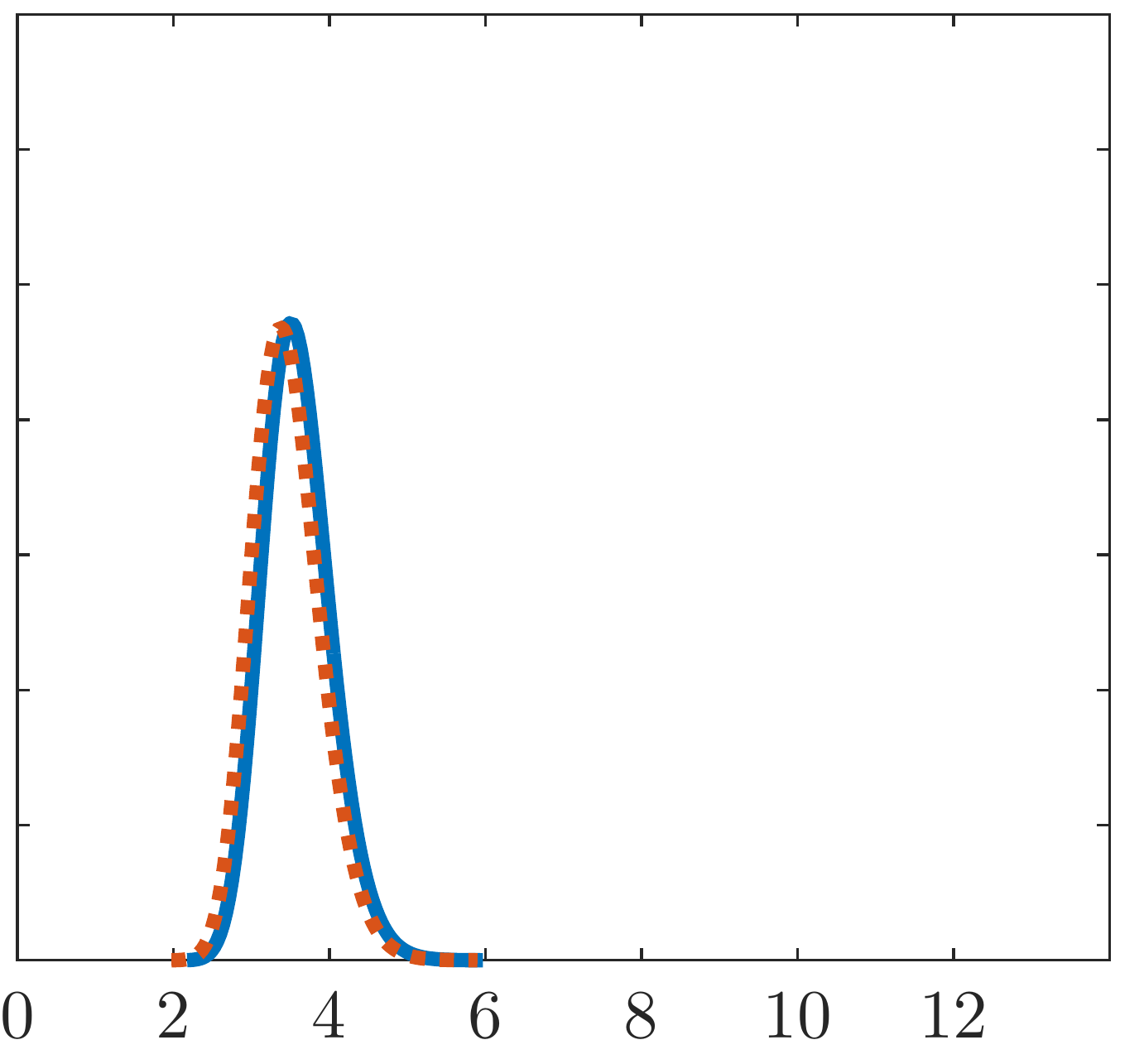}
    };

    \node at (0,-1.5) {\footnotesize $\alpha$};
    \node at (4.6,-1.5) {\footnotesize $\alpha$};
    \node at (9.2,-1.5) {\footnotesize $\alpha$};

    \node at (0,-4.7) {\footnotesize $\lambda$};
    \node at (4.6,-4.7) {\footnotesize $\lambda$};
    \node at (9.2,-4.7) {\footnotesize $\lambda$};

    \node[rotate=90] at (-2.5,0) {\footnotesize $\pi(\alpha\mid G_T)$};
    \node[rotate=90] at (-2.5,-3.2) {\footnotesize $\pi(\lambda\mid G_T)$};
  \end{tikzpicture}
}}
  \caption{Kernel-smoothed estimates of the model posteriors of 
    $\RWU$ (blue/solid) and $\RWSB$ (orange/dotted).
    \textit{Left column}: NIPS data. \textit{Middle}: SJS. \textit{Right}: PPI. 
    (1000 samples each, at lag 40, after 1000 burn-in iterations; 100 samples each are drawn from 10 chains).
  }
  \label{fig:data_posterior}
  \vspace{-.5cm}
\end{figure}

\subsection{Model fitness} \label{sec:model:fitness}

\begin{figure}[t!]
  \resizebox{\textwidth}{!}{
    \begin{tikzpicture}[mybraces]
  \path[use as bounding box]
  (-2.5,1.8) rectangle (13.7,-11);
  \begin{scope}[xshift=0cm]
    \node at (0.2,1.6) {\footnotesize degree};
    \node (a) at (0,0) {
      \includegraphics[width=4.5cm]{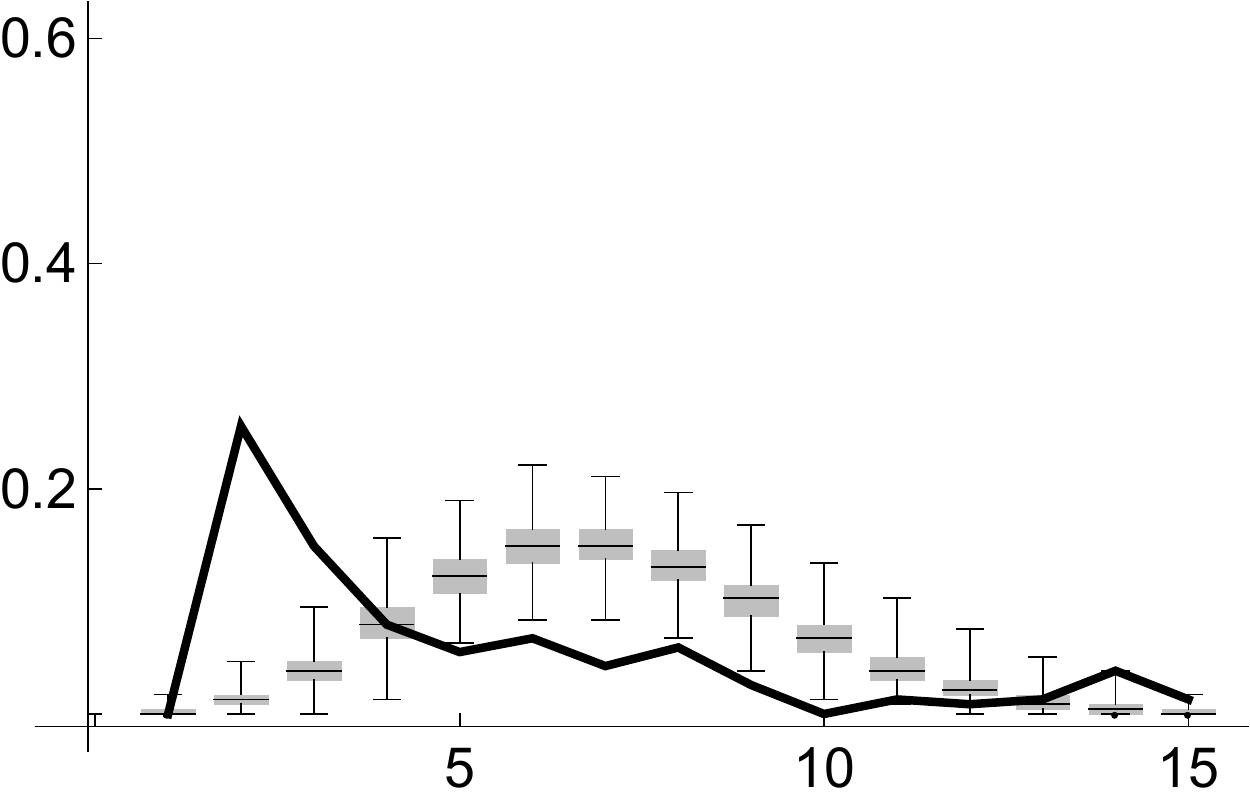}   
    };                                                                                 
    \node (b) at (0,-3) {                                                              
      \includegraphics[width=4.5cm]{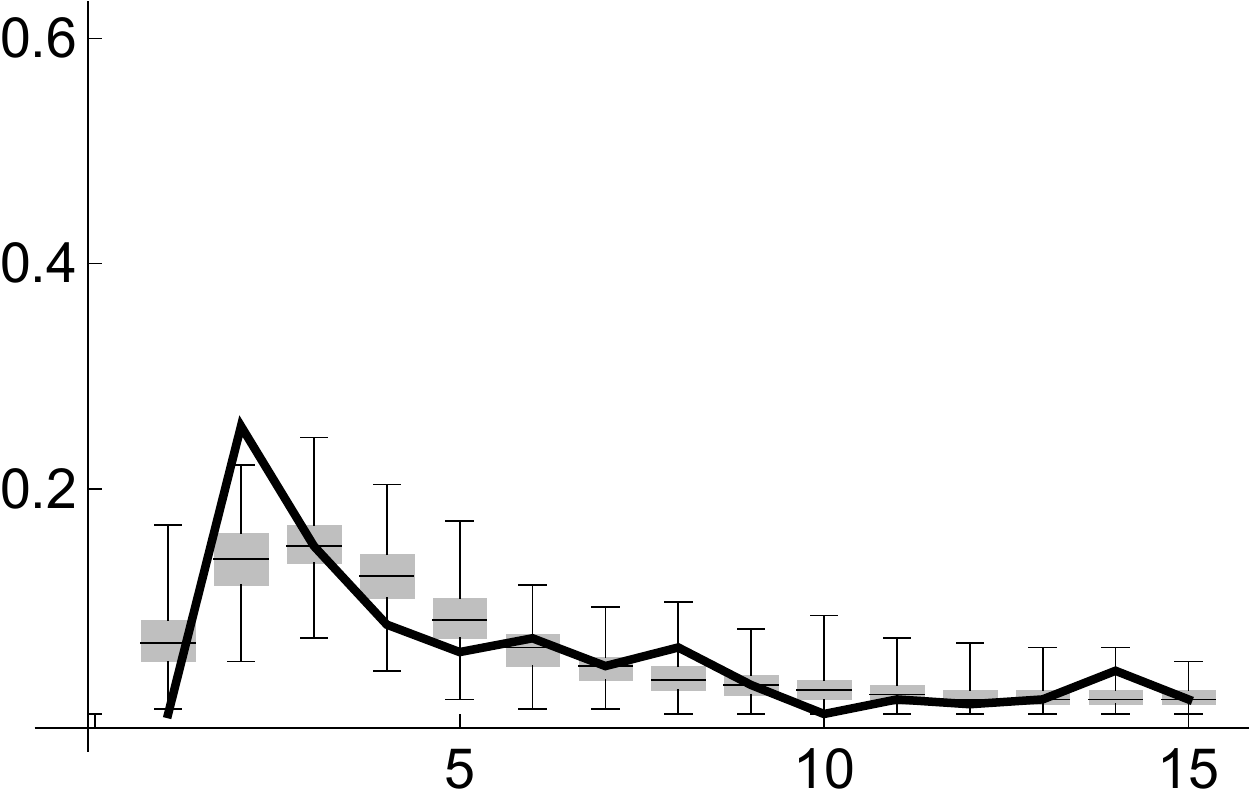}  
    };                                                                                 
    \node (c) at (0,-6) {                                                              
      \includegraphics[width=4.5cm]{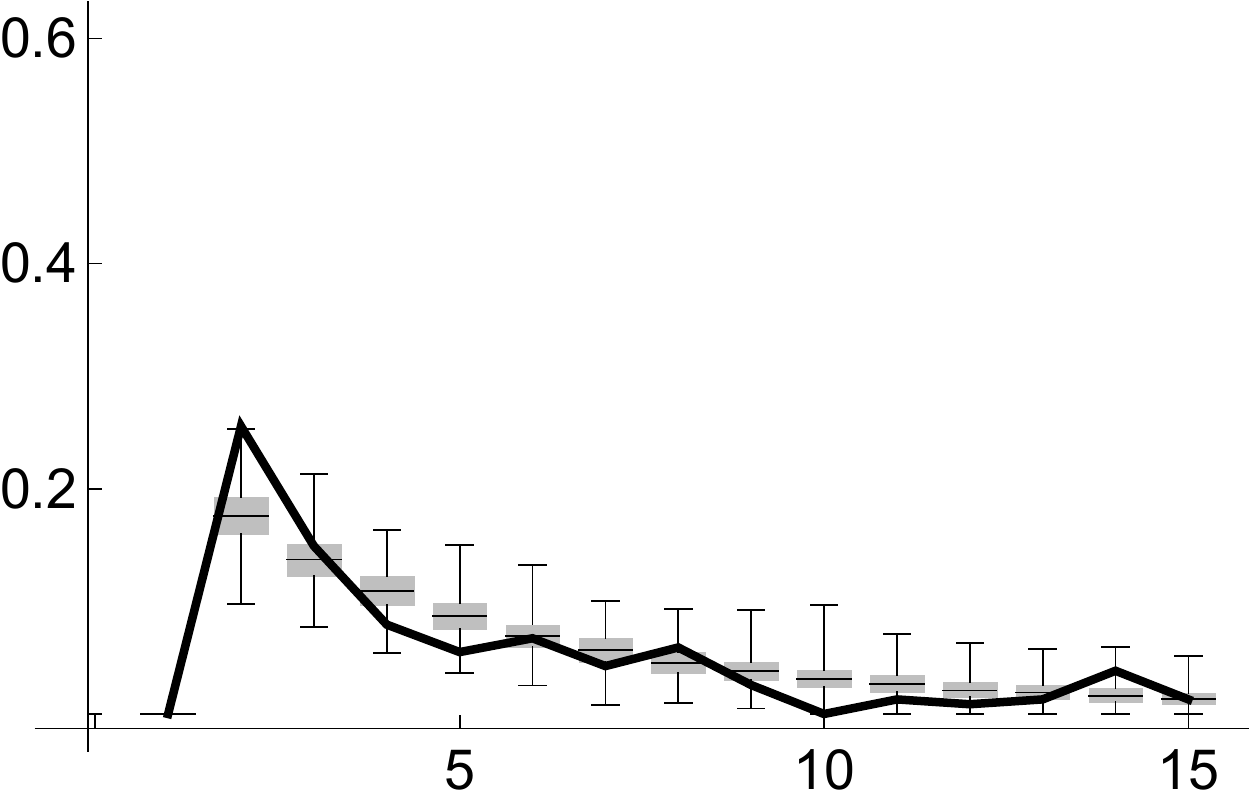} 
    };                                                                                 
    \node (d) at (0,-9) {                                                              
      \includegraphics[width=4.5cm]{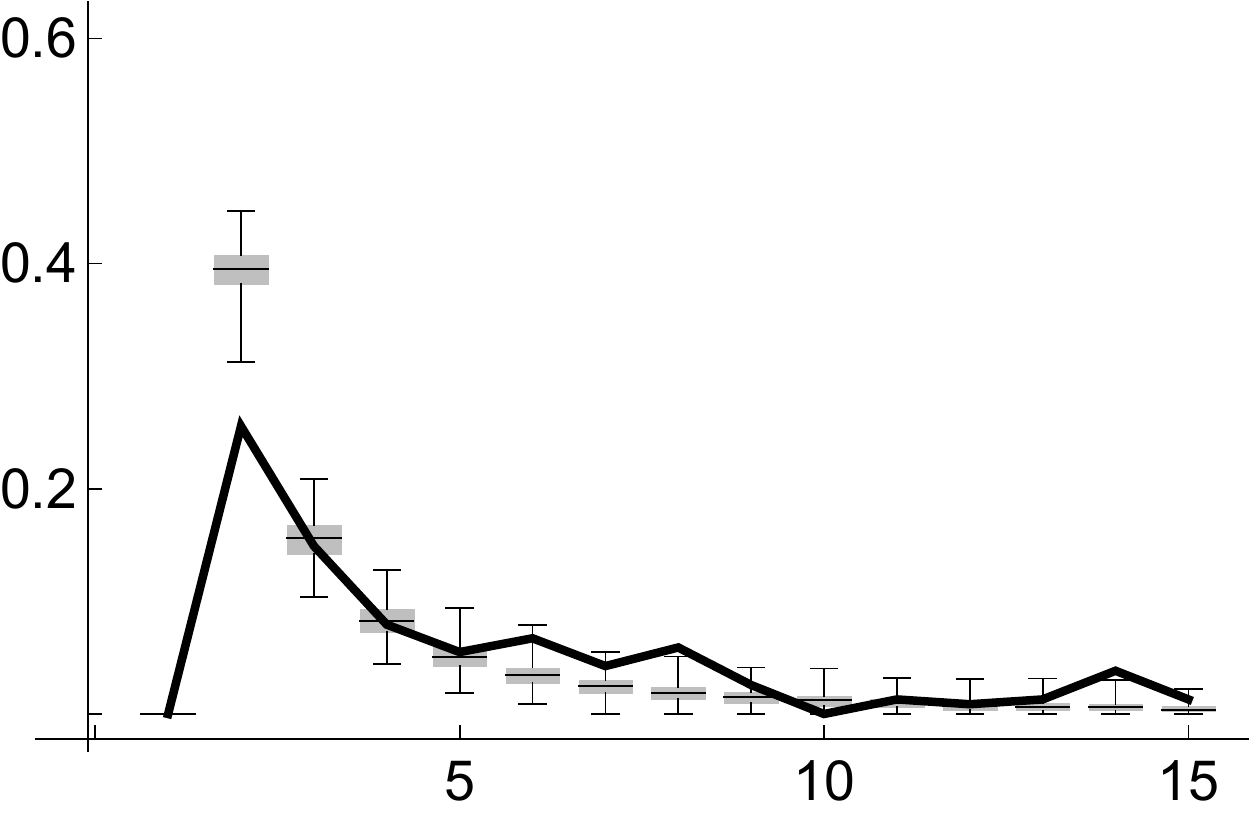}
    };
    \node at (0,-10.8) {\footnotesize{$k$}};
    \node [rotate=90] at (-2.6,-4.5) {\footnotesize{$\log(1 + d_k)$}};
  \end{scope}
  \begin{scope}[xshift=5.3cm]
    \node at (0.2,1.6) {\footnotesize edgewise shared partner};
    \node (a) at (0,0) {
      \includegraphics[width=4.5cm]{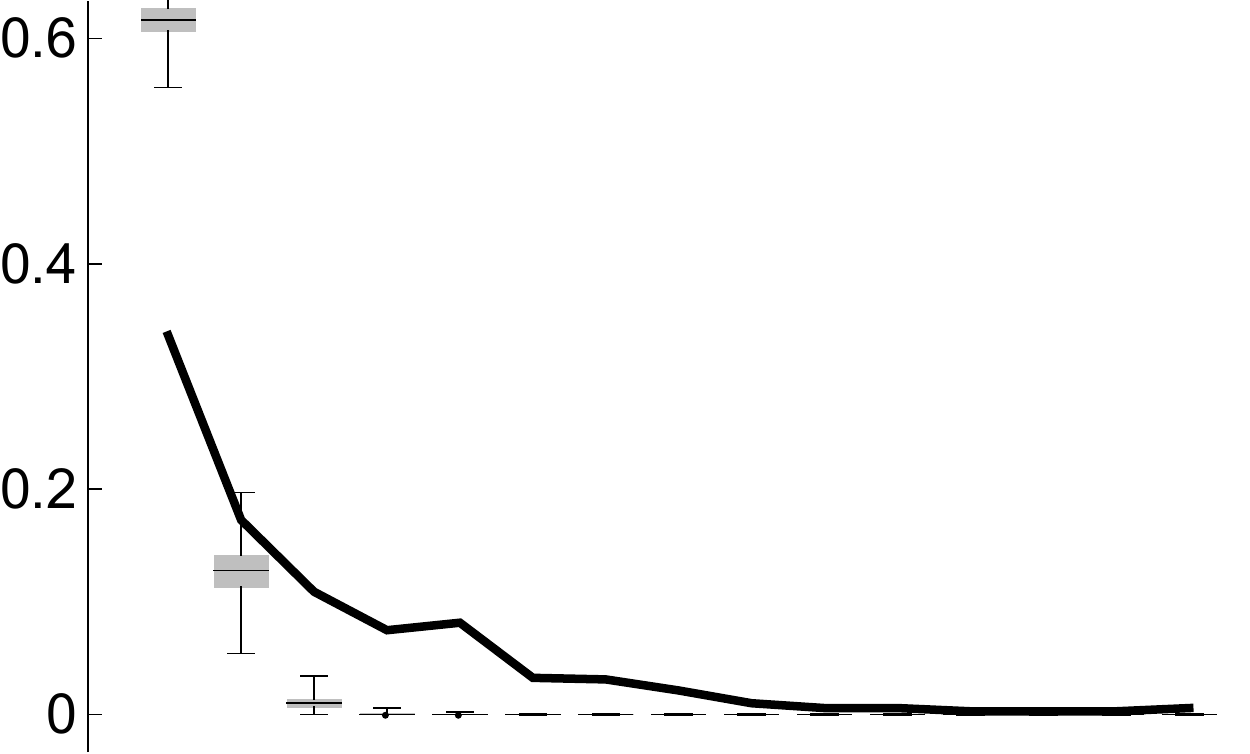}   
    };                                                                                 
    \node (b) at (0,-3) {                                                              
      \includegraphics[width=4.5cm]{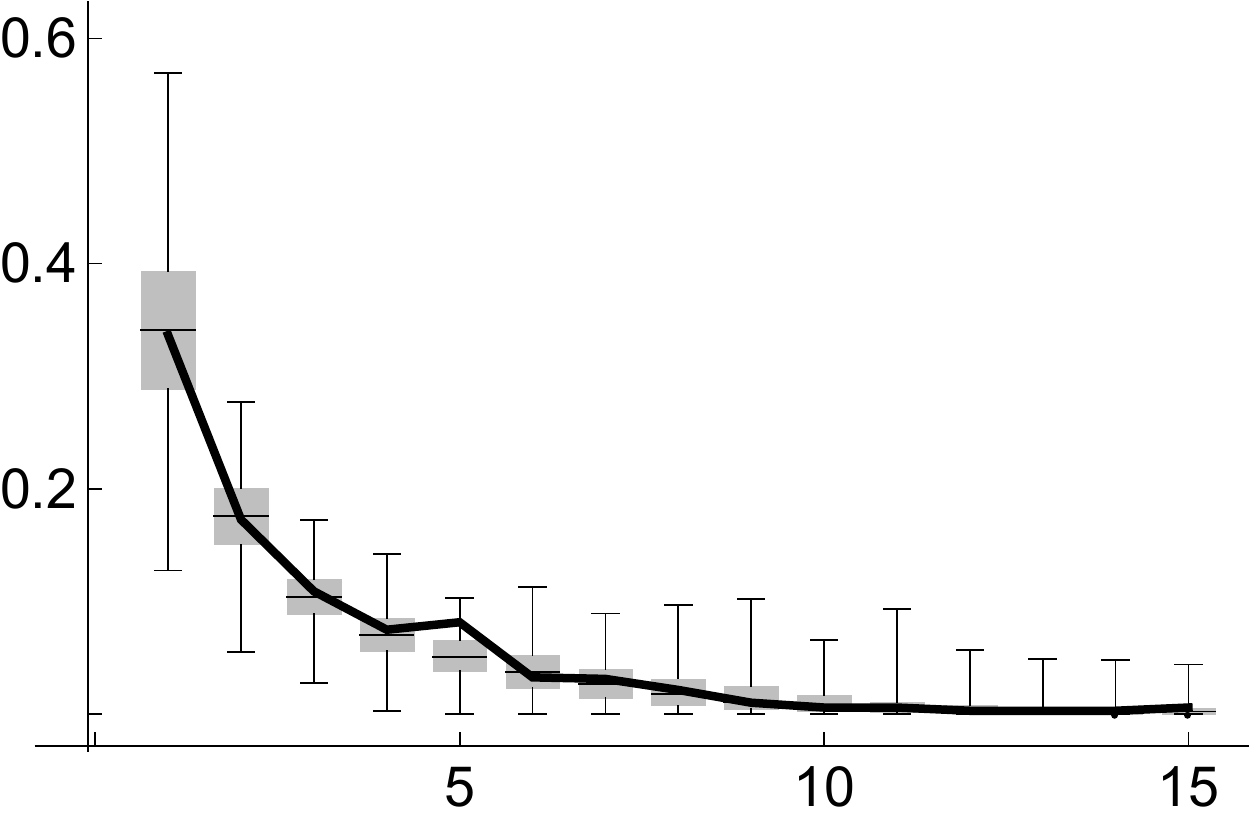}  
    };                                                                                 
    \node (c) at (0,-6) {                                                              
      \includegraphics[width=4.5cm]{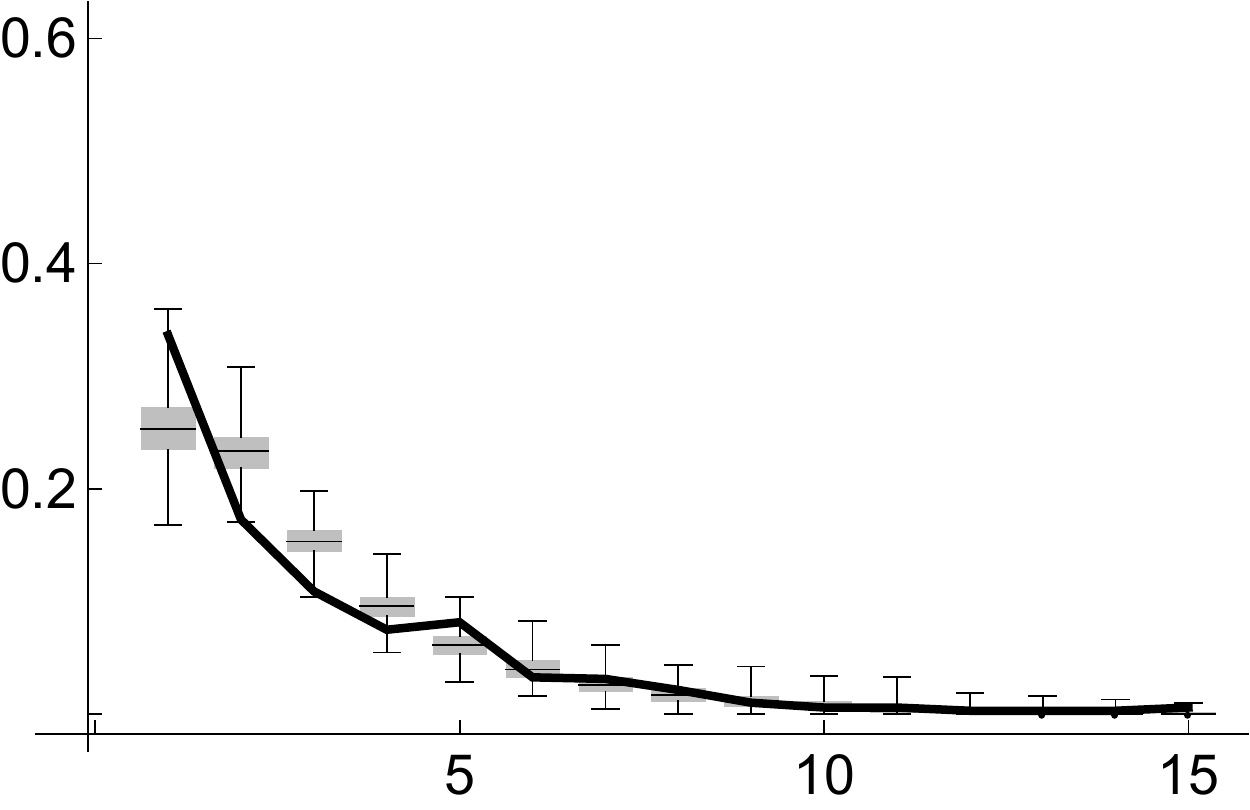} 
    };                                                                                 
    \node (d) at (0,-9) {                                                              
      \includegraphics[width=4.5cm]{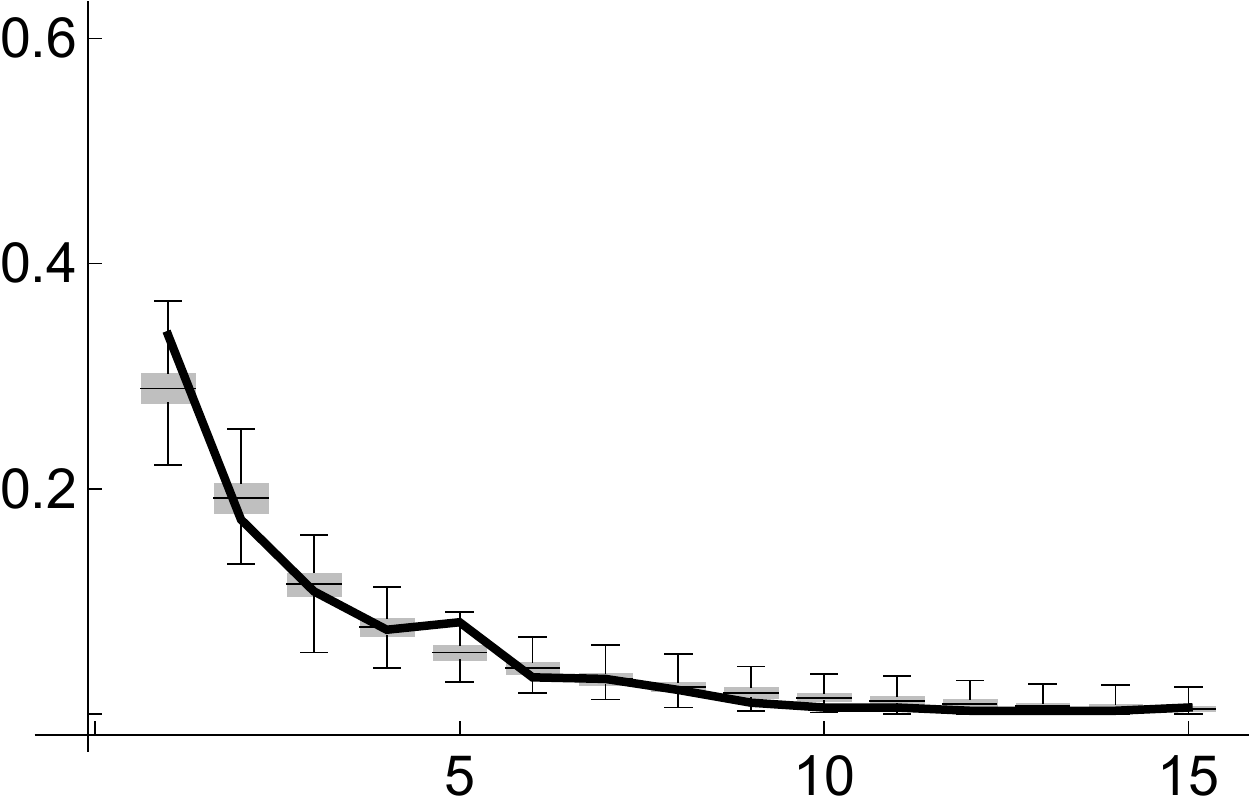}
    };
    \node at (0,-10.8) {\footnotesize{$k$}};
    \node [rotate=90] at (-2.6,-4.5) {\footnotesize{$\log(1 + \chi_k)$}};
  \end{scope}
  \begin{scope}[xshift=10.6cm]
    \node at (0.2,1.6) {\footnotesize pairwise geodesic};
    \node (a) at (0,0) {
      \includegraphics[width=4.5cm]{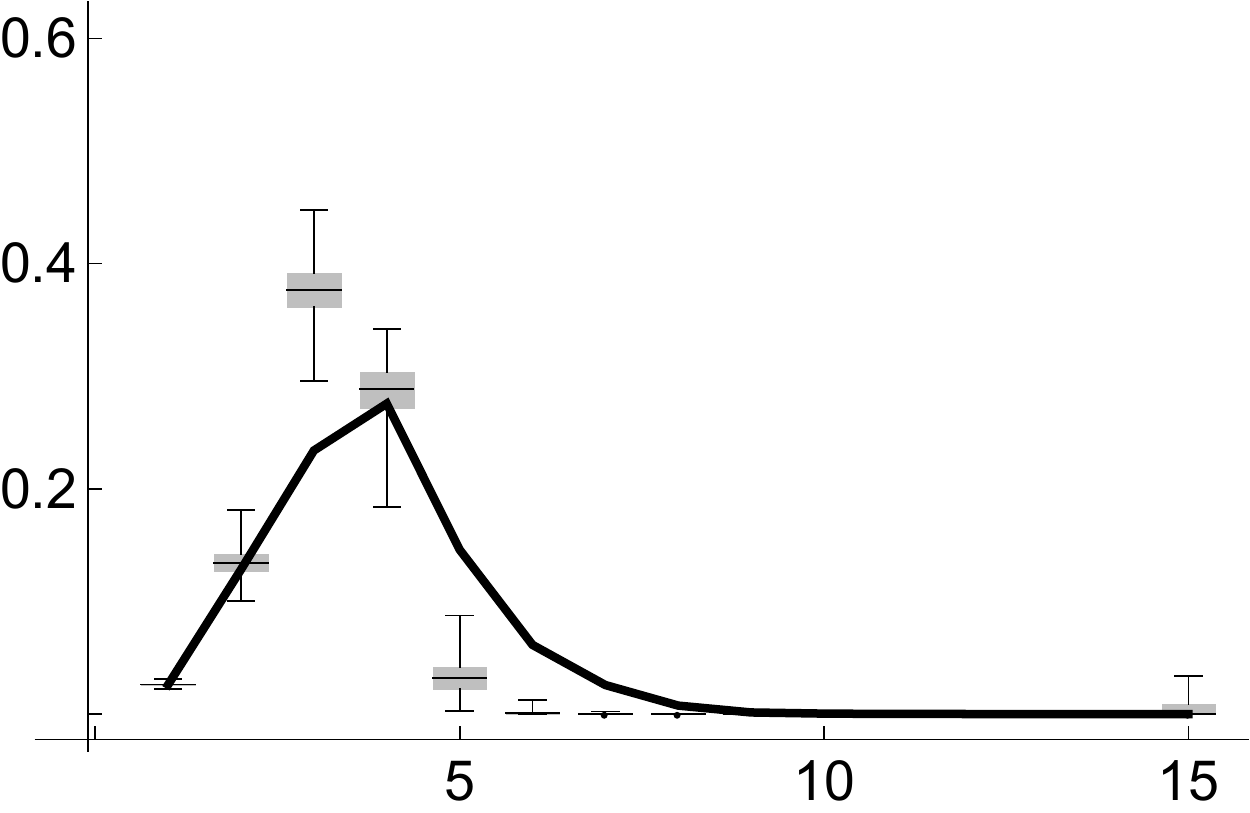}   
    };                                                                                 
    \node (b) at (0,-3) {                                                              
      \includegraphics[width=4.5cm]{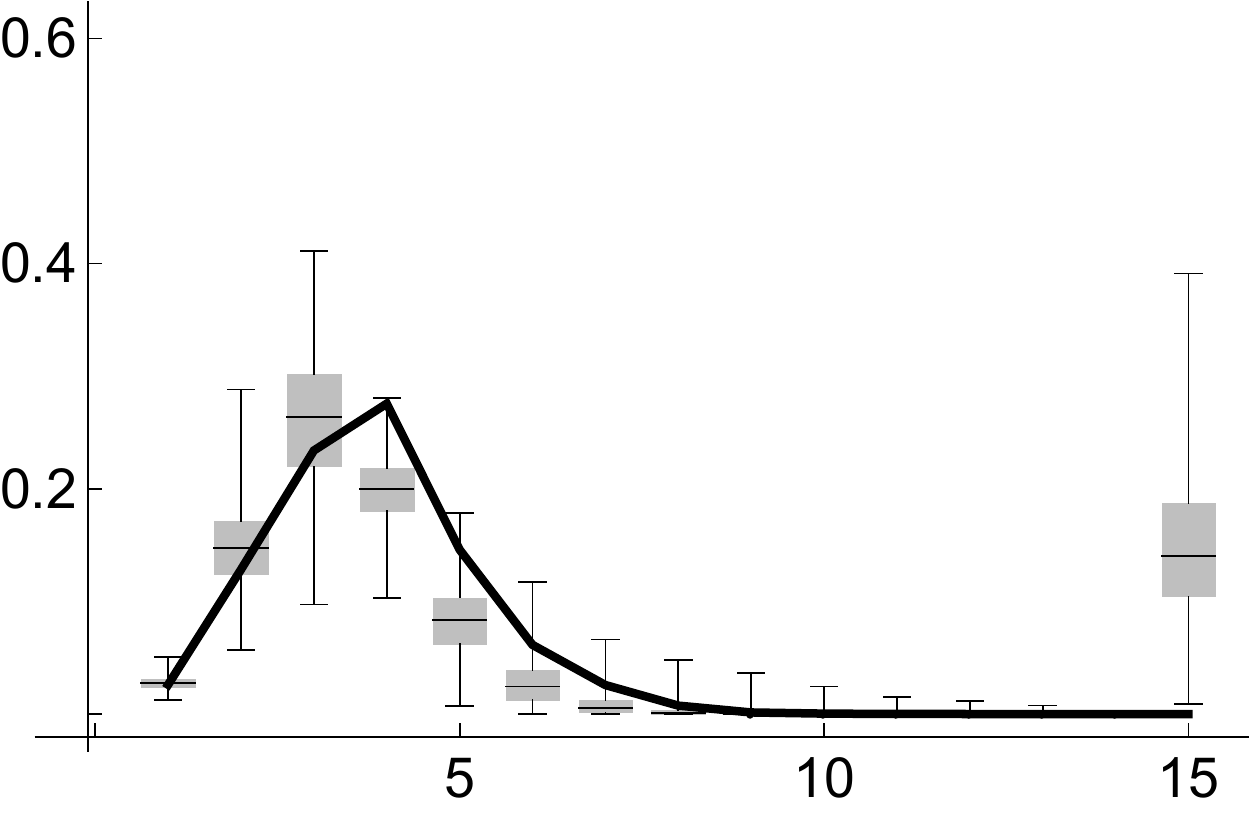}  
  		};                                                                                 
    \node (c) at (0,-6) {                                                              
  			\includegraphics[width=4.5cm]{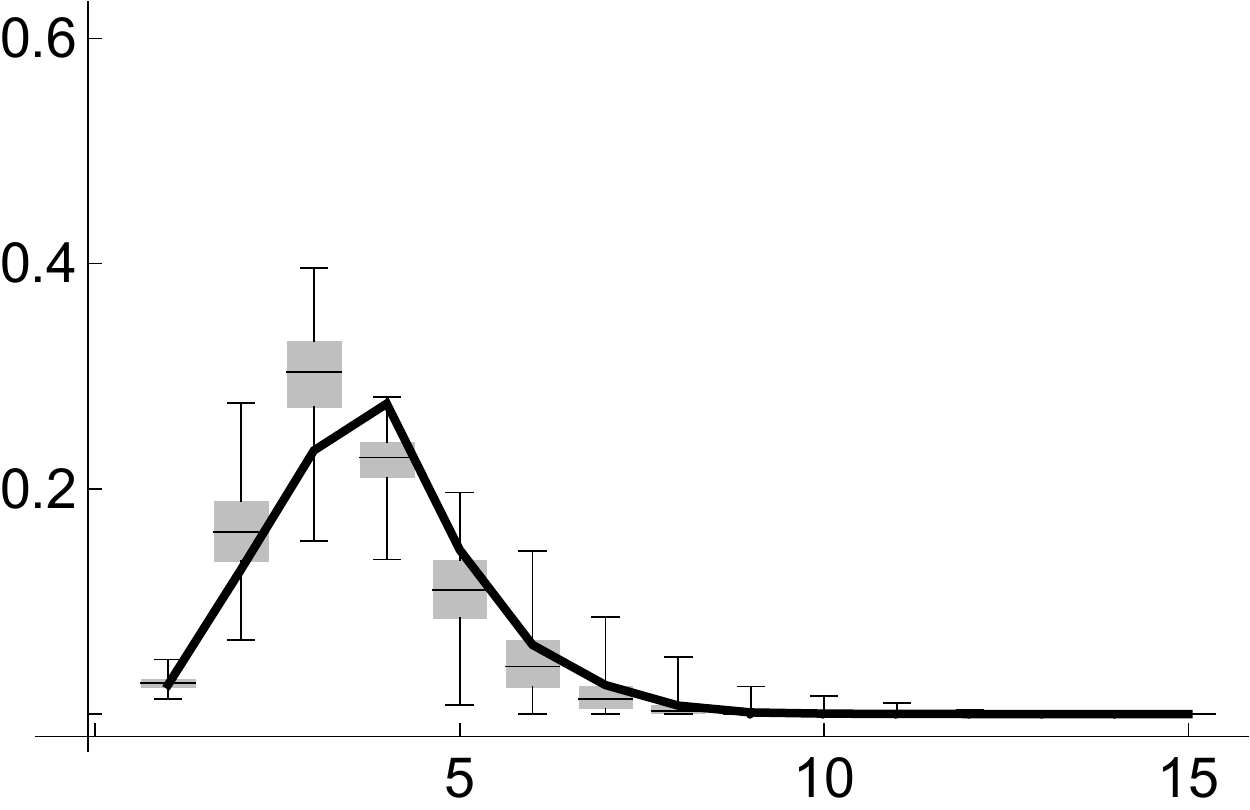} 
    };                                                                                 
    \node (d) at (0,-9) {                                                              
      \includegraphics[width=4.5cm]{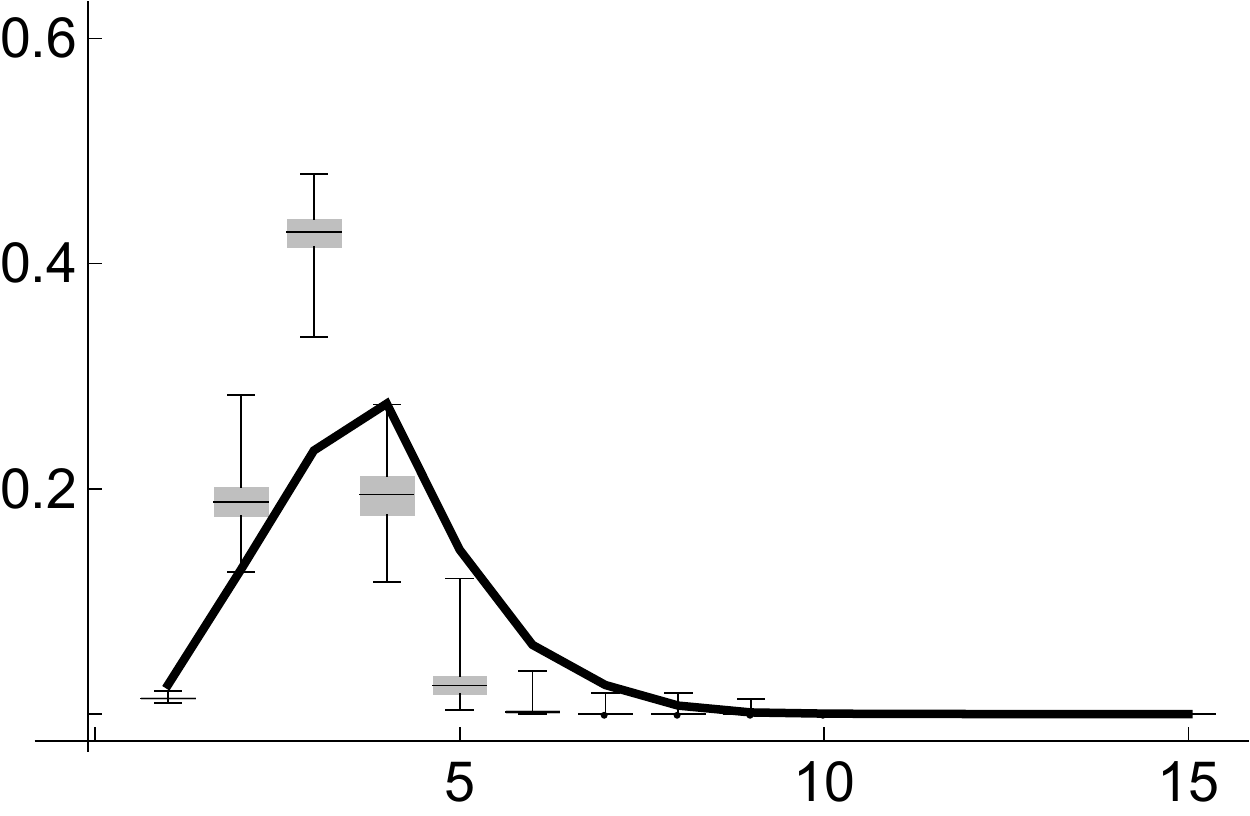}
    };
    \node at (0,-10.8) {\footnotesize{$k$}};
    \node [rotate=90] at (-2.6,-4.5) {\footnotesize{$\log(1 + \gamma_k)$}};
  	  \end{scope}
  \begin{scope}[xshift=11cm]
    \draw[brace] ($(a.north east)+(0,-.1)$)--($(a.south east)+(0,.4)$);
    \node[rotate=270] at ($(a.east)+(.5,0)$) {\footnotesize Erd\H{o}s-R\'{e}nyi};
    
    \draw[brace] ($(b.north east)+(0,-.1)$)--($(b.south east)+(0,.4)$);
    \node[rotate=270] at ($(b.east)+(.5,0)$) {\footnotesize IRM};
    
    \draw[brace] ($(c.north east)+(0,-.1)$)--($(c.south east)+(0,.4)$);
    \node[rotate=270] at ($(c.east)+(.5,0)$) {\footnotesize RW (uniform)};
    
    \draw[brace,gray!80!black] ($(d.north east)+(0,-.1)$)--($(d.south east)+(0,.4)$);
    \node[rotate=270] at ($(d.east)+(.5,0)$) {\footnotesize RW (size-biased)};
  \end{scope}
\end{tikzpicture}

  }
  \caption{Estimated PPDs for the PPI data set of three statistics: Normalized degree $d_k$,
    normalized edgewise shared partner statistic $\chi_k$, and normalized pairwise geodesic 
    $\gamma_k$.
    Results are shown for four models: Erd\H{o}s-R\'{e}nyi (top row), IRM (second row), $\RWU$ (third row),
    and $\RWSB$ (bottom row). The black line represents the distribution from the PPI data.}
  \label{fig:resampling:degree}
  \vspace{-.5cm}
\end{figure}

Assessing model fitness on network data is difficult: For other types of data,
cross validation is often the tool of choice, and indeed, cross-validated link prediction (where links and non-links 
are deleted uniformly and independently at random) is widely used for evaluation in the
machine learning literature. Even if link prediction is considered the relevant statistic of interest,
however, cross-validating network data requires subsampling
the observed network. Any choice of such a subsampling algorithm implies strong assumptions on how
the data was generated; it may also favor one model over another. 

We therefore use a protocol developed by \citet{Hunter:Goodreau:Handcock:2008}, 
who compare models to data by fixing a set of network statistics. The fitted model is evaluated 
by comparing the chosen statistics of the data to those of networks simulated from the model.
The selected statistics specify which properties are considered important in assessing fit.
To capture a range of structures, the following count statistics are proposed in 
\citep{Hunter:Goodreau:Handcock:2008}, which we also adopt here:
\begin{itemize}
  \item Normalized degree statistics (ND), $d_k$, the number of vertices of degree $k$, divided by the total number of vertices.
  \item Normalized edgewise shared partner statistics (NESP), $\chi_k$, the number of unordered, connected pairs 
    $\{ i,j \}$ with exactly $k$ common neighbors, divided by the total number of edges.
  \item Normalized pairwise geodesic statistics (NPG), $\gamma_k$, the number of unordered pairs $\{ i,j \}$ with distance $k$ in the graph, divided by the number of dyads.
\end{itemize}
In a Bayesian setting, executing the protocol amounts to performing posterior predictive checks~\citep{Box:1980,Gelman:Meng:Stern:1996} via the following procedure:
\begin{enumerate}[label=(\arabic*)]
	\item Sample the model parameters from the posterior, $\seqParams \sim \pi(\seqParams | G_T)$.
	\item Simulate a graph of the same size as the data, $G^{(s)} \sim P(G | \seqParams)$.
	\item Calculate the statistic(s) of interest for the simulated graph, $f(G^{(s)})$.
\end{enumerate}
The value ${f(G^{(s)})}$ is then a sample from the posterior predictive distribution (PPD) of the statistic $f$ under the model. Small PPD mass around the observed value of $f$ indicates
the model does not explain those properties of the data that $f$ measures.

PPDs are estimated for the following models:
The Erd\H{o}s-R\'{e}nyi model (ER); the infinite relational model (IRM) with a Chinese Restaurant Process prior on the number of blocks~\citep{Kemp:etal:2006}; the infinite edge partition model with a hierarchical gamma process (EPM)~\citep{Zhou:2015}; the ACL model (\cref{sec:pa}); and the
$\RWU$ and $\RWSB$ models. The table lists total variation distance between the empirical
distribution of each statistic on the observed graph and on graphs generated from the respective models. 
Standard errors are computed over 1000 samples. Smaller values indicate a better fit:

\begin{center}
  \makebox[\textwidth][c]{
    \resizebox{.9\textwidth}{!}{
\begin{tabular}{rcccccc}
  & \multicolumn{3}{c}{\textit{PPI data}}
  & \multicolumn{3}{c}{\textit{NIPS data}}\\
  \midrule
  \textit{Model} & \textit{Degree} & \textit{ESP} & \textit{Geodesic} 
  & \textit{Degree} & \textit{ESP} & \textit{Geodesic} \\                    
\midrule                                         
EPM & ${0.49\pm .03}$ & ${0.31\pm .07}$ & ${0.65 \pm .04}$   
& ${0.57 \pm .06}$ & ${0.43 \pm .15}$ & ${0.72 \pm .05}$ \\  

IRM & ${0.30 \pm .04}$ & ${0.15 \pm .07}$ & ${0.25 \pm .07}$   
 & ${ 0.29 \pm .08 }$ & ${ 0.46 \pm .10 }$ & ${0.36 \pm .13}$ \\   

ER & ${ 0.57 \pm .02 }$ & ${ 0.45 \pm .03 }$ & ${ 0.23 \pm .02}$ 
& ${ \bf 0.26 \pm .06 }$ & ${ 0.69 \pm .06 }$ & ${ 0.41 \pm .06}$ \\    

ACL & ${ 0.28 \pm .02 }$ & ${ \bf 0.09 \pm .02 }$ & ${ 0.34 \pm .03}$ 
& ${ 0.42 \pm .05 }$ & ${ 0.51 \pm .06 }$ & ${ 0.50 \pm .04}$ \\

RW-U & ${ \bf 0.23 \pm .03 }$ & ${ 0.17 \pm .02 }$ & ${ \bf 0.16 \pm .08}$ 
& ${ \bf 0.26 \pm .04 }$ & ${ \bf 0.33 \pm .05 }$ & ${ \bf 0.22 \pm .08 }$ \\  

RW-SB & ${ 0.27 \pm .02 }$ & ${ 0.11 \pm .02 }$ & ${ 0.34 \pm .04}$ 
& ${ 0.42 \pm .05 }$ & ${ 0.39 \pm .06 }$ & ${ 0.45 \pm .07}$ 
\end{tabular}
}}
\end{center}

For the ER model, the IRM, and the two random walk models, PPD estimates 
are shown in more detail in 
\cref{fig:resampling:degree}, on a logarithmic scale. 
In terms of the protocol of \citep{Hunter:Goodreau:Handcock:2008}, 
the uniform random walk model provides the best fit on both data sets.
The IRM does similarly well, although
it does place significant posterior mass on $d_0$ and $\gamma_{\infty}$ (since it tends 
to generate isolated vertices). The good fit comes at the price of model complexity:
The IRM is a prior on stochastic blockmodels with an infinite number of classes,
a finite number of which are invoked to explain a finite graph.
For the PPI data set, for example, the posterior sharply concentrates at 9 classes,
which amounts to 53 scalar parameters, compared to 2 for the RW model.

\cref{fig:resamples} compares reconstructions of a
data set from different fitted models.
Such visual comparisons should be treated with caution, but the plots
underscore that for some types of data, random walk models
provide an arguably better structural fit.

\begin{figure}[tb] 
	\makebox[\textwidth][c]{
    \resizebox{\textwidth}{!}{
        \begin{tikzpicture}
          \path[use as bounding box]
          (-1.5,-2.5) rectangle (13.5,2);
          \node at (-.4,0) {
            \includegraphics[width=3.5cm,angle=-110]{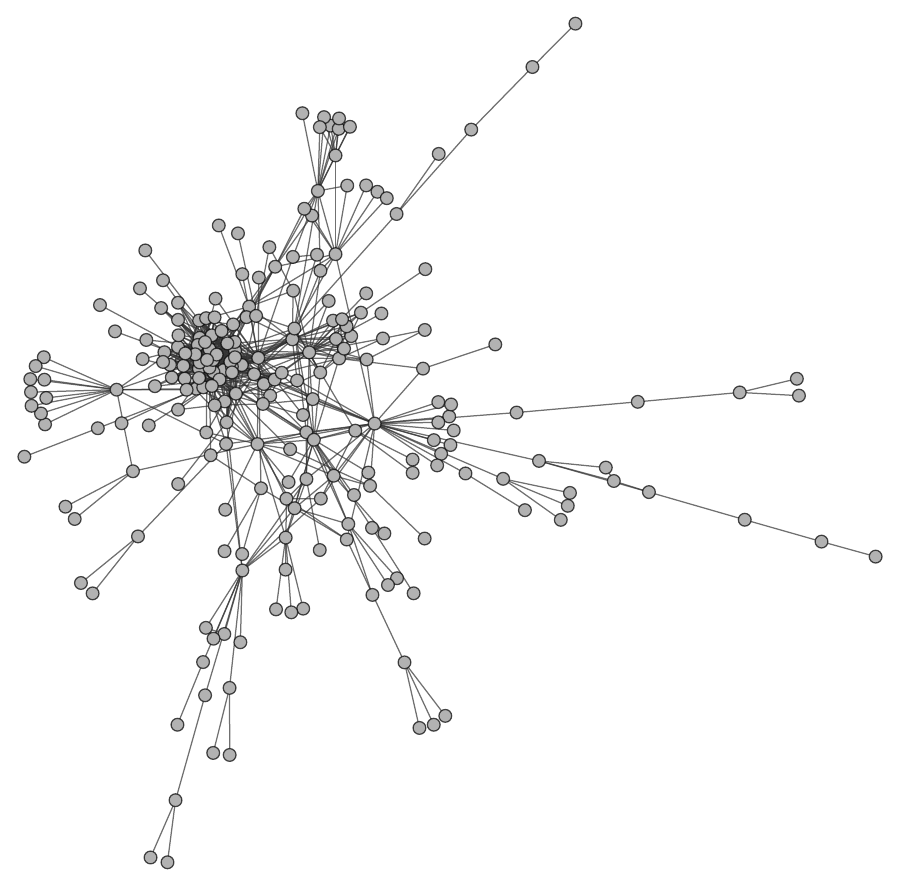}
          };
          \node at (2.7,0) {
            \includegraphics[width=3.5cm,angle=80]{interactomeAdj_resampled_unif2_graph.pdf}
          };
          \node at (5.4,0) {
            \reflectbox{\includegraphics[width=3.5cm,angle=90]{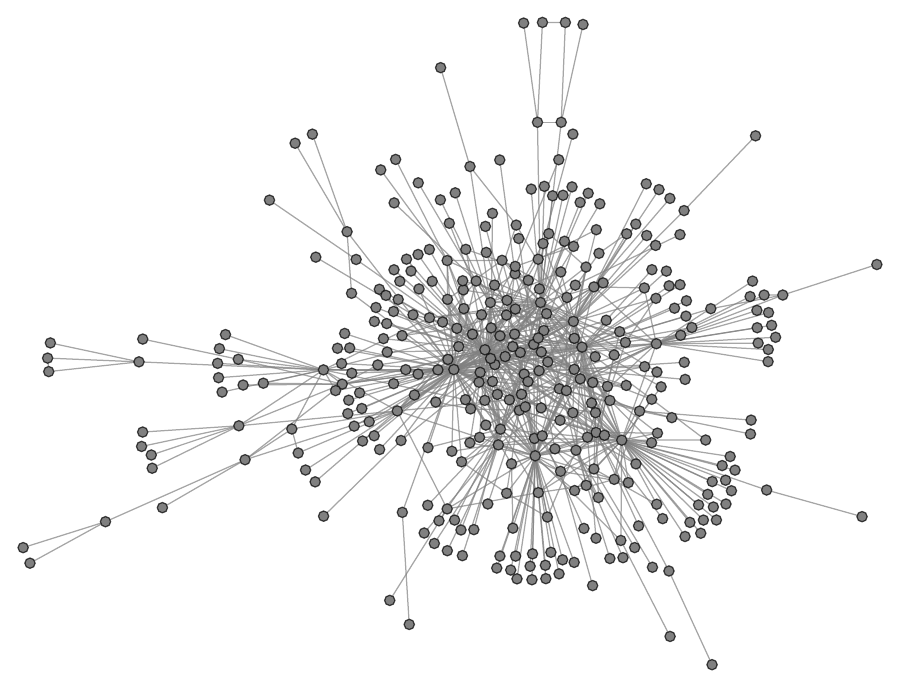}}
          };
	  \node at (8.6,0) {
            \includegraphics[width=3.5cm]{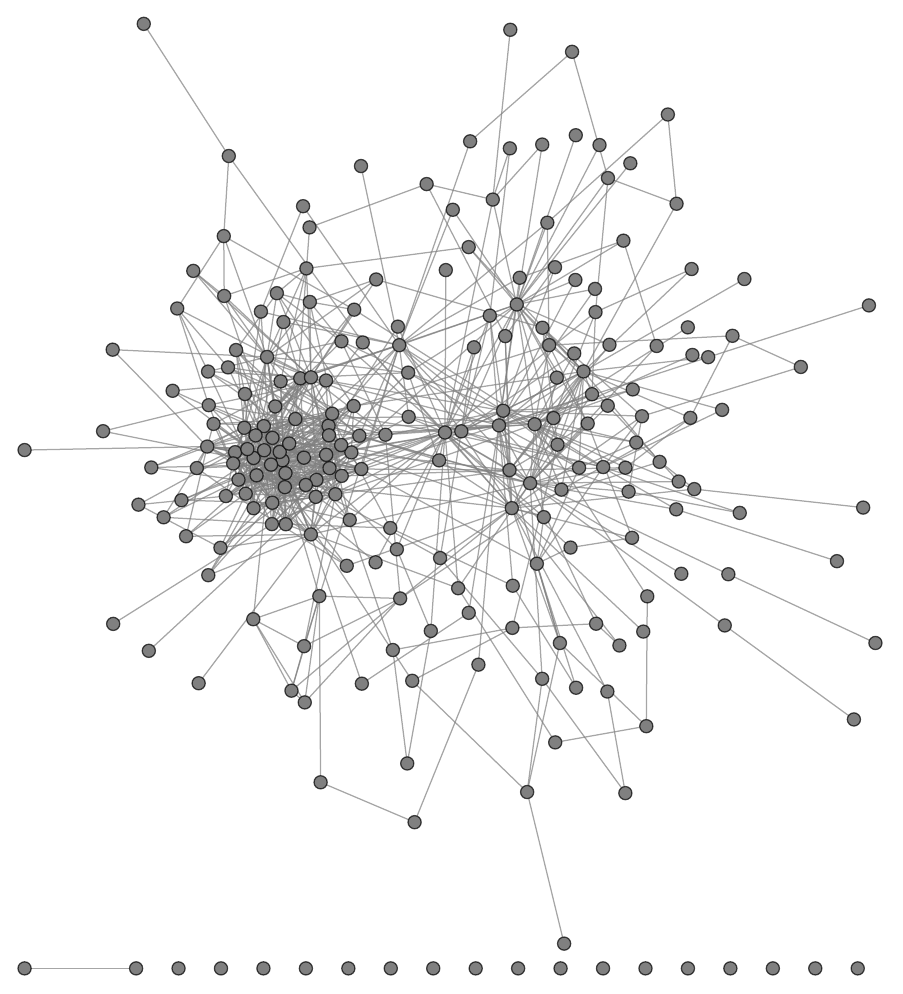}
          };
          \node at (12,0) {
            \includegraphics[width=3.5cm,angle=90]{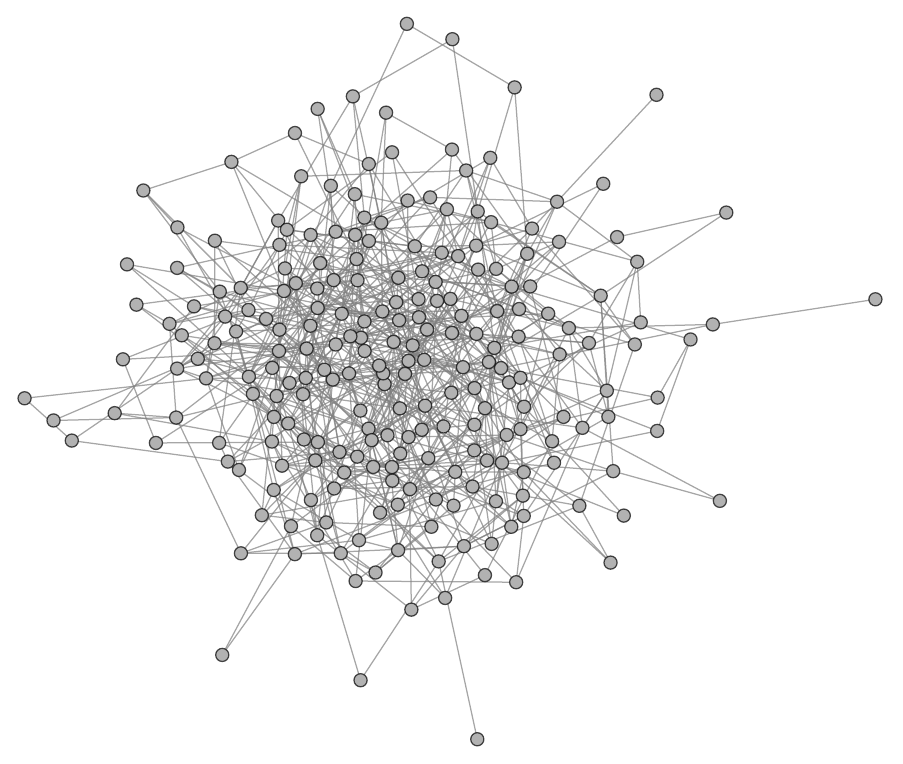}
          };
          \node at (-.4,-2.2) {\footnotesize (i)};
          \node at (2.7,-2.2) {\footnotesize (ii)};
          \node at (5.4,-2.2) {\footnotesize (iii)};
          \node at (8.6,-2.2) {\footnotesize (iv)};
          \node at (12,-2.2) {\footnotesize (v)};
	\end{tikzpicture}
        }}
	\caption{Reconstructions of the PPI network (i), sampled from the posterior mean of 
          (ii) the $\RWU$ model, 
          (iii) the $\RWSB$ model, (iv) the IRM, and 
          (v) the Erd\H{o}s-R\'{e}nyi model.}
	\label{fig:resamples}
        \vspace{-.6cm}
\end{figure}

\subsection{Latent order and vertex centrality} \label{sec:vertex:centrality}

\kword{Centrality} refers to the importance of a vertex 
for interactions in a network. A simple measure of centrality is vertex degree;
others include Eigenvalue, Katz and Information Centrality \citep[e.g.][]{Newman:2009}.
Under a $\RW$ model, the number of random walks passing through a vertex 
is an obvious measure of its importance to the formation process.
A simpler
proxy is the vertex insertion order $S_{1:T}$.
If only the final graph $G_T$ is observed, the order is given by the 
posterior of the bridge $\Gall$.
The median posterior order under the $\RWU$ model is listed below for the NIPS dataset,
for those authors with earliest median appearance, compared to other centrality
measures. Listed are ranks of vertices according to each measure, with numerical values in parentheses.
\\

\makebox[.9\textwidth][c]{
  \resizebox{.8\textwidth}{!}{
    \begin{tabular}{c c l c c c c c}% c c}
      \multicolumn{4}{c}{\emph{Sampling Order}} & \emph{Degree} & 
      \emph{Btwn. Cent.} & \emph{Info. Cent.} & \emph{Katz Cent.}\\
      \midrule
      \qquad & {\begin{tikzpicture}
  \fill[red, draw=gray] (2,2) circle (0.15cm);
\end{tikzpicture}} & (r) & 1 (6) & 14 (4) & 2 (2376) & 1 (0.4478) & 6 (0.2380) \\
      \qquad & {} & (l) & 1 (6) & 4 (8) & 3 (2340) & 2 (0.4443) & 2 (0.3740) \\
      \qquad & {\begin{tikzpicture}
  \fill[blue, draw=gray] (2,2) circle (0.13cm);
\end{tikzpicture}} & (r) & 2 (7) & 1 (12) & 1 (2797) & 3 (0.4374) & 3 (0.2571) \\
      \qquad & {} & (l)  & 2 (7) & 8 (6) & 5 (2116) & 5 (0.4178) & 4 (0.2539) \\
      \qquad & {\begin{tikzpicture}
  \fill[green, draw=gray] (2,2) circle (0.11cm);
\end{tikzpicture}} & & 3 (8) & 2 (10) & 11 (728) & 4 (0.4259) & 1 (0.3789) \\
      \qquad & {\begin{tikzpicture}
  \fill[orange, draw=gray] (2,2) circle (0.085cm);
\end{tikzpicture}} & & 4 (10) & 2 (10) & 4 (2196) & 11 (0.3877) & 5 (0.2458) \\
      \qquad & {\begin{tikzpicture}
  \fill[cyan, draw=gray] (2,2) circle (0.075cm);
\end{tikzpicture}} & & 5 (12) & 14 (4) & 6 (1870) & 17 (0.3645) & 35 (0.0667) \\
      \qquad & {\begin{tikzpicture}
  \fill[pink, draw=gray] (2,2) circle (0.06cm);
\end{tikzpicture}} & & 6 (13) & 14 (4) & 16 (173) & 6 (0.4040) & 19 (0.1291) \\
    \end{tabular}
}}

\vspace{.1cm}

Although one would expect vertices appearing earlier to have larger degree,
$S$ seems to correlate more closely with Betweenness Centrality 
and Information Centrality than with vertex degree. The vertices above correspond
to the following vertices in the graph:\\
\makebox[\textwidth][c]{
\begin{tikzpicture}
    \node[scale=.84] at (0,0) {\includegraphics[width=.8\textwidth]{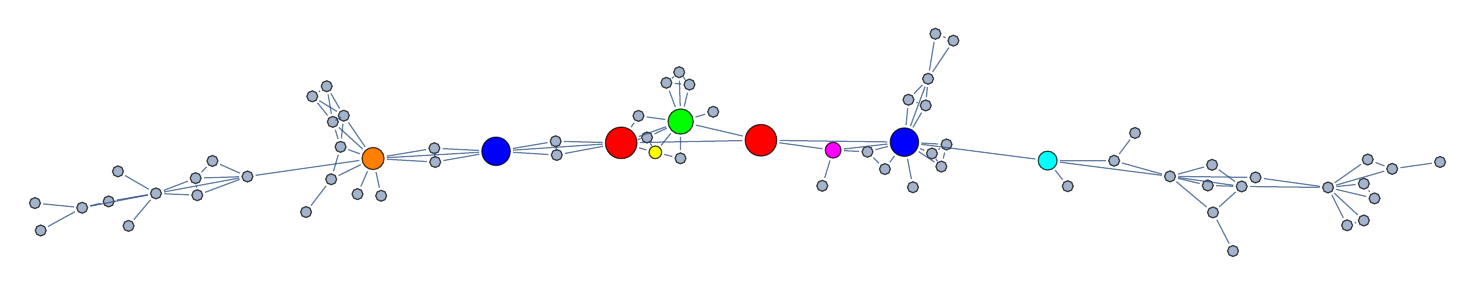}};
\end{tikzpicture}
}

\section{Discussion and open questions} \label{sec:conclusion}

The models introduced here account for dependence
of new links on existing links, but are nonetheless tractable. 
In light of our results, there are two reasons for tractability:
\begin{itemize}
\item The effects of the parameters $\alpha$ and $\lambda$
  are sufficiently distinct that inference is possible---a single
  graph generated by the model carries sufficient signal for parameter
  values to be recovered with high accuracy, as evident in \cref{fig:param:sweep}.
\item There exists a latent variable---the edge insertion 
  order $\Pi$---conditioning on which greatly simplifies the distribution of ${(\alpha,\lambda) \mid G_T}$.
\end{itemize}
Conditioning on history information is instrumental both for theoretical results and for inference.
As noted in the introduction, a qualitatively similar effect---the joint distribution of the
network graph simplifies conditionally on a suitable latent variable---is observed 
in other popular random graph models.

The inference algorithms in \cref{sec:inference} work well for moderately sized graphs (of up to a few hundred
vertices). The dominant source of computational complexity is sampling the edge order---similar 
to inference algorithms for graphon models (which have comparable input size limitations), although the 
latent order plays a very different role here than in an exchangeable model.
Although improvements in efficiency or approximations are certainly possible,
we do not expect the methods to scale to very large graphs. Such graphs are not the subject of 
this work. It is crucial that, under the random walk model, graphs of feasible size do provide
sufficient data for reliable inference. For more complicated models, that might not be the case.

To conclude, we note there are open questions beyond the scope of this paper.
Some of these are obvious, such as 
applications to dynamic network problems; generalizations to
disconnected graphs (e.g.\ using a disconnected seed graph); or
which choices of the distribution $P$ of the random walk length
in \cref{thm:degree:distribution} satisfy condition \eqref{condition:self:loops}.
We also mention two open questions that seem intriguing, but non-trivial:

\emph{1)} 
The generative process (\cref{scheme:multigraph}) starts with an initial ``seed graph'' $G_1$. In our experiments,
$G_1$ simply consists of a single edge, but one might choose any finite graph instead. 
For example, if a graph $g$ of a given, finite size is observed, and interpreted
as the outcome of a random walk model after $t$ steps, one could perform predictive
analysis by considering the random graph ${G_T|G_1=g}$, now with ${t+T-1}$ edges. 
For preferential attachment models, it has been shown that the choice of such
a seed graph $G_1$ can have significant influence on the resulting graph $G_T$ that remains detectable even 
in the limit ${T\rightarrow\infty}$ \citep{Bubeck:Mossel:Racz:2015}. 
The structure of our model seems to suggest a similar property:
The results \cref{sec:asymptotic:degrees} show, roughly speaking,
that properties similar to those of preferential attachment models hold
for the limiting degree sequence if one conditions on the vertex arrival time sequence $S$.
It is not clear, however, whether and how the techniques used by
\citet{Bubeck:Mossel:Racz:2015} are applicable to the random walk model.

\emph{2)} Another open question is whether the random walk itself results in power law behavior:
The results in \cref{sec:properties} stem from size-biased sampling, not the random
walk. The $\RWU$ model on multigraphs, however, does not use such reinforcement, but empirically
exhibits heavy-tailed degrees nonetheless. Intuitively, the probability to reach a vertex
by random walk increases with degree: 
For any $\RW(\alpha,P)$ model, the probability
\eqref{eq:mixed:rw:probability} can be written as
\begin{equation*}
  \P\braces{V_{end}=v|V_0=u,\, G}
  = \Bigl[ {\textstyle\frac{d_v}{\vol(G)}} + {\textstyle\sqrt{\frac{d_v}{d_u}}} \sum_{i=2}^n H_P(1-\LeigVal_i) \LeigVec_i(u) \LeigVec'_i(v) \Bigr] \;,
\end{equation*}
where $(\LeigVec_i)$ are the eigenvectors, and $(\LeigVal_i)$ the eigenvalues, of the Laplacian $\L$. Thus, the probability to terminate at $v$ is a mixture of a preferential attachment term (proportional to $d_v$),
and a term depending on $P$ and on the structure of $G$ at various scales. That
seems to suggest the random walk induces a power law, but at present, we have no proof.

\clearpage

%%%%%%%% appendix %%%%%%%

\appendix

\section{Proofs} \label{sec:proof_appx}

\subsection{Proof of \texorpdfstring{\cref{prop:mixed_random_walks}}{Proposition 3.1}} \label{subsec:proof:mixed_random_walks}

If $K$ has law $\Poisson(\lambda)$,~\eqref{eq:mixed:rw:probability} becomes
\begin{align} \label{eq:poisson:rwprob}
  \RWprob^{\lambda}
   & = \sum_{k=0}^{\infty} \frac{e^{-\lambda} \lambda^k}{k!} (D^{-1}A)^{k+1} = D^{-1/2} \left[ \left( \iden_n - \L \right) e^{-\lambda} \sum_{k=0}^{\infty} \frac{ \lambda^k \left(\iden_n - \L \right)^{k} }{k!}  \right] D^{1/2} \nonumber \\
   & = D^{-1/2} \left[ \left( \iden_n - \L \right) e^{-\lambda \L} \right] D^{1/2} 
    = D^{-1/2} \left[ \left( \iden_n - \L \right) \mathbf{K}^{\lambda} \right] D^{1/2}\, .\nonumber
\end{align}
Since the spectral norm of $\iden - \L$ is 1, series is absolutely convergent. For ${K\sim\NB(r,p)}$,
\begin{equation} \label{eq:nb:kernel}
   \sum_{k=0}^{\infty} \frac{ \Gamma(k + r) }{ \Gamma(k + 1) \Gamma(r) } p^k (1-p)^r (D^{-1}A)^{k+1}
   = D^{-1/2} \left[ \left( \iden_n - \L \right) \left( \iden_n + \frac{p}{1-p} \L \right)^{-r} \right] D^{1/2}
\end{equation}
similarly yields ${\RWprob^{r,p}=D^{-1/2} \left[ \left( \iden_n - \L \right) \mathbf{K}^{r,p} \right] D^{1/2}}$.

\subsection{Proof of \texorpdfstring{\cref{thm:degree:distribution}}{Theorem 3.3}} \label{subsec:proof:degree:distribution}

We begin with a lemma used in the proofs of \cref{thm:degree:distribution} and~\cref{thm:degree:sequence}.
Let $\stub[1][j][t]$ be an indicator variable that is equal to 1 if the first end of the $t$-th edge is attached to vertex $v_j$, and 0 otherwise; likewise define $\stub[2][j][t]$ for the second end of the $t$-th edge.
\begin{lemma} \label{lemma:invariance}
  Suppose a sequence of graphs is generated with law $\RWSB(\alpha,P)$, such that $P$ is a probability distribution on $\N$. Then the size-biased distribution $\SB_t$ is left-invariant under the mixed random walk probability $\RWprob_t$ induced by $P$, that is
  \begin{align} \label{eq:sb:invariant:again}
    \SB_t'\RWprob_t= \SB_t' 
    \qquad\text{ where }
    [\RWprob_t]_{uv}:=\P\lbrace V_{end}=v\on V_0=u,G_t\rbrace
  \end{align}
  holds for every ${t\in\mathbb{N_+}}$. Furthermore, for each $v_j \in \vertexset(G_t)$, the following holds:
  \begin{align}
    \E[ \stub[1][j][t+1] \on G_t ] = \frac{1}{1-\alpha}\E[ \stub[2][j][t+1] \on G_t ] = \frac{ \deg_t(v_j) }{ 2t } \;.
  \end{align}
\end{lemma}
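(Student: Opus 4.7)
The plan is to establish the two claims in sequence, with the invariance of $\SB_t$ driving the degree identities for the endpoints.

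For invariance, I would first show that the row vector $\SB_t$ with entries $\SB_t(v) = \deg_t(v)/(2t)$ is the stationary distribution of a single simple random walk step. Writing this out,
\begin{equation*}
[\SB_t' (D_t^{-1} A_t)]_v = \sum_u \frac{\deg_t(u)}{2t} \cdot \frac{[A_t]_{uv}}{\deg_t(u)} = \frac{1}{2t}\sum_u [A_t]_{uv} = \frac{\deg_t(v)}{2t} = \SB_t(v) \; ,
\end{equation*}
using the symmetry of $A_t$. By induction $\SB_t' (D_t^{-1} A_t)^k = \SB_t'$ for every $k \in \N$. Since $P$ is a probability distribution on $\N$ and by \eqref{eq:mixed:rw:probability} the mixed walk matrix is $\RWprob_t = \E_P[(D_t^{-1}A_t)^K]$, linearity gives $\SB_t' \RWprob_t = \E_P[\SB_t' (D_t^{-1} A_t)^K] = \SB_t'$, which is \eqref{eq:sb:invariant:again}.

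For the endpoint identities, I would decompose the sampling scheme of \cref{scheme:multigraph} directly. Under $\RWSB$, the first endpoint $V$ of the $(t+1)$-th edge is drawn from $\SB_t$ regardless of whether step (2) or (3) of the scheme is selected, so
\begin{equation*}
\E[\stub[1][j][t+1] \mid G_t] = \SB_t(v_j) = \frac{\deg_t(v_j)}{2t} \;.
\end{equation*}
The second endpoint $V'$ equals an existing vertex $v_j$ only if step (3) occurs (probability $1-\alpha$), in which case $V'$ is the terminal vertex of the mixed random walk started at $V \sim \SB_t$. Conditionally on $G_t$ this terminal vertex has distribution $\SB_t' \RWprob_t$, which by the invariance just proved equals $\SB_t'$. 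Therefore
\begin{equation*}
\E[\stub[2][j][t+1] \mid G_t] = (1-\alpha) [\SB_t' \RWprob_t]_{v_j} = (1-\alpha)\frac{\deg_t(v_j)}{2t} \;,
\end{equation*}
which yields the stated ratio with $\E[\stub[1][j][t+1] \mid G_t]$.

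The only subtlety worth flagging is ensuring that the invariance argument goes through when $K$ is random and may take the value $0$: the identity $(D_t^{-1}A_t)^0 = \iden_n$ trivially leaves $\SB_t'$ fixed, and the series/expectation over $K$ converges absolutely since $\|D_t^{-1}A_t\|_\infty = 1$, so the interchange of $\SB_t'$ with $\E_P$ is legitimate. Everything else is a direct unpacking of the definitions in \cref{scheme:multigraph}, so I would not expect any significant obstacle.
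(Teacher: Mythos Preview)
Your proof is correct and follows essentially the same approach as the paper: verify that $\SB_t'$ is stationary for the single-step transition $D_t^{-1}A_t$, pass to the mixture over $K$ via \eqref{eq:mixed:rw:probability}, and then read off the two endpoint expectations directly from the sampling scheme. The only differences are cosmetic---you spell out the one-step stationarity computation and the $K=0$/convergence remark explicitly, whereas the paper takes these as understood.
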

\begin{proof}
  Substituting \eqref{eq:mixed:rw:probability} into ${\SB_t'\RWprob_t}$ yields
  \begin{align}
    \SB_t'\RWprob_t 
      & = \sum_{k} \SB_t' (D^{-1}A)^{k} P\lbrace K=k\rbrace
         = \sum_{k} \SB_t' P\lbrace K=k\rbrace = \SB_t'\; .
  \end{align}
  The $\RWSB$ model samples the first end of each edge from the size-biased distribution:
  \begin{align}
    \E[ \stub[1][j][t+1] \on G_t ] = \P[ V_{t+1} = v_j \on G_t ] = \frac{ \deg_t(v_j) }{ 2t }\; .
  \end{align}
  For the second end, denote by $\indicator_{j,t}$ the indicator vector for vertex $v_j$. Then
  \begin{equation*}
    \E[ \stub[2][j][t+1] \on  G_t ] 
      = \sum_{u} \P[V_{t+1}=u \on G_t] \bigl[ \RWprob_t \bigr]_{uv}
      = (1-\alpha)\SB'_t \RWprob_t \indicator_{j,t}
      = (1-\alpha)\frac{ \deg_t(v_j) }{ 2t }\; .
  \end{equation*}

\end{proof}

In the setup of \cref{thm:degree:distribution}, asymptotic expected degree counts are as follows:
\begin{lemma} \label{lemma:deg:dist:conv:expectation}
  Let a sequence of multigraphs ${(G_1,G_2,\dots)}$ have law $\RWSB(\alpha,P)$, for a probability distribution 
  $P$ on $\N$ satisfying \cref{condition:self:loops}. Then 
  \begin{align} \label{eq:deg:dist:conv:expecation}
    \frac{ \E[m_{d,t}] }{ \alpha t } \almostSurely p_d 
    \qquad\text{ for each } d\in\mathbb{N} \text{ as } t\to\infty \;,
  \end{align}
  where $p_d$ is given in~\eqref{eq:deg:distn}.
\end{lemma}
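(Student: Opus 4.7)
The plan is to derive a one-step recursion for $\E[m_{d,t+1}\mid G_t]$, take expectations, and then invoke a standard linear-recurrence lemma (of the sort used for preferential attachment in \citet{Durrett:2006}) to read off the limit $\alpha p_d$. \cref{lemma:invariance} is what makes this work: because $\SB_t$ is left-invariant under $\RWprob_t$, both endpoints of every newly inserted edge are effectively size-biased, so the degree dynamics reduce to a Yule--Simon-type recursion producing the stated $p_d$.

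First, fix $d\in\N_+$ and a vertex $v_j\in\vertexset_{d,t}$. By the two identities in \cref{lemma:invariance}, the conditional probability that the edge inserted at step $t+1$ is incident to $v_j$ is $\tfrac{d}{2t} + (1-\alpha)\tfrac{d}{2t} = \tfrac{(2-\alpha)d}{2t}$, modulo corrections from the event that both endpoints coincide with $v_j$ (a self-loop). Summing over $v_j\in\vertexset_{d,t}$ and accounting for (i) the birth of a new degree-$1$ vertex with probability $\alpha$, (ii) promotion of a degree-$(d-1)$ vertex to degree $d$, and (iii) promotion of a degree-$d$ vertex to degree $d+1$, I obtain
\begin{equation*}
\E[m_{d,t+1}\mid G_t] = m_{d,t} + \alpha\,\indicator\{d=1\} + \frac{2-\alpha}{2t}\bigl((d-1)m_{d-1,t} - d\,m_{d,t}\bigr) + \epsilon_{d,t},
\end{equation*}
where $\epsilon_{d,t}$ collects self-loop corrections (a self-loop at a vertex of degree $d'$ promotes it to degree $d'+2$ instead of $d'+1$, affecting the $d=d'+2$ class rather than the $d=d'+1$ class).

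Next, I would bound $\epsilon_{d,t}$ via condition \eqref{condition:self:loops}. A self-loop at $v\in\vertexset_{d',t}$ occurs with conditional probability $(1-\alpha)\tfrac{\deg_t(v)}{2t}P(V_{\text{end}}=v\mid V_0=v,G_t)$, giving the pathwise bound
\begin{equation*}
|\epsilon_{d,t}| \;\leq\; C\!\!\sum_{d'\in\{d-2,d-1,d\}}\!\frac{d'}{2t}\sum_{v\in\vertexset_{d',t}}\!P(V_{\text{end}}=v\mid V_0=v,G_t) \;=\; \frac{1}{t}\,o\bigl(m_{d-2,t}+m_{d-1,t}+m_{d,t}\bigr).
\end{equation*}
Combined with the trivial a priori estimate $m_{d',t}\le t+1$, this gives $\E[\epsilon_{d,t}]=o(1)$. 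Taking unconditional expectations in the recursion yields a linear inhomogeneous recurrence of exactly the form handled by \citet[Lemma~4.1.2]{Durrett:2006}, which implies $\E[m_{d,t}]/t\to c_d$, where $c_d$ solves
\begin{equation*}
c_d\Bigl(1 + \tfrac{(2-\alpha)d}{2}\Bigr) = \tfrac{(2-\alpha)(d-1)}{2}\,c_{d-1} + \alpha\,\indicator\{d=1\}.
\end{equation*}
Solving inductively gives $c_1 = 2\alpha/(4-\alpha) = \alpha\rho/(1+\rho)$ and $c_d/c_{d-1} = (d-1)/(d+\rho)$ for $d\ge 2$, with $\rho = 2/(2-\alpha) = 1+\alpha/(2-\alpha)$. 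Telescoping then produces $c_d = \alpha\rho\,\Gamma(d)\Gamma(1+\rho)/\Gamma(d+1+\rho) = \alpha p_d$.

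The main obstacle is controlling $\epsilon_{d,t}$: the pointwise self-loop hypothesis \eqref{condition:self:loops} bounds only the \emph{average} self-loop probability within each degree class, so to convert it into a genuinely negligible contribution in the averaged recursion one must combine it with the a priori bound $m_{d',t}\le t+1$ and take care that the three classes $d-2$, $d-1$, $d$ are each controlled. Once this is done, the rest of the argument is a mechanical application of the standard linear-recurrence lemma and elementary gamma-function manipulations; the resulting equality $c_d = \alpha p_d$ is then exactly the claim of the lemma.
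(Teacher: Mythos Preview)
Your proposal is correct and follows essentially the same route as the paper's proof: derive the one-step recursion via \cref{lemma:invariance}, use condition~\eqref{condition:self:loops} to absorb the self-loop correction into an $o(1)$ term, apply the linear-recurrence lemma from \citet{Durrett:2006}, and solve inductively to obtain $c_d=\alpha p_d$. Your treatment of $\epsilon_{d,t}$ is in fact more explicit than the paper's, which simply asserts the recurrence with $o(1)$ remainder.
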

\begin{proof}
  \cref{condition:self:loops} yields the recurrence
  \begin{equation*} \label{eq:deg:recurrence}
    \E (m_{d,t+1}) = \alpha\cdot\delta_1(d) + \E(m_{d,t}) \Bigl( 1 - \frac{(2 - \alpha) d}{2t} \Bigr) + \E(m_{d-1,t})\frac{(2-\alpha)(d-1)}{2t} + o(1) \;,
  \end{equation*}
  with $m_{0,t}=0$ for all $t$. This can be written more generally as
  \begin{align} \label{eq:general:recurrence}
    M(d,t+1) = (1 - b(t)/t)M(d,t) + g(t) \;.
  \end{align}
  If ${b(t)\to b}$ and ${g(t)\to g}$, then ${M(d,t)/t \to g/(b+1)}$, by
  \citep[][Lemmas 4.1.1, 4.1.2]{Durrett:2006}.
  For $d=1$, we have ${b(t) = b = (2-\alpha)/2}$ and
  ${g(t) = g = \alpha}$, so ${\E(m_{1,t})/t \to \frac{2\alpha}{4-\alpha}=\alpha p_1}$. 
  Proceeding by induction, ${b = (2-\alpha)d/2}$ and
  ${g = \alpha p_{d-1} (2-\alpha)(d-1)/2}$ yield
  \begin{equation*}
    \frac{\E[m_{d,t}]}{\alpha t} \quad\to\quad
      p_{d-1}{\textstyle \frac{(2-\alpha)(d-1)}{2 + (2-\alpha)d}}
       = \frac{2}{4-\alpha} \prod_{j=1}^{d-1} \frac{j}{\frac{2}{2-\alpha} + j + 1}
      = \frac{2}{2-\alpha}
    \frac{\Gamma(d) \Gamma(2 + \frac{\alpha}{2-\alpha})}{\Gamma(d + 2 + \frac{\alpha}{2-\alpha})} \;,
  \end{equation*}
  for $d>1$, which is just $p_d$ as defined in \eqref{eq:deg:distn}.
\end{proof}

The following result, the proof of which can be found in \citet[][Sec. 3.6]{Chung:Lu:2006} %[Chung, F. and L.~Lu,
% {\em Complex Graphs and Networks}, AMS 2006, Section 3.6], 
shows that the random variable $m_{d,t}$ concentrates about its mean:
\begin{lemma} \label{lem:concentration}
  Let a sequence of multigraphs ${(G_1,G_2,\dots)}$ have law $\RWSB(\alpha,P)$, for a probability distribution $P$ on $\N$ satisfying \cref{condition:self:loops}. Then 
  \begin{align}
    \P\Big[ \; \Big\lvert \frac{m_{d,t}}{\alpha t} - p_d \; \Big\rvert \leq 2\sqrt{\frac{ d^3 \log t }{\alpha^2 t}} \; \Big] \geq 1 - 2(t+1)^{d-1} t^{-d} = 1 - o(1) \;.
  \end{align}
\end{lemma}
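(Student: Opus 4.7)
The plan is to adapt the martingale concentration technique developed by Chung and Lu for preferential attachment models (\citep{Chung:Lu:2006}, Section 3.6), with \cref{lemma:invariance} serving as the bridge to the $\RWSB$ setting: since the size-biased distribution is left-invariant under $\mathcal{K}_P$, the first-order dynamics of vertex degrees in the $\RWSB$ model coincide with those of pure size-biased attachment, and \cref{condition:self:loops} absorbs the self-loop discrepancy that would otherwise perturb the recurrence derived in \cref{lemma:deg:dist:conv:expectation}.

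I would fix $d$ and $t$ and introduce the Doob martingale $X_s := \E[m_{d,t} \mid \mathcal{F}_s]$ for $s = 0, \ldots, t$, with $\mathcal{F}_s = \sigma(G_1, \ldots, G_s)$, so that $X_0 = \E[m_{d,t}]$ and $X_t = m_{d,t}$. The central task is to bound the differences $D_s := X_s - X_{s-1}$, which encode the full downstream impact of the edge inserted at step $s$ on the eventual degree-$d$ count. A naive coupling gives only $|D_s| = O(1)$ from the fact that a single insertion changes at most two degrees; the stronger bound $|D_s| = O(d)$ required here is obtained by an induction on $d$ that conditions on the concentration already established for $m_{1,t}, \ldots, m_{d-1,t}$. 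The recurrence for $\E[m_{d,t+1}]$ used in the proof of \cref{lemma:deg:dist:conv:expectation} then controls, level by level, how a single-step perturbation cascades through the lower-degree counts into $m_{d,t}$. With $|D_s| \leq c d$ in hand, Azuma--Hoeffding yields $\P(|m_{d,t} - \E[m_{d,t}]| > \lambda) \leq 2 \exp(-\lambda^2/(2 c^2 d^2 t))$; setting $\lambda = c' \sqrt{d^3 t \log t}$ produces the exponent $-d\log t$ and a tail of $2t^{-d}$, while the extra factor $(t+1)^{d-1}$ in the announced bound arises from the union bound used to activate the inductive hypotheses at levels $1, \ldots, d-1$.

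The conclusion would then follow from the triangle inequality
$$\Bigl|\frac{m_{d,t}}{\alpha t} - p_d\Bigr| \leq \frac{|m_{d,t} - \E[m_{d,t}]|}{\alpha t} + \Bigl|\frac{\E[m_{d,t}]}{\alpha t} - p_d\Bigr|,$$
since for each fixed $d$, \cref{lemma:deg:dist:conv:expectation} makes the bias term $o(\sqrt{d^3 \log t/(\alpha^2 t)})$, leaving the concentration term to set the announced rate. The main obstacle is securing the $O(d)$ difference bound in the presence of the random walk: perturbing a single early edge perturbs $\mathbf{L}$ and hence $\mathcal{K}_P$ globally, so the martingale differences potentially inherit contributions from every vertex. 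The resolution is again \cref{lemma:invariance}: size-biased vertex selection is $\mathcal{K}_P$-invariant, so globally propagated first-moment perturbations vanish in expectation, reducing the analysis of $D_s$ to the classical preferential-attachment accounting in Chung--Lu.
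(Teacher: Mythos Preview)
The paper does not give its own proof of this lemma; it simply defers to \citet[][Sec.~3.6]{Chung:Lu:2006}. Your proposal is essentially a faithful reconstruction of that reference's argument---Doob martingale for $m_{d,t}$, $O(d)$ difference bound obtained by induction on $d$, Azuma--Hoeffding, union bound over the lower-degree hypotheses---and you correctly identify \cref{lemma:invariance} as the device that makes the Chung--Lu accounting transfer from pure preferential attachment to $\RWSB$. So your approach is the paper's approach, just made explicit where the paper leaves it implicit in the citation.
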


\begin{proof}[Proof of \cref{thm:degree:distribution}]
  \cref{lem:concentration} shows that $\frac{m_{d,t}}{\alpha t}$ concentrates around its mean, and with \cref{lemma:deg:dist:conv:expectation} showing that the mean is $p_d$, the proof of the theorem is complete.
\end{proof}

\subsection{Proof of \texorpdfstring{\cref{thm:degree:sequence}}{Theorem 3.4}} \label{subsec:proof:degree:sequence}

By definition of $\RW$ models (see~\cref{scheme:multigraph}), at each step $t$ a new vertex appears with probability $\alpha$, which is represented as a collection of $\Bernoulli(\alpha)$ random variables $(B_t)_{t\geq 2}$. Alternatively, there is the collection of steps ${S_1,S_2,\dots,S_j,\dots}$ at which new vertices appear, such that $S_j$ is the step at which the $j$-th vertex appears. By the fundamental relationship between Bernoulli and Geometric random variables, the sequence ${S_1,S_2,\dots,S_j,\dots}$ can be sampled independently of the graph sequence as $S_1=S_2=1$ and
\begin{align}
  S_j=S_{j-1} + \Delta_j \mbox{, where } \Delta_j\simiid\Geom(\alpha) \mbox{, for } j>2 .
\end{align}

In what follows, we condition on the sequence $(S_j)_{j\geq 1}$ unless explicitly stated otherwise. We begin by calculating the expected degree of the $j$-th vertex.

\begin{lemma} \label{lemma:degrees:fixed:vertices}
  Let a sequence of multigraphs ${(G_1,G_2,\dots)}$ have law $\RWSB(\alpha,P)$. Let $d_j(t):=\deg_t(v_j)$ be the degree of $v_j$ in graph $G_t$, where $v_j$ is the $j$-th vertex to appear in the graph sequence, and let $\rho = 1 + \frac{\alpha}{2-\alpha}$. 
  Then conditional on $S_j$, $d_j(t) t^{-1/\rho}$ converges almost surely to a random variable $\xi_j$ as $t \to \infty$, and
  \begin{align} \label{eq:expected:degree}
    \E[d_j(t) \on S_j] = \frac{ \Gamma(S_j) \Gamma(t + \frac{1}{\rho}) }{ \Gamma(S_j + \frac{1}{\rho}) \Gamma(t) }
  \end{align}
\end{lemma}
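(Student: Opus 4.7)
The plan is to derive both statements from a single clean recursion for the degree increments, using the invariance result of Lemma \ref{lemma:invariance}. First I would write the one-step increment as $d_j(t+1)-d_j(t)=\stub[1][j][t+1]+\stub[2][j][t+1]$, then apply Lemma \ref{lemma:invariance} to obtain
\begin{equation*}
\E[d_j(t+1)\mid G_t] \;=\; d_j(t)\Bigl(1+\frac{1}{\rho\,t}\Bigr),
\end{equation*}
where $\rho=2/(2-\alpha)$, since $\frac{1}{2t}+\frac{1-\alpha}{2t}=\frac{2-\alpha}{2t}=\frac{1}{\rho t}$. Note this uses no condition on the random walk beyond the invariance of $\SB_t$ under $\RWprob_t$, so the formula is valid for any distribution $P$ on $\N$.

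Iterating the recursion from the initial condition $d_j(S_j)=1$ (vertex $v_j$ enters the graph with a single attached edge) and taking expectations conditional on $S_j$ gives the telescoping product
\begin{equation*}
\E[d_j(t)\mid S_j]\;=\;\prod_{s=S_j}^{t-1}\Bigl(1+\frac{1}{\rho s}\Bigr)\;=\;\prod_{s=S_j}^{t-1}\frac{s+1/\rho}{s}\;=\;\frac{\Gamma(S_j)\,\Gamma(t+1/\rho)}{\Gamma(S_j+1/\rho)\,\Gamma(t)},
\end{equation*}
which is \eqref{eq:expected:degree}. For the a.s.\ convergence, I would introduce the normalized process
\begin{equation*}
M_t \;:=\; d_j(t)\,\frac{\Gamma(S_j+1/\rho)\,\Gamma(t)}{\Gamma(S_j)\,\Gamma(t+1/\rho)}\,.
\end{equation*}
A direct computation, using $\Gamma(t+1+1/\rho)=(t+1/\rho)\Gamma(t+1/\rho)$ and the one-step conditional expectation above, shows that $\E[M_{t+1}\mid\mathcal{F}_t]=M_t$, where $\mathcal{F}_t=\sigma(G_1,\ldots,G_t,S_j)$. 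Thus $(M_t)$ is a non-negative martingale, and by the martingale convergence theorem it converges almost surely to an integrable random variable $M_\infty$.

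Finally, Stirling's formula gives $\Gamma(t)/\Gamma(t+1/\rho)=t^{-1/\rho}(1+O(t^{-1}))$ as $t\to\infty$, so
\begin{equation*}
t^{-1/\rho}\,d_j(t)\;\longrightarrow\;\xi_j\;:=\;\frac{\Gamma(S_j)}{\Gamma(S_j+1/\rho)}\,M_\infty \qquad\text{a.s.}
\end{equation*}
The only nontrivial step is verifying the martingale identity by hand; everything else is telescoping, Stirling, and an appeal to Doob. The joint convergence in \eqref{eq:deg:seq} and the distributional identifications in \eqref{eq:marginal:construction}--\eqref{eq:marginal:moments} will require additional work in subsequent lemmas, but the single-coordinate existence and the mean formula follow from this martingale argument.
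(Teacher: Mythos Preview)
Your proposal is correct and follows essentially the same approach as the paper: both use Lemma~\ref{lemma:invariance} to obtain the one-step recursion $\E[d_j(t+1)\mid\mathcal{F}_t]=d_j(t)(1+\tfrac{1}{\rho t})$, define the identical normalizing martingale $M_t=d_j(t)\Gamma(S_j+1/\rho)\Gamma(t)/\bigl(\Gamma(S_j)\Gamma(t+1/\rho)\bigr)$, invoke the martingale convergence theorem, and finish with Stirling's formula. Your presentation is slightly more explicit in writing out the telescoping product for \eqref{eq:expected:degree}, but the argument is the same.
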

\begin{proof}
  Let $\sigalg[t]$ denote the $\sigma$-algebra generated after $t$ steps. Then for $t\geq S_j$,
  \begin{equation} \label{eq:cond:prob}
    \E[d_j(t+1) \on  \sigalg[t]]
      = d_j(t) + \E[\stub[1][j][t+1] \on  \sigalg[t]] + \E[\stub[2][j][t+1] \on  \sigalg[t]] 
      = d_j(t) (1 + {\textstyle\frac{1}{\rho t}}) \; ,
  \end{equation}
  where the final identity follows from~\cref{lemma:invariance}, and ${d_j(S_j)=1}$. The sequence
  \begin{equation} \label{eq:degree:martingale}
    M_j(t) 
      := d_j(t)\frac{ \Gamma(S_j + \frac{1}{\rho}) }{ \Gamma(S_j) }\cdot\frac{ \Gamma(t) }{ \Gamma(t + \frac{1}{\rho}) }
    \qquad\text{ for } t\geq S_j
  \end{equation}
  is a non-negative martingale with mean 1, by~\eqref{eq:cond:prob}. Therefore, $M_j(t)$ converges almost surely to a random variable $M_j$ as ${t\rightarrow\infty}$. Taking expectations on both sides and rearranging~\eqref{eq:degree:martingale} yields~\eqref{eq:expected:degree}. With Stirling's formula,
  \begin{align} \label{eq:degree:convergence}
    \frac{ d_j(t) }{ t^{1/\rho} }
      & \almostSurely M_j \frac{ \Gamma(S_j) }{ \Gamma(S_j + \frac{1}{\rho}) } %\nonumber \\
      := \xi_j\; .
  \end{align}
\end{proof}

We now consider the joint distribution of the limiting random variables $(\xi_j)_{j\geq 1}$. The approach, which is adapted from \cite{Mori:2005} (see also~\cite{Durrett:2006}), is to analyze a martingale that yields the moments of $(\xi_j)_{j\geq 1}$. First, define for $t\geq S_j$,
\begin{align} \label{eq:martingale:ratio}
  R_{j,k}(t) := \frac{ \Gamma(d_j(t) + k) }{ \Gamma(d_j(t)) \Gamma(k+1) } \; , 
\end{align}
At a high level, $R_{j,k}(t) \approx d_j(t)^k/k!$ for large $t$, so \eqref{eq:degree:convergence} shows that properly scaled $R_{j,k}(t)$ should converge to $\xi^k_j/k!$ for each $j$. We make this precise in what follows.

Let ${\jvec:=(j_i)}$ be an ordered collection of vertices such that ${1 \leq j_1 < j_2 < \dots < j_r}$, 
and let ${\kvec:=(k_i)}$ be a corresponding vector of moments. To define a suitable martingale, 
we abbreviate 
\begin{equation*}
  \mu_{\kvec}:=\sum_{i=1}^r k_i
  \quad\text{ and }\quad
  \Sigma_{\jvec,\kvec}(t):= \sum_{i,l=1}^r \left(\frac{1}{2}\right)^{\indicator_{i=l}} \frac{ k_i(k_{l} - \indicator_{i=l}) }{ d_{j_{l}}(t) + \indicator_{i=l} } [\RWprob_t]_{j_i j_{l}} \;,
\end{equation*}
and note $[\RWprob_t]_{j j'} = 0$ for all $t < \max\{ S_j, S_{j'} \}$.
Define
\begin{align} \label{eq:martingale:constants}
  c_{\jvec,\kvec}(t)
  =\Gamma(t) \Bigl[ \prod_{s=0}^{t-1}\bigl( s + {\textstyle\frac{\mu_{\kvec}}{\rho}} 
  + {\textstyle\frac{1-\alpha}{2}}\Sigma_{\jvec,\kvec}(s)  \bigr) \Bigr]^{-1} \;,
\end{align}
which is a random variable since $\RWprob_t$ is random, and is $\sigalg[t]$-measurable for each $t$. Since ${[\RWprob_t]_{j_i j_{l}} < 1}$, and since $d_j(t)\almostSurely\infty$ as $t\to\infty$ by~\cref{lemma:degrees:fixed:vertices}, 
${\Sigma_{\jvec,\kvec}(t)\almostSurely 0}$ as $t\to\infty$, which yields
\begin{equation} \label{eq:martingale:constants:asymptotics}
  c_{\jvec,\kvec}(t)
    = \frac{ \Gamma(t) }{ \Gamma(t + \frac{1}{\rho}\mu_{\kvec}) }(1 + o(1))
    = t^{-\mu_{\kvec}/\rho}(1 + o(1)) \quad \text{as } t\to\infty \;.
\end{equation}
Note that
\begin{align} \label{eq:mlimit}
  \lim_{t \to\infty} c_{\jvec,\kvec}(t) \prod_{i=1}^r R_{j_i,k_i} = \prod_{i=1}^r \frac{ \xi^{k_i}_{j_i} }{ \Gamma(k_i + 1) } \;,
\end{align}
if the limit exists.
Existence is based on the following result:
\begin{lemma} \label{prop:martingale}
  Let $\sigalg[t]$ denote the $\sigma$-algebra generated by a $\RWSB(\alpha,P)$ multigraph sequence up to step $t$. Let $r>0$ and ${1 \leq j_1 < j_2 < \dots < j_r}$ be integers, and real-valued ${k_1,k_2,\dots,k_r > -1}$. Then with $R_{j,k}(t)$ defined in~\eqref{eq:martingale:ratio} and ${c_{\jvec,\kvec}(t)}$ defined in~\eqref{eq:martingale:constants}, 
  \begin{align} \label{eq:martingale:full}
    Z_{\jvec,\kvec}(t) := c_{\jvec,\kvec}(t) \prod_{i=1}^r R_{j_i,k_i}(t)
  \end{align}
  is a nonnegative martingale for $t\geq\max\{S_{j_r},1\}$. If ${k_1,k_2,\dots,k_r > -\frac{1}{2}}$, then $Z_{\jvec,\kvec}(t)$ converges in $L_2$.
\end{lemma}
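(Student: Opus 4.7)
The plan is a Móri-style martingale computation adapted to the random-walk dynamics: I would first establish the martingale identity by an explicit one-step conditional expectation, and then upgrade the resulting almost sure convergence to $L_2$ via a comparison with a martingale of doubled exponents.

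The one-step analysis hinges on the deterministic identity
\begin{equation*}
  \frac{R_{j,k}(t+1)}{R_{j,k}(t)}
    = 1 + \frac{k}{d_j(t)}\bigl(\stub[1][j][t+1] + \stub[2][j][t+1]\bigr)
    + \frac{k(k-1)}{d_j(t)(d_j(t)+1)}\,\stub[1][j][t+1]\stub[2][j][t+1]\,,
\end{equation*}
which is obtained from the three possible values $\Delta d_j(t+1)\in\{0,1,2\}$ together with the Gamma ratio $(d+k)(d+k+1)/(d(d+1))$ in the self-loop case. Because at most one edge is added per step, $\sum_{i,a}\stub[a][j_i][t+1]\leq 2$, and the expansion of $\prod_i R_{j_i,k_i}(t+1)/R_{j_i,k_i}(t)$ truncates: all terms cubic or higher in the $\delta$'s vanish identically. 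Taking the conditional expectation given $\sigalg[t]$ and invoking \cref{lemma:invariance} produces the linear contribution $\mu_{\kvec}/(\rho t)$. The quadratic part uses $\E[\stub[1][j_i][t+1]\stub[2][j_l][t+1]\mid\sigalg[t]] = (1-\alpha)(d_{j_i}(t)/(2t))[\RWprob_t]_{j_i j_l}$, and, after splitting the diagonal $i=l$ (self-loop correction with coefficient $k_i(k_i-1)/(d_{j_i}+1)$) from the off-diagonal cross-terms, collapses to $\tfrac{1-\alpha}{2t}\,\Sigma_{\jvec,\kvec}(t)$. The definition of $c_{\jvec,\kvec}(t)$ was engineered precisely so that $c_{\jvec,\kvec}(t+1)/c_{\jvec,\kvec}(t) = t/(t+\mu_{\kvec}/\rho + \tfrac{1-\alpha}{2}\Sigma_{\jvec,\kvec}(t))$ cancels this ratio, yielding the martingale property $\E[Z_{\jvec,\kvec}(t+1)\mid\sigalg[t]] = Z_{\jvec,\kvec}(t)$. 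Non-negativity is immediate: $R_{j,k}(t)>0$ for $k>-1$ since $d_j(t)\geq 1$ once $t\geq S_j$, and each factor in the product defining $c_{\jvec,\kvec}(t)$ is positive for $s$ large enough, because $\Sigma_{\jvec,\kvec}(s)=o(1)$ almost surely by $d_{j_l}(s)\to\infty$ and $|[\RWprob_s]_{u,v}|\leq 1$. Almost sure convergence then follows from the nonnegative martingale convergence theorem.

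For $L_2$ convergence under $k_i>-1/2$, I would compare with the martingale of doubled exponents $2\kvec$. The Gamma-function ratio
\begin{equation*}
  \frac{R_{j,k}(t)^2}{R_{j,2k}(t)}
  = \frac{\Gamma(d_j(t)+k)^2\,\Gamma(2k+1)}{\Gamma(d_j(t))\,\Gamma(d_j(t)+2k)\,\Gamma(k+1)^2}
\end{equation*}
is uniformly bounded in $d_j(t)\geq 1$ whenever $2k>-1$: by Stirling it tends to $\Gamma(2k+1)/\Gamma(k+1)^2$ as $d_j(t)\to\infty$, and the finitely many small-degree values are bounded individually. Hence $\prod_i R_{j_i,k_i}(t)^2 \leq C_{\kvec}\prod_i R_{j_i,2k_i}(t)$ for a deterministic constant $C_{\kvec}<\infty$, and applying the martingale property of $Z_{\jvec,2\kvec}$ already established in the first step gives
\begin{equation*}
  \E\bigl[Z_{\jvec,\kvec}(t)^2\bigr]
    \leq C_{\kvec}\,c_{\jvec,\kvec}(t)^2\,\E\Bigl[\prod_i R_{j_i,2k_i}(t)\Bigr]
    = C_{\kvec}\,\frac{c_{\jvec,\kvec}(t)^2}{c_{\jvec,2\kvec}(t)}\,\E\bigl[Z_{\jvec,2\kvec}(t_\star)\bigr]\,,
\end{equation*}
where $t_\star=\max\{S_{j_r},1\}$. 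The asymptotics $c_{\jvec,\kvec}(t)\sim t^{-\mu_{\kvec}/\rho}$ and $c_{\jvec,2\kvec}(t)\sim t^{-2\mu_{\kvec}/\rho}$ from \eqref{eq:martingale:constants:asymptotics} yield $c_{\jvec,\kvec}(t)^2/c_{\jvec,2\kvec}(t)=O(1)$, so $\sup_t \E[Z_{\jvec,\kvec}(t)^2] < \infty$. The hypothesis $k_i>-1/2$ enters exactly here: it is equivalent to $2k_i>-1$, which is what makes the comparison martingale $Z_{\jvec,2\kvec}$ well-defined (the argument condition of the first step with exponents $2\kvec$).

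The main obstacle is the bookkeeping in the quadratic expansion: enumerating which pairs $(\stub[a][j_i][t+1],\stub[b][j_l][t+1])$ can simultaneously be non-zero (only one first-end with one second-end of the single new edge) and showing that the diagonal self-loop contribution and the off-diagonal cross-terms reassemble exactly into $\Sigma_{\jvec,\kvec}(t)$ as defined. A minor subtlety is that $\Sigma_{\jvec,\kvec}(s)$ can be negative when some $k_i\in(-1,0)$; one must check that $\tfrac{1-\alpha}{2}\Sigma_{\jvec,\kvec}(s)$ is dominated by $s+\mu_{\kvec}/\rho$, which holds beyond an almost surely finite random threshold, and restricting $t$ to $t\geq t_\star$ at the outset absorbs this technicality.
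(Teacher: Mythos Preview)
Your approach matches the paper's: a M\'ori-style one-step conditional expectation to establish \eqref{eq:martingale:expectation}, followed by comparison with the doubled-exponent martingale $Z_{\jvec,2\kvec}$ for $L_2$-boundedness. The martingale half is correct and considerably more explicit than the paper's terse statement of \eqref{eq:martingale:expectation}.

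There is one slip in the $L_2$ step. You write
\begin{equation*}
  \E\bigl[Z_{\jvec,\kvec}(t)^2\bigr]
    \leq C_{\kvec}\,c_{\jvec,\kvec}(t)^2\,\E\Bigl[\prod_i R_{j_i,2k_i}(t)\Bigr]
    = C_{\kvec}\,\frac{c_{\jvec,\kvec}(t)^2}{c_{\jvec,2\kvec}(t)}\,\E\bigl[Z_{\jvec,2\kvec}(t_\star)\bigr]\,,
\end{equation*}
pulling $c_{\jvec,\kvec}(t)$ and $c_{\jvec,2\kvec}(t)$ outside expectations as if they were deterministic. They are not: the paper notes immediately after \eqref{eq:martingale:constants} that $c_{\jvec,\kvec}(t)$ is a random variable, since $\Sigma_{\jvec,\kvec}(s)$ depends on the realized degrees $d_{j_l}(s)$ and on the random-walk matrices $\RWprob_s$. (In the classical M\'ori argument for preferential attachment the normalizing constants \emph{are} deterministic, which is presumably what you had in mind.) The paper avoids this by working pointwise: your Gamma-ratio bound already gives $\prod_i R_{j_i,k_i}(t)^2\leq C_{\kvec}\prod_i R_{j_i,2k_i}(t)$ as an inequality between random variables, whence
\begin{equation*}
  Z_{\jvec,\kvec}(t)^2 \leq C_{\kvec}\,\frac{c_{\jvec,\kvec}(t)^2}{c_{\jvec,2\kvec}(t)}\,Z_{\jvec,2\kvec}(t)\,,
\end{equation*}
and one then controls the random ratio $c_{\jvec,\kvec}(t)^2/c_{\jvec,2\kvec}(t)$ via \eqref{eq:martingale:constants:asymptotics} before taking expectations and invoking the martingale property of $Z_{\jvec,2\kvec}$.
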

\begin{proof}
  It can be shown that
  \begin{align} \label{eq:martingale:expectation}
    \E\Bigl[ \prod_{i=1}^r R_{j_i,k_i}(t+1) \;\Big|\; \sigalg[t] \Bigr]
      = \Bigl(\prod_{i=1}^r R_{j_i,k_i}(t) \Bigr)
        \Bigl( 1 + \frac{\mu_{\kvec}}{\rho t}  + \frac{1-\alpha}{2t}\Sigma_{\jvec,\kvec}(t)\Bigr) 
        \; ,
  \end{align}
  from which it follows that
  \begin{equation} \label{eq:martingale:proof}
    \E[ Z_{\jvec,\kvec}(t+1) \on  \sigalg[t] ]
    = c_{\jvec,\kvec}(t+1) R_{\jvec,\kvec}(t)
    \Bigl( 1 + \frac{\mu_{\kvec}}{\rho t}  + \frac{1-\alpha}{2t}\Sigma_{\jvec,\kvec}(t)\Bigr) 
    = c_{\jvec,\kvec}(t) R_{\jvec,\kvec}(t) = Z_{\jvec,\kvec}(t)\;.
  \end{equation}
  Furthermore, ${Z_{\jvec,\kvec}(\max\{S_{j_r},1\})>0}$. By~\eqref{eq:martingale:constants:asymptotics} and properties of the gamma function,
  \begin{align}
    Z_{\jvec,\kvec}(t)^2 \leq Z_{\jvec,2\kvec}(t) \prod_{i=1}^r \binom{2k_i}{k_i} \;.
  \end{align}
  By~\eqref{eq:martingale:proof}, $Z_{\jvec,2\kvec}(t)$ is a martingale with finite expectation for ${2k_1,\dots,2k_r>-1}$. Therefore, $Z_{\jvec,\kvec}(t)$ is an $L_2$-bounded martingale and hence converges in $L_2$ (and also in $L_1$) for ${k_1,\dots,k_r>-\frac{1}{2}}$.
\end{proof}

Combining the auxiliary results above, we can now give proof of the result.

\begin{proof}[Proof of \cref{thm:degree:sequence}]
The limit of $Z_{\jvec,\kvec}(t)$ is \eqref{eq:mlimit}, which enables calculation of the moments of $\xi_j$. In particular, for any vertex $v_j$, $j\in\N_+$, $k\in\R$ such that $k > -\frac{1}{2}$,
\begin{align} \label{eq:deg:seq:moments}
 \E\Bigl[{\textstyle\frac{ \xi^{k}_{j} }{ \Gamma(k + 1) }}\, \;\Big|\; S_j \Bigr]
   = \lim_{t\to\infty} \E[Z_{j,k}(t) \on  S_j]
   = \E[Z_{j,k}(S_j) \on S_j]
   = {\textstyle \frac{ \Gamma(S_j) }{ \Gamma(S_j + \frac{k}{\rho}) }} \;.
\end{align}
Although the joint moments involve $\Sigma_{\jvec,\kvec}$, the moments \eqref{eq:deg:seq:moments} characterize the marginal distribution of $\xi_j \on S_j$ via the Laplace transform. Since
\begin{align} \label{eq:moment:char}
  \E[\xi^k_j \on S_j] 
    & = \frac{ \Gamma(S_j) \Gamma(k+1) }{ \Gamma\bigl(S_j + \frac{k}{\rho} \bigr) } 
    = \frac{ \Gamma\left(\rho(S_j-1) + 1 + k\right) \Gamma(S_j) }{ \Gamma\left(\rho(S_j-1) + 1 \right) \Gamma\bigl(S_j + \frac{k}{\rho} \bigr) }  \frac{ \Gamma\left(\rho(S_j-1) + 1 \right) \Gamma(k+1) }{ \Gamma\left(\rho(S_j-1) + 1 + k \right) } \;,\nonumber
\end{align}
${\E[\xi^k_j \on S_j]=\E[ M_j^{k} B^k_{j} \on S_j]}$, where
${M_j}$ is a generalized ${\MittagLeffler(\rho^{-1},S_j-1)}$ variable \citep{James:2015aa}, and $B_j$ is ${B_j\sim\BetaDist(1,\rho(S_j-1))}$.
It follows that ${M_j\perp\!\!\perp_{S_j} B_j}$. The second equality in distribution is verified by checking moments.

It remains to show~\eqref{eq:marginal:moments}. We begin by noting that the left-hand side of~\eqref{eq:deg:seq:moments} is an expectation that conditions on $S_j$. Define ${\tilde{S}_j:=S_j - 1 - (j-2)}$, which is marginally distributed as $\NBo(j-2,1-\alpha)$. Then
\begin{align}
  \E\bigl[ \E[ \xi_j^k \on S_j ] \bigr]
  & = \E\Bigl[  \frac{ \Gamma(\tilde{S}_j + 1 + (j-2)) \Gamma(k+1) }{ \Gamma(\tilde{S}_j + 1 + (j-2) + \frac{k}{\rho}) } \Bigr] \nonumber \\
    & = \sum_{t=0}^{\infty}
      \frac{ \Gamma(t + 1 + (j-2)) \Gamma(k+1) }{ \Gamma(t + 1 + (j-2) + \frac{k}{\rho}) } 
      \frac{ \Gamma(t + j - 2) }{ \Gamma(j-2)\Gamma(t+1) }(1-\alpha)^t \alpha^{j-2} \\ \nonumber
    & = \frac{ \Gamma(k+1) \Gamma(j-1) }{ \Gamma(j-1 + \frac{k}{\rho}) }
      \alpha^{\frac{k}{\rho}}
      \,_2F_1(1 + \frac{k}{\rho},\frac{k}{\rho}; j-1+\frac{k}{\rho};1-\alpha)\; ,
\end{align}
where $\,_2F_1(a,b;c;z)$ is the ordinary hypergeometric function. For $j\to\infty$,
\begin{equation*}
  \lim_{j\to\infty} \E\bigl[ \E[ \xi_j^k \on  S_j ] \bigr]
    = {\textstyle\lim_{j\to\infty} \frac{ \Gamma(k+1) \Gamma(j-1) }{ \Gamma(j-1 + \frac{k}{\rho}) }}
      \alpha^{\frac{k}{\rho}}
      (1 + O(j^{-1})) \nonumber \\
    = \Gamma(k+1)\alpha^{\frac{k}{\rho}} j^{-\frac{k}{\rho}}(1 + O(j^{-1})) \;.
\end{equation*}
follows using the series expansion of $\,_2F_1(a,b;c;z)$.
\end{proof}

\subsection{Proof of \texorpdfstring{\cref{prop:limiting_ACL}}{Proposition 3.5}} \label{subsec:proof:limiting_ACL}

\begin{proof}[Proof of \cref{prop:limiting_ACL}]
Existence of the limit as $\lambda\to\infty$ follows from the existence of the limiting (stationary) distribution for each $t\in\N_+$. For equivalence, it suffices to show that given any connected graph $G$, the conditional distribution over graphs $G'$ is the same for ${\RWSB(\alpha,\infty)}$ and $\ACL(\alpha)$.
The probability of attaching a new vertex to an existing vertex $v$ in both models is 
${\alpha\deg(v)/\vol(G)}$. With probability ${1-\alpha}$, a new edge is inserted between existing vertices $u$ and $v$, and so it remains to show that in this case the distribution over pairs of vertices, $\eDist(u,v)$, is the same. 
We have 
\begin{equation*}
  \eDist(u,v) = 2\frac{\deg(u)}{\vol(G)}\frac{\deg(v)}{\vol(G)}
  \quad\text{ and }\quad
  \eDist(u,v) = \frac{\deg(u)}{\vol(G)}[\RWprob^{\lambda}]_{uv} + \frac{\deg(v)}{\vol(G)}[\RWprob^{\lambda}]_{vu}
\end{equation*}
for the ACL and $\RWSB$ model respectively, with $\RWprob^{\lambda}$. First, consider 
${(\iden_n - \L)\mathbf{K}^{\lambda}}$ in terms of the spectrum of $\L$. Let 
${0=\LeigVal_1 \leq \dots \leq \LeigVal_n\leq 2}$ be the eigenvalues for a graph on $n$ vertices, 
with eigenvectors $\LeigVec_i$. Then
\begin{align}
(\iden_n - \L)\mathbf{K}^{\lambda} = (\iden_n - \L)e^{-\lambda\L} = \sum_{i=1}^{n} (1 - \LeigVal_i)e^{-\lambda \LeigVal_i} \LeigVec_i \LeigVec_i' \;.
\end{align}
When $\lambda \to \infty$, only the eigenvector $\LeigVec_1$ corresponding to the eigenvalue $\LeigVal_1=0$ contributes to the random walk probabilities, \ie ${\lim_{\lambda\to\infty} (\iden_n - \L)e^{-\lambda\L} = \LeigVec_1 \LeigVec_1'}$. 
The limit satisfies $\LeigVec_1 \propto D^{1/2}\indicator$, where $\indicator$ is the vector of all ones \citep[][Ch. 2]{Chung:1997}. Therefore, 
% [see Chung, F., {\em Spectral Graph Theory}, AMS 1997, Chapter 2]. Therefore,
\begin{equation*}
  \lim_{\lambda\to\infty} [\RWprob^{\lambda}]_{uv} 
  = \bigl[ \indicator \indicator' D{\textstyle\frac{1}{\vol(G)}} \bigr]_{uv} = \frac{\deg(v)}{\vol(G)} \;,
\end{equation*}
and the result follows.
\end{proof}

\subsection{Proofs for \texorpdfstring{\cref{sec:inference}}{Section 4}}

\begin{proof}[Proof of \cref{prop:smc}]
  Proposition 2.2 of \citet{delMoral:Murray:2015} shows that
  \begin{equation*}
    \E[ f(\Gall) \on G_1, G_T ] = L_{\seqParams}^1(G_1)^{-1}\E\bigl[ f(\Gall) L_{\seqParams}^{T-1}(G_{T-1}) 
      {\textstyle\frac{h_t(G_t)}{h_t(G_{T-1})} \prod_{s=2}^{T-1} \frac{h_s(G_s)}{h_{s-1}(G_{s-1})}} \on G_1 \bigr] \;,
  \end{equation*}
  for any approximation function $h_t$ that is positive outside a null set. For our choice of $h_t$, 
  the proposal density \eqref{eq:smc:bridge:proposal} only places probability mass on graphs with 
  non-zero bridge likelihood, making $\indicator\{h_t=0\}$ a probability zero event.
  For an SMC approximation to be consistent, the target density must be absolutely continuous with respect to the proposal density at each step $t$ \citep[see, e.g.][]{Robert:Casella:2004,Doucet:Johansen:2011}. In other words, the proposal density must be a valid importance sampling distribution at each step. The target, $\mathcal{L}_{\seqParams}(G_{1:t} \on G_T, G_1)$ is absolutely continuous with respect to $\prod_{s=2}^t r_{\seqParams}(G_s \on G_{s-1})$ from \eqref{eq:smc:bridge:proposal} by construction if properties (P1) and (P2) hold, and the result follows.
\end{proof}

\Cref{prop:unbiased:bridge:likelihood} is a special case of the following:
\begin{proposition}
  Let ${q_{\seqParams}^t}$, for ${t=1,\ldots,T}$, be the Markov kernels defining a sequential network model that satisfies conditions (P1) and (P2), and let ${r_{\seqParams}^t}$ be the corresponding proposal kernels. Furthermore, let $h_t(G_T \mid G_t)$ for ${t = 1,\ldots,T-2}$ be a sequence of fixed functions that are strictly positive if $G_t \subseteq G_T$. Given an observation $G_T$ and a fixed $G_1$, define the weights
  \begin{align}
    \tilde{w}^i_t := 
      \begin{cases}
        1, & \text{for } \quad t = 1 \\
        \displaystyle \frac{1}{ h_t(G_T \mid G^i_{t-1}) } \frac{ q_{\seqParams}^t(G^i_{t} \mid G^i_{t-1}) }{ r_{\seqParams}^t(G^i_{t} \mid G^i_{t-1})  }, & \text{for } \quad 2 \leq t < T - 1 \\
        \displaystyle\frac{q_{\seqParams}^t(G_T \mid G_{t})}{ h_t(G_T \mid G^i_{t-1}) } \frac{ q_{\seqParams}^t(G^i_{t} \mid G^i_{t-1}) }{ r_{\seqParams}^t(G^i_{t} \mid G^i_{t-1})  }, & \text{for } \quad t = T-1
      \end{cases}
      \;,
  \end{align}
  and $w^i_t$ the corresponding weights normalized across the $N$ particles. Define the estimator
  \begin{align}
  \label{eq:SMC:marginal:general}
  \hat{L}_{\seqParams}^1 
    & := \prod_{t=2}^{T-1} \left[ \left( \sum_{i=1}^N \frac{\tilde{w}^i_t}{N} \right)
      \left( \frac{ \sum_{i=1}^N h_{t-1}(G_T \mid G^i_{t-1}) \tilde{w}^i_{t-1} }{ \sum_{i=1}^N \tilde{w}^i_{t-1} }  \right)   \right] \\
    & := \prod_{t=2}^{T-1} \hat{L}_{\seqParams}(G_t \mid G_{t-1}) \nonumber \;.
  \end{align}
  Then $\hat{L}_{\seqParams}^1$ is positive and unbiased: $\hat{L}^1_{\seqParams}>0$ and ${\E[\hat{L}^1_{\seqParams}] = L_{\seqParams}^1(G_1) = p_{\seqParams}(G_T \mid G_1)}$, for any $N\geq 1$.
\end{proposition}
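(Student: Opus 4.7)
The proof will split into positivity, which is immediate, and unbiasedness, which I would attack by recognizing the estimator as an instance of the standard SMC marginal likelihood estimator for an auxiliary sequence of targets.

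For positivity, the proposal $r_{\seqParams}^t$ is supported by construction on feasible $G_t^i \subseteq G_T$, so by hypothesis $h_t(G_T \mid G_t^i) > 0$ on every particle, and the Markov kernels $q_{\seqParams}^t$ and $r_{\seqParams}^t$ are positive on the support of $r_{\seqParams}^t$. Hence each $\tilde{w}_t^i$ is strictly positive, and both factors appearing in each $\hat{L}_{\seqParams}(G_t \mid G_{t-1})$ are strictly positive almost surely, giving $\hat{L}_{\seqParams}^1 > 0$.

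For unbiasedness, I would view the algorithm as an SMC sampler in the Del Moral--Doucet--Jasra framework, with auxiliary targets $\gamma_t(G_{1:t}) \propto h_t(G_T \mid G_t) \prod_{s=1}^{t-1} q_{\seqParams}^{s+1}(G_{s+1}\mid G_s)$ for $t < T-1$ and $\gamma_{T-1}(G_{1:T-1}) \propto q_{\seqParams}^T(G_T \mid G_{T-1}) \prod_{s=1}^{T-2} q_{\seqParams}^{s+1}(G_{s+1}\mid G_s)$ at the final step, propagated by $r_{\seqParams}^t$. With this choice, the incremental importance ratios are exactly the weights $\tilde{w}_t^i$ in the statement, and the classical SMC identity gives that $\prod_{t=2}^{T-1} \frac{1}{N}\sum_i \tilde{w}_t^i$ is an unbiased estimator of the ratio $Z_{T-1}/Z_1$, where $Z_t := \int \gamma_t\, dG_{1:t}$. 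The correction ratio $H_{t-1} := (\sum_i h_{t-1}(G_T\mid G_{t-1}^i)\tilde{w}_{t-1}^i)/(\sum_i \tilde{w}_{t-1}^i)$ is a self-normalized estimate of the in-target expectation of $h_{t-1}$; inserting it into the product telescopes the $h_t$ factors so that the final expected value equals the \emph{true} marginal $L_{\seqParams}^1(G_1) = p_{\seqParams}(G_T \mid G_1)$, not the auxiliary $Z_{T-1}/Z_1$.

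Concretely, I would run an induction on $t$ over partial products $\hat{L}_{\seqParams}^{1,t} := \prod_{s=2}^{t} \hat{L}_{\seqParams}(G_s\mid G_{s-1})$, with inductive hypothesis $\E[\hat{L}_{\seqParams}^{1,t}] = p_{\seqParams}^{t,1}(G_t \mid G_1)$ in an appropriate marginal sense. The main obstacle is handling the multinomial resampling correctly, because it breaks conditional independence of the particles at step $t$ given those at step $t-1$. The standard workaround is to set up the filtration $\mathcal{F}_{t-1}$ that records both the particles and the ancestor indices $A_t^i$ at step $t-1$, and to use that the resampling probabilities $w_{t-1}^i = \tilde{w}_{t-1}^i/\sum_j \tilde{w}_{t-1}^j$ combine exactly with the $\sum_i \tilde{w}_{t-1}^i$ in the denominator of $H_{t-1}$: the denominator cancels, leaving a clean expression for $\E[\tilde{w}_t^i \cdot h_{t-1}(G_T\mid G_{t-1}^{A_t^i}) \mid \mathcal{F}_{t-1}^-]$ that equals the desired transition integral. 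Two applications of the tower property then pass the expectation through the resampling step and through the proposal, completing one step of the induction. Iterating to $t = T-1$ and noting the special final weight furnishes the factor $q_{\seqParams}^T(G_T \mid G_{T-1})$ yields $\E[\hat{L}_{\seqParams}^1] = p_{\seqParams}(G_T \mid G_1)$ for every $N \geq 1$.
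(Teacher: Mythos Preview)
Your proposal is correct and, in its concrete execution, matches the paper's proof: the paper establishes unbiasedness by backward iterated conditional expectations (citing \citet{Pitt:etal:2010}), conditioning on the particle system $\mathcal{S}_{t}=\{G^i_t,\tilde{w}^i_t\}$ at each step and using exactly the cancellation you identify between the resampling probabilities $w^i_{t-1}=\tilde{w}^i_{t-1}/\sum_j\tilde{w}^j_{t-1}$ and the denominator of $H_{t-1}$, so that the $h_t$ factors telescope and the bridge likelihoods $p_\theta^{T,t}$ emerge. Your additional Del~Moral--Doucet--Jasra auxiliary-target framing is a valid alternative viewpoint not taken in the paper, but your ``concretely'' paragraph is the same argument the paper gives; one minor point is that the paper runs the iteration backward from $t=T-1$, which makes the inductive invariant cleaner than the forward statement you wrote (your ``$\E[\hat{L}^{1,t}_\theta]=p_\theta^{t,1}(G_t\mid G_1)$ in an appropriate marginal sense'' is imprecise, since the left side does not fix a $G_t$; the backward invariant $\E[\prod_{s\geq t}\hat{L}_\theta(G_s\mid G_{s-1})\mid\mathcal{S}_{t-1}]=\sum_i p_\theta^{T,t-1}(G_T\mid G^i_{t-1})w^i_{t-1}$ is the natural formulation).
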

\begin{proof}
  Unbiasedness will be established by iterating expectations, following the approach in \citet{Pitt:etal:2010}. Let $\mathcal{S}_{t}$ be the set of particles and weights $\{G^i_t;\tilde{w}^i_t\}$ at step $t$. Then we have that
  \begin{align*}
    & \conditional[]{\left( \sum_{i=1}^N \frac{\tilde{w}^i_{T-1}}{N} \right) }{\mathcal{S}_{T-2}} \\
    % & = \conditional[]{\left( \sum_{i=1}^N \frac{\tilde{w}^i_{T-1}}{N} \right)}{\mathcal{S}_{T-2}} \\%\times \left( \sum_{i=1}^N \frac{ h_{T-2}(G_T \mid G^i_{T-2}) \tilde{w}^i_{T-2} }{ \tilde{w}^i_{T-2} }  \right) \\
    & = \sum_{i=1}^N  \displaystyle\int \frac{q_{\seqParams}^t(G_T \mid G_{T-1})}{ h_{T-2}(G_T \mid G^i_{T-2}) } \frac{ q_{\seqParams}^t(G_{T-1} \mid G^i_{T-2}) }{ r_{\seqParams}^t(G_{T-1} \mid G^i_{T-2})  } \frac{ r_{\seqParams}^t(G_{T-1} \mid G^i_{T-2})  h_{T-2}(G_T \mid G^i_{T-2}) \tilde{w}^i_{T-2}}{ \sum_{j=1}^N h_{T-2}(G_T \mid G^j_{T-2}) \tilde{w}^j_{T-2}} dG_{T-1} \\
    & = \sum_{i=1}^N \frac{ p_{\seqParams}^{T,T-2}(G_T \mid G^i_{T-2}) \tilde{w}^i_{T-2} }{ \sum_{j=1}^N h_{T-2}(G_T \mid G^j_{T-2}) \tilde{w}^j_{T-2} } \;.
  \end{align*}
  Therefore, ${\E[ \hat{L}_{\seqParams}( G_{T-1} \mid G_{T-2}) \mid \mathcal{S}_{T-2}]
          = \sum_{i=1}^N  p_{\seqParams}^{T,T-2}(G_T \mid G^i_{T-2}) w^i_{T-2}}$. 
  Likewise, it is straightforward to show that
  \begin{align*}
    & \E[ \hat{L}_{\seqParams}( G_{T-1} \mid G_{T-2}) \hat{L}_{\seqParams}( G_{T-2} \mid G_{T-3}) \mid \mathcal{S}_{T-3}] \\
      & = \E[ \; \E[ \hat{L}_{\seqParams}( G_{T-1} \mid G_{T-2}) \mid \mathcal{S}_{T-2} ] \; \hat{L}_{\seqParams}( G_{T-2} \mid G_{T-3}) \mid \mathcal{S}_{T-3} ] 
       = \sum_{i=1}^N p_{\seqParams}^{T,T-3}(G_T \mid G^i_{T-3}) w^i_{T-3} \;.
  \end{align*}
  Iterating for $t = T-4,\ldots,1$, we have 
  \begin{align*}
    & \conditional[]{\prod_{t=2}^{T-1} \left\{ \left( \sum_{i=1}^N \frac{\tilde{w}^i_t}{N} \right) \left( \frac{ \sum_{i=1}^N h_{t-1}(G_T \mid G^i_{t-1}) \tilde{w}^i_{t-1} }{ \sum_{i=1}^N \tilde{w}^i_{t-1} }  \right)   \right\} }{\mathcal{S}_1} \\
    & = \sum_{i=1}^N p^{T,1}_{\seqParams}(G_T \mid G^i_1) w^i_1 = p_{\seqParams}(G_T \mid G_1) \;,
  \end{align*}
  which proves the claim.
\end{proof}

\begin{proof}[Proof of \cref{prop:pmmh}]
  Theorem 4 of \citet{Andrieu:Doucet:Holenstein:2010}, of which \cref{prop:pmmh} is a special case, requires that:
  (a) The resampling scheme in the SMC algorithm is unbiased, \ie each particle is resampled with probability proportional to its weight. (b) $\mathcal{L}_{\seqParams}(G_{1:t} \on G_T, G_1)$ is absolutely continuous with respect to $\prod_{s=2}^t r_{\seqParams}(G_s \on G_{s-1})$ for any $\seqParams$.
  Condition (a) is satisfied by the multinomial, residual, and stratified resampling schemes. As discussed in the proof of \cref{prop:smc}, the condition (b) holds by construction if properties (P1) and (P2) hold.
\end{proof}

\section{Particle Gibbs updates} \label{sec:inference_appx}

The particle Gibbs sampler updates are performed in three blocks: $(\alpha,\lambda)$,
$(\Iall,\kAll)$, and $\Glatent$. As we describe in the following subsections, we make use of conditional conjugacy and tools from spectral graph theory to increase sampling efficiency. We discuss each block of updates in turn. \\

{\noindent\bf Updating the parameters $\alpha$ and $\lambda$}.
The model is specified in terms of distributions for the latent variables, and placing conjugate priors on their parameters we have,
\begin{flalign} \label{eq:latentvar_prior}
& B_t \on \alpha \simiid \Bernoulli(\alpha), \quad \alpha \sim \BetaDist(a_{\alpha},b_{\alpha}) \\
& K_t \on \lambda \simiid \Poisson(\lambda), \quad \lambda \sim \GammaDist(a_{\lambda},b_{\lambda}).
\end{flalign}
The Gibbs updates are hence ${\alpha \on \Iall \sim \BetaDist\left(\chi, \omega \right)}$ and
${\lambda \on \kAll \sim \GammaDist\left(\kappa, \tau \right)}$, with
\begin{flalign}
& \chi := a_{\alpha} + \sum_{t=2}^T B_t, \ \quad \omega := b_{\alpha} + (T-1) - \sum_{t=2}^T B_t \label{eq:post_params_I} \\
& \kappa := a_{\lambda} + \sum_{t=2}^T K_t, \quad \tau := b_{\lambda} + (T-1) \;. \label{eq:post_params_K}
\end{flalign}

{\noindent\bf Updating the latent variables $\Iall$ and $\kAll$}.
The sequential nature of the model allows each of the latent variables $B_t$ and $K_t$ to be updated individually. 
We use conditional independence to marginalize out sampling steps where possible, since such marginalization
increase the sampling efficiency.
For both $\Iall$ and $\kAll$, we marginalize twice: The first transforms the conditional distributions 
$P(B_t \on \alpha)$ and $P(K_t \on \lambda)$ into the predictive distributions $P(B_t \on \INot)$, and $P(K_t \on \kNot)$, which yields 
\begin{equation*}
  B_t \on \INot \sim \Bernoulli\Bigl(\frac{\chi_{-t}}{\chi_{-t} + \omega_{-t}}\Bigr)
  \quad\text{ and }\quad
  K_t \on \kNot \sim \NB\Bigl(\kappa_{-t},\frac{1}{1 + \tau_{-t}}\Bigr)\;.
\end{equation*}
The second improvement is made by marginalizing the conditional dependence of $B_t$ on $K_t$, 
and of $K_t$ on $B_t$. That requires some notation: $\delta_t (V_t,U_t)$ indicates that a new edge is added to the graph between vertices $V_t$ and $U_t$. When a new vertex is attached to $V_t$, we write $\delta_t (V_t,u^*)$. 
We abbreviate ${\bar{\tau} := \frac{1}{1 + \tau}}$, and write $\nball(v)$ for the $\{0,1\}$-ball of vertex $v$, \ie $v$ and its neighbors. The updates for $\Iall$ are
\begin{align} 
& P(B_t=1 \on G_{(t-1) : t}, \kNot, \INot) \propto \frac{\delta_t(V_t,u^*) \vDist_{t-1}(V_t) \chi_{-t}}{\chi_{-t} + \omega_{-t}}  \nonumber\\
& P(B_t=0 \on G_{(t-1) : t}, \kNot, \INot) \propto 
	\delta(V_t,U_t) + \frac{\delta(V_t,u^*)\omega_{-t}\vDist_{t-1}(V_t)}{\chi_{-t} + \omega_{-t}}  \sum_{u \in \nball(V_t)} \bigl[\RWprob^{\kappa_{-t}, \bar{\tau}_{-t}}_{t-1} \bigr]_{V_t,u} \nonumber
\end{align}
with $\RWprob_{t-1}^{\kappa_{-t}, \bar{\tau}_{-t}}$ as in~\eqref{eq:nb:kernel} with $r=\kappa_{-t}$, $p=\bar{\tau}_{-t}$. The updates for $\kAll$ are
\begin{align} \label{eq:update_k}
& P(K_t=k \on G_{(t-1) : t}, \kNot, \INot) \propto 
  \frac{\Gamma(\kappa_{-t} + k)}
       {\Gamma(k+1)\Gamma(\kappa_{-t})} 
       (1 - \bar{\tau}_{-t})^{\kappa_{-t}}\bar{\tau}_{-t}^k \ \dots \nonumber \\
       & ~ \times\ 
       \Bigl[\ \delta_t(V_t, u^*) \vDist_{t-1}(V_t) \Bigl(\frac{\chi_{-t}}{\chi_{-t} + \omega_{-t}} + \frac{\omega_{-t}}{\chi_{-t} + \omega_{-t}} \sum_{u \in \nball(V_t)} \bigl[\RWprob^{k+1}_{t-1}\bigr]_{V_t, u} \Bigr) \dots  \nonumber \\
& ~~  +\ \delta_t(V_t, U_t)\frac{\omega_{-t}}{\chi_{-t} + \omega_{-t}} \Bigl( \vDist_{t-1}(V_t) \bigl[\RWprob^{k+1}_{t-1}\bigr]_{V_t, U_t} + \vDist_{t-1}(U_t) \bigl[\RWprob^{k+1}_{t-1}\bigr]_{U_t, V_t} \Bigr) \Bigr]\nonumber
\end{align}
with $\RWprob^{k+1}_{t-1}=\DegMat_{t-1}^{-1/2}(\iden_{t-1} - \L_{t-1})^{k+1} \DegMat_{t-1}^{1/2}$, the probability of a random walk of length $k+1$ from $u$ to $v$. For implementation, the distribution for $K_t$ must be truncated at some finite $k$,
which can safely be done at three or four times the diameter of $G_T$: The total remaining probability mass can be calculated analytically, and the mass is placed on larger $k$ is negligible.\\

{\noindent\bf Sampling $\Glatent$}.
Implementation of \cref{alg:SMC} is straight-forward, with one exception: at each SMC step $G_{t-1} \to G_t$, we collapse the dependence on the particular values of $B_t, K_t$ in that sampling iteration so that edges are proposed from the collapsed transition kernel $q^t_{\alpha,\lambda}(G_t \on G_{t-1}, \INot, \kNot)$. Thus, $G_t$ is composed of $G_{t-1}$ plus a random edge $e_t$ sampled as
\begin{align}
  P(e_t & = (v,u)) \propto \nonumber \\
  	 & \indicator\{(v,u)=(V_t,u^*)\}\vDist_{t-1}(v) \Bigl( {\textstyle\frac{\chi_{-t}}{\chi_{-t} + \omega_{-t}} + \frac{\omega_{-t}}{\chi_{-t} + \omega_{-t}}} \Bigl( \sum_{u \in \nball(v)} \left[\RWprob^{\kappa_{-t}, \bar{\tau}_{-t}}_{t-1} \right]_{v,u} \Bigr) \Bigr) + \dots \nonumber \\
  	 & \indicator\{(v,u)=(V_t,U_t)\} \frac{\omega_{-t}}{\chi_{-t} + \omega_{-t}} \left( \vDist_{t-1}(v) \left[\RWprob^{\kappa_{-t}, \bar{\tau}_{-t}}_{t-1} \right]_{v,u} + \vDist_{t-1}(u) \left[\RWprob^{\kappa_{-t}, \bar{\tau}_{-t}}_{t-1} \right]_{u,v} \right) \;.
\nonumber
\end{align}

\clearpage

\section{Sensitivity to prior specification}
\label{sec:prior_sensitivity}

% \subsection{Influence of priors} \label{sec:prior:sensitivity}

Since the MCMC approach requires prior distributions on the model parameters $\alpha$ and $\lambda$, we have
to consider how different choices of priors imply assumptions on the graph structure, and how they
affect inference. Interpreting $\alpha$ is straightforward, as an expected ratio of vertices to edges (see also
\cref{fig:examples:uniform}). Since
it also has compact domain, the experiments reported in this section draw it from a uniform prior.
Of interest here is the influence of $\lambda$. 
The following tables report the empirical mean of several pertinent statistics,
computed on graphs generated from the model using a gamma prior with parameters $(a_{\lambda},b_{\lambda})$. 
Standard deviations
are computed over 100 realizations of a simple graph with 500 edges.
\begin{center}
  \makebox[\textwidth][c]{
    \resizebox{\textwidth}{!}{
  \begin{tabular}{ccc|cccccc}
    $\RWU$ $(a_{\lambda},b_{\lambda})$ & $\E[\lambda]$ & $\text{Var}[\lambda]$ & \emph{Diameter} & \emph{Avg. SP} & \emph{Clustering coeff.} & \emph{Vertices} & \emph{$\max(\deg)$} & \emph{$\overline{\deg}$} \\
    \midrule 
    (1.000, 0.250) & 4 & 16 & 16.53 (4.7) & 6.98 (2.1) & 0.1379 (0.088) & 375.2 (107) & 17.1 (7.4) & 3.02 (1.3) \\ 
    (4.000, 1.000) & 4 & 4 & 16.31 (4.4) & 6.81 (2.0) & 0.1486 (0.084) & 365.6 (99) & 17.6 (7.0) & 3.01 (1.1) \\ 
    (1.000, 1.000) & 1 & 1 & 19.01 (3.3) & 7.98 (1.4) & 0.1665 (0.097) & 417.9 (67) & 14.5 (4.2) & 2.47 (0.5) \\ 
    (0.001, 0.001) & 1 & 1000 & 21.87 (1.7) & 9.62 (0.5) & 0.0000 (0.000) & 501.0 (0) & 9.6 (1.3) & 2.00 (0.0) \\ 
    \\
  \end{tabular}
}}
\end{center}
\begin{center}
  \makebox[\textwidth][c]{
    \resizebox{\textwidth}{!}{
  \begin{tabular}{ccc|cccccc}
    $\RWSB$ $(a_{\lambda},b_{\lambda})$ & $\E[\lambda]$ & $\text{Var}[\lambda]$ & \emph{Diameter} & \emph{Avg. SP} & \emph{Clustering coeff.} & \emph{Vertices} & \emph{$\max(\deg)$} & \emph{$\overline{\deg}$} \\ 
    \midrule 
    (1.000, 0.250) & 4 & 16 & 11.28 (3.0) & 4.71 (1.1) & 0.1938 (0.117) & 395.9 (84) & 58.1 (18.4) & 2.67 (0.7) \\ 
    (4.000, 1.000) & 4 & 4 & 11.13 (3.2) & 4.62 (1.1) & 0.2054 (0.119) & 384.8 (82) & 55.2 (17.6) & 2.74 (0.7) \\ 
    (1.000, 1.000) & 1 & 1 & 12.67 (2.6) & 5.17 (0.9) & 0.1990 (0.117) & 429.4 (58) & 52.5 (14.5) & 2.38 (0.4) \\ 
    (0.001, 0.001) & 1 & 1000 & 15.75 (1.7) & 6.37 (0.5) & 0.0000 (0.000) & 501.0 (0) & 43.8 (13.8) & 2.00 (0.0) \\ 
    \\
  \end{tabular}
}}
\end{center}
The differences between the $\RWU$ and $\RWSB$ model are clearly visible: Regardless of prior choice, the latter
tends to have smaller diameter and average shortest path length, and a higher clustering coefficient. 
\cref{fig:prior:stats} shows averages for three of these statistics generated with fixed parameter values.

The last row of either table, with parameters ${(0.001,0.001)}$, corresponds to a gamma distribution commonly used as a ``non-informative'' prior in Bayesian analysis, and exhibits a phenomenon often observed for such priors:
On the one hand, the prior places most of its mass near 0, with the result that simulation from the prior
tends to produce very small parameters values. In our model, such small values of $\lambda$ generate
a tree-like graph (note the number of vertices in all experiments is exactly 501, for 500 edges). 
On the other hand, the tail of the
prior decays sufficiently slowly that even a small sample can pull the posterior away from 0, and indeed
parameter updates
in the Gibbs sampler are nearly equal to the MLE of $\lambda$ based on the latent walk lengths $\mathbf{K}$. 

More generally, the definition of the model implies that the posterior should peak somewhere between zero to the mixing time of the random walk. 
More important for inference than the prior's shape is hence that it does not exclude this region from its support, and as a rule of thumb, any prior that spreads its mass well over the interval between 0 and 2-3 times the diameter of the network can be expected to yield reasonable results in most cases. %See \cref{sec:prior_sensitivity} for more detailed experiments studying the effect of prior specification on posterior inference.

\begin{figure}[t!]
\makebox[\textwidth][c]{
  \resizebox{\textwidth}{!}{
    \begin{tikzpicture}[mybraces]
  \path[use as bounding box]
  (-2.5,2.5) rectangle (13.7,-6.25);
  \begin{scope}[xshift=0cm]
    \node (a) at (0,0) {
      \includegraphics[width=5cm]{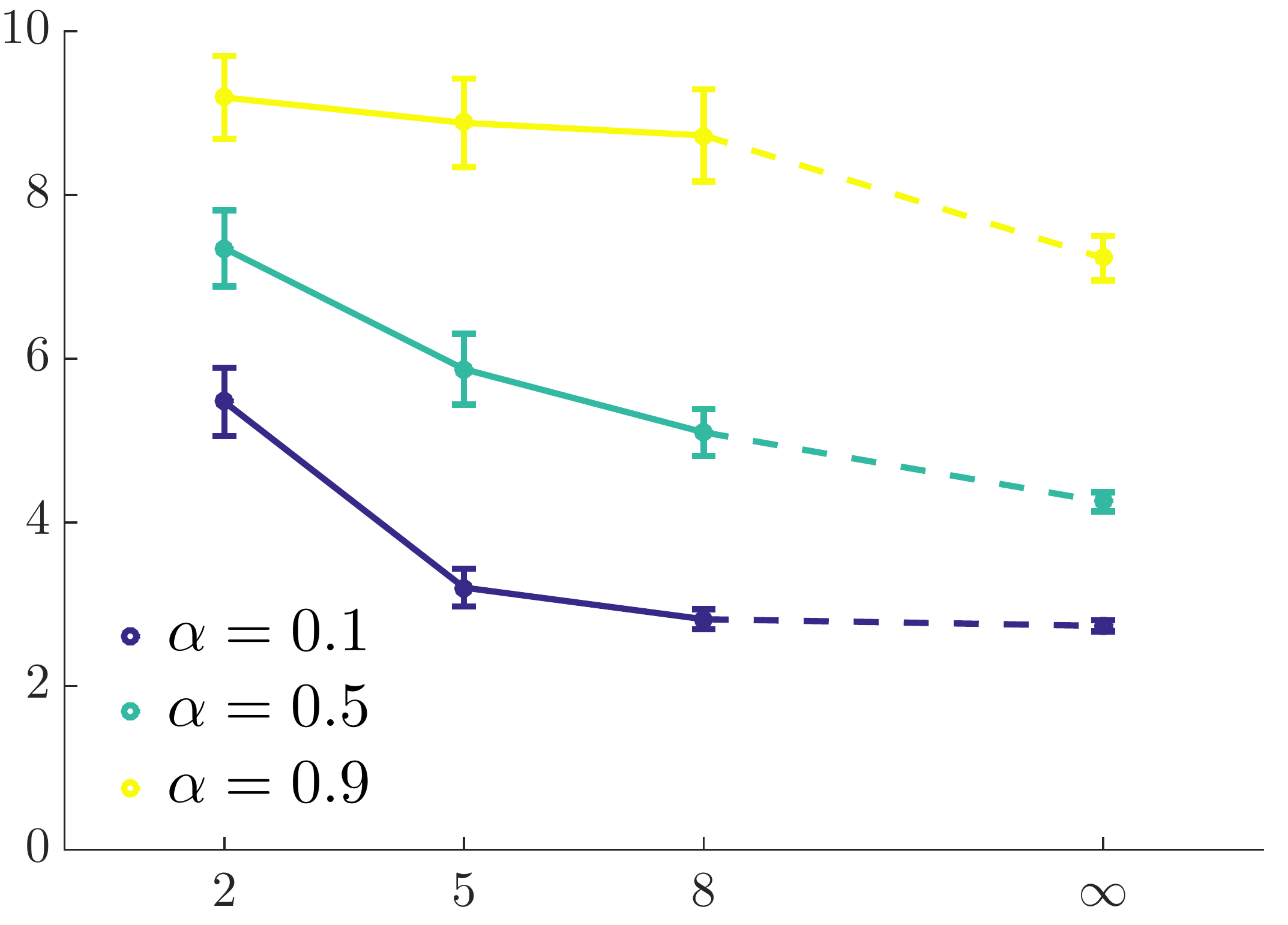}   
    };                                                                                 
    \node (b) at (0,-4) {                                                              
      \includegraphics[width=5cm]{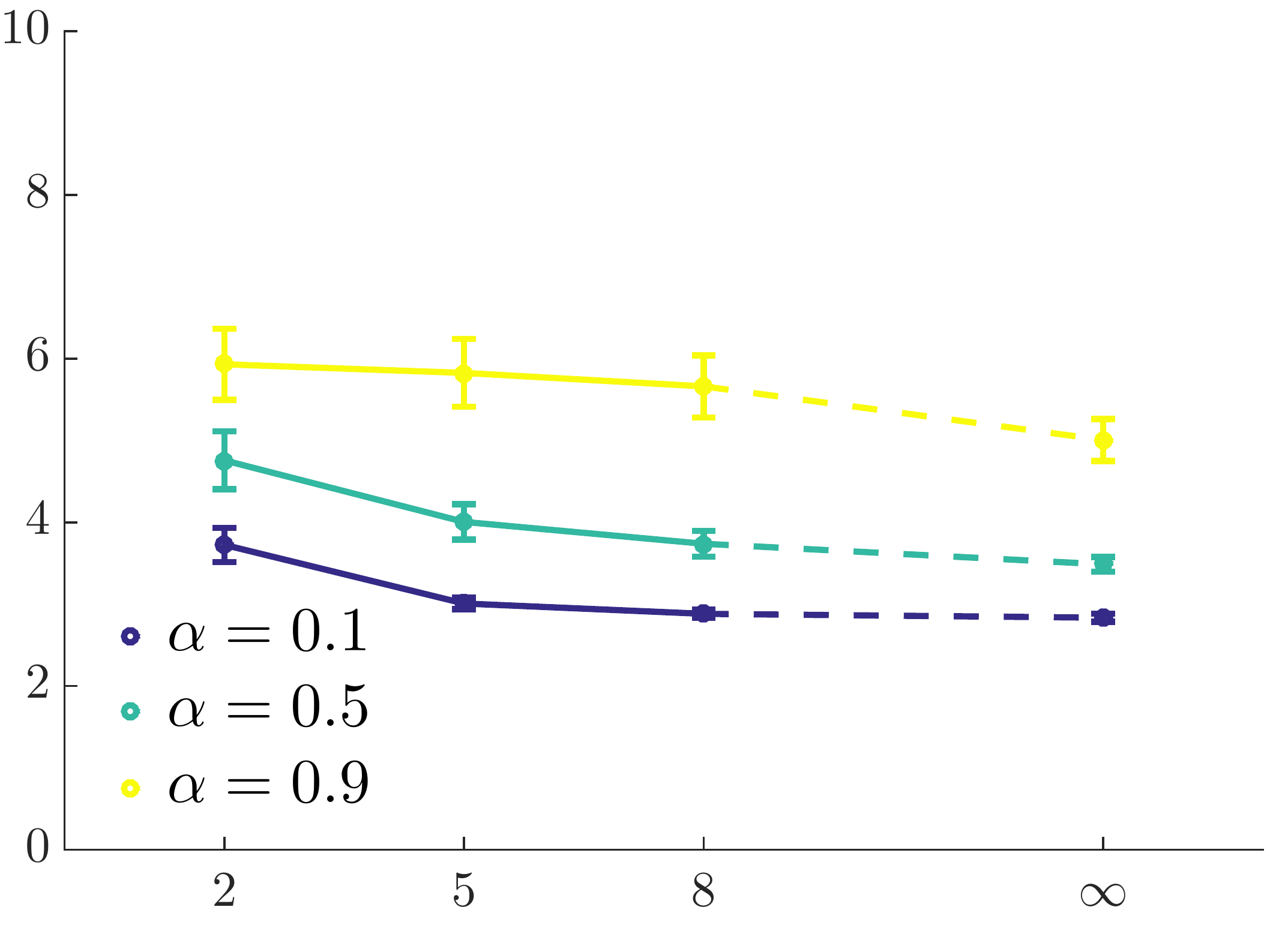}  
    };                              
    \node at ($(a.north)+(0,0.25)$) {\footnotesize average shortest path};                                                   
    \node at ($(b.south)-(0,0.25)$) {\footnotesize{$\lambda$}};
  \end{scope}
  \begin{scope}[xshift=5.3cm]
    \node (a) at (0,0) {
      \includegraphics[width=5cm]{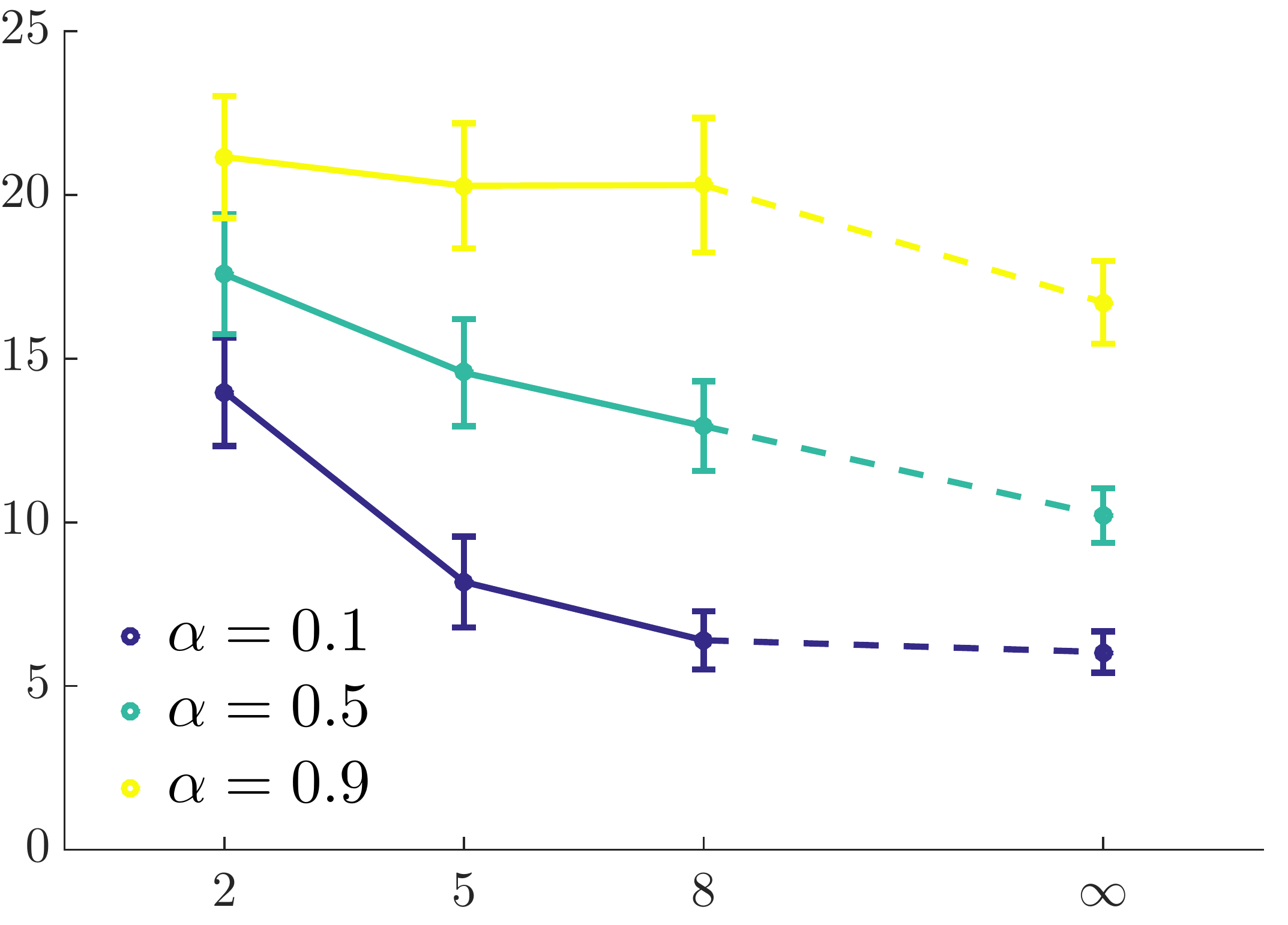}   
    };                                                                                 
    \node (b) at (0,-4) {                                                              
      \includegraphics[width=5cm]{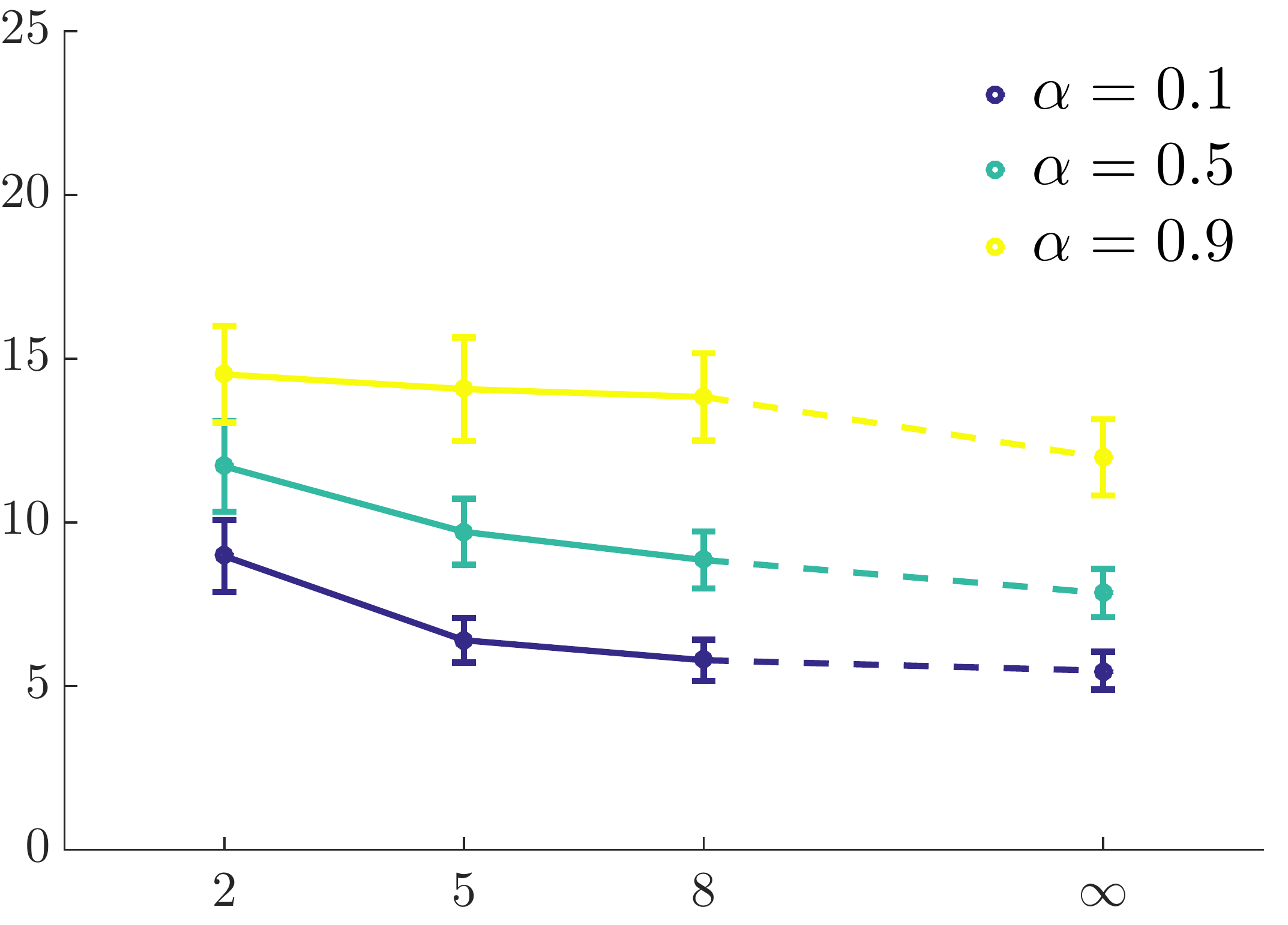}  
    };                   
    \node at ($(a.north)+(0,0.25)$) {\footnotesize diameter};                                                                
    \node at ($(b.south)-(0,0.25)$) {\footnotesize{$\lambda$}};
  \end{scope}
  \begin{scope}[xshift=10.6cm]
    \node (a) at (0,0) {
      \includegraphics[width=5cm]{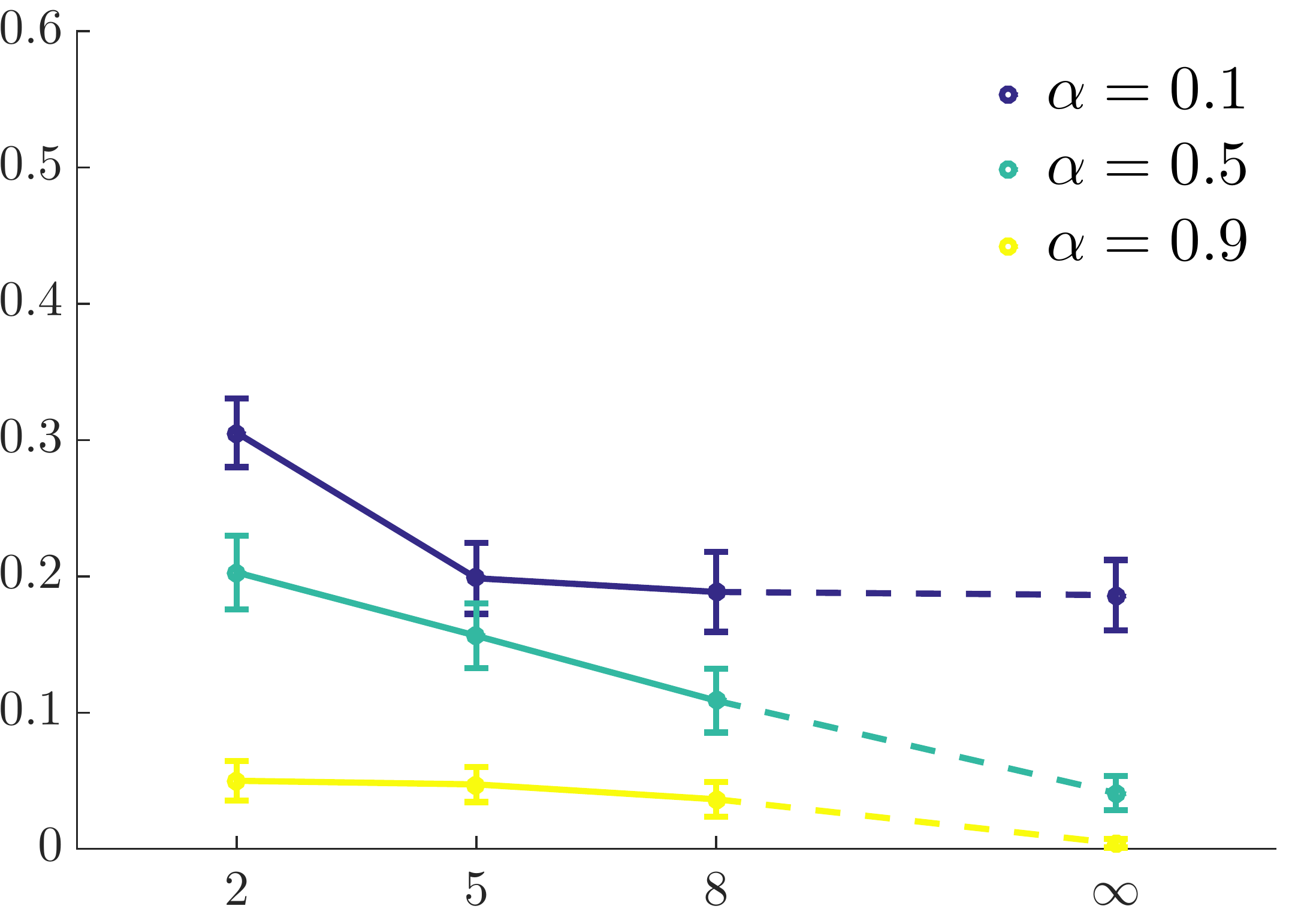}   
    };                                                                                 
    \node (b) at (0,-4) {                                                              
      \includegraphics[width=5cm]{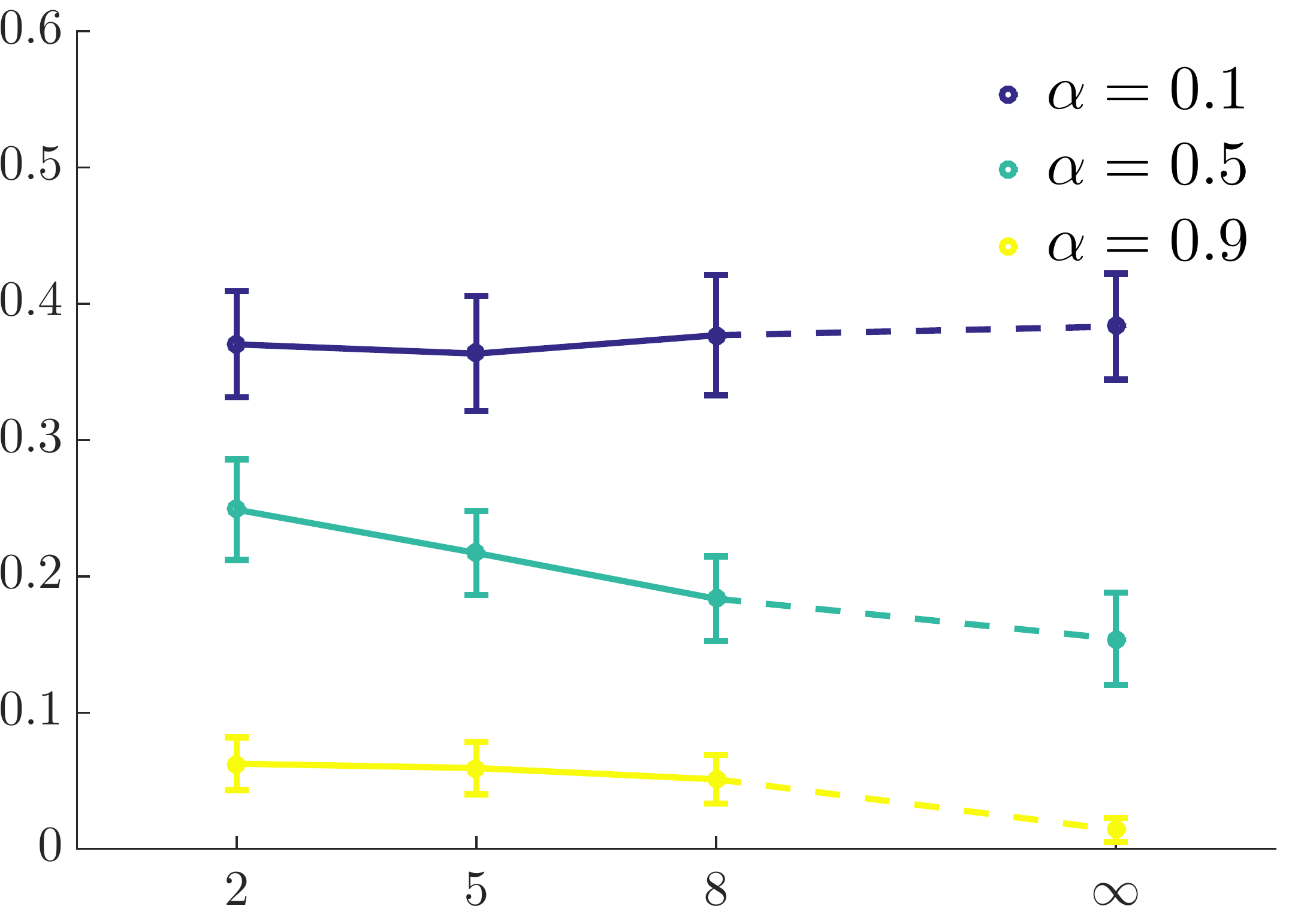}  
    };                                                  
    \node at ($(a.north)+(0,0.25)$) {\footnotesize clustering coefficient};                                 
    \node at ($(b.south)-(0,0.25)$) {\footnotesize{$\lambda$}};
  \end{scope}
  \begin{scope}[xshift=11cm]
    \draw[brace] ($(a.north east)+(0,-.1)$)--($(a.south east)+(0,.4)$);
    \node[rotate=270] at ($(a.east)+(.5,0)$) {\footnotesize RW (uniform)};
    
    \draw[brace] ($(b.north east)+(0,-.1)$)--($(b.south east)+(0,.4)$);
    \node[rotate=270] at ($(b.east)+(.5,0)$) {\footnotesize RW (size-biased)};
  \end{scope}
\end{tikzpicture}

  }}
  \caption{Statistics (average shortest path, diameter, clustering coefficient) of simple graphs with 500 edges generated by the random walk model. Averages are over 100 samples and error bars are plus/minus one standard error. $\lambda=\infty$ corresponds to the limiting $\ACL$ model (see \cref{sec:pa}).}
  \label{fig:prior:stats}
  \vspace{-.5cm}
\end{figure}

In order to study the sensitivity of posterior inference to prior specification, we vary the prior distribution parameters and use the particle Gibbs sampler to fit the $\RWSB(\alpha,\lambda)$ model to synthetic graphs of two difference sizes ($T=50$ and $T=100$). \Cref{fig:prior_sensitivity} show $1000$ samples collected after 500 burn-in iterations, each run for combinations of three different priors on $\alpha$ and $\lambda$. 
%% \begin{center}
%%   \begin{tabular}{cccc}
%%   Scenario & Prior on $\alpha$ & Prior on $\lambda$ & Plot \\
%%   \midrule
%%   1 & $\BetaDist(0.5,0.5)$ & $\GammaDist(1.0,0.25)$ & UL \\
%%   2 & $\BetaDist(1,1)$ & $\GammaDist(1.0,0.25)$ & UM \\
%%   3 & $\BetaDist(2,2)$ & $\GammaDist(1.0,0.25)$ & UR \\
%%   4 & $\BetaDist(0.5,0.5)$ & $\GammaDist(20,2)$ & ML \\
%%   5 & $\BetaDist(1,1)$ & $\GammaDist(20,2)$ & MM \\
%%   6 & $\BetaDist(2,2)$ & $\GammaDist(20,2)$ & MR \\
%%   7 & $\BetaDist(0.5,0.5)$ & $\GammaDist(0.01,0.01)$ & BL \\
%%   8 & $\BetaDist(1,1)$ & $\GammaDist(0.01,0.01)$ & BM \\
%%   9 & $\BetaDist(2,2)$ & $\GammaDist(0.01,0.01)$ & BR
%%   \end{tabular}
%% \end{center}
As one would expect, problems arise if the prior peaks sharply at a value far away from the true parameter value.
In the absence of strong prior knowledge, it is hence advisable to choose reasonably uninformative priors. In particular, the beta prior on $\alpha$ should usually be chosen as uniform. If the prior on $\alpha$ has a distinctive peak, it may distort posterior inference, particularly in small data settings. For example, the bottom row of \cref{fig:prior_sensitivity} (bottom) shows this effect. 
% in the upper half of the interval, it can distort the posterior of $\lambda$ in a similar manner as evident in 
% \cref{fig:param:sweep}.

\begin{figure}[tb] 
    \makebox[\textwidth][c]{
    \resizebox{\textwidth}{!}{
  \begin{tikzpicture}
    \begin{scope}
    \node at (0,0) {
      \includegraphics[width=0.3\textwidth]{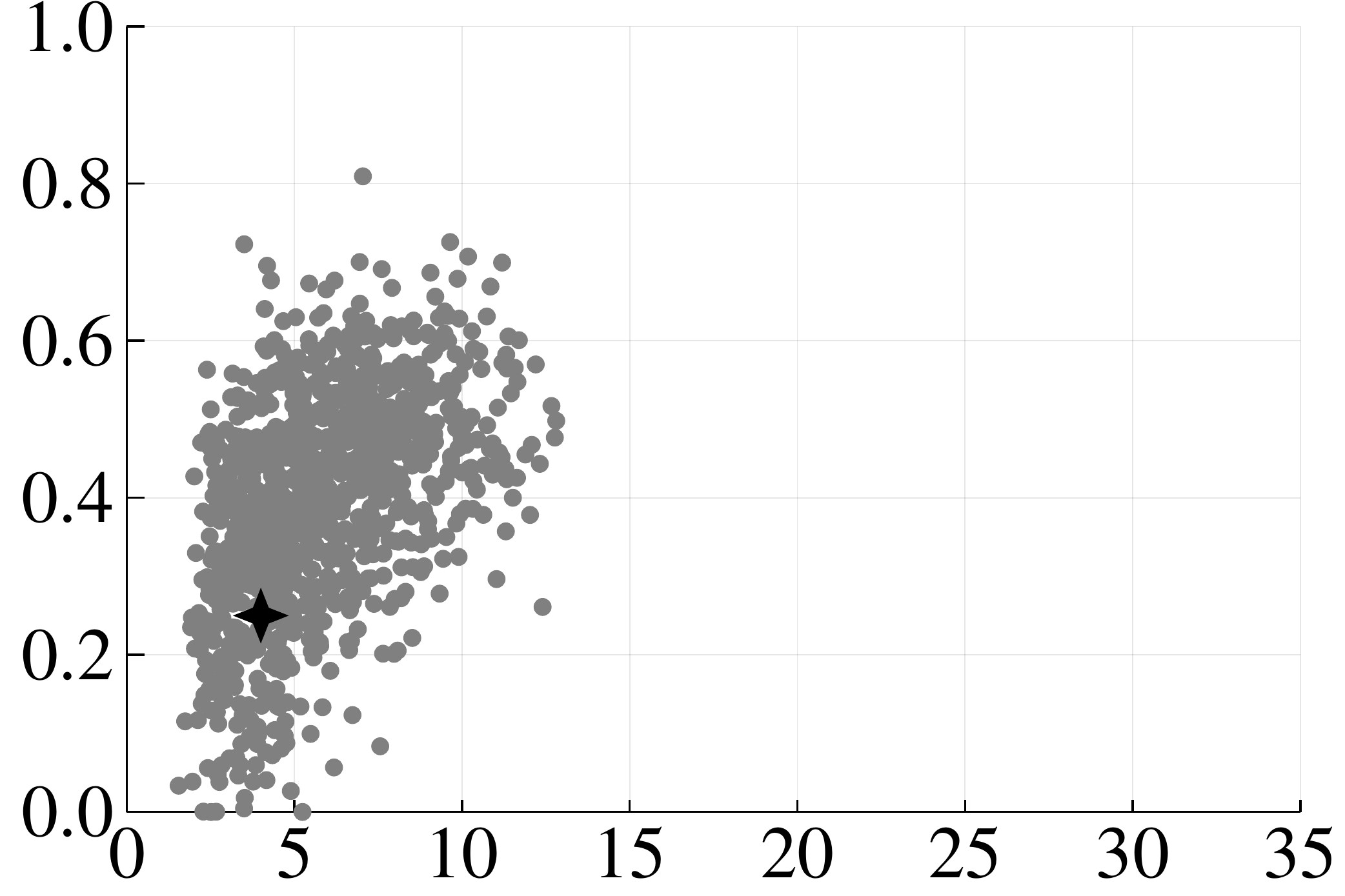}
    };
    \node at (4.6,0) {
      \includegraphics[width=0.3\textwidth]{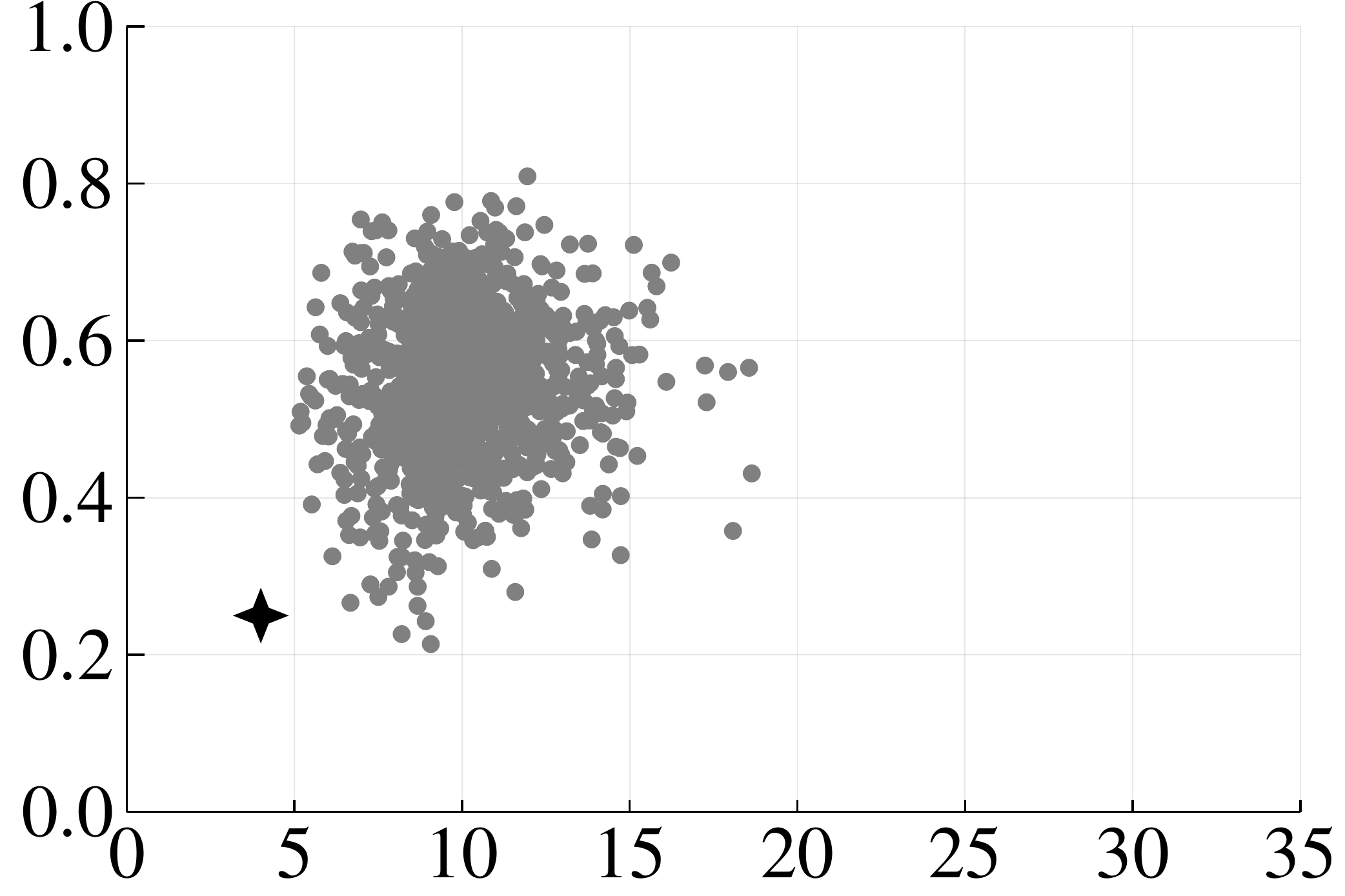}
    };
    \node at (9.2,0) {
      \includegraphics[width=0.3\textwidth]{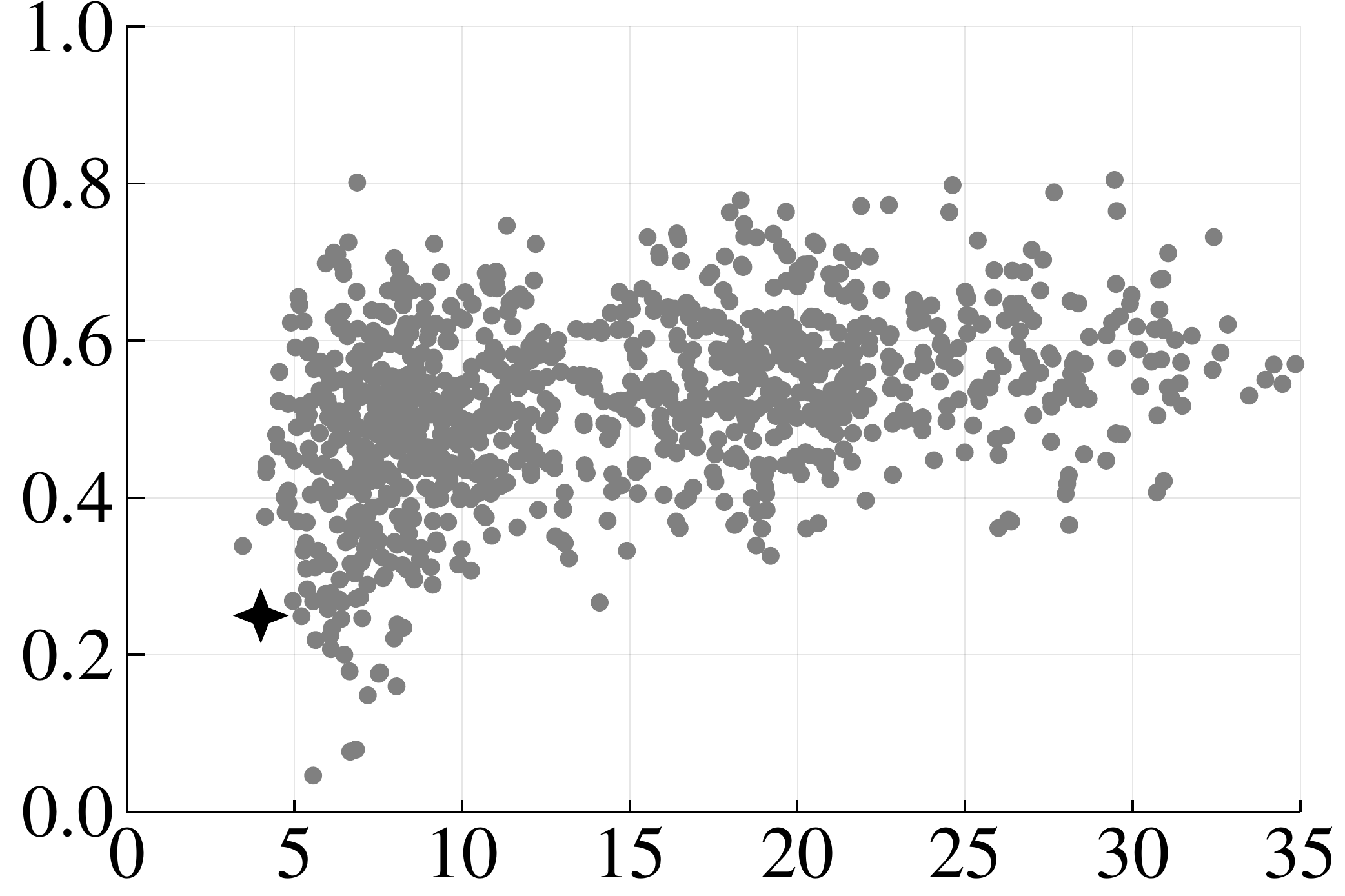}
    };
    \node at (0,-3.0) {
      \includegraphics[width=0.3\textwidth]{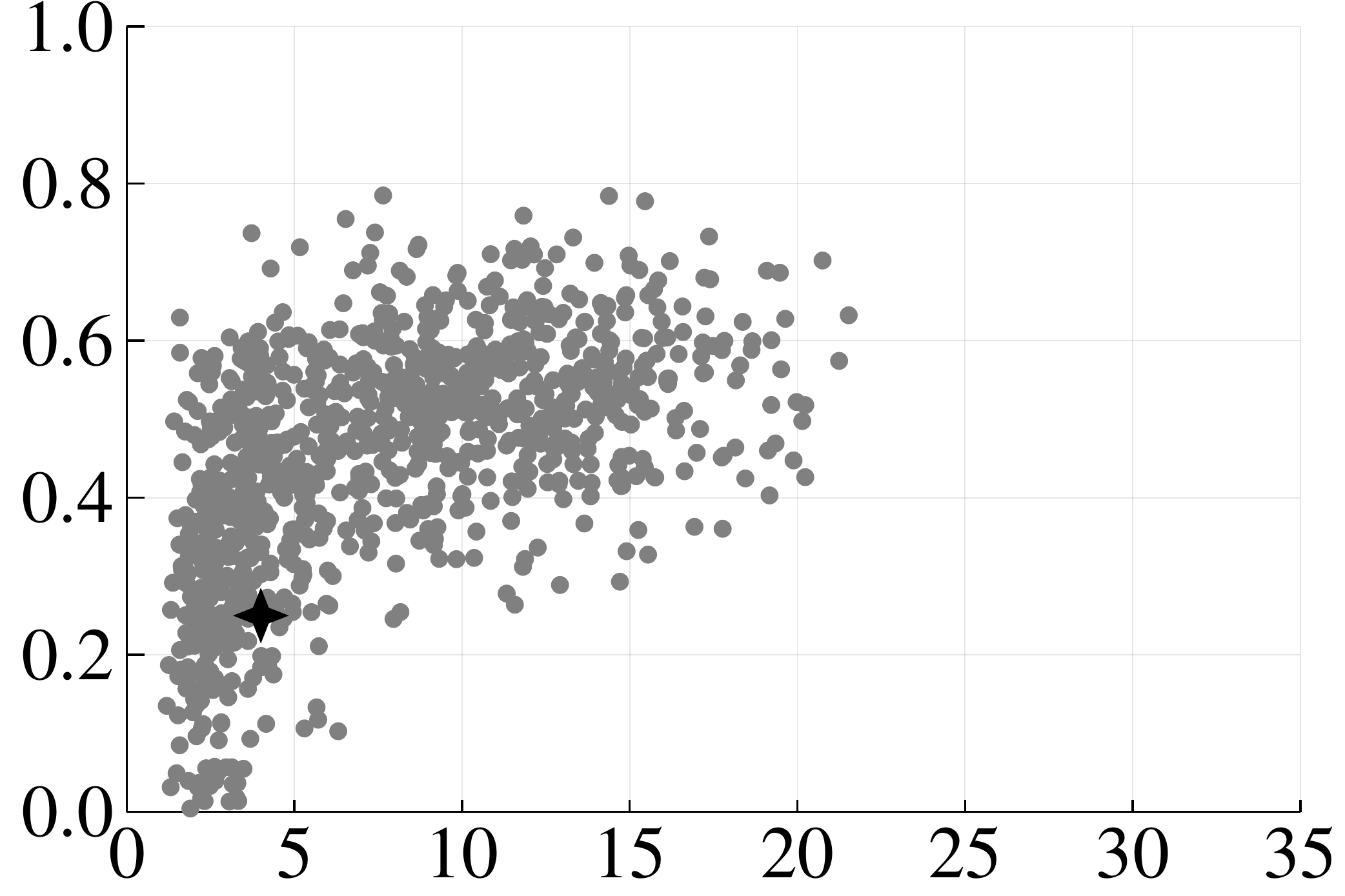}
    };
    \node at (4.6,-3.0) {
      \includegraphics[width=0.3\textwidth]{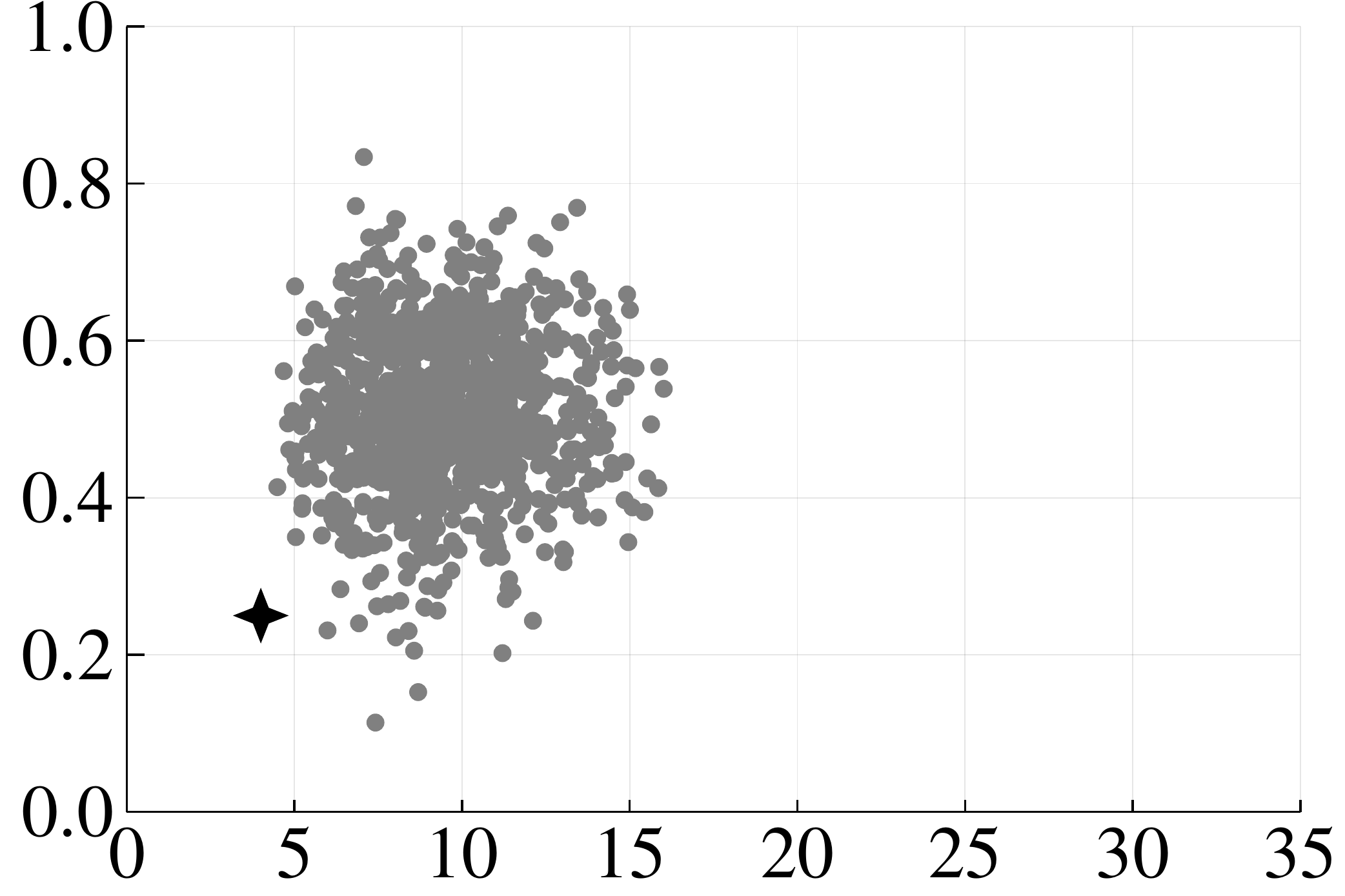}
    };
    \node at (9.2,-3.0) {
      \includegraphics[width=0.3\textwidth]{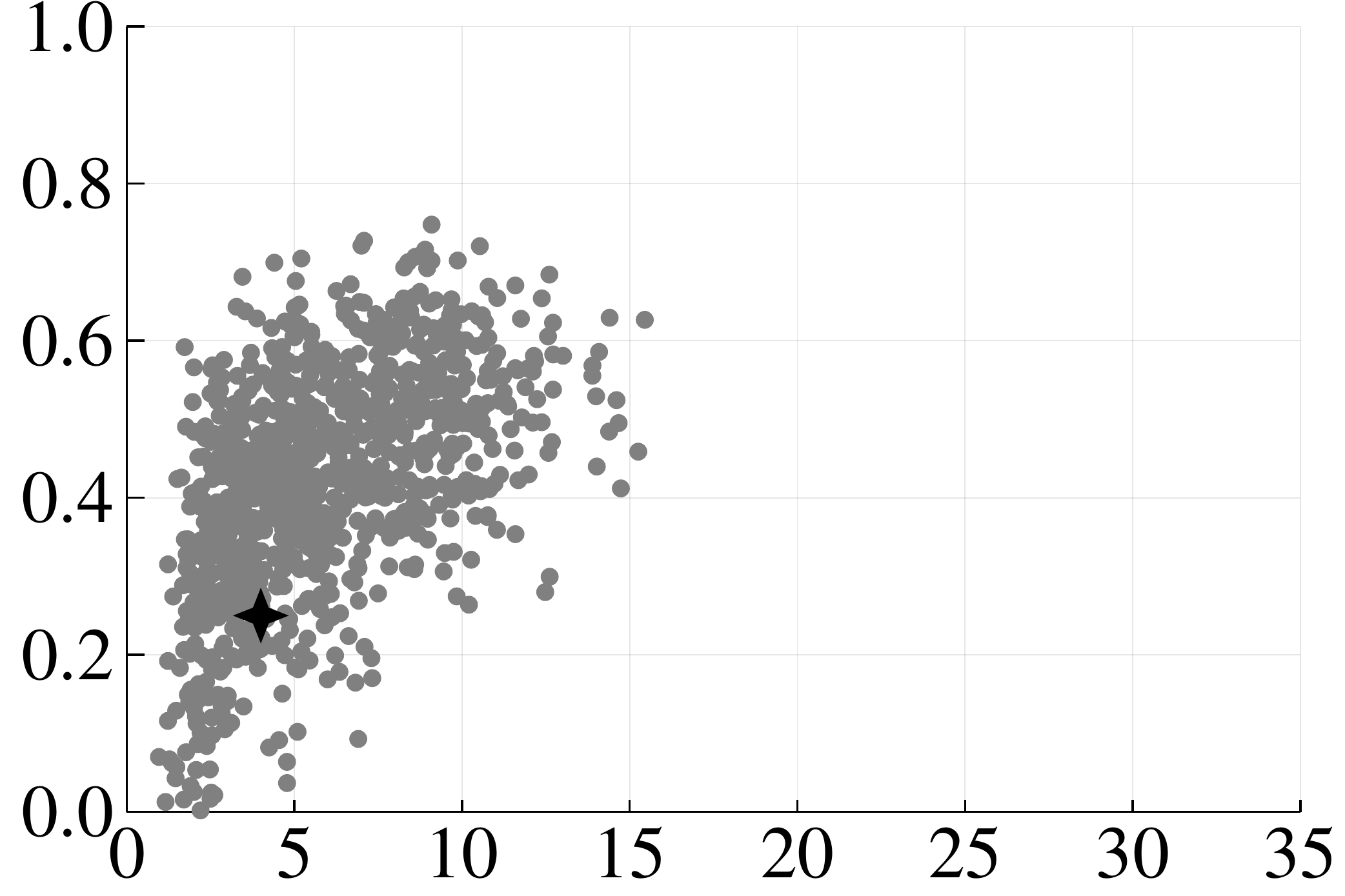}
    };
    \node at (0,-6.0) {
      \includegraphics[width=0.3\textwidth]{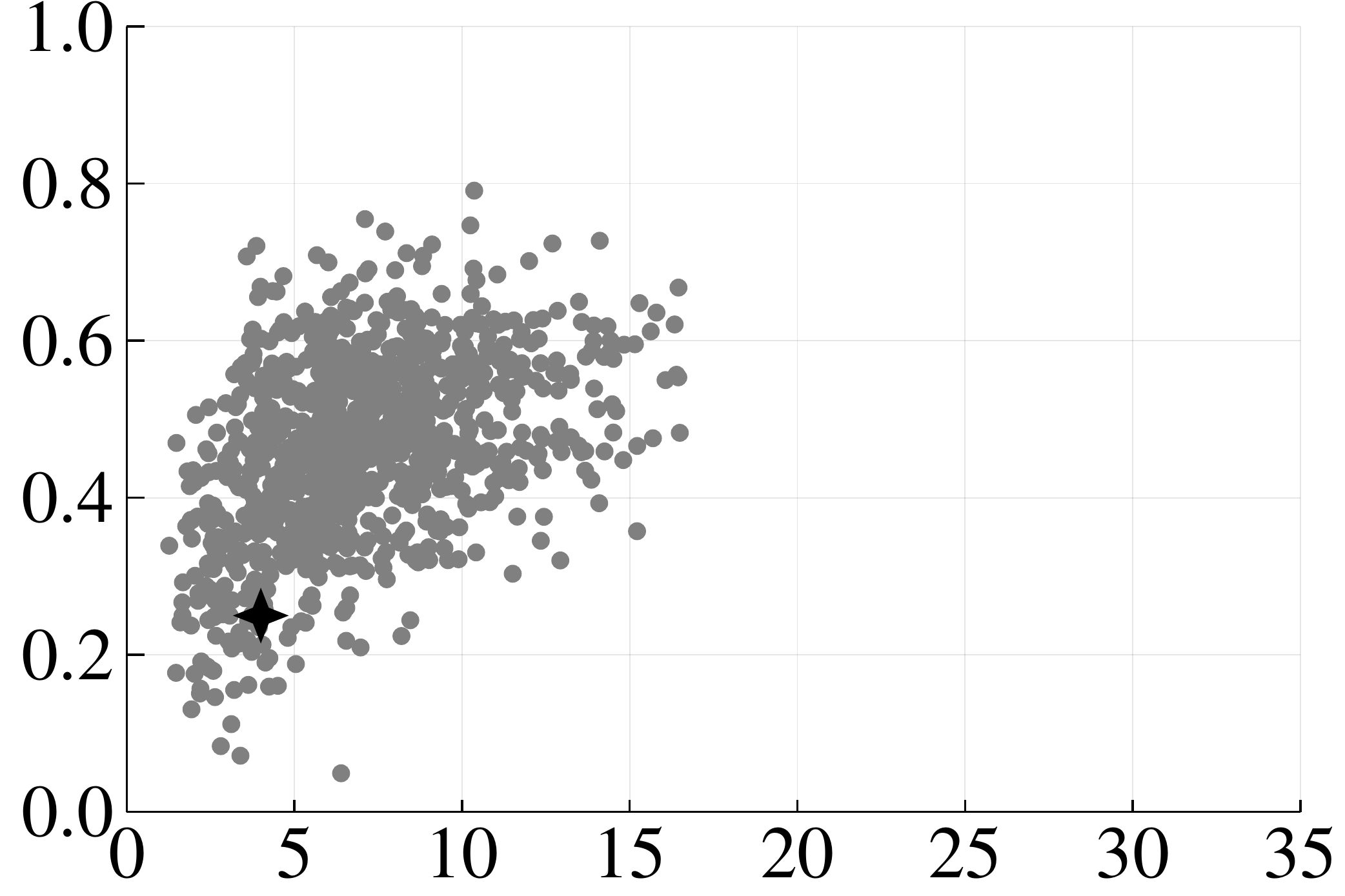}
    };
    \node at (4.6,-6.0) {
      \includegraphics[width=0.3\textwidth]{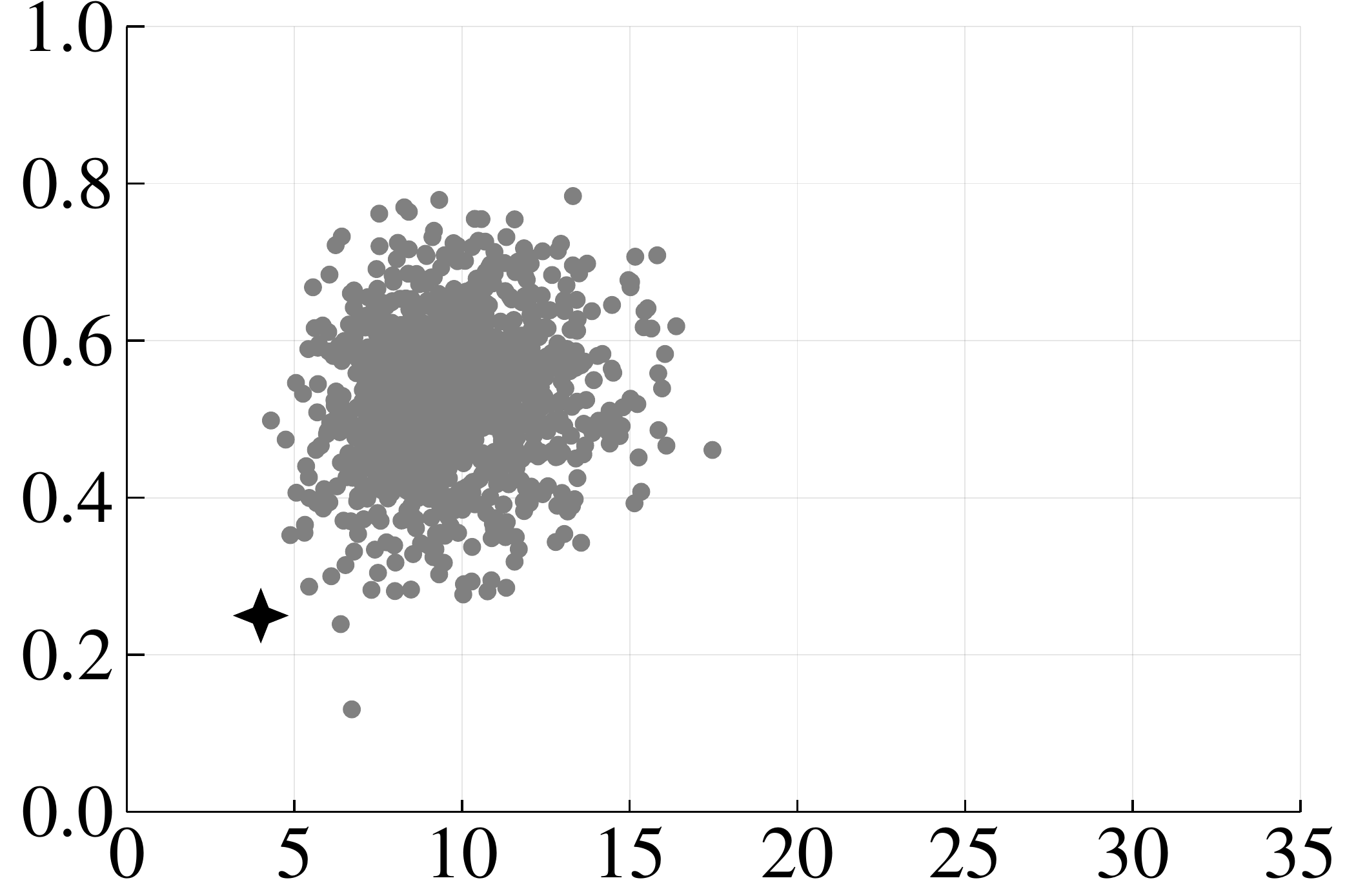}
    };
    \node at (9.2,-6.0) {
      \includegraphics[width=0.3\textwidth]{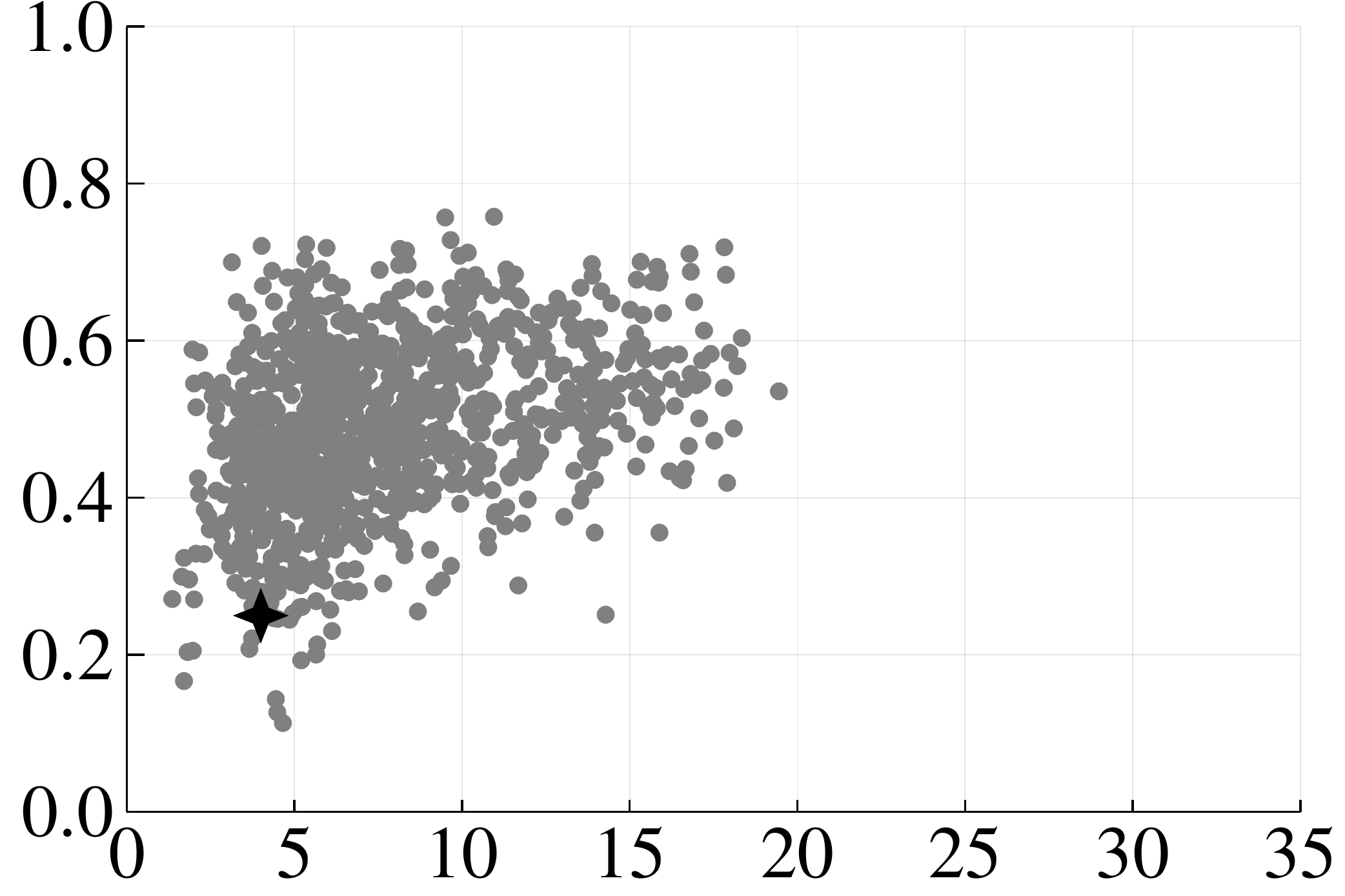}
    };
    
    \node at (-2.5,0) {\footnotesize $\alpha$};
    \node at (-2.5,-3.0) {\footnotesize $\alpha$};
    \node at (-2.5,-6.0) {\footnotesize $\alpha$};

    \node at (0,-7.8) {\footnotesize $\lambda$};
    \node at (4.6,-7.8) {\footnotesize $\lambda$};
    \node at (9.2,-7.8) {\footnotesize $\lambda$};

    \node[rotate=-90] at (11.8,0) {\footnotesize ${\alpha\sim\text{Beta}(\frac{1}{2},\frac{1}{2})}$};
    \node[rotate=-90] at (11.8,-3.0) {\footnotesize ${\alpha\sim\text{Beta}(1,1)}$};
    \node[rotate=-90] at (11.8,-6.0) {\footnotesize ${\alpha\sim\text{Beta}(2,2)}$};

    \node at (0,2) {\footnotesize ${\lambda\sim\text{Gamma}(1,\frac{1}{4})}$};
    \node at (4.6,2) {\footnotesize ${\lambda\sim\text{Gamma}(20,2)}$};
    \node at (9.2,2) {\footnotesize ${\lambda\sim\text{Gamma}(\frac{1}{100},\frac{1}{100})}$};
    \end{scope}
    \begin{scope}[yshift=-11cm]
    \node at (0,0) {
      \includegraphics[width=0.3\textwidth]{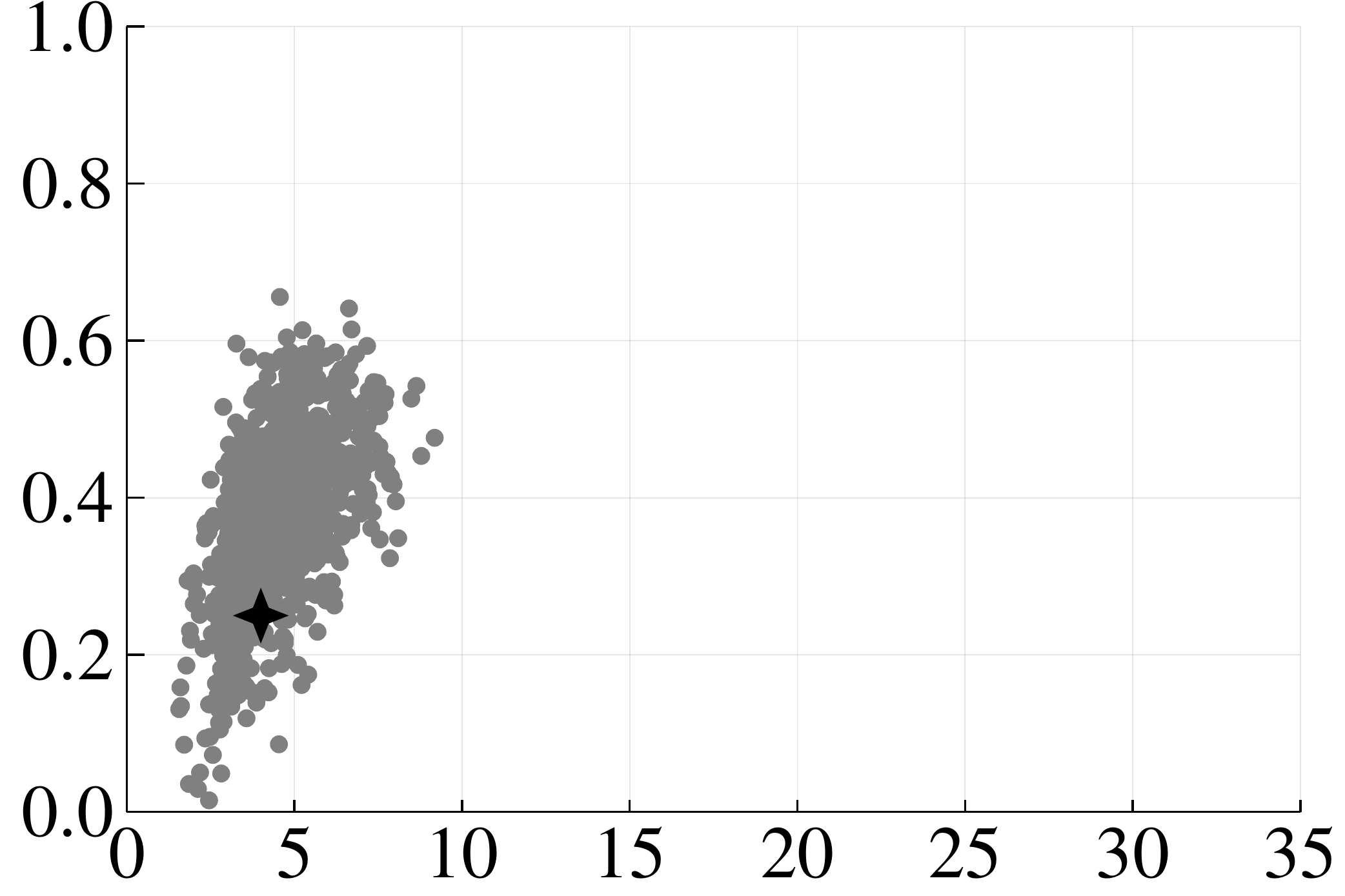}
    };
    \node at (4.6,0) {
      \includegraphics[width=0.3\textwidth]{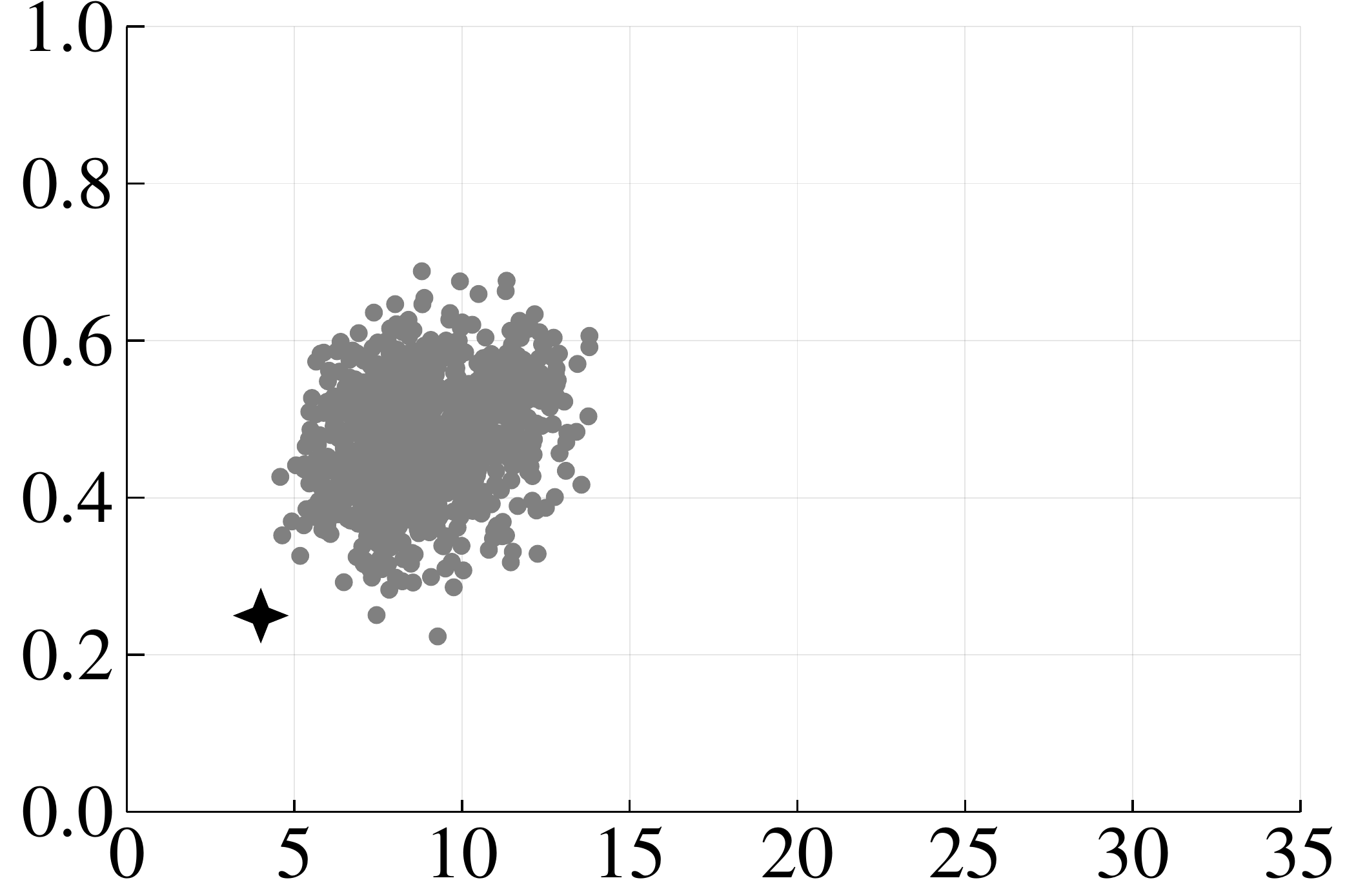}
    };
    \node at (9.2,0) {
      \includegraphics[width=0.3\textwidth]{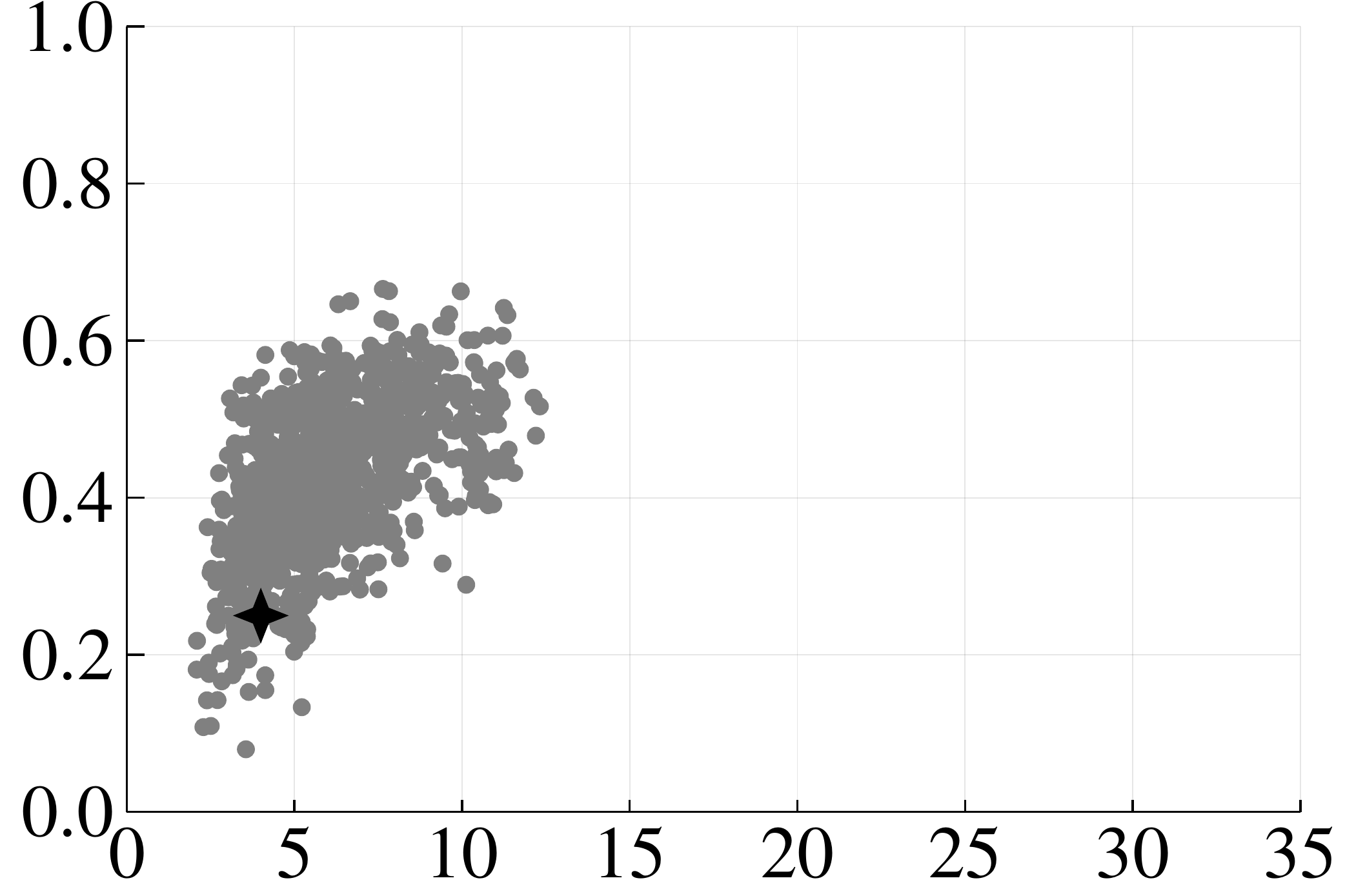}
    };
    \node at (0,-3.0) {
      \includegraphics[width=0.3\textwidth]{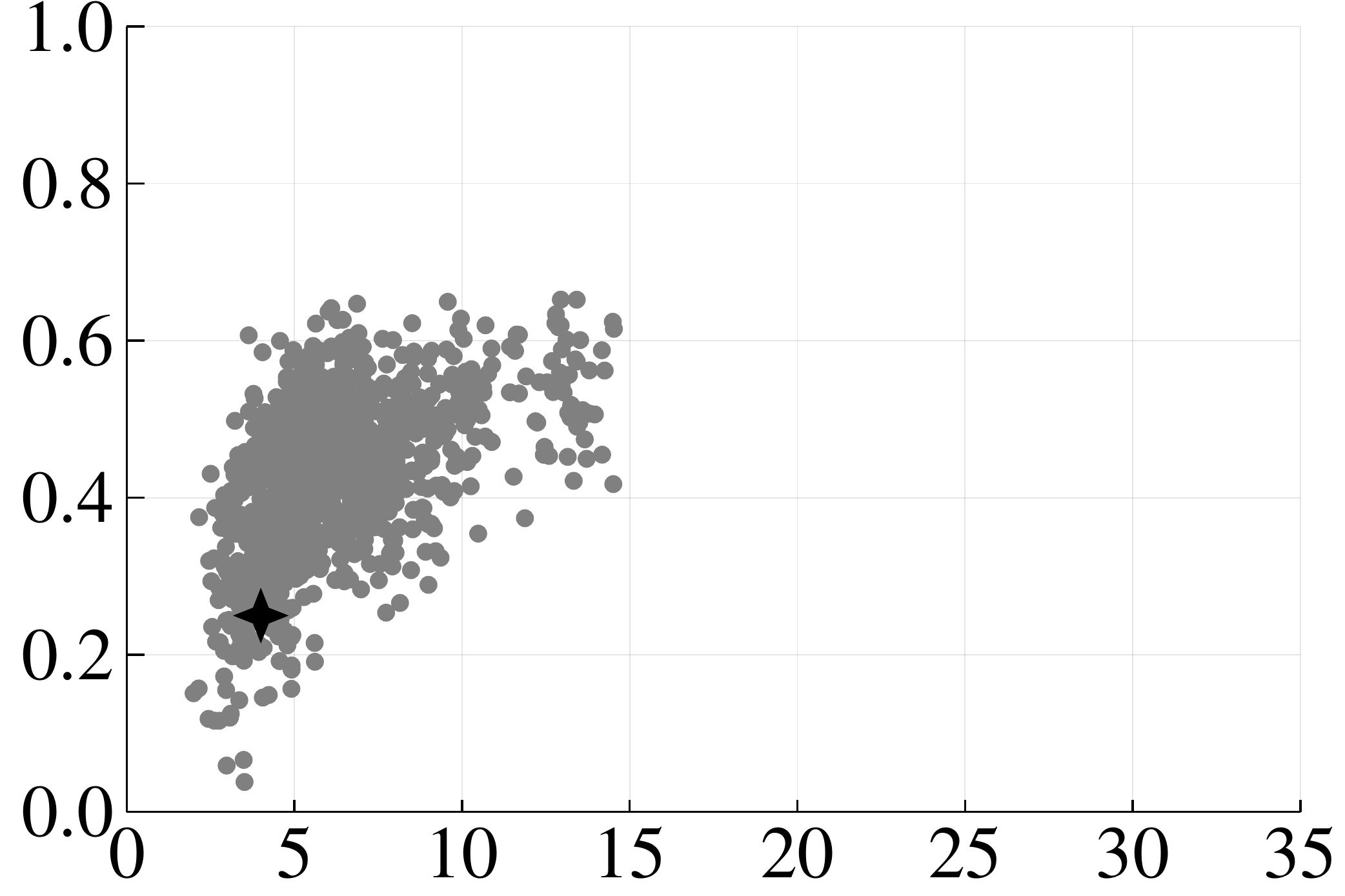}
    };
    \node at (4.6,-3.0) {
      \includegraphics[width=0.3\textwidth]{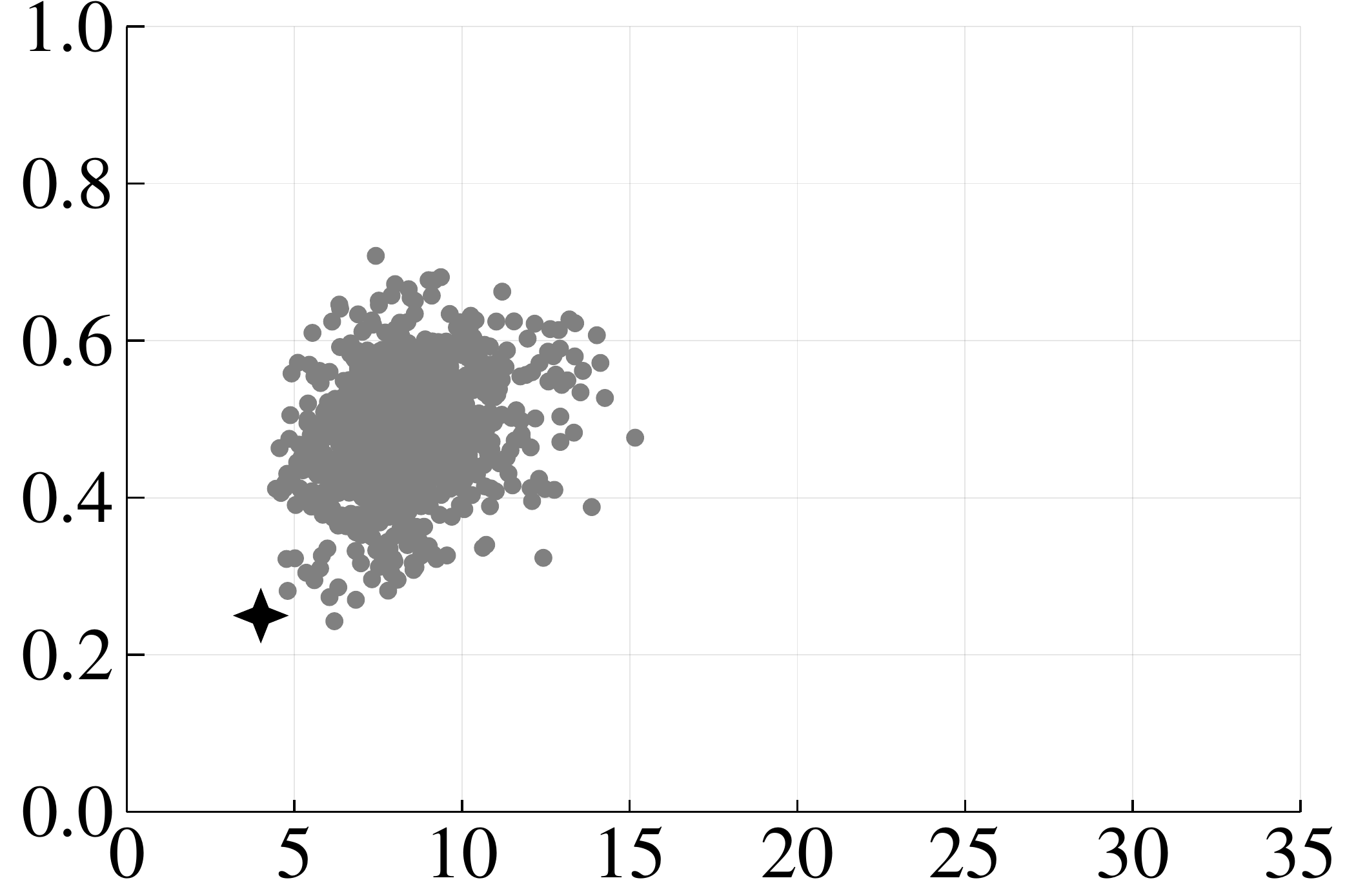}
    };
    \node at (9.2,-3.0) {
      \includegraphics[width=0.3\textwidth]{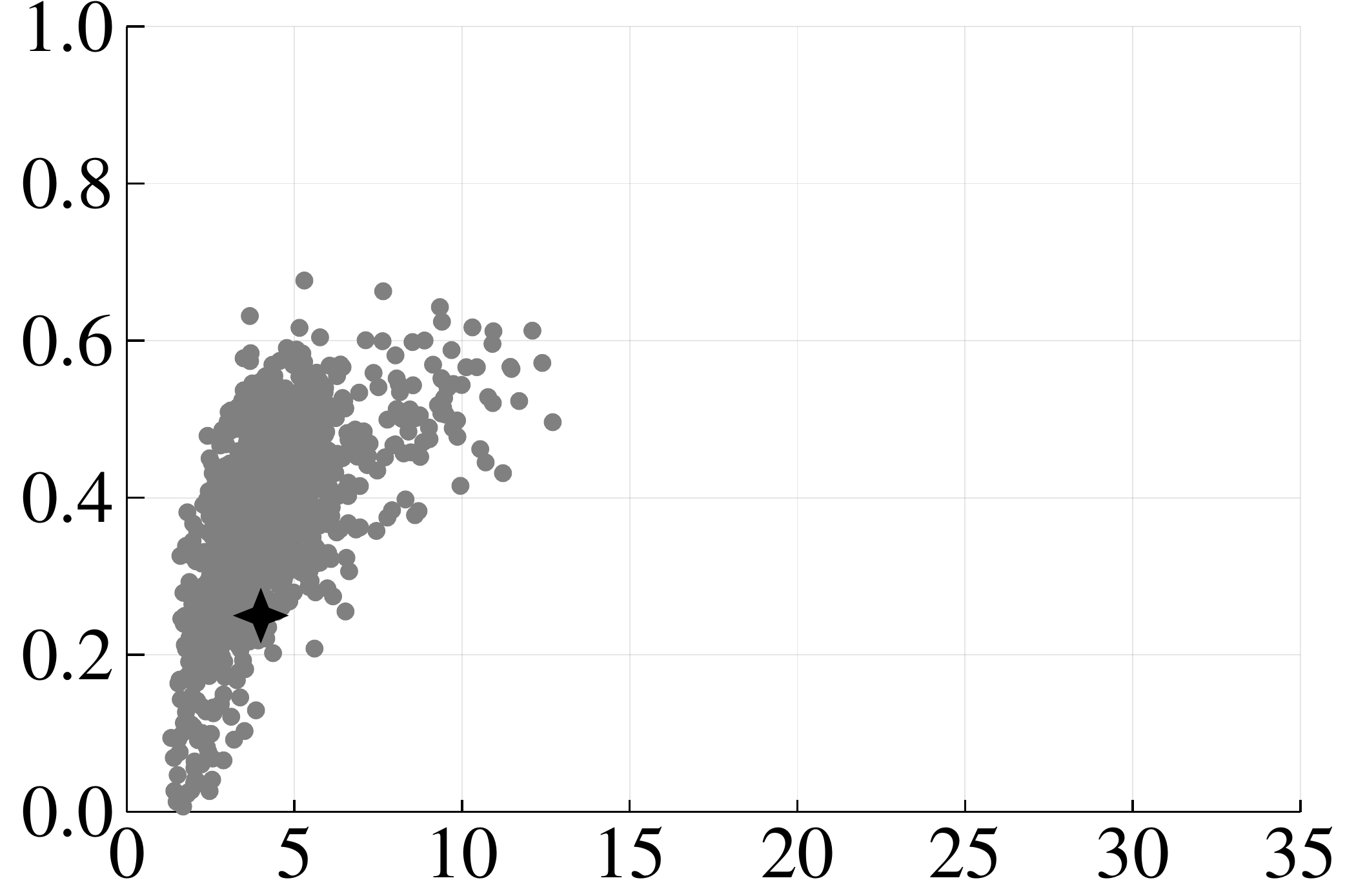}
    };
    \node at (0,-6.0) {
      \includegraphics[width=0.3\textwidth]{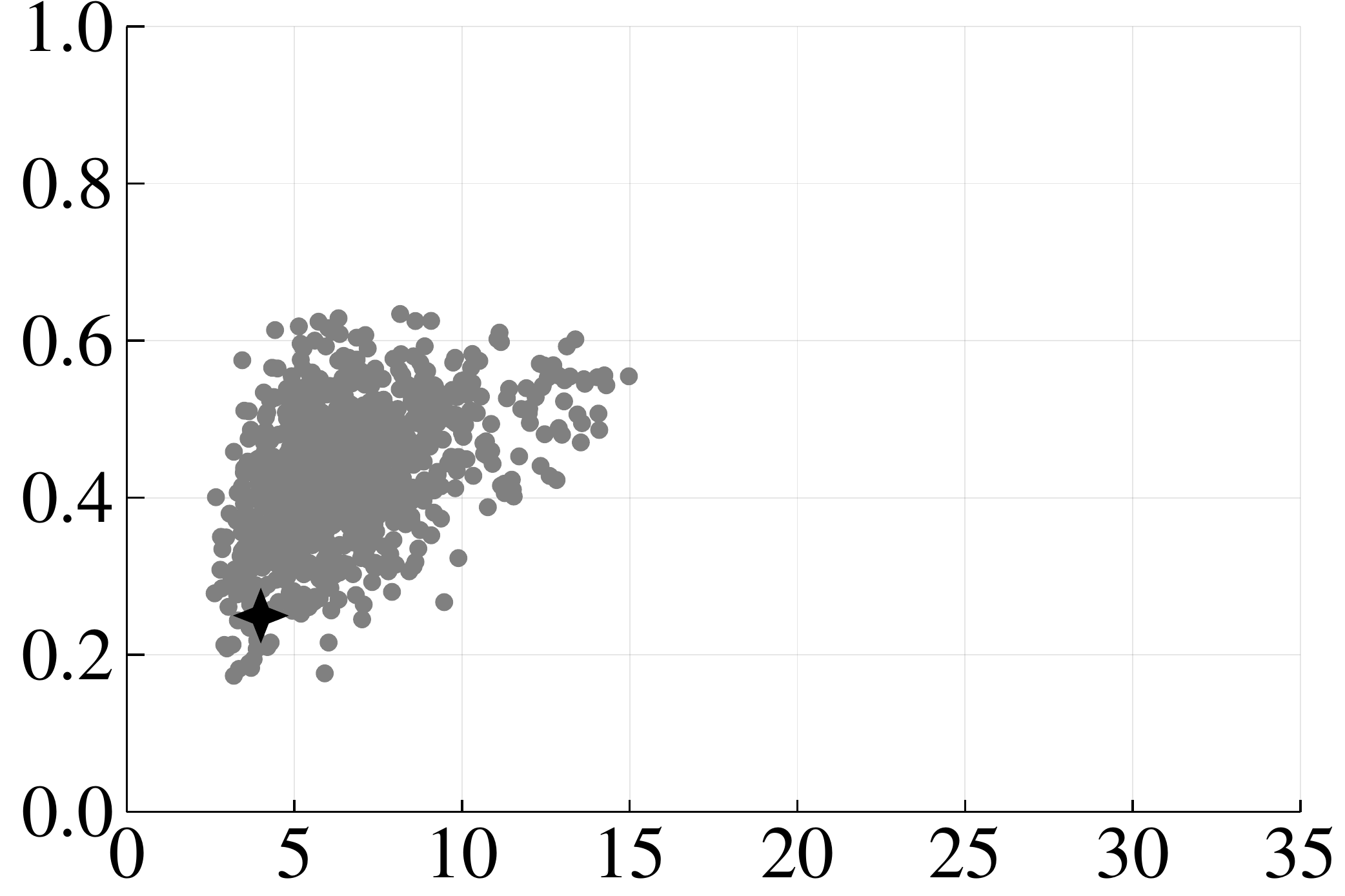}
    };
    \node at (4.6,-6.0) {
      \includegraphics[width=0.3\textwidth]{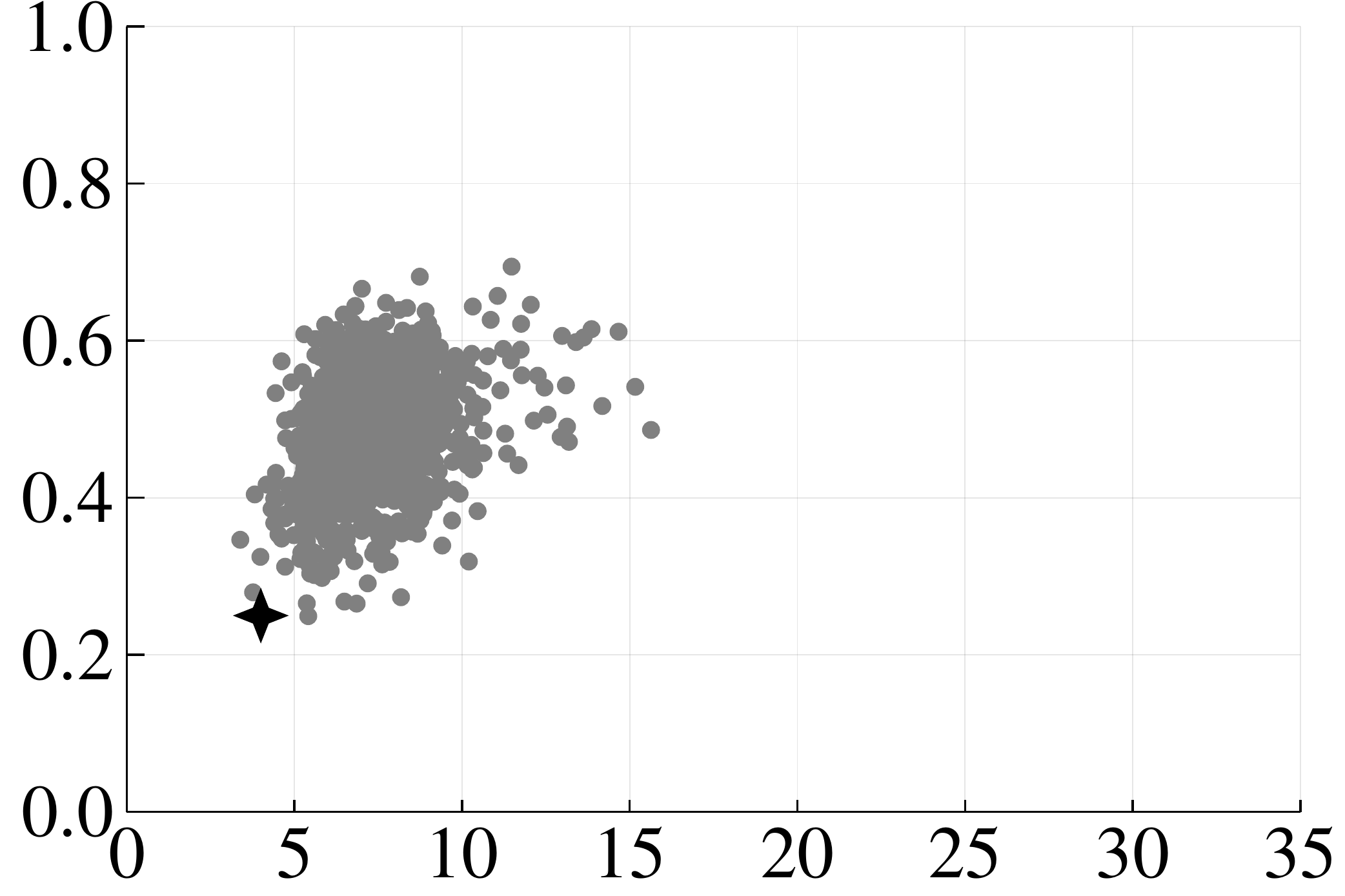}
    };
    \node at (9.2,-6.0) {
      \includegraphics[width=0.3\textwidth]{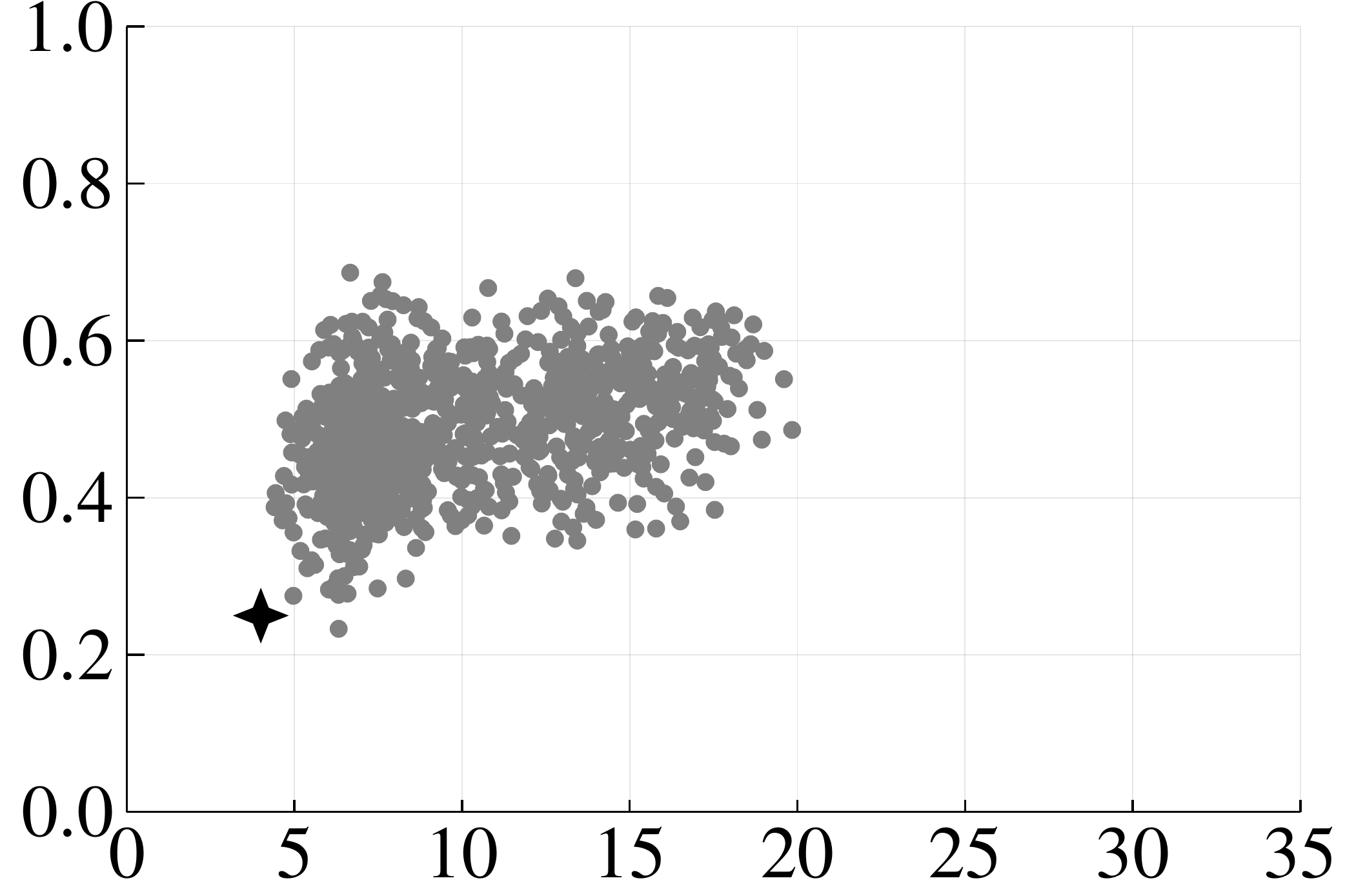}
    };

    \node at (-2.5,0) {\footnotesize $\alpha$};
    \node at (-2.5,-3.0) {\footnotesize $\alpha$};
    \node at (-2.5,-6.0) {\footnotesize $\alpha$};

    \node at (0,-7.8) {\footnotesize $\lambda$};
    \node at (4.6,-7.8) {\footnotesize $\lambda$};
    \node at (9.2,-7.8) {\footnotesize $\lambda$};

    \node[rotate=-90] at (11.8,0) {\footnotesize ${\alpha\sim\text{Beta}(\frac{1}{2},\frac{1}{2})}$};
    \node[rotate=-90] at (11.8,-3.0) {\footnotesize ${\alpha\sim\text{Beta}(1,1)}$};
    \node[rotate=-90] at (11.8,-6.0) {\footnotesize ${\alpha\sim\text{Beta}(2,2)}$};

    \node at (0,2) {\footnotesize ${\lambda\sim\text{Gamma}(1,\frac{1}{4})}$};
    \node at (4.6,2) {\footnotesize ${\lambda\sim\text{Gamma}(20,2)}$};
    \node at (9.2,2) {\footnotesize ${\lambda\sim\text{Gamma}(\frac{1}{100},\frac{1}{100})}$};
    \end{scope}
  \end{tikzpicture}
}}
  \caption{Prior sensitivity, for sample graphs of size $T=50$ (top) and $T=100$ (bottom). 
    Grey dots are posterior samples; the black star indicates the parameter values used to generate the data. 
  }
  \label{fig:prior_sensitivity}
  \vspace{-.5cm}
\end{figure}

\clearpage

%%%%%%%% bibliography %%%%%%%%

\bibliographystyle{natbib}
\bibliography{references}

\end{document}